\setlist[1,enumerate]{leftmargin=*}
\setlist[2,enumerate]{label=$\alph*)$}
\setlist[1,itemize]{label=--}
    \ifnum\bookmarkget{level}=0
    \ifnum\bookmarkget{level}=-1
\definecolor{ocre}{RGB}{40,130,80}
\definecolor{bookgreen}{RGB}{40,130,80}
\definecolor{bookblue}{RGB}{50,110,150}
\definecolor{bookred}{RGB}{180,15,47}
\newcommand\red[1]{\textcolor{bookred}{#1}}
\tikzstyle{char}=[anchor=mid,inner sep=0pt]
\tikzstyle{pebb}=[line width=1.5pt,->]
\tikzstyle{fiar}=[shorten >= 1pt,thick,->]
\tikzstyle{bag}=[inner sep=1pt]
\tikzstyle{non}=[inner sep=1pt]
\tikzstyle{lbox}=[rounded corners=5pt,draw=black!60,very thick,align=center]
\tikzstyle{wbox}=[rounded corners=5pt,align=center,minimum height=0.65cm,inner ysep=0pt]
\tikzstyle{gbox}=[rounded corners=5pt,fill=green!20,align=center,minimum height=0.65cm,inner ysep=0pt]
\tikzstyle{bbox}=[rounded corners=5pt,fill=blue!20,align=center,minimum height=0.65cm,inner ysep=0pt]
\tikzstyle{rbox}=[rounded corners=5pt,fill=red!20,align=center,minimum height=0.65cm,inner ysep=0pt]
\tikzstyle{ledg}=[draw=black,very thick]
\tikzstyle{linc}=[fill=white,draw=black,rotate=90,inner sep=2pt,very thick,circle]
\tikzstyle{linc2}=[fill=white,draw=black,rotate=0,inner sep=2pt,very thick,circle]
\tikzstyle{leg}=[draw,minimum width = 1.0cm,minimum height = 0.5cm]
\tikzstyle{tag}=[draw,fill=white,sloped,circle,inner sep=1pt]
\tikzstyle{word}=[draw,very thick,rectangle,rounded corners=2pt,anchor=west,minimum height=0.5cm]
\tikzstyle{word2}=[rectangle,rounded corners=2pt,anchor=west,minimum height=0.5cm]
\tikzstyle{snode}=[circle,draw=black!70,thick,align=center,minimum size = 0.5cm,inner sep =0pt]
\tikzstyle{bobox}=[rectangle,rounded corners=5pt,draw=black!70,thick,fill=orange!20]
\tikzstyle{triangle} =[regular polygon, regular polygon sides=3,rounded corners=5pt,draw=black!70,fill=gray!20]
\tikzstyle{lines} =[draw=black!70,thick]
\tikzstyle{stmono}=[inner sep=2pt]
\tikzset{every state/.style={draw=blue!50!green,very thick,fill=blue!50!green!20}}
\tikzset{statesub/.style={state,minimum size=1.3cm,inner sep=1pt}}
\tikzset{pattstate/.style={state,draw=red!50!yellow,line width=2pt,fill=red!50!yellow!20}}
\tikzset{pdotstate/.style={state,minimum size=0.75cm,inner sep=0.5pt,draw=red!50!yellow,line
    width=2pt,dashed,fill=red!50!yellow!20}}
\tikzstyle{trans}=[shorten >= 1pt,thick,->]
\tikzstyle{initial by arrow}=   [after node path=
\tikzstyle{accepting by arrow}=   [after node path=
\newcommand\mvline[3][]{
  \pgfmathtruncatemacro\hc{#3-1}
  \draw[#1]({$(#2-1-#3)!.5!(#2-1-\hc)$} |- #2.north) -- ({$(#2-1-#3)!.5!(#2-1-\hc)$} |- #2.south);
}
\newcommand\mhline[3][]{
  \pgfmathtruncatemacro\hc{#3-1}
  \node[fit=(#2-#3-1),inner sep=0pt](R){};
  \node[fit=(#2-\hc-1),inner sep=0pt](L){};
  \node (K) at ($(R)!0.5!(L)$) {};
  \draw[#1] (K -| #2.west) -- (K -| #2.east);
}
\newcommand{\As}{\ensuremath{\mathcal{A}}\xspace}
\newcommand{\Bs}{\ensuremath{\mathcal{B}}\xspace}
\newcommand{\Cs}{\ensuremath{\mathcal{C}}\xspace}
\newcommand{\Ds}{\ensuremath{\mathcal{D}}\xspace}
\newcommand{\Is}{\ensuremath{\mathcal{I}}\xspace}
\newcommand{\Ps}{\ensuremath{\mathcal{P}}\xspace}
\newcommand{\Gb}{\ensuremath{\mathbf{G}}\xspace}
\newcommand{\Hb}{\ensuremath{\mathbf{H}}\xspace}
\newcommand{\Kb}{\ensuremath{\mathbf{K}}\xspace}
\newcommand{\Lb}{\ensuremath{\mathbf{L}}\xspace}
\newcommand{\Qb}{\ensuremath{\mathbf{Q}}\xspace}
\newcommand{\fo}{\ensuremath{\textup{FO}}\xspace}
\newcommand{\fow}{\mbox{\ensuremath{\fo(<)}}\xspace}
\newcommand{\mso}{\ensuremath{\textup{MSO}}\xspace}
\newcommand{\fod}{\ensuremath{\fo^2}\xspace}
\newcommand{\sic}[1]{\ensuremath{\Sigma_{#1}}\xspace}
\newcommand{\sici}{\sic{i}}
\newcommand{\sicu}{\sic{1}}
\newcommand{\sicd}{\sic{2}}
\newcommand{\sict}{\sic{3}}
\newcommand{\pic}[1]{\ensuremath{\Pi_{#1}}\xspace}
\newcommand{\picu}{\pic{1}}
\newcommand{\picd}{\pic{2}}
\newcommand{\pict}{\pic{3}}
\newcommand{\bsc}[1]{\ensuremath{\Bs\Sigma_{#1}}\xspace}
\newcommand{\bsci}{\bsc{i}}
\newcommand{\bscu}{\bsc{1}}
\newcommand{\bscd}{\bsc{2}}
\newcommand{\bsct}{\bsc{3}}
\newcommand{\ptp}[1]{\ensuremath{#1\textup{-}\textup{PT}}\xspace}
\newcommand{\kpt}{\ptp{k}}
\newcommand{\at}{\ensuremath{\textup{AT}}\xspace}
\newcommand{\bool}[1]{\ensuremath{\mathit{Bool}(#1)}\xspace}
\newcommand{\eqfod}[1]{\ensuremath{\cong_{#1}}\xspace}
\newcommand{\keqfod}{\eqfod{k}}
\newcommand{\nfa}{{{NFA}}\xspace}
\newcommand{\nfas}{{\textup{NFAs}}\xspace}
\newcommand{\vari}{\kl[quotienting]{quotienting} \kl[Boolean algebra]{Boolean algebra}\xspace}
\newcommand{\varis}{\kl[quotienting]{quotienting} \kl[Boolean algebras]{Boolean algebras}\xspace}
\newcommand{\pvari}{\kl[quotienting]{quotienting} \kl[lattice]{lattice}\xspace}
\newcommand{\pvaris}{\kl[quotienting]{quotienting} \kl[lattices]{lattices}\xspace}
\newcommand{\imprint}{imprint\xspace}
\newcommand{\imprints}{imprints\xspace}
\newcommand{\Imprint}{Imprint\xspace}
\newcommand{\Imprints}{Imprints\xspace}
\newcommand{\tame}{multiplicative\xspace}
\newcommand{\Ratms}{Rating maps\xspace}
\newcommand{\Ratm}{Rating map\xspace}
\newcommand{\ratms}{rating maps\xspace}
\newcommand{\ratm}{rating map\xspace}
\newcommand{\Nice}{Nice\xspace}
\newcommand{\nice}{nice\xspace}
\newcommand{\mratm}{multiplicative rating map\xspace}
\newcommand{\mratms}{multiplicative rating maps\xspace}
\newcommand{\Mratm}{Multiplicative rating map\xspace}
\newcommand{\Mratms}{Multiplicative rating maps\xspace}
\newcommand{\prin}[2]{\ensuremath{\Is[#1](#2)}\xspace}
\newcommand{\lprin}[1]{\prin{\rho_\Lb}{#1}}
\newcommand{\itriv}[1]{\ensuremath{\Is_{\mathit{triv}}[#1]}\xspace}
\newcommand{\ptriv}[1]{\ensuremath{\Ps_{\mathit{triv}}[#1]}\xspace}
\newcommand{\opti}[2]{\ensuremath{\Is_{#1}[#2]}\xspace}
\newcommand{\copti}[1]{\opti{\Cs}{#1}}
\newcommand{\popti}[2]{\ensuremath{\Ps_{#1}[#2]}\xspace}
\newcommand{\pcopti}[1]{\popti{\Cs}{#1}}
\newcommand{\bsuopti}{\opti{\bscu}{\rho}}
\newcommand{\ptopti}{\opti{\bscu}{\rho}}
\newcommand{\fodopti}{\opti{\fod}{\rho}}
\newcommand{\foopti}{\opti{\fo}{\rho}}
\newcommand{\suopti}{\popti{\sicu}{\alpha,\rho}}
\newcommand{\sdopti}{\popti{\sicd}{\alpha,\rho}}
\newcommand{\inv}{\ensuremath{^{-1}}\xspace}
\newrobustcmd{\cont}[1]{\ensuremath{\kl[cont]{\bf alph}(#1)}\xspace}
\newcommand{\dclos}{\ensuremath{\mathord{\downarrow}}\xspace}
\newcommand{\nat}{\ensuremath{\mathbb{N}}\xspace}
\newcommand{\efgame}{Ehrenfeucht-Fra\"iss\'e\xspace}
\newrobustcmd{\fullcont}[1]{{{#1}^{\kl[fullcont]\circledast}}}
\newrobustcmd{\quotienting}{\kl[quotienting]{quotienting}\xspace}
\newrobustcmd{\quotientingintro}{quotienting\xspace}
\theoremstyle{plain}
\newtheorem{theorem}[thm]{Theorem}
\newtheorem{megadef}[thm]{Definition}
\newtheorem{corollary}[thm]{Corollary}
\newtheorem{proposition}[thm]{Proposition}
\newtheorem{lemma}[thm]{Lemma}
\newtheorem{fct}[thm]{Fact}
\newtheorem{example}[thm]{Example}
\newtheorem{remark}[thm]{Remark}
\newtheorem*{claim}{Claim}
\knowledge{\at}{notion,style=notation}
\prin{\rho}{\Kb'}]{notion,style=notation}
\knowledge{\ratm}[\ratms|\Ratm|\Ratms]{notion,style=definition}
\knowledge{\mratm}[\mratms|\Mratm|\Mratms|\tame]{notion,style=definition}
\knowledge{\imprint}[\Imprint|\Imprints|\imprints|$\rho$-\imprint|$\rho$-\imprints]{notion,style=definition}
\knowledge{\nice}[\Nice]{notion,style=definition}
\begin{document}

\ExplSyntaxOn
\bool_new:N\kl_cs_bool
\ExplSyntaxOff

\title{The Covering Problem}

\author{Thomas Place and Marc Zeitoun}

\address{LaBRI, University of Bordeaux, France}

\email{firstname.lastname@labri.fr}
\thanks{Supported by the DeLTA project (ANR-16-CE40-0007).}

\keywords{Regular Languages, First-Order Logic, Membership Problem, Separation Problem, Covering Problem, Decidable Characterization}

\subjclass{F.4.3}

\begin{abstract}
  An important endeavor in computer science is to precisely understand the expressive power of descriptive formalisms (such as fragments of first-order logic) over discrete structures (such as finite words, trees or graphs). Of course, the term ``understanding'' is not a precise mathematical notion. Therefore, carrying out this investigation requires a concrete objective to capture this understanding. In the literature, the standard choice for this objective is the \emph{"membership problem"}, whose aim is to find a procedure deciding whether an input language can be defined in the formalism under investigation. This approach was cemented as the ``right'' one by the seminal work of Schützenberger, McNaughton and Papert on ``first-order logic over finite words'', and has been in use since then.

  Unfortunately, "membership" questions are hard: for several fundamental formalisms, researchers have failed in this endeavor despite decades of investigation. In view of recent results on one of the most famous open questions, namely membership for all levels in the quantifier alternation hierarchy of "first-order logic", an explanation may be that "membership" is too restrictive as a setting. Indeed, these new results were obtained by considering \emph{more general} problems than "membership", taking advantage of the increased flexibility of the enriched mathematical setting. Investigating such new problems opened a promising research avenue, which permitted to solve "membership" for natural fragments of first-order logic. However, many of these problems are \emph{ad hoc}: for each fragment, the solution relies on a specific one. A unique new problem replacing "membership" as the right one is still missing.

  The main contribution of this paper is a suitable candidate to play this role: the \hbox{"covering problem"}. We motivate this problem with several arguments. First, it admits an elementary set theoretic formulation, similar to "membership". Second, we are able to reexplain or generalize all known results with this problem. Third, we develop a mathematical framework as well as a methodology tailored to the investigation of this problem. At last, for each class admitting a decidable "membership", we are able to instantiate our methodology to solve this more general problem. In particular, this yields \emph{constructive} solutions to "membership". We illustrate our approach with algorithms solving "covering" (hence also "membership" and generalizations thereof, such as the problem called "separation") for some classical fragments of first-order logic: level~1 in the quantifier alternation hierarchy of first-order logic, which consists of the so called \emph{piecewise testable} languages, the well-known 2-variable fragment of first-order logic, and level~$\frac12$ in the quantifier alternation hierarchy of first-order logic.
\end{abstract}

\maketitle
\section{Introduction}
One of the most successful applications of the notion of regularity in computer science is the investigation of logics on discrete structures, such as words or trees. The story began in the 60s when Büchi~\cite{BuchiMSO}, Elgot~\cite{ElgotMSO} and Trakhtenbrot~\cite{TrakhMSO} proved that the \emph{regular languages} of finite words are exactly those that can be defined in monadic second order logic (\mso). This was later pushed to infinite words~\cite{BuchiMSOInf}, to finite or infinite trees~\cite{Thatcher&Wright:1968,rabin1969}, to labeled countable linear orders~\cite{ccp11} and even to graphs of bounded tree width~\cite{buchigraphs} or linear cliquewidth~\cite{def_cliquewidth18}. Such connections not only testify to the robustness of the notion of regularity. Indeed, in the context of finite words, the connection was further exploited to investigate the expressive power of important \emph{fragments} of \mso, by relying on a decision problem associated to any such fragment: the "membership problem". This problem just asks whether the fragment under investigation forms a recursive class, \emph{i.e.}, its statement is as follows: given a regular language as input, decide whether it is definable by a sentence of the fragment.

\medskip
Obtaining "membership algorithms" is difficult. An oft-told and still open example is to decide the most natural fragment of \mso, namely "first-order logic"~("\fo"), on finite binary trees. On finite words (a much simpler structure than binary trees), Schützenberger, McNaughton and Papert~\cite{sfo,mnpfo} settled this question in the 70s. Their result was highly influential: it was often revisited~\cite{wfo,DGfo,Colcombet11,pingoodref}, and it paved the way to a series of results of the same nature. A famous example is Simon's Theorem~\cite{simonthm}, which yields an algorithm for the first level of the quantifier alternation hierarchy of "\fo". Other prominent examples are fragments of "\fo" where the linear order on positions is replaced by the successor relation~\cite{bslt73,mcnlt74,zalclt72,twltt85} or consider the two-variable fragment of "first-order logic"~\cite{twfodeux}. The relevance of this approach is nowadays validated by a wealth of results.

\medskip
The reason for this success is twofold. First, these results cemented "membership" as the ``right'' question: a solution conveys a deep intuition on the investigated logic. In particular, most results include a \emph{generic method for building a canonical sentence} witnessing "membership" of an input language if it is expressible in the logic. Second, Schützenberger's solution established a suitable framework and a methodology for solving "membership problems". This methodology is based on a canonical, finite and computable algebraic abstraction of a regular language: the \emph{syntactic monoid}. The core of the approach is to translate the semantic question (\emph{is the language definable in the fragment?}) into a purely syntactical, easy question to be tested on the syntactic monoid (\emph{does the syntactic monoid satisfy some~equation?}).

\medskip
Unfortunately, this methodology seems to have reached its limits for the hardest questions. An emblematic example is the \emph{quantifier alternation hierarchy of "first-order logic"}, which classifies sentences according to the number of alternations between $\exists$ and $\forall$ quantifiers in their prenex normal form. A sentence is \sici if its prenex normal form has $(i-1)$ alternations and starts with a block of existential quantifiers. A sentence is \bsci if it is a Boolean combination of \sici sentences. Obtaining "membership algorithms" for all levels in this hierarchy is a major open question, which has received a lot of attention (see~\cite{Weil-ConcatSurvey1989,Thomas:Languages-automata-logic:1997:a,Pin_1995,Pin_1997,pinbridges,Pin-ThemeVar2011,PZ:Siglog15,Pin:WSPC16} for details and a complete bibliography). However, progress on this question has been slow: until recently, only the lowest levels were solved, namely \sicu~\cite{arfi87,pwdelta}, \bscu~\cite{simonthm} and \sicd~\cite{arfi87,pwdelta}.

\medskip
It took years to solve higher levels. Recently, "membership algorithms" were obtained for the levels \sict~\cite{pzqalt,pzqaltj}, \bscd~\cite{pzqalt,pzqaltj} and \sic{4}~\cite{pseps3,pseps3j}. This was achieved by introducing new ingredients into Schützenberger's methodology: problems that are \emph{more general than "membership"}. For each of these results, the strategy is the same:
\begin{itemize}
\item First, a well-chosen \emph{more general} problem is solved for a \emph{lower} level in the hierarchy.
\item Then, this knowledge is \emph{turned} into a "membership algorithm" for the level under investigation.
\end{itemize}
Let us illustrate what we mean by ``more general problem'' by presenting the simplest of them: the \emph{"\Cs-separation problem"} (where $\Cs$ is a class of regular languages). This problem takes \emph{two} languages as input---rather than just one for "membership"---and asks whether there exists a third one which:
\begin{itemize}
\item belongs to $\Cs$,
\item contains the first language, and
\item is disjoint from the second.
\end{itemize}
It is easy to see why this problem generalizes "membership": a language belongs to a class~$\Cs$ if and only if it is $\Cs$-separable from its complement.  Being more general, such problems are also more difficult than "membership". However, this generality also makes them more rewarding in the insight they provide on the investigated logic. This motivated a series of papers on the "separation problem"~\cite{pzfo,cmmptsep,pvzmfcs13,pvzltt,pzsucc}, which culminated in the three results above~\cite{pzqalt,pzqaltj,pseps3,pseps3j}. However, while this avenue of research is very promising, it presently suffers \textbf{three major~flaws}:

\begin{enumerate}
\item The problems considered up until now form a jungle: each particular result relies actually not on "separation" itself, but rather on a specific \emph{ad hoc} generalization of this problem. As an illustration, the results of~\cite{pzfo,pzqalt,pzqaltj,pseps3,pseps3j} rely on \emph{three} different such problems. \item Among the problems that were investigated, "separation" is the only one that admits a simple and generic set-theoretic definition (which is why it is favored as an example). On the other hand, for all other problems, the definition requires to introduce additional concepts, such as semigroups and \efgame games.
\item In contrast to "membership" solutions, the solutions that have been obtained for these more general problems are \emph{non-constructive}. For example, most of the solutions for "separation" do not include a generic method for building a separator language, when it exists, because the algorithms are designed around the idea of witnessing that the two inputs are \emph{not} separable.
\end{enumerate}

\bigskip

\noindent {\bf Contributions.} Our objective in this paper is to address each of these three issues. Our first contribution is the definition of a \emph{single general} problem, the ``\emph{"covering problem"}'', which
\begin{enumerate}[label=$\alph*)$]
\item\label{it:covergen} encompasses \emph{all variations} of  the "separation problem" introduced so far to solve "membership",
\item\label{it:coversimple} enjoys a simple, language-theoretic formulation (just as "membership" and "separation").
\end{enumerate}
This already addresses the first two issues. The second contribution is a \emph{framework} and a \emph{methodology} for solving this new problem, which were lacking for the "separation problem". Finally, we illustrate this methodology with the presentation of algorithms solving the "covering problem" for several important fragments of "first-order logic". Naturally, these algorithms are based on the general methodology developed as the second contribution, and they yield constructive solutions for the "separation problem" as a byproduct, which addresses the third issue.

\medskip
Let us review these contributions in more details. As explained, the first one is to define a new problem, which we call \emph{"covering"}, satisfying Items~\ref{it:covergen} and \ref{it:coversimple} above, in order to gain afterwards a methodology for solving "separation" in a constructive~way.

\medskip\noindent\emph{\bfseries First step: Extending "separation" to inputs that are sets.}
We start with a simple observation: we already have in hand \emph{two orthogonal} generalizations of "membership". The first is "separation", that we aim at extending even further. The second is the straightforward but powerful generalization introduced by Schützenberger, with precisely a similar motivation as ours: setting up a methodology for solving "membership". In order to define "covering", a natural move is to combine both generalizations.

Before proceeding, let us recall Schützenberger's key idea: for testing whether a language~$L$ belongs to a fragment, one should not consider $L$ \emph{alone}. Instead, one should test whether \emph{all} languages recognized by its syntactic monoid belong to the fragment. This seemingly more demanding problem is in fact equivalent to "membership" when the fragment enjoys some mild properties: if $L$ belongs to the fragment, then so does any language recognized by its syntactic monoid. The motivation and the payoff for considering such \emph{input sets} is that they have a nice algebraic structure, which can be leveraged to develop inductive arguments in order to successfully design "membership algorithms". While simple, this idea is the core of most classical "membership algorithms".

The definition of the \emph{"covering problem"} builds on this idea: it generalizes "separation" to an input that is a \emph{set} of languages rather than just a pair. Thus, "covering" is a (strict) generalization of "separation" to an arbitrary number of input languages.

\medskip\noindent\emph{\bfseries Second step: "Separation" as an approximation problem.} Carrying out this idea is not immediate, however: there is a discrepancy between the generalization for "membership" and that for "separation". Indeed, extending "membership" to a set of languages is obvious: simply solving "membership" for all languages of the set is enough for developing inductive arguments. A similar naive generalization for "separation" would be, given a finite set of languages, to test whether each pair of languages from the set is separable by a language in the fragment.  Unfortunately, this turns out to be inadequate, because answering separation for all pairs of languages from a set is too weak to provide enough information.

A solution is to think of "separation" as an (over-)approximation problem. Given two input languages $L_1$ and $L_2$, it asks for a ``good approximation'' of $L_1$ (definable in the fragment under investigation) while $L_2$ serves as a quality measure---good approximations are those which do not intersect it. This point of view is amenable to generalization: in the \emph{"covering problem"}, our inputs are pairs $(L_1,\Lb_2)$ where $L_1$ is a single language and $\Lb_2$ is a finite \emph{set} of languages. The objective is still to approximate $L_1$ while $\Lb_2$ specifies what are the good approximations. Specifically, "covering" asks for a finite set of languages \Kb (all belonging to the fragment under investigation) such that,
\begin{enumerate}
\item The union of all languages in \Kb includes $L_1$: \Kb is a cover of $L_1$ (hence the name ``"covering"'').
\item No language in \Kb intersects all languages in $\Lb_2$.
\end{enumerate}
In particular, the original separation problem is the special case of "covering" when the input set $\Lb_2$ is a singleton.

\medskip\noindent\emph{\bfseries Third step: Abstracting the quality measure.} In the covering problem, the input is made of two objects playing different roles: we have a language $L_1$ that needs to be covered and a set of languages~$\Lb_2$ that serves as a quality measure specifying suitable covers. It is cleaner to separate\footnote{No pun intended.} these roles. For this reason, we define \emph{"\ratms"}, whose purpose is exclusively to evaluate the quality of a cover. This has two advantages: first, this makes it easier to pinpoint the hypotheses that we need on the set of languages and on the "\ratm". Second, it simplifies the~notation.

Our algorithms apply to inputs and "\ratms" satisfying some mild assumptions. We will show that one can always effectively reduce any input to such a special one.

\medskip\noindent\emph{\bfseries Benefits of the "covering problem".}
The main advantage of the covering problem is that it comes with a generic framework and a generic methodology designed for solving it.  This framework is our second contribution. It generalizes the original framework of Schützenberger for "membership" in a natural way and lifts all its benefits to a more general setting. In particular, we recover \emph{constructiveness}: a solution to the covering problem associated to a particular fragment yields a generic way for building an actual "optimal" "cover" of the input set. Furthermore, its definition is modular: the covering problem is designed so~that it can easily be generalized to accommodate future needs: while we use in this paper specific \ratms, the definition allows much more freedom (see~\cite{pseps3,pseps3j}).

Finally, the relevance of our new framework is supported by the fact that we are able to obtain covering algorithms for the fragments that were already known to enjoy a decidable "separation problem". In contrast to the previous algorithms, these more general ones are presented within a single unified framework. This is our third contribution. We present actual covering algorithms for five particular logics: "first-order logic" ("\fo"), two-variable "\fo" (\fod) and three logics within the quantifier alternation hierarchy of "\fo" (\sicu, \bscu and~\sicd). We also illustrate our proof techniques for three of these cases, $\sicu$, $\bscu$ and $\fod$.  As explained, the payoff is that we obtain \emph{effective} solutions to the covering problem. Hence, we obtain an effective method for building separators for the weaker "separation~problem".

\medskip
\noindent {\bf Organization.} We define the covering problem in Section~\ref{sec:covering}. We then devote Sections~\ref{sec:opti} and~\ref{sec:bgen:tame} to the presentation of our general framework designed for tackling the covering problem. In Section~\ref{sec:genba}, we then use them to design a general approach for handling covering in the restricted case of classes that are "Boolean algebras". We illustrate this approach with two detailed examples. In Section~\ref{sec:bsigma}, we investigate the fragment \bscu in the quantifier alternation hierarchy of "first-order logic". Then, in Section~\ref{sec:fo2}, we consider two-variable "first-order logic": \fod. Finally, we generalize our approach to handle covering for any lattice in Section~\ref{sec:genlatts}. We illustrate this generalized approach with a simple example in Section~\ref{sec:sigma1}: the fragment \sicu in the quantifier alternation hierarchy of "first-order logic".

\smallskip
\noindent  This paper is the full version of \cite{pzcovering}.

\section{Preliminary definitions}
\label{sec:prelims}
In this section, we present the standard terminology needed to formulate our results. Specifically, we define of classes of languages and their properties. Moreover, we introduce the standard "membership" and "separation" problems  (which we shall generalize with the "covering problem" in the next section).

\subsection{Finite words and classes of languages}
Throughout the whole paper, we fix a finite alphabet~$A$ and work with words over $A$. We denote by $A^*$ the set of all finite words over $A$. We let $\varepsilon$ be the empty word, and $A^{+}$ be the set $A^{*}\setminus\{\varepsilon\}$ of all nonempty words over~$A$.

Given a word $w \in A^*$, we denote by\AP\phantomintro{cont} $\cont{w}$ the set of letters appearing in $w$, that is, the least set $B\subseteq A$ such that $w\in B^*$. We say that $\cont{w}$ is the \emph{alphabet} of $w$. Finally, for $B\subseteq A$, we write\AP\phantomintro{fullcont}~$\fullcont{B}$ for the set of words whose alphabet is \emph{exactly} $B$, that is,
\[
  \fullcont{B} = \{w \in A^* \mid \cont{w} = B\}.
\]
Observe  that $\fullcont{B}\subseteq B^*$, and that $\fullcont{B}\subsetneq B^*$ when $B\neq\emptyset$.

A \emph{language (over $A$)} is a subset of $A^*$. Furthermore, a \emph{class of languages} \Cs is simply a set of languages over $A$.

\begin{remark}
  When it is important to consider several alphabets, a class of languages is usually defined as a function that maps a finite alphabet $A$ to a set of languages $\Cs(A)$ over~$A$. However, we adopt a simpler terminology in this paper, since we do not need to deal with several alphabets.
\end{remark}

All classes that we consider in the paper satisfy robust properties. We present them now. We say that a class \Cs of languages is \AP a ""lattice"" if it contains $\emptyset$ and $A^*$ and is closed under union and intersection. \AP A ""Boolean algebra"" is a "lattice" that is additionally closed under complement. Finally, given a language $L\subseteq A^*$ and a word $u\in A^*$, the left quotient $u^{-1}L$ of $L$ by $u$ is the language
\[
  u^{-1}L \stackrel{\text{def}}{=}\{w\in A^*\mid uw\in L\}.
\]
The right quotient $Lu^{-1}$ of $L$ by $u$ is defined symmetrically. A class \Cs is \AP ""\quotientingintro"" when it is closed under taking (left and right) quotients by words of~$A^*$. In the paper, all classes that we consider are at least "lattices".

\begin{example}\label{ex:at}
  Let   \at be the class of languages consisting of all Boolean combinations of languages~$B^*$, for some sub-alphabet $B \subseteq A$. Here, ``\at'' stands for ``alphabet testable'': a language is in \at when membership of a word in this language depends only on the set of letters occurring in the word. It is straightforward to verify that \at is a {\bf finite} \vari, which will serve as an important example in the paper.
\end{example}

Furthermore, we are only interested in regular languages, \emph{i.e.}, the classes that we consider in the paper contain regular languages only. These are the languages that can be equivalently defined by nondeterministic finite automata, finite monoids or monadic second-order logic. In the paper, we shall use the definitions based on automata and monoids, which we briefly recall below.

\medskip
\noindent
{\bf Automata.} A nondeterministic finite automaton (\nfa) is a tuple $\As = (Q,I,F,\delta)$, where $Q$ is a finite set of states, $I$ (resp. $F$) is the set of initial (resp.\ final) states, and $\delta \subseteq Q \times A \times Q$ is a set of transitions.  For such an \nfa and two states $q,r \in Q$, we shall write $L_{q,r}\stackrel{\text{def}}=\{w\in A^* \mid q \xrightarrow{w} r\}$ for the language of words labeling a run from state $q$ to state $r$. It is well-know that a language $L$ is regular when it is recognized by some \nfa \As, \emph{i.e.}, $L$ is the union of all languages $L_{q,r}$ with $q \in I$ and $r \in F$.

\medskip
\noindent
{\bf Semigroups and monoids.} A \emph{semigroup} is a set $S$ endowed with a binary associative operation $(s,t)\mapsto s\cdot t$. We also write $st$ instead of $s\cdot t$. An idempotent of a semigroup $S$ is an element $e \in S$ such that $ee = e$. It is folklore that for any \emph{finite} semigroup $S$, there exists a natural number $\omega(S)$ (denoted by $\omega$ when $S$ is understood from the context) such that for any $s \in S$, the element $s^\omega$ is idempotent.

A \emph{monoid} is a semigroup having a neutral element $1_S$, \emph{i.e.}, such that $1_S\cdot s=s\cdot 1_S=s$ for every element $s$ of the monoid. In particular, $A^{+}$ is a semigroup (the binary operation is the concatenation of words) and $A^{*}$ is a monoid, with $\varepsilon$ as the neutral element. An \emph{ordered monoid} is a monoid $M$ together with an order relation $\leq$ on $M$ which is compatible with the multiplication of $M$, that is, such that $s\leq s'$ and $t\leq t'$ imply $ss'\leq tt'$.

A \emph{morphism} between two monoids $M,M'$ is a map $\alpha: M \to M'$ such that $\alpha(1_M) = 1_{M'}$ and for all $s,t\in M$, we have $\alpha(st)=\alpha(s)\alpha(t)$. It is well known that a language $L$ is regular if and only if there exists a morphism from $A^{*}$ into a finite monoid such that membership of any word in~$L$ is determined by its image under this morphism.

\subsection{The \kl{membership} and \kl{separation problems}}
As announced above, in the paper, we only work with classes of regular languages. Usually, such a class \Cs is associated to a syntax: the languages in \Cs are those which can be described by at least one representation in this syntax (see the example of "first-order logic" below). When we have such a class of languages in hand, the most basic question is whether, for a regular language given as input, one can test membership of the language in the class \Cs. In other words, we want to determine whether there exists an algorithm that decides when this input language admits a description in the given syntax. The corresponding decision problem is called \emph{"\Cs-membership"} (or "membership" for \Cs).

\begin{megadef}[\AP""Membership problem"" for \Cs]\leavevmode\\
  \begin{tabular}{ll}
    {\bf Input:}    & A regular language $L$.\\
    {\bf Question:} & Does $L$ belong to \Cs?
  \end{tabular}
\end{megadef}

Recent solutions to the "membership" problem actually consider a more general problem, the \emph{"\Cs-separation problem"} (or "separation" problem for \Cs). This is the following decision problem:
\begin{megadef}[\AP""Separation problem"" for \Cs]$\quad$\\
  \begin{tabular}{ll}
    {\bf Input:}    & Two regular languages $L_1$ and $L_2$.\\
    {\bf Question:} & Does there exist a language $K$ from \Cs such that $L_1\subseteq K$ and $K\cap L_2=\emptyset$?
  \end{tabular}
\end{megadef}
We say that a language $K$ such that $L_1\subseteq K$ and $K\cap L_2=\emptyset$ is a \emph{separator} of $(L_1,L_2)$. Observe that since regular languages are closed under complement, there is a straightforward reduction from "membership" to "separation". Indeed, an input language $L$ belongs to \Cs when it can be \Cs-separated from its complement.

As we explained in the introduction, we shall not work directly with these two problems in the paper. Instead, we consider the more general \emph{"covering problem"}, which we define in the next section.

\AP
\subsection{\emph{\reintro{First-order logic}} and quantifier alternation} Most examples of classes that we shall consider in the paper are taken from logic. Here, we briefly recall the definition of "first-order logic" over words and its quantifier alternation hierarchy.
\phantomintro{First-order logic}

One may view a finite word as a logical structure composed of a linearly ordered sequence of positions labeled over $A$. In "first-order logic" ("\fo"), one may use the following predicates:

\begin{enumerate}
\item For each letter $a \in A$, a unary predicate $P_a$ which selects positions labeled with an ``$a$'',
\item A binary predicate ``$<$'' for the (strict) linear order between the positions.
\end{enumerate}
A language $L$ is said to be \emph{"first-order" definable} when there exists an "\fo" sentence $\varphi$ such that $L = \{w  \mid w \models \varphi\}$. One also denotes by "\fo" the class of all "first-order" definable languages. It is folklore that "\fo" is a \vari.

\medskip

We shall also consider the quantifier alternation hierarchy of "\fo". It is natural to classify "first-order" sentences by counting their number of quantifier alternations. Let $n \in \nat$. We say that an "\fo" sentence is \sic{n} (resp. \pic{n}) when its prenex normal form has either,
\begin{itemize}
\item \emph{exactly} $n -1$ quantifier alternations (\emph{i.e.}, exactly $n$ blocks of quantifiers) starting with an $\exists$ (resp.\ $\forall$), or
\item \emph{strictly less} than $n -1$ quantifier alternations (\emph{i.e.}, strictly less than $n$ blocks of	quantifiers).
\end{itemize}
For example, a formula whose prenex normal form is
\[
  \forall x_1 \exists x_2 \forall x_3 \forall x_4
  \ \varphi(x_1,x_2,x_3,x_4) \quad \text{(with $\varphi$ quantifier-free)}
\]
\noindent
is \pic{3}. In general, the negation of a \sic{n} sentence is not a \sic{n} sentence (it is a \pic{n} sentence). Hence it is relevant to define \bsc{n} sentences as the Boolean combinations of \sic{n} sentences. As for "\fo", we use \sic{n}, \pic{n} and \bsc{n} to denote the corresponding classes of languages. This yields an infinite hierarchy of classes of languages, as presented in \figurename~\ref{fig:hiera}.

\tikzstyle{non}=[inner sep=1pt]
\tikzstyle{tag}=[draw,fill=white,sloped,circle,inner sep=1pt]
\begin{figure}[!htb]
  \centering
  \begin{tikzpicture}

    \node[non] (s1) at (1.0,-0.8) {\sicu};
    \node[non] (p1) at (1.0,0.8) {\picu};
    \node[non] (b1) at (2.5,0.0) {\bscu};

    \node[non] (s2) at ($(b1)+(1.5,-0.8)$) {\sicd};
    \node[non] (p2) at ($(b1)+(1.5,0.8)$) {\picd};
    \node[non] (b2) at ($(b1)+(3.0,0.0)$) {\bscd};

    \node[non] (s3) at ($(b2)+(1.5,-0.8)$) {\sict};
    \node[non] (p3) at ($(b2)+(1.5,0.8)$) {\pict};
    \node[non] (b3) at ($(b2)+(3.0,0.0)$) {\bsct};

    \node[non] (s4) at ($(b3)+(1.5,-0.8)$) {\sic{4}};
    \node[non] (p4) at ($(b3)+(1.5,0.8)$) {\pic{4}};

    \draw[thick] (s1.east) to [out=0,in=-90] node[tag] {\scriptsize
      $\subsetneq$} (b1.-120);
    \draw[thick] (p1.east) to [out=0,in=90] node[tag] {\scriptsize
      $\subsetneq$} (b1.120);

    \draw[thick] (b1.-60) to [out=-90,in=180] node[tag] {\scriptsize
      $\subsetneq$} (s2.west);
    \draw[thick] (b1.60) to [out=90,in=-180] node[tag] {\scriptsize
      $\subsetneq$} (p2.west);
    \draw[thick] (s2.east) to [out=0,in=-90] node[tag] {\scriptsize
      $\subsetneq$} (b2.-120);
    \draw[thick] (p2.east) to [out=0,in=90] node[tag] {\scriptsize
      $\subsetneq$} (b2.120);

    \draw[thick] (b2.-60) to [out=-90,in=180] node[tag] {\scriptsize
      $\subsetneq$} (s3.west);
    \draw[thick] (b2.60) to [out=90,in=-180] node[tag] {\scriptsize
      $\subsetneq$} (p3.west);
    \draw[thick] (s3.east) to [out=0,in=-90] node[tag] {\scriptsize
      $\subsetneq$} (b3.-120);
    \draw[thick] (p3.east) to [out=0,in=90] node[tag] {\scriptsize
      $\subsetneq$} (b3.120);

    \draw[thick] (b3.-60) to [out=-90,in=180] node[tag] {\scriptsize
      $\subsetneq$} (s4.west);
    \draw[thick] (b3.60) to [out=90,in=-180] node[tag] {\scriptsize
      $\subsetneq$} (p4.west);

    \draw[thick,dotted] ($(s4.east)+(0.1,0.0)$) to
    ($(s4.east)+(0.6,0.0)$);
    \draw[thick,dotted] ($(p4.east)+(0.1,0.0)$) to
    ($(p4.east)+(0.6,0.0)$);

  \end{tikzpicture}
  \caption{Quantifier Alternation Hierarchy}
  \label{fig:hiera}
\end{figure}
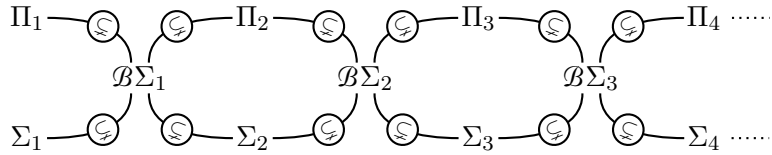

It is folklore that all classes \sic{n} and \pic{n} in the hierarchy are \pvaris (but not "Boolean algebras") and that all classes \bsc{n} are \varis.

\section{The \kl{covering problem}}
\label{sec:covering}
In this section, we first define the "covering problem" and establish the connection with separation. Next, we outline the steps that we shall take in the next sections for devising a general approach to this problem.

\subsection{Preliminary definitions} Unlike the "membership problem" but as the "separation problem", the "covering problem" takes \textbf{two} different objects as input. The first one is a single language $L \subseteq A^*$. The second one is a \emph{finite multiset of languages} $\Lb = \{L_1,\dots,L_n\}$. Note that we speak of multisets here for the sake   of allowing several copies of the same language in \Lb.

\begin{remark}
  Using multisets of languages is not mandatory but it is natural. Indeed, each language is given by a recognizer (typically, an \nfa or a monoid morphism). Since two distinct recognizers may define the same language, our input is indeed a multiset of languages. Another important point is that considering multisets is harmless. If\/ $\Lb_1$ and $\Lb_2$ are distinct multisets for the same underlying set of languages, then the "covering problems" for instances $\Lb_1$ and $\Lb_2$ will be equivalent.
\end{remark}

Consider some class \Cs. Given an input language $L$ and an input finite multiset of languages \Lb, the "\Cs-covering problem" asks whether there exists a \emph{"\Cs-cover" of $L$ which is "separating" for \Lb}. Let us first define what these notions mean.

\medskip
\noindent
{\bf "Covers".} Consider some language $L \subseteq A^*$. A \AP""cover"" of $L$ is just a \textbf{finite} set of languages \Kb such that:
\[
  L \subseteq \bigcup_{K \in \Kb} K.
\]
We shall often look for "covers" of the universal language $A^*$. Indeed, this special case suffices when the investigated class \Cs is a Boolean algebra (we discuss this point in Section~\ref{sec:genba}). Such a "cover" will be called a \AP""universal cover"".

\medskip
\noindent
{\bf "Separating" "covers".} Consider a finite multiset of languages \Lb and a set \Kb of languages. We say that~\Kb is \AP""separating"" for \Lb when the following property holds:
\[
  \text{For all $K \in \Kb$, there exists $L \in \Lb$ such that $K \cap L' = \emptyset$}.
\]
In other words, \Kb is "separating" for \Lb when no $K \in \Kb$ intersects each of the languages in \Lb. Note that while this definition makes sense for any set of languages \Kb, we are mainly interested in the case when \Kb is a "cover" of some other language $L$. We illustrate this definition in \figurename~\ref{fig:lattice:separatingcover}.

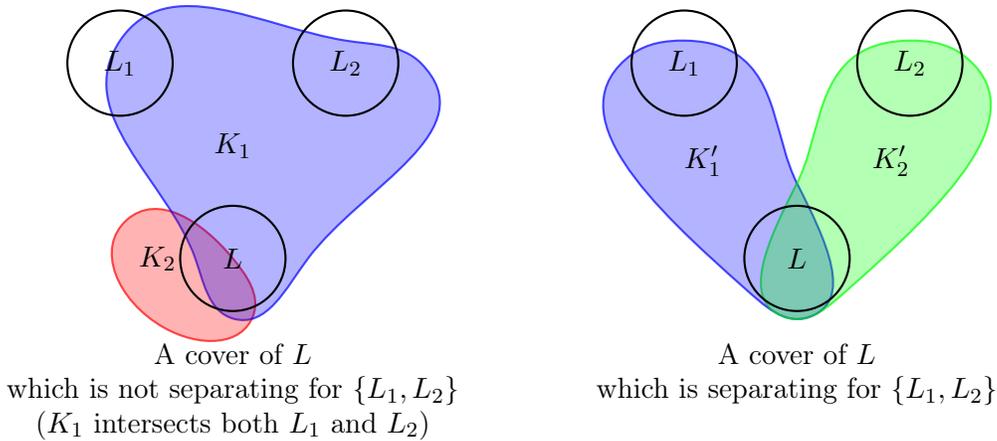
\begin{figure}[!ht]
  \begin{center}
    \begin{tikzpicture}
      \draw[thick] (0,0) circle (0.7cm) node (l1) {$L_1$};
      \draw[thick] (3,0) circle (0.7cm) node (l2) {$L_2$};
      \draw[thick] ($(0,0)!1!-60:(3,0)$) circle (0.7cm) node (l3) {$L$};

      \coordinate (m1) at ($(l1)+(1.0,0.0)$);
      \coordinate (m2) at ($(l2)+(1.0,0.0)$);
      \coordinate (m3) at ($(l3)+(1.0,0.0)$);

      \begin{pgfonlayer}{background}
        \def\fstlang{($(l3)$) to[closed,curve
          through={
            ($(l3)!1.5!160:(m3)$)
            .. ($(l3)!1.3!225:(m3)$)
            .. ($(l3)!0.8!290:(m3)$)
          }]
          ($(l3)$)}

        \def\seclang{($(l3)!0.8!-90:(m3)$) to[closed,curve
          through={
            ($(l3)!0.5!180:(m3)$)
            .. ($(l1)!0.3!90:(m1)$)
            .. ($(l2)!0.3!90:(m2)$)
            .. ($(l2)!1!0:(m2)$)
            .. ($(l3)!1!0:(m3)$)
          }]
          ($(l3)!0.8!-90:(m3)$)}

        \begin{scope}[draw opacity=0.7,fill opacity=0.3]
          \draw[thick,red] \fstlang;
          \draw[thick,blue] \seclang;
          \fill[red] \fstlang;
          \fill[blue] \seclang;
        \end{scope}

        \node at ($(l3)-(1.0,0.0)$) {$K_2$};
        \node at ($(l3)!1/2!30:(l2)$) {$K_1$};

        \node[align=center,anchor=north] at ($(l3)-(0,1.0)$) {A "cover" of $L$\\ which is not "separating" for $\{L_1,L_2\}$ \\($K_1$ intersects both $L_1$ and $L_2$)};

      \end{pgfonlayer}

      \begin{scope}[xshift=7.5cm]

        \draw[thick] (0,0) circle (0.7cm) node (l1) {$L_1$};
        \draw[thick] (3,0) circle (0.7cm) node (l2) {$L_2$};
        \draw[thick] ($(0,0)!1!-60:(3,0)$) circle (0.7cm) node (l3) {$L$};

        \coordinate (m1) at ($(l1)+(1.0,0.0)$);
        \coordinate (m2) at ($(l2)+(1.0,0.0)$);
        \coordinate (m3) at ($(l3)+(1.0,0.0)$);

        \begin{pgfonlayer}{background}
          \def\fstlang{($(l1)!1.3!160:(m1)$) to[closed,curve
            through={
              ($(l1)!1!90:(m1)$)
              .. ($(l2)!1!90:(m2)$)
              .. ($(l2)!1.3!20:(m2)$)
              .. ($(l2)!1/2!(l1)$)
            }]
            ($(l1)!1.3!160:(m1)$)}
          \def\seclang{($(l3)!0.8!280:(m3)$) to[closed,curve
            through={
              ($(l3)!0.8!210:(m3)$)
              .. ($(l1)!0.7!170:(m1)$)
              .. ($(l1)!0.3!90:(m1)$)
              .. ($(l1)!0.7!10:(m1)$)
              .. ($(l3)!1!90:(m3)$)
            }]
            ($(l3)!0.8!280:(m3)$)}

          \def\trdlang{($(l3)!0.8!-100:(m3)$) to[closed,curve
            through={
              ($(l3)!0.8!-30:(m3)$)
              .. ($(l2)!0.7!10:(m2)$)
              .. ($(l2)!0.3!90:(m2)$)
              .. ($(l2)!0.7!170:(m2)$)
              .. ($(l3)!1!90:(m3)$)
            }]
            ($(l3)!0.8!-100:(m3)$)}

          \begin{scope}[draw opacity=0.7,fill opacity=0.3]
            \draw[thick,blue] \seclang;
            \draw[thick,green] \trdlang;
            \fill[blue] \seclang;
            \fill[green] \trdlang;
          \end{scope}

          \node at ($($(l3)!1/2!(l1)$)-(0.5,0.0)$) {$K'_1$};
          \node at ($($(l2)!1/2!(l3)$)+(0.5,0.0)$) {$K'_2$};
        \end{pgfonlayer}

        \node[align=center,anchor=north] at ($(l3)-(0,1.0)$) {A "cover" of $L$\\ which is "separating" for $\{L_1,L_2\}$};
      \end{scope}
    \end{tikzpicture}
  \end{center}
  \caption{Two "covers" of $L$. The right one is "separating" for $\{L_1,L_2\}$ and the left one is not}
  \label{fig:lattice:separatingcover}
\end{figure}

A simple observation is that for any language $L$ and any multiset of languages \Lb, there exists a "cover" of $L$ which is "separating" for \Lb if and only if the intersection between $L$ and all languages in \Lb is empty. This generalizes the fact that two languages are separable if and only if they are disjoint.

\begin{lemma} \label{lem:whenseparating}
  Let $L$ be a language and let \Lb be a finite multiset of languages. There exists a "cover" of~$L$ which is "separating" for \Lb if and only the following condition is satisfied:
  \[
    L \cap \bigcap_{L' \in \Lb} L' = \emptyset.
  \]
\end{lemma}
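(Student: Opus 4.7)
The statement is an elementary set-theoretic equivalence, so I would prove it by handling both directions directly from the definitions of "cover" and "separating". The main subtlety is just taking care of the edge case when $\Lb$ is empty.

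For the forward implication, I would argue by contradiction. Suppose $\Kb$ is a cover of $L$ which is separating for $\Lb$, and assume there exists $w \in L \cap \bigcap_{L' \in \Lb} L'$. Since $\Kb$ covers $L$, we can pick some $K \in \Kb$ with $w \in K$. By the separating hypothesis, there is some $L' \in \Lb$ with $K \cap L' = \emptyset$. But $w$ belongs to every element of $\Lb$, so in particular $w \in L'$, and also $w \in K$, contradicting $K \cap L' = \emptyset$. (If $\Lb = \emptyset$, then $\Kb$ being separating for $\Lb$ forces $\Kb = \emptyset$, hence $L \subseteq \emptyset$, i.e., $L = \emptyset$, which matches the condition since $\bigcap_{L' \in \emptyset} L' = A^*$.)

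For the converse, the idea is to build an explicit cover from the complements of the languages in $\Lb$. Assume $L \cap \bigcap_{L' \in \Lb} L' = \emptyset$. Define
\[
  \Kb = \{A^* \setminus L' \mid L' \in \Lb\}.
\]
This is a finite set of languages. By De Morgan, $\bigcup_{K \in \Kb} K = A^* \setminus \bigcap_{L' \in \Lb} L'$, and the hypothesis gives exactly $L \subseteq A^* \setminus \bigcap_{L' \in \Lb} L'$, so $\Kb$ is a cover of $L$. Moreover, for each $K = A^* \setminus L' \in \Kb$, we have $K \cap L' = \emptyset$ by construction, so $\Kb$ is separating for $\Lb$. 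The only case where this construction is vacuous is when $\Lb = \emptyset$, in which case the hypothesis forces $L = \emptyset$, and the empty set $\Kb = \emptyset$ is trivially a cover of $L$ which is (vacuously) separating for $\Lb$.

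There is essentially no difficulty here; the result amounts to the set-theoretic fact $L \cap \bigcap_{L'} L' = \emptyset \iff L \subseteq \bigcup_{L'} (A^* \setminus L')$. The only point requiring minimal care is keeping the empty-$\Lb$ case consistent with the convention that an empty intersection equals $A^*$.
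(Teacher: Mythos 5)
Your proof is correct, including the bookkeeping for the case $\Lb = \emptyset$. The forward direction is essentially the paper's argument: the paper writes it as the inclusion $L \cap \bigcap_{L' \in \Lb} L' \subseteq \bigcup_{K \in \Kb}\bigl(K \cap \bigcap_{L' \in \Lb} L'\bigr)$ and kills each term using the separating hypothesis, which is the same computation as your witness-word contradiction. The converse is where you genuinely diverge: you cover $L$ by the complements $\{A^* \setminus L' \mid L' \in \Lb\}$, so that the separating witness for each member of the cover is built in by construction and the lemma reduces to the tautology $L \cap \bigcap_{L'} L' = \emptyset \iff L \subseteq \bigcup_{L'} (A^* \setminus L')$; the paper instead takes the partition of $L$ into classes of the equivalence ``$u \sim v$ iff $u$ and $v$ belong to exactly the same languages of $\Lb$'' and checks that emptiness of the intersection makes this partition separating. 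Your construction is shorter and produces a cover of size at most the size of $\Lb$; the paper's partition is slightly heavier to verify but is the construction that recurs later in the framework (it is precisely the partition of $L$ according to the value of $\rho_\Lb(w)$, i.e.\ the ``trivial best cover'' whose imprint is the trivial one), so it foreshadows the imprint machinery. Since the lemma only asserts existence of some separating cover, both routes are fully adequate.
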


\begin{proof}
  Assume first that there exists a "cover" \Kb of $L$ which is "separating" for \Lb. Since \Kb is a "cover" of~$L$, we have $L \subseteq \bigcup_{K \in \Kb} K$,
  and therefore,
  \[
    L \cap \bigcap_{L' \in \Lb} L' \subseteq \bigcup_{K \in \Kb} \left(K \cap \bigcap_{L' \in \Lb} L'\right).
  \]
  Moreover, since \Kb is "separating", for any $K \in \Kb$, there exists $L' \in \Lb$ such that $K \cap L' = \emptyset$. It follows that $\bigcup_{K \in \Kb} \left(K \cap \bigcap_{L' \in \Lb} L'\right)= \emptyset$ and we conclude that $L \cap \bigcap_{L' \in \Lb} L' = \emptyset$.

  Conversely, assume that $L \cap {\bigcap_{L' \in \Lb} L'} = \emptyset$.  Consider the following equivalence relation defined over words of $L$: $u,v \in L$  are equivalent when $u \in L' \Leftrightarrow v \in L'$ for all $L' \in \Lb$. We let \Kb be the partition of $L$ induced by this equivalence. Clearly, \Kb is a "cover" of $L$. Moreover, one may verify that it is "separating" for \Lb since we have $L \cap \bigcap_{L' \in \Lb} L' = \emptyset$.
\end{proof}

\begin{remark}\label{rem:coveringeasier}
  For multisets \Lb whose size is at least $2$, finding a "cover" of $L$ which is "separating" for \Lb is less demanding than finding separators for all pairs of languages $(L,L')$ where $L' \in \Lb$. For example, consider the alphabet $A = \{a,b,c\}$ and let $L = a^+ + b^+$, $L_1 = b^+ + c^+$ and $L_2 = c^+ + a^+$. It is impossible to separate the pairs $(L,L_1)$ and $(L,L_2)$ (as they pairwise intersect). However, $\{a^*,b^*\}$ is a "cover" of $L$ which is "separating" for $\{L_{1},L_{2}\}$.
\end{remark}

Naturally, as for separation, the "covering problem" restricts the set of allowed "covers" with a predefined class \Cs: we look for "separating" "covers" which are made of languages belonging to \Cs.

\subsection{The problem}

We may now state the "covering problem" for regular languages. As for "separation" and "membership", it depends on a class \Cs of languages that restricts the set of possible "covers". Given a language $L$, a \AP""\Cs-cover"" of\/ $L$ is a "cover" \Kb of $L$ such that all languages $K \in \Kb$ belong to \Cs. Finally, if \Lb is a finite multiset of languages, we say that the pair $(L,\Lb)$ is \emph{\Cs-coverable} when there exists a "\Cs-cover" of $L$ which is "separating" for \Lb (when \Lb is clear from the context, we will simply say that such a "cover" is "separating"). The "covering problem" is a follows.

\begin{megadef}[\AP""Covering problem"" for \Cs]$\quad$\\
  \begin{tabular}{ll}
    {\bf Input:}    & A regular language $L$ and a finite multiset of regular languages $\Lb$.\\
    {\bf Question:} & Is $(L,\Lb)$ \Cs-coverable?
  \end{tabular}
\end{megadef}

There are two stages when solving the "covering problem" for a given class \Cs.
\begin{enumerate}
  \itemAP \emph{Stage One}: find an algorithm which \emph{decides} the "covering problem" for \Cs (such an algorithm is called a \emph{"covering" algorithm} for \Cs).
\item \emph{Stage Two}: find an algorithm that actually \emph{computes} representations for languages in a "separating" "\Cs-cover" when it exists (\emph{i.e.}, when the answer to the question of Stage~1 is ``yes'').
\end{enumerate}

Let us formally connect "\Cs-covering" with "\Cs-separation": it is more general (provided that the class \Cs is closed under union). Specifically, "separation" is the special case of "covering" when the multiset $\Lb$ is a singleton. While simple, this connection is important: many "separation" algorithms in the literature are actually based on "covering".

\begin{theorem} \label{thm:covsep}
  Let \Cs be a class closed under union and let $L_1,L_2$ be two languages. The following properties are equivalent:
  \begin{enumerate}[ref=\arabic*)]
  \item $L_1$ is \Cs-separable from $L_2$.
  \item $(L_1,\{L_2\})$ is \Cs-coverable.
  \end{enumerate}
\end{theorem}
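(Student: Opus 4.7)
The plan is to prove the two implications directly from the definitions, using closure of $\Cs$ under (finite) unions for the harder direction.

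For the direction $(1) \Rightarrow (2)$, I would start from a $\Cs$-separator $K$ of $(L_1, L_2)$, i.e.\ a language $K \in \Cs$ with $L_1 \subseteq K$ and $K \cap L_2 = \emptyset$. Then the singleton set $\Kb = \{K\}$ is a finite set of languages from $\Cs$; the inclusion $L_1 \subseteq K$ says it is a \Cs-cover of $L_1$, and the equality $K \cap L_2 = \emptyset$ directly witnesses that $\Kb$ is separating for $\{L_2\}$ in the sense of the definition (each element of $\Kb$ has empty intersection with some element of $\{L_2\}$). So $(L_1,\{L_2\})$ is \Cs-coverable. This direction uses no hypothesis on $\Cs$ beyond it being a class.

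For the direction $(2) \Rightarrow (1)$, I would start from a finite \Cs-cover $\Kb$ of $L_1$ which is separating for $\{L_2\}$, and take the union $K = \bigcup_{K' \in \Kb} K'$. Since $\Kb$ is finite and all its elements lie in $\Cs$, the closure of $\Cs$ under union guarantees $K \in \Cs$ (this is where the hypothesis is used; note that I only need closure under \emph{finite} union, which is exactly what the definition of ``closed under union'' provides by induction). The covering condition gives $L_1 \subseteq K$. For disjointness from $L_2$, I observe that being separating for the singleton $\{L_2\}$ is particularly strong: the only available element of $\{L_2\}$ is $L_2$ itself, so each $K' \in \Kb$ must satisfy $K' \cap L_2 = \emptyset$. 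Taking the union over $K' \in \Kb$ yields $K \cap L_2 = \emptyset$, so $K$ is a $\Cs$-separator of $L_1$ from $L_2$.

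There is no real obstacle; the only subtle point worth making explicit is that the singleton case of a separating multiset collapses ``some $L \in \Lb$ with $K' \cap L = \emptyset$'' into ``$K' \cap L_2 = \emptyset$ for every $K' \in \Kb$'', which is what allows the union to remain disjoint from $L_2$. This collapse is specific to $|\Lb| = 1$ and explains why covering is strictly more general than separation in general (as already illustrated by Remark~\ref{rem:coveringeasier}).
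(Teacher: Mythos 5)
Your proof is correct and takes essentially the same route as the paper's: singleton cover $\{K\}$ for one direction, union of the cover plus closure under finite union for the other. The extra remark about why the singleton case collapses the disjunction in the definition of ``separating'' is accurate but is exactly the observation already implicit in the paper's sentence ``since \Kb is separating for $\{L_2\}$, we have $K' \cap L_2 = \emptyset$ for all $K' \in \Kb$''.
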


\begin{proof}
  Assume first that $L_1$ is \Cs-separable from $L_2$. Then there exists $K \in \Cs$ such that $L_1 \subseteq K$ and $L_2 \cap K = \emptyset$. It follows that $\{K\}$ is a "separating" "\Cs-cover" of $(L_1,\{L_2\})$.
  
  Conversely, assume that $(L_1,\{L_2\})$ is \Cs-coverable and let \Kb be a "separating" "\Cs-cover". We let $K = \bigcup_{K' \in \Kb} K'$. Clearly, $K \in \Cs$ by closure under union. Since \Kb is a "cover" of $L_1$, we have $L_1 \subseteq K$ and since \Kb is "separating" for $\{L_2\}$, we have $K' \cap L_2 = \emptyset$ for all $K' \in \Kb$. Thus, $K \cap L_2 = \emptyset$, which means that $K \in \Cs$ separates $L_1$ from $L_2$.
\end{proof}

\subsection{A special case: universal covering}

Clearly, the definition of \Cs-covering makes sense for any class \Cs. However, it turns out that when \Cs is a Boolean algebra, it suffices to consider a special case which has one less parameter. Consequently, handling covering will be simpler for Boolean algebras.

We call this restriction \emph{universal covering}. This weaker problem corresponds to the special case of inputs $(L,\Lb)$ when the language $L$ that needs to be covered is the universal language $A^*$. More precisely, given a class \Cs restricting the set of possible covers, the \emph{universal covering problem} for \Cs is as follows.
\begin{megadef}[\AP""Universal covering problem"" for \Cs]$\quad$\\
  \begin{tabular}{ll}
    {\bf Input:}    & A finite multiset of regular languages \Lb.\\
    {\bf Question:} & Is $(A^*,\Lb)$ \Cs-coverable?
  \end{tabular}
\end{megadef}
Therefore, the problem asks whether there exists a \emph{universal cover} (\emph{i.e.}, a cover of $A^*$) which is separating for the input multiset \Lb. Intuitively, this restriction is simpler than the full problem since it takes only one object as input rather than two. For the sake of simplifying the presentation, when we consider universal covering, we shall often omit $A^*$ and say that \Lb is \emph{\Cs-coverable} to indicate that $(A^*,\Lb)$ is \Cs-coverable. 

As announced, when \Cs is a Boolean algebra, the full covering problem reduces to this special case. Consequently, when working with a Boolean algebra, one should not consider the full covering problem. Instead, working with universal covering (which has one less parameter) is simpler. We prove this in the following proposition.

\begin{proposition}
  \label{prop:bgen:eqpoint}
  Let $L$ be a language and $\Lb$ be a finite multiset of languages. Given any Boolean algebra \Cs, the two following properties are equivalent:
  \begin{enumerate}
  \item\label{it:eqpoint:1} $(L,\Lb)$ is \Cs-coverable.
  \item\label{it:eqpoint:2} $\{L\} \cup \Lb$ is \Cs-coverable.
  \end{enumerate}
\end{proposition}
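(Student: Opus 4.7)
The direction $(\ref{it:eqpoint:2}) \Rightarrow (\ref{it:eqpoint:1})$ is the easier one and actually does not require the Boolean algebra hypothesis at all. Given a \Cs-cover $\Kb$ of $A^*$ that is "separating" for $\{L\} \cup \Lb$, I would simply throw away the elements that contribute nothing to covering $L$: set $\Kb' = \{K \in \Kb \mid K \cap L \neq \emptyset\}$. Since $\Kb$ covers $A^*$ and the discarded elements are disjoint from $L$, $\Kb'$ still covers $L$. Furthermore, every $K \in \Kb'$ intersects $L$, so the separation witness guaranteed in $\{L\}\cup\Lb$ cannot be $L$ itself; it must lie in $\Lb$. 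Hence $\Kb'$ is a \Cs-cover of $L$ that is "separating" for $\Lb$.

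The interesting direction is $(\ref{it:eqpoint:1}) \Rightarrow (\ref{it:eqpoint:2})$, and this is where both closure properties of a "Boolean algebra" are used. Starting from a \Cs-cover $\Kb_1$ of $L$ "separating" for $\Lb$, I would fuse it into a single language $M = \bigcup_{K \in \Kb_1} K$, which lies in \Cs by closure under finite union, and satisfies $L \subseteq M$. Then by closure under complement, $M' \coloneqq A^* \setminus M$ also lies in \Cs, and it is disjoint from $L$ since $L \subseteq M$. The set $\Kb_2 = \Kb_1 \cup \{M'\}$ is then a \Cs-cover of $A^*$ because $M \cup M' = A^*$; and it is "separating" for $\{L\} \cup \Lb$: each $K \in \Kb_1$ keeps its original witness from $\Lb$, while the new element $M'$ is witnessed by $L$ itself.

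In terms of difficulty, there is no real obstacle once one sees the right construction: the key observation is that we have the freedom to use $L$ as a separation witness on the right-hand side, so all that is required on the new cover is to build one \Cs-language that contains $A^* \setminus L$ and avoids $L$. Taking the complement of the union of $\Kb_1$ accomplishes precisely that, and both the union and the complementation live inside \Cs thanks to the Boolean algebra assumption. This also clarifies why the statement would fail for mere "lattices": we would still have $M \in \Cs$, but no reason for $A^* \setminus M$ to belong to~\Cs.
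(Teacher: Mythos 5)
Your proof is correct and follows essentially the same construction as the paper's: in the direction $\ref{it:eqpoint:1}\Rightarrow\ref{it:eqpoint:2}$ you add the complement of the union of the cover (the paper calls it $K'$, you call it $M'$), and in the direction $\ref{it:eqpoint:2}\Rightarrow\ref{it:eqpoint:1}$ you discard the cover elements disjoint from $L$. Your remark that the second direction does not use the Boolean algebra hypothesis, while not stated in the paper, is accurate and a useful observation.
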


\begin{proof}
  We first prove the implication $\ref{it:eqpoint:1}\Rightarrow\ref{it:eqpoint:2}$. Assume that $(L,\Lb)$ is \Cs-coverable and let \Kb be a "\Cs-cover" of $L$ which is "separating" for $\Lb$. Our goal is to find a "universal \Cs-cover", which is "separating" for $\{L\} \cup \Lb$. Let $K'$ be the following language:
  \[
    K' = A^* \setminus \left(\bigcup_{K \in \Kb} K \right).
  \]
  Note that since \Cs is a Boolean algebra, we have $K' \in \Cs$. Let $\Kb' = \Kb \cup \{K'\}$. Clearly, $\Kb'$ is a "universal \Cs-cover". We show that it is "separating" for $\{L\} \cup \Lb$ which concludes the proof for this direction. Given $K \in \Kb'$, either $K \in \Kb$, or $K=K'$. In the first case, we get $H \in \Lb$ such that $K \cap H = \emptyset$ since \Kb is "separating" for $\Lb$. Otherwise, when $K = K'$, we have $K'\cap L = \emptyset$ by definition of $K'$, since \Kb is a "cover" of $L$. This concludes the proof for this direction.

  Conversely, assume that $\{L\} \cup \Lb$ is \Cs-coverable and let \Kb be a "universal \Cs-cover" which is "separating" for $\{L\} \cup \Lb$. We define:
  \[
    \Kb' = \{K \in \Kb \mid K \cap L \neq \emptyset\}.
  \]
  We claim that $\Kb'$ is a "\Cs-cover" of $L$ which is "separating" for $\Lb$. Indeed, we know that $L \subseteq \bigcup_{K \in \Kb'} K$ since \Kb is a "cover" of $A^*$ and $\Kb'$ contains all languages in \Kb that intersect $L$. Moreover, since \Kb is "separating" for $\{L\} \cup \Lb$, we know that for any $K \in \Kb$, either $K \cap L = \emptyset$ or $L\cap H = \emptyset$ for some $H \in \Lb$. For the languages $K \in \Kb'$, we know by definition that $K \cap L \neq \emptyset$. Thus, there exists $H \in \Lb$ such that $K\cap H = \emptyset$. This concludes the proof.
\end{proof}

\subsection{A framework for the \kl{covering problem}}

Now that "covering" is defined, we need to explain the benefits of considering this problem rather than just separation. We do so by presenting a general framework whose purpose is to obtain covering algorithms for actual classes of languages. This approach is designed with both stages of the problem in mind: finding a decision algorithm and constructing "separating" "covers" when they exist. Consider some "lattice" \Cs. Our approach to "\Cs-covering" is obtained by combining three independent key ideas that we describe now.

\begin{enumerate}[ref=(\arabic*)]
\item\label{key:bgen:1} When trying to solve "\Cs-covering" for some input pair $(L,\Lb)$, we view the multiset \Lb as a \emph{quality measure}: it is used to evaluate the quality of "\Cs-covers" of $L$. In other words, we are browsing "\Cs-covers" of $L$ in search for one which is good enough with respect to \Lb. This point of view allows us to reformulate "\Cs-covering" as a \emph{\bfseries computational problem}, rather than a decision problem. One wants to build an object that always exists regardless of whether $(L,\Lb)$ is \Cs-coverable or not: a "\Cs-cover" of $L$ which is \emph{optimal for \Lb}. The main property of this object is that for any subset $\Hb$ of \Lb, this optimal \Cs-cover of $L$ is separating for $\Hb$ if and only if $(L,\Hb)$ is \Cs-coverable. Therefore, having it in hand is enough to solve "\Cs-covering" for all subsets of \Lb (including \Lb~itself).

\item\label{key:bgen:1bis} The second key idea is to generalize this computational problem to get a \emph{\bfseries generic} computational problem. This problem is parametrized by a new object that we name ``"\ratm"''. "\Ratms" are algebraic objects that one may use to measure the quality of an arbitrary \Cs-cover, and that abstract the multiset \Lb (which was also used in the "\Cs-covering problem" as a quality measure of a cover). The general problem asks to build a "\Cs-cover" of some language $L$ which is as good as possible with respect to a given \ratm $\rho$: a \emph{$\rho$-"optimal" "universal \Cs-cover"}. The approach described above for "\Cs-covering" with input $(L,\Lb)$ is just the instance of this abstract problem for a particular "\ratm" that one may build from \Lb. Generalizing the problem makes the presentation simpler, underlines the important hypotheses and yields elegant "covering"~algorithms.  

\item\label{key:bgen:2} The third key idea exploits the crucial fact that our inputs for the "\Cs-covering problem" are made of \emph{\bfseries regular} languages. In particular, this means that in the above computational problem, we may restrict ourselves to a class of "\ratms" having special properties. We call \emph{"\mratms"} these enhanced "\ratms". Our algorithms crucially exploit their properties.
\end{enumerate}

We detail theses two key ideas in Section~\ref{sec:opti} and~\ref{sec:bgen:tame}. First, we define \ratms and explain how they relate to the covering problem in Section~\ref{sec:opti}: this is our first two key ideas.  We then define the special class of \mratms in Section~\ref{sec:bgen:tame}: this our third key idea.

We then summarize the notions introduced in these two sections to outline our general methodology for tackling "\Cs-covering". We actually present two methodologies. The first one is designed to accommodate the restricted "universal \Cs-covering problem" which is simpler to handle (and is equivalent to full "covering" when \Cs is a "Boolean algebra" by Proposition~\ref{prop:bgen:eqpoint}). We present it in Section~\ref{sec:genba} and illustrate it with two examples: level \bscu of the quantifier alternation hierarchy of first-order logic in Section~\ref{sec:bsigma} and two-variable first-order logic in Section~\ref{sec:fo2}. Then, we present a generalized methodology designed for tackling the full "\Cs-covering problem" in Section~\ref{sec:genlatts}. We illustrate it with an example in Section~\ref{sec:sigma1}: level \sicu in the quantifier alternation hierarchy of first-order logic.

\begin{remark}
  Both methodologies apply to any class \Cs which is a \emph{\pvari of regular languages}. Actually, most notions involved in our framework make sense for any "lattice" \Cs. However, we need \Cs to be a \pvari of regular languages to use a crucial result (namely Lemma~\ref{lem:bgen:optsemi}).
\end{remark}

\begin{remark}
  It is important to keep in mind that the purpose of this methodology is to provide the right framework to tackle "\Cs-covering problems". On the other hand, they do {\bf not} yield "\Cs-covering" algorithms ``for free''. Once a \pvari \Cs is fixed, getting a "\Cs-covering" algorithm using our methodology still requires a lot of work \emph{specific to \Cs}. This is illustrated by the detailed examples that we present.
\end{remark}

\section{\Ratms and optimal covers}
\label{sec:opti}
This section details the main ingredient in our approach to "\Cs-covering". Given an input pair $(L,\Lb)$, we view the finite multiset \Lb as a quality measure for evaluating "\Cs-covers" of $L$. Our objective is to build such a "\Cs-cover" of $L$ which is ``"optimal" for this measure''. The main point here is that this object always exists (regardless of whether $(L,\Lb)$ is \Cs-coverable) and it is "separating" for any subset of \Lb which is \Cs-coverable.

We shall actually work within a more general framework and consider a generic computational problem. It asks to build a "\Cs-cover" of some input language $L$ that is "optimal" with respect to a parameter that we name a \emph{"\ratm"}. We shall then prove that for any finite multiset of languages \Lb, one may define a special "\ratm" $\rho_\Lb$ such that the approach outlined above for "\Cs-covering" with input \Lb corresponds to building a "\Cs-cover" which is "optimal" for this "\ratm".

\begin{remark}\label{rem:bgen:whyfilt}
  Considering this more general framework has two main benefits. First, working with abstract "\ratms" rather than  the specific ones $\rho_\Lb$ associated to multisets \Lb of languages simplifies the notation. Moreover, it yields more elegant presentations for "covering" algorithms.
\end{remark}

We first define "\ratms". Then, we explain how to use them for measuring the quality of an arbitrary "cover". Given a "\ratm" $\rho$ and a some finite set of languages \Kb, we define the \emph{"$\rho$-\imprint" of~\Kb} which corresponds to this measure. We then use this new notion to define what an "optimal" "\Cs-cover" of a language $L$ is for a given "\ratm" $\rho$. Finally, we connect these definitions with our original goal: solving "\Cs-covering".

\subsection{\kl{\Ratms}}\label{sec:ratms}

In order to define a "\ratm", we first need a \emph{"rating set"}. A \AP""rating set"" is simply a finite commutative and idempotent monoid $(R,+,0_R)$. Recall that being idempotent means that for all $r\in R$, we have $r+r=r$. The binary operation $+$ is called \emph{addition}\footnote{It is often the case to denote by `$+$' a monoid operation when it is commutative. This choice is additionally motivated by the connection with ``language union'' in the definition of "\ratms". Finally, we shall later consider a special class of "rating sets", equipped with another binary operation, which we will denote multiplicatively.}. Given a rating set $R$, we define the relation ``$\leq$'' over $R$ as follows:
\[\text{For all }
  r, s\in R,\quad r\leq s \text{ when } r+s=s.
\]
\begin{fct}\label{fct:bgen:compatible-with-addition}
  The relation $\leq$ is a partial order, which makes $(R,+,0,{\leq})$ an ordered monoid.
\end{fct}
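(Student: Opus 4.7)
The fact bundles four small verifications: that $\leq$ is reflexive, transitive, antisymmetric, and compatible with the monoid operation. Each follows in one line from idempotence, commutativity, or associativity, so the plan is simply to check them in that order and note that no nontrivial obstacle arises.

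For reflexivity, $r \leq r$ unfolds to $r+r=r$, which is exactly the idempotence hypothesis. For transitivity, assuming $r+s=s$ and $s+t=t$, I would compute
\[
r+t \;=\; r+(s+t) \;=\; (r+s)+t \;=\; s+t \;=\; t,
\]
using associativity and the two hypotheses. For antisymmetry, from $r+s=s$ and $s+r=r$ commutativity gives $r=s+r=r+s=s$.

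It remains to verify that the order is compatible with $+$, which is what ``ordered monoid'' requires. Suppose $r\leq r'$ and $s\leq s'$, so $r+r'=r'$ and $s+s'=s'$. Then, rearranging with commutativity and associativity,
\[
(r+s)+(r'+s') \;=\; (r+r')+(s+s') \;=\; r'+s',
\]
which is precisely $r+s\leq r'+s'$. This completes the verification; the main point to keep in mind is that both idempotence and commutativity are essential (idempotence for reflexivity, commutativity for antisymmetry and for the compatibility computation), so no step is genuinely harder than the others.
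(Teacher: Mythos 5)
Your proof is correct and follows essentially the same route as the paper: the compatibility-with-addition computation is the same rearrangement the paper uses, and you simply spell out the reflexivity, transitivity and antisymmetry checks that the paper dismisses as routine (noting, as the paper does, that reflexivity is where idempotence enters).
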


\begin{proof}
  It can be verified that ``$\leq$'' is indeed a partial order (note that it is reflexive because $R$ is idempotent). Let us check that it is compatible with addition. Let $r_1\leq r_2$ and $s_1\leq s_2$, we have to prove that $r_1+s_1\leq r_2+s_2$. By definition of ``$\leq$'',  $r_1\leq r_2$ means that  $r_2=r_1+r_2$. Similarly, $s_2=s_1+s_2$. Therefore, $r_2+s_2=r_1+r_2+s_1+s_2=(r_2+s_2)+(r_1+s_1)$ since addition is commutative. This exactly means that $r_1+s_1\leq r_2+s_2$.
\end{proof}

\begin{example}
  An important idempotent rating set is $(2^{A^*},{\cup},\emptyset)$. Its associated order is inclusion.
\end{example}

Another useful property is that if we consider a morphism between two rating sets, then this morphism is always increasing for the order $\leq$.

\begin{fct}\label{fct:bgen:compatible-with-morphisms}
  Let $(Q,+)$ and $(R,+)$ be "rating sets" and $\delta: Q \to R$ be a monoid morphism. Then, $\delta$ is increasing: for any $q_1,q_2 \in Q$ such that $q_1 \leq q_2$, we have $\delta(q_1) \leq \delta(q_2)$.
\end{fct}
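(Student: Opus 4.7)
The plan is to unfold the definition of the order $\leq$ on a rating set and apply the morphism property of $\delta$ directly. Recall that by construction, for any $q_1, q_2$ in a rating set $(Q, +)$, the statement $q_1 \leq q_2$ is literally the equation $q_1 + q_2 = q_2$. So the task reduces to showing that the analogous equation holds in $R$ for $\delta(q_1)$ and $\delta(q_2)$.

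Concretely, I would proceed as follows. Start from the hypothesis $q_1 \leq q_2$, which by definition of the order on $Q$ means $q_1 + q_2 = q_2$. Apply $\delta$ to both sides to get $\delta(q_1 + q_2) = \delta(q_2)$. Since $\delta$ is a monoid morphism, $\delta(q_1 + q_2) = \delta(q_1) + \delta(q_2)$, so we obtain $\delta(q_1) + \delta(q_2) = \delta(q_2)$. By the definition of the order on $R$, this is exactly $\delta(q_1) \leq \delta(q_2)$, as required.

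There is no real obstacle here: the fact is essentially a one-line consequence of the definition of the order and of the morphism property. The only implicit point worth noting is that the definition does not require $\delta$ to preserve the neutral element explicitly (although a monoid morphism does), nor does it require $R$ to be idempotent in any non-trivial way beyond what is needed to have $\leq$ be a partial order, which is already established by Fact~\ref{fct:bgen:compatible-with-addition}. So the proof is simply a direct computation and can be written in two or three lines.
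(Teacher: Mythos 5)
Your proof is correct and is essentially the same argument as the paper's: unfold $q_1 \leq q_2$ as the equation $q_1 + q_2 = q_2$, apply $\delta$, use the morphism property to split $\delta(q_1+q_2)$, and read off $\delta(q_1) \leq \delta(q_2)$ from the definition of the order on $R$. Nothing further is needed.
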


\begin{proof}
  If $q_1 \leq q_2$, then $q_2 = q_2 + q_1$. Since $\delta$ is a morphism, we have $\delta(q_2)= \delta(q_2) + \delta(q_1)$ which exactly says that $\delta(q_1) \leq \delta(q_2)$.
\end{proof}

Once we have a "rating set" $R$, a \AP""\ratm"" for $R$ is a monoid morphism $\rho: (2^{A^*},\cup,\emptyset) \to (R,+,0_R)$, \emph{i.e.}, a map from $2^{A^{*}}$ to $R$ satisfying the following properties:
\begin{enumerate}
\item\label{itm:bgen:fzer} $\rho(\emptyset) = 0_R$.
\item\label{itm:bgen:ford} For all $K_1,K_2 \subseteq A^*$, we have $\rho(K_1\cup K_2)=\rho(K_1)+\rho(K_2)$.
\end{enumerate}

For the sake of improved readability, when applying a "\ratm" $\rho$ to a singleton set $K = \{w\}$, we shall write $\rho(w)$ for $\rho(\{w\})$. Fact~\ref{fct:bgen:compatible-with-morphisms} immediately yields the following result.
\begin{fct}
  \label{fct:bgen:increasing}
  Any "\ratm"  $\rho: 2^{A^*} \to R$ is increasing:
  \[
    \text{For all }K_1,K_2 \subseteq A^*\text{ such that }K_1\subseteq K_2, \text{ we have }\rho(K_1)\leq\rho(K_2).
  \]
\end{fct}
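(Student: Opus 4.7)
The plan is to observe that Fact~\ref{fct:bgen:increasing} is an immediate specialization of Fact~\ref{fct:bgen:compatible-with-morphisms} once we recognize that $(2^{A^*},\cup,\emptyset)$ plays the role of the source "rating set". The only point to check is that the canonical order on $2^{A^*}$ (defined by $K_1 \leq K_2 \Leftrightarrow K_1 \cup K_2 = K_2$ as in the paragraph preceding Fact~\ref{fct:bgen:compatible-with-addition}) coincides with set inclusion: indeed $K_1 \cup K_2 = K_2$ if and only if $K_1 \subseteq K_2$.

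Hence the proof would proceed in two short steps. First, note that a \ratm $\rho : 2^{A^*} \to R$ is by definition a monoid morphism between the commutative idempotent monoids $(2^{A^*},\cup,\emptyset)$ and $(R,+,0_R)$. Second, apply Fact~\ref{fct:bgen:compatible-with-morphisms} to $\delta = \rho$: for any $K_1,K_2 \subseteq A^*$ with $K_1 \subseteq K_2$, we have $K_1 \leq K_2$ in $2^{A^*}$, and therefore $\rho(K_1) \leq \rho(K_2)$ in $R$.

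There is essentially no obstacle; the only subtlety worth flagging in the write-up is the (strictly speaking) mismatch with the requirement that a "rating set" be \emph{finite}, since $2^{A^*}$ is not. However, neither Fact~\ref{fct:bgen:compatible-with-addition} nor Fact~\ref{fct:bgen:compatible-with-morphisms} actually uses finiteness in its proof, so the argument applies verbatim. One may either silently invoke Fact~\ref{fct:bgen:compatible-with-morphisms} or, to avoid this minor issue, redo the one-line computation directly: if $K_1 \subseteq K_2$ then $K_2 = K_1 \cup K_2$, hence $\rho(K_2) = \rho(K_1) + \rho(K_2)$, which by definition of $\leq$ on $R$ means $\rho(K_1) \leq \rho(K_2)$.
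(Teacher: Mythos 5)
Your proposal matches the paper's own argument: the text preceding Fact~\ref{fct:bgen:increasing} simply says ``Fact~\ref{fct:bgen:compatible-with-morphisms} immediately yields the following result,'' which is exactly the specialization $\delta=\rho$ you describe. Your remark about the finiteness mismatch (since $2^{A^*}$ is not a "rating set" in the strict sense) is a fair technical caveat the paper glosses over, and your one-line direct computation $\rho(K_2)=\rho(K_1\cup K_2)=\rho(K_1)+\rho(K_2)$ cleanly resolves it.
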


Recall that given a "\ratm" $\rho$, our goal is to define when a "\Cs-cover" is "optimal" for  $\rho$, and to obtain algorithms for specific classes \Cs computing such "optimal" "\Cs-covers" with respect to~$\rho$. In order to carry out this computation, we need some additional properties on~$\rho$. The first is called~``\nice{}ness''.

\medskip\noindent
We say a "\ratm" $\rho$ is \AP""\nice"" when it satisfies the following property:
\begin{equation} \label{eq:bgen:fgen}
  \text{For any language $K \subseteq A^*$,} \quad \rho(K) = \sum_{w \in K} \rho(w).
\end{equation}

\begin{remark}
  Observe that the sum in \eqref{eq:bgen:fgen} is defined even when $K$ is infinite. This is because $R$ is finite, commutative and idempotent, hence the sum boils down to a finite one.
\end{remark}

\begin{remark}
  Not all "\ratms" are "\nice". Consider the "rating set" $R = \{0,1,2\}$ whose addition is defined by $i+j = \mathord{\text{max}}(i,j)$ for $i,j \in R$. We define $\rho: 2^{A^*} \to R$ by $\rho(\emptyset) = 0$ and for any nonempty $K \subseteq A^*$, $\rho(K) = 1$ if $K$ is finite and $\rho(K) = 2$ if $K$ is infinite. One may verify that $\rho$ is a "\ratm" which is not "\nice": for any infinite language $K$, we have $\rho(K) = 2$ while $\sum_{w \in K} \rho(w) = 1$.
\end{remark}

\begin{remark}
  \label{rem:finite-repr-ratm}
  It should be noticed that it is not clear how to finitely represent a "\ratm". However, \eqref{eq:bgen:fgen} shows that any \nice "\ratm" is fully determined by the images of singleton languages $\{w\}$. While this does not yield a finite representation, in our algorithms, we shall work with "\ratms" having stronger properties that make them finitely representable (see Section~\ref{sec:bgen:tame}).
\end{remark}

\medskip
\noindent {\bf Canonical \ratm associated to a finite multiset.} While the above definition is abstract, we are mainly interested in a particular example of "\ratm" which connects the framework presented here to the "covering problem". Given a finite multiset of languages \Lb, observe that $2^\Lb$ is an ordered commutative idempotent monoid with union ``$\cup$'' as the addition. Since addition is union, the order is inclusion. Indeed, $\Hb_1 \subseteq \Hb_2$ if and only if $\Hb_1 \cup \Hb_2 = \Hb_2$.

We use $2^\Lb$ as the "rating set" of a specific {\bf "\nice"} "\ratm" \AP$""\rho_\Lb"": 2^{A^*} \to 2^\Lb$ which we associate to \Lb. We define this "\ratm" as follows:
\[
  \begin{array}{llll}
    \rho_\Lb: & 2^{A^*} & \to     & 2^\Lb                                      \\
              & K       & \mapsto & \{L \in \Lb \mid L \cap K \neq \emptyset\}.
  \end{array}
\]
\begin{fct}\label{fct:bgen:filters}
  For any finite multiset of languages \Lb, the mapping $"\rho_\Lb"$ is a "\nice" "\ratm".
\end{fct}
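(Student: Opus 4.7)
The plan is to unfold all the definitions and verify three things in turn: that $\rho_\Lb$ sends $\emptyset$ to the neutral element, that it is compatible with addition (union), and finally that it satisfies the niceness equation \eqref{eq:bgen:fgen}. Each check boils down to a basic set-theoretic manipulation, using only the definition $\rho_\Lb(K) = \{L \in \Lb \mid L \cap K \neq \emptyset\}$ together with the observation that the addition of the rating set $2^\Lb$ is just union and its neutral element is $\emptyset$.

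For the first axiom, I would simply note that no language $L$ intersects the empty set, so $\rho_\Lb(\emptyset) = \emptyset = 0_{2^\Lb}$. For the second axiom, given $K_1, K_2 \subseteq A^*$, I would observe that for each $L \in \Lb$, the equivalence $L \cap (K_1 \cup K_2) \neq \emptyset \iff L \cap K_1 \neq \emptyset \text{ or } L \cap K_2 \neq \emptyset$ holds trivially. Taking this equivalence over all $L \in \Lb$ yields $\rho_\Lb(K_1 \cup K_2) = \rho_\Lb(K_1) \cup \rho_\Lb(K_2)$, establishing that $\rho_\Lb$ is indeed a monoid morphism, hence a "\ratm".

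For niceness, I would compute, for any $K \subseteq A^*$,
\[
  \sum_{w \in K} \rho_\Lb(w) \;=\; \bigcup_{w \in K} \{L \in \Lb \mid w \in L\} \;=\; \{L \in \Lb \mid \exists w \in K,\ w \in L\} \;=\; \rho_\Lb(K),
\]
which is exactly the niceness condition.

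There is no serious obstacle here: everything follows immediately from unpacking the definitions, and the only point worth flagging is the remark already made in the excerpt that the sum $\sum_{w \in K} \rho_\Lb(w)$ makes sense even for infinite $K$ because the rating set $2^\Lb$ is finite and idempotent. So the entire proof amounts to one or two lines of direct verification per item.
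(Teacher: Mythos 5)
Your proof is correct and follows essentially the same route as the paper's: unfold the definitions and check the two morphism axioms and the niceness equation directly. The only difference is cosmetic—for niceness you give a single chain of set equalities (using $\rho_\Lb(w)=\{L\in\Lb\mid w\in L\}$), whereas the paper proves the two inclusions separately, invoking Fact~\ref{fct:bgen:increasing} for one direction and a witness-picking argument for the other; your one-line computation is a slight streamlining of the same idea.
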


\begin{proof}
  Let us first verify that $"\rho_\Lb"$ is a "\ratm". Clearly, $"\rho_\Lb"(\emptyset) = \{L \in \Lb \mid L \cap \emptyset \neq \emptyset\} = \emptyset$. Moreover given $K_1,K_2 \subseteq A^*$,
  \[
    \begin{array}{lll}
      "\rho_\Lb"(K_1 \cup K_2) & = & \{L \in \Lb \mid L \cap (K_1 \cup K_2) \neq \emptyset\} \\
                             & = & \{L \in \Lb \mid (L \cap K_1) \cup (L \cap K_2) \neq \emptyset\} \\
                             & = & \{L \in \Lb \mid (L \cap K_1)  \neq \emptyset\} \cup \{L \in \Lb \mid (L \cap K_2) \neq \emptyset\}\\
                             & = & "\rho_\Lb"(K_1) \cup "\rho_\Lb"(K_2).
    \end{array}
  \]
  It remains to show that $"\rho_\Lb"$ is "\nice". Consider $K \subseteq A^*$. We prove that $"\rho_\Lb"(K) = \sum_{w \in K} "\rho_\Lb"(w)$. Since $"\rho_\Lb"$ is a "\ratm", we have $"\rho_\Lb"(w)\subseteq "\rho_\Lb"(K)$ when $w\in K$ by Fact~\ref{fct:bgen:increasing}, whence $\sum_{w \in K} "\rho_\Lb"(w) \subseteq "\rho_\Lb"(K)$. It remains to prove the converse inclusion. By definition, we have $"\rho_\Lb"(K) = \{L \in \Lb \mid L \cap K \neq \emptyset\}$. Thus, for any $L \in "\rho_\Lb"(K)$, there exists a word $w_L \in L \cap K$. In particular, $L\in"\rho_\Lb"(w_L)$. Therefore,
  \[
    "\rho_\Lb"(K)  = \sum_{L \in "\rho_\Lb"(K)} \{L\} \subseteq \sum_{L \in "\rho_\Lb"(K)} "\rho_\Lb"(w_L) \subseteq \sum_{w \in K} "\rho_\Lb"(w).
  \]
  This concludes the proof.
\end{proof}

\subsection{\emph{\kl{\Imprints}}}

Now that we have "\ratms", we turn to "\imprints". Consider a "\ratm" $\rho: 2^{A^*} \to R$. Given any finite set of languages \Kb, we define the "$\rho$-\imprint" of \Kb. Intuitively, when \Kb is a "cover" of some language $L$, this object measures the ``quality'' of \Kb.

\begin{remark} \label{rem:bgen:imprints}
  We are mainly interested in the case when \Kb is a "cover". However, the definition of "$\rho$-\imprint" makes senses regardless of this hypothesis. In fact, it is often convenient in proofs to use it when \Kb is not necessarily a "cover".
\end{remark}

Intuitively, we want to define the "$\rho$-\imprint" of \Kb as the set $\rho(\Kb) \subseteq R$ of all images $\rho(K)$ for $K \in \Kb$. However, it will be convenient to use a slightly different definition which is equivalent for our objective and simplifies the notation. Observe that since $R$ is an ordered set, we may apply a downset operation to subsets of $R$. For any $E \subseteq R$, we write:
\[
  \dclos E = \{r \mid \exists r' \in E \text{ such that } r \leq r'\}.
\]
The \AP""$\rho$-\imprint"" \emph{of \Kb}, denoted by \AP""\prin{\rho}{\Kb}"", is the set,
\[
  \begin{array}{lll}
    \prin{\rho}{\Kb} & = & \dclos \{\rho(K) \mid K \in \Kb\}                                               \subseteq  R \\
                     & = & \{r \in R \mid \text{there exists $K \in
                           \Kb$ such that $r \leq \rho(K)$}\}.
  \end{array}
\]

Before we illustrate this notion with the "\ratms" $"\rho_\Lb"$ associated to finite multisets of languages, let us make a few observations about "$\rho$-\imprints". First observe that since any "$\rho$-\imprint" is a subset of the finite "rating set" $R$ associated to $\rho$, there are finitely many possible "$\rho$-\imprints", even though there are infinitely many sets of languages \Kb. Another simple observation is that all "\imprints" are closed under downset.

\begin{fct} \label{fct:bgen:downset}
  Let $\rho: 2^{A^*} \to R$ be a "\ratm". For any finite set of languages \Kb, the "$\rho$-\imprint" of \Kb is closed under downset:
  \[
    \dclos "\prin{\rho}{\Kb}" = "\prin{\rho}{\Kb}".
  \]
  In other words, for any $r \in "\prin{\rho}{\Kb}"$ and any $r' \leq r$, we have $r' \in "\prin{\rho}{\Kb}"$.
\end{fct}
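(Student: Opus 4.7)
The statement is essentially a direct unpacking of the definition, so my plan is just to verify both inclusions of the equality $\dclos \prin{\rho}{\Kb} = \prin{\rho}{\Kb}$. The only order-theoretic ingredients I need are reflexivity and transitivity of the relation $\leq$ on $R$, both of which are guaranteed by Fact~\ref{fct:bgen:compatible-with-addition} (which states that $\leq$ is a partial order).

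For the inclusion $\prin{\rho}{\Kb} \subseteq \dclos \prin{\rho}{\Kb}$, I would simply note that any element $r$ of a subset of $R$ lies in its own downset by reflexivity ($r \leq r$). For the nontrivial inclusion $\dclos \prin{\rho}{\Kb} \subseteq \prin{\rho}{\Kb}$, I would take an element $r' \in \dclos \prin{\rho}{\Kb}$, obtain $r \in \prin{\rho}{\Kb}$ with $r' \leq r$, and then unfold the definition $\prin{\rho}{\Kb} = \dclos \{\rho(K) \mid K \in \Kb\}$ to get some $K \in \Kb$ with $r \leq \rho(K)$. Transitivity then yields $r' \leq \rho(K)$, which puts $r'$ back in $\prin{\rho}{\Kb}$ by the same definition.

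There is no real obstacle: the fact merely says that the downset operator on a poset is idempotent, applied to the particular set $\{\rho(K) \mid K \in \Kb\}$. The whole argument should take about three lines. The only reason to state it as a separate fact is that it will be used repeatedly in later sections when reasoning about $\rho$-\imprints.
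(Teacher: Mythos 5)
Your proof is correct and is exactly the expected argument; the paper itself leaves this fact unproved as an immediate consequence of the definition, and what you write (idempotence of the downset operator via reflexivity and transitivity of $\leq$) is precisely what fills that trivial gap.
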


Observe that when \Kb is the "cover" of some language $L$, the "$\rho$-\imprint" of \Kb always contains some trivial elements within the evaluation set $R$. We define the trivial $\rho$-\imprint on $L$ as follows:
\[
  \AP""\itriv{L,\rho}"" = \dclos \{\rho(w) \mid w \in L\} \subseteq R.
\]
When \Kb is a "cover" of $L$, we know that for any $w \in L$, there exists $K \in \Kb$ such that $w \in K$. Thus, $\rho(w) \leq \rho(K)$ by Fact~\ref{fct:bgen:increasing} and it is immediate by closure under downset that all $r \leq \rho(w)$ belong to "\prin{\rho}{\Kb}". Thus, we deduce the following fact.

\begin{fct} \label{fct:bgen:trivialsets}
  Let $\rho: 2^{A^*} \to R$ be a "\ratm". For any language $L$ and any  "cover" \Kb of $L$, we have $"\itriv{L,\rho}" \subseteq "\prin{\rho}{\Kb}"$.
\end{fct}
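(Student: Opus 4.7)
The plan is to unpack both sides of the claimed inclusion directly from their definitions and chain together three facts that are already available: the definition of $"\itriv{L,\rho}"$, the covering hypothesis on $\Kb$, and the monotonicity of a "\ratm" (Fact~\ref{fct:bgen:increasing}). Concretely, I would start from an arbitrary element $r \in "\itriv{L,\rho}"$ and produce a language $K \in \Kb$ witnessing $r \leq \rho(K)$, which by the displayed definition of "\prin{\rho}{\Kb}" places $r$ in the "$\rho$-\imprint" of $\Kb$.

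The key steps, in order, are the following. First, fix $r \in "\itriv{L,\rho}"$. By the definition $"\itriv{L,\rho}" = \dclos \{\rho(w) \mid w \in L\}$, there exists $w \in L$ with $r \leq \rho(w)$. Second, invoke the covering hypothesis: since $\Kb$ is a "cover" of $L$, we have $L \subseteq \bigcup_{K \in \Kb} K$, so there is some $K \in \Kb$ with $w \in K$, i.e., $\{w\} \subseteq K$. Third, apply Fact~\ref{fct:bgen:increasing} (monotonicity of any "\ratm") to conclude $\rho(w) \leq \rho(K)$. Combining with $r \leq \rho(w)$ and transitivity of the order $\leq$ on the "rating set" $R$, we obtain $r \leq \rho(K)$, hence $r \in "\prin{\rho}{\Kb}"$ by the defining description of "\prin{\rho}{\Kb}".

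There is no real obstacle here: the statement is essentially a reformulation of the informal observation made in the paragraph immediately preceding it. The only mild subtlety is being careful about what is quantified where (picking a single witness $w$ for $r$, then a single $K$ containing that $w$), and noting that transitivity of $\leq$ is legitimate because it is a partial order by Fact~\ref{fct:bgen:compatible-with-addition}. Everything else is invoking definitions and already-established facts.
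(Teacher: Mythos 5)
Your proof is correct and follows essentially the same route as the paper: pick $r \leq \rho(w)$ for some $w \in L$, use the covering hypothesis to find $K \in \Kb$ with $w \in K$, apply Fact~\ref{fct:bgen:increasing} to get $\rho(w) \leq \rho(K)$, and conclude $r \in \prin{\rho}{\Kb}$. The only cosmetic difference is that the paper phrases the last step via closure of $\prin{\rho}{\Kb}$ under downset, whereas you invoke transitivity of $\leq$ and the defining description of $\prin{\rho}{\Kb}$ directly, which amounts to the same thing.
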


\medskip
\noindent
{\bf The special case of the "\ratms" $"\rho_\Lb"$.} We now illustrate \imprints with the special case of "\ratms" $"\rho_\Lb"$ associated to finite multisets of languages \Lb. In particular, we present a property which is specific to these "\ratms" and that we use to connect these definitions to the "covering problem".

Consider a finite multiset of languages \Lb and the associated "\ratm" $"\rho_\Lb": 2^{A^*} \to 2^\Lb$. If we unravel the definitions for this specific case, we get,
\[
  \prin{\rho_{\Lb}}{\Kb} = \{\Hb \subseteq \Lb \mid \text{there exists $K \in
    \Kb$ such that $H \cap K \neq \emptyset$ for all $H \in \Hb$}\} \subseteq 2^\Lb.
\]
We illustrate this special case in Figure~\ref{fig:bgen:finer} below.

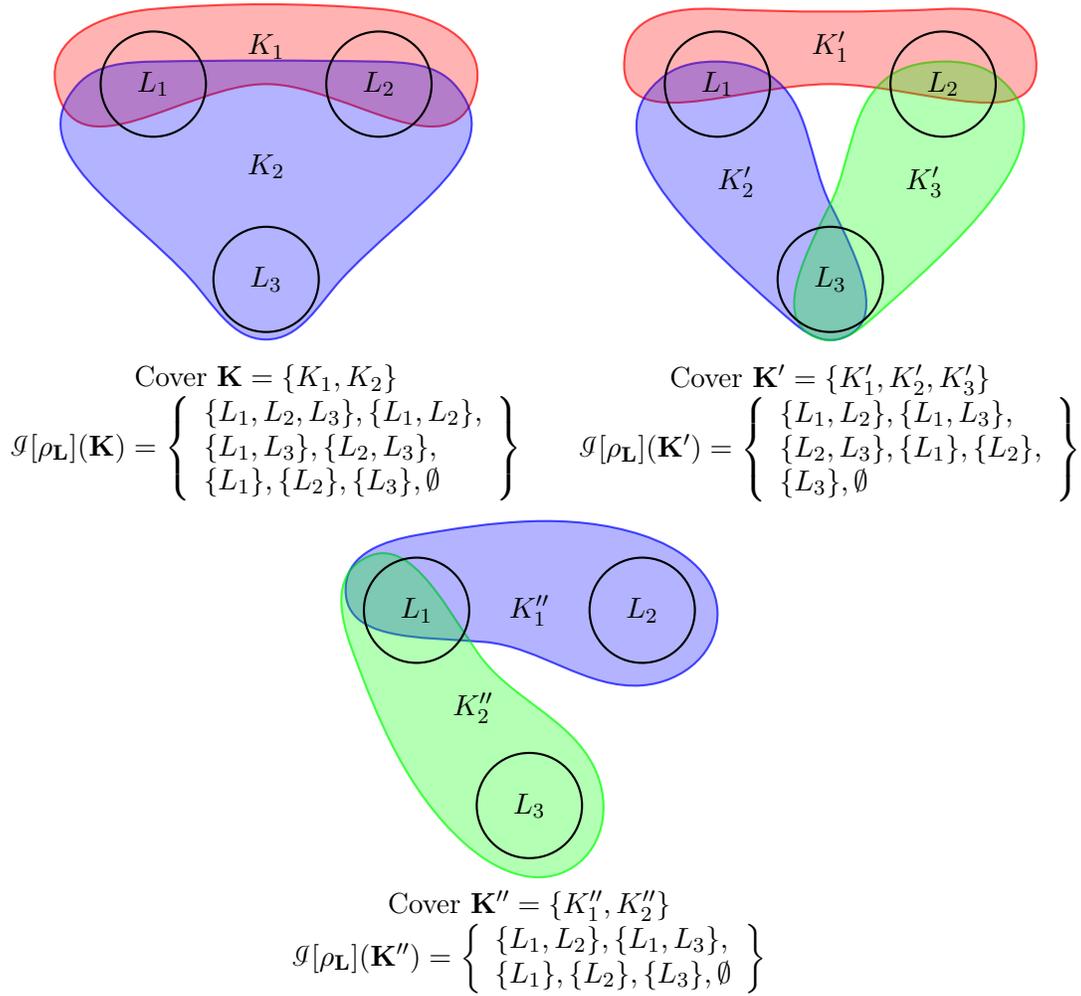
\begin{figure}[!ht]
  \begin{center}
    \begin{tikzpicture}

      \draw[thick] (0,0) circle (0.7cm) node (l1) {$L_1$};
      \draw[thick] (3,0) circle (0.7cm) node (l2) {$L_2$};
      \draw[thick] ($(0,0)!1!-60:(3,0)$) circle (0.7cm) node (l3) {$L_3$};

      \coordinate (m1) at ($(l1)+(1.0,0.0)$);
      \coordinate (m2) at ($(l2)+(1.0,0.0)$);
      \coordinate (m3) at ($(l3)+(1.0,0.0)$);

      \begin{pgfonlayer}{background}
        \def\fstlang{($(l1)!1.3!180:(m1)$) to[closed,curve
          through={
            ($(l1)!1!90:(m1)$)
            .. ($(l2)!1!90:(m2)$)
            .. ($(l2)!1.3!0:(m2)$)
            .. ($(l2)!1/2!(l1)$)
          }]
          ($(l1)!1.3!180:(m1)$)}

        \def\seclang{($(l3)!0.8!-90:(m3)$) to[closed,curve
          through={
            ($(l3)!1!180:(m3)$)
            .. ($(l1)!1!180:(m1)$)
            .. ($(l1)!0.3!90:(m1)$)
            .. ($(l2)!0.3!90:(m2)$)
            .. ($(l2)!1!0:(m2)$)
            .. ($(l3)!1!0:(m3)$)
          }]
          ($(l3)!0.8!-90:(m3)$)}

        \begin{scope}[draw opacity=0.7,fill opacity=0.3]
          \draw[thick,red] \fstlang;
          \draw[thick,blue] \seclang;
          \fill[red] \fstlang;
          \fill[blue] \seclang;
        \end{scope}

        \node at ($($(l2)!1/2!(l1)$)+(0.0,0.5)$) {$K_1$};
        \node at ($(l3)!1/2!30:(l2)$) {$K_2$};

        \node[align=center,anchor=north] at ($(l3)-(0,1.0)$) {"Cover" $\Kb =
          \{K_1,K_2\}$\\$\lprin{\Kb} =
          \left\{\begin{array}{l}\{L_1,L_2,L_3\},\{L_1,L_2\},\\\{L_1,L_3\},\{L_2,L_3\},\\\{L_1\},\{L_2\},\{L_3\},\emptyset\end{array}\right\}$};

      \end{pgfonlayer}

      \begin{scope}[xshift=7.5cm]

        \draw[thick] (0,0) circle (0.7cm) node (l1) {$L_1$};
        \draw[thick] (3,0) circle (0.7cm) node (l2) {$L_2$};
        \draw[thick] ($(0,0)!1!-60:(3,0)$) circle (0.7cm) node (l3) {$L_3$};

        \coordinate (m1) at ($(l1)+(1.0,0.0)$);
        \coordinate (m2) at ($(l2)+(1.0,0.0)$);
        \coordinate (m3) at ($(l3)+(1.0,0.0)$);

        \begin{pgfonlayer}{background}
          \def\fstlang{($(l1)!1.3!160:(m1)$) to[closed,curve
            through={
              ($(l1)!1!90:(m1)$)
              .. ($(l2)!1!90:(m2)$)
              .. ($(l2)!1.3!20:(m2)$)
              .. ($(l2)!1/2!(l1)$)
            }]
            ($(l1)!1.3!160:(m1)$)}
          \def\seclang{($(l3)!0.8!280:(m3)$) to[closed,curve
            through={
              ($(l3)!0.8!210:(m3)$)
              .. ($(l1)!0.7!170:(m1)$)
              .. ($(l1)!0.3!90:(m1)$)
              .. ($(l1)!0.7!10:(m1)$)
              .. ($(l3)!1!90:(m3)$)
            }]
            ($(l3)!0.8!280:(m3)$)}

          \def\trdlang{($(l3)!0.8!-100:(m3)$) to[closed,curve
            through={
              ($(l3)!0.8!-30:(m3)$)
              .. ($(l2)!0.7!10:(m2)$)
              .. ($(l2)!0.3!90:(m2)$)
              .. ($(l2)!0.7!170:(m2)$)
              .. ($(l3)!1!90:(m3)$)
            }]
            ($(l3)!0.8!-100:(m3)$)}

          \begin{scope}[draw opacity=0.7,fill opacity=0.3]
            \draw[thick,red] \fstlang;
            \draw[thick,blue] \seclang;
            \draw[thick,green] \trdlang;
            \fill[red] \fstlang;
            \fill[blue] \seclang;
            \fill[green] \trdlang;
          \end{scope}

          \node at ($($(l2)!1/2!(l1)$)+(0.0,0.5)$) {$K'_1$};
          \node at ($($(l3)!1/2!(l1)$)-(0.5,0.0)$) {$K'_2$};
          \node at ($($(l2)!1/2!(l3)$)+(0.5,0.0)$) {$K'_3$};
        \end{pgfonlayer}

        \node[align=center,anchor=north] at ($(l3)-(0,1.0)$) {"Cover" $\Kb' =
          \{K'_1,K'_2,K'_3\}$\\$\lprin{\Kb'} =
          \left\{\begin{array}{l}\{L_1,L_2\},\{L_1,L_3\},\\\{L_2,L_3\},\{L_1\},\{L_2\},\\\{L_3\},\emptyset\end{array}\right\}$};

      \end{scope}

      \begin{scope}[xshift=3.5cm,yshift=-7cm]

        \draw[thick] (0,0) circle (0.7cm) node (l1) {$L_1$};
        \draw[thick] (3,0) circle (0.7cm) node (l2) {$L_2$};
        \draw[thick] ($(0,0)!1!-60:(3,0)$) circle (0.7cm) node (l3) {$L_3$};

        \coordinate (m1) at ($(l1)+(1.0,0.0)$);
        \coordinate (m2) at ($(l2)+(1.0,0.0)$);
        \coordinate (m3) at ($(l3)+(1.0,0.0)$);

        \begin{pgfonlayer}{background}
          \def\fstlang{($(l1)!1!160:(m1)$) to[closed,curve
            through={
              ($(l1)!1!90:(m1)$)
              .. ($(l2)!1!90:(m2)$)
              .. ($(l2)!1!0:(m2)$)
              .. ($(l2)!1!-90:(m2)$)
              .. ($(l1)!1!-25:(m1)$)
            }]
            ($(l1)!1!160:(m1)$)}

          \def\seclang{($(l1)!1!140:(m1)$) to[closed,curve
            through={
              ($(l1)!1!210:(m1)$)
              .. ($(l3)!1!210:(m3)$)
              .. ($(l3)!1!300:(m3)$)
              .. ($(l3)!1!390:(m3)$)
              .. ($(l1)!1!-35:(m1)$)
            }]
            ($(l1)!1!140:(m1)$)}

          \begin{scope}[draw opacity=0.7,fill opacity=0.3]
            \draw[thick,blue] \fstlang;
            \draw[thick,green] \seclang;
            \fill[blue] \fstlang;
            \fill[green] \seclang;
          \end{scope}

          \node at ($(l2)!1/2!(l1)$) {$K''_1$};
          \node at ($(l3)!1/2!(l1)$) {$K''_2$};
        \end{pgfonlayer}

        \node[align=center,anchor=north] at ($(l3)-(0,1.0)$) {"Cover" $\Kb'' =
          \{K''_1,K''_2\}$\\$\lprin{\Kb''} =
          \left\{\begin{array}{l}\{L_1,L_2\},\{L_1,L_3\},\\\{L_1\},\{L_2\},\{L_3\},\emptyset\end{array}\right\}$};
      \end{scope}
    \end{tikzpicture}
  \end{center}
  \caption{Examples of "covers"  and their
    $"\rho_\Lb"$-\imprints for $\Lb = \{L_1,L_2,L_3\}$}
  \label{fig:bgen:finer}
\end{figure}

We now connect $"\rho_\Lb"$-"\imprints" to the "covering problem". Given a finite set of languages \Kb, \lprin{\Kb} records the subsets of~\Lb for which \Kb is not "separating".

\begin{lemma} \label{lem:bgen:profile}
  Let \Lb be a finite multiset of languages and let \Kb be a finite set of languages. Consider a subset \Hb of\/ \Lb. The following properties are equivalent:
  \begin{enumerate}
  \item \Kb is "separating" for \Hb.
  \item $\Hb \not\in \lprin{\Kb}$.
  \end{enumerate}
\end{lemma}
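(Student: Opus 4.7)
The proof is essentially a direct unfolding of definitions, and the main work is making sure to reconcile the definition of $\rho_\Lb$-\imprint written in terms of the downset operation with its concrete form when instantiated at $\rho_\Lb$.

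My plan is to first recall the concrete description of $\lprin{\Kb}$ obtained in the paragraph immediately preceding the lemma, namely
\[
  \lprin{\Kb} = \{\Hb' \subseteq \Lb \mid \exists K \in \Kb \text{ such that } H \cap K \neq \emptyset \text{ for all } H \in \Hb'\}.
\]
One should justify this: by definition $\lprin{\Kb} = \mathord{\downarrow}\{\rho_\Lb(K) \mid K \in \Kb\}$, and for any $\Hb' \subseteq \Lb$ the inequality $\Hb' \leq \rho_\Lb(K)$ in the ordered monoid $(2^\Lb,\cup,\emptyset)$ is exactly the inclusion $\Hb' \subseteq \rho_\Lb(K) = \{L \in \Lb \mid L \cap K \neq \emptyset\}$. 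Hence $\Hb' \in \lprin{\Kb}$ iff some $K \in \Kb$ meets every $H \in \Hb'$.

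Once this description is in place, the equivalence is a contrapositive restatement. Taking the negation of the right-hand side condition gives: $\Hb \notin \lprin{\Kb}$ iff for every $K \in \Kb$ there exists $H \in \Hb$ such that $K \cap H = \emptyset$. By definition, this is precisely the statement that $\Kb$ is \emph{separating} for $\Hb$ (applied with $\Hb$ in the role of the input multiset). So both implications follow simultaneously from this calculation.

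I expect no genuine obstacle: the lemma is the bridge between the algebraic notion of \imprint and the set-theoretic notion of separating cover, and its proof is really a sanity check that the two definitions match. The only point requiring mild care is the first step, namely verifying that the downset of $\{\rho_\Lb(K) \mid K \in \Kb\}$ in the rating set $2^\Lb$ coincides with the set of $\Hb' \subseteq \Lb$ that are ``witnessed'' by some $K \in \Kb$; this reduces to the observation that the order on $2^\Lb$ induced by union is inclusion, which was recorded just after the definition of $\rho_\Lb$.
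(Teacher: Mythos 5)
Your proof is correct and follows essentially the same route as the paper: both simply unfold the definition of $\lprin{\Kb}$ and observe that its negation is precisely the statement that $\Kb$ is separating for $\Hb$. The paper records the concrete description of $\prin{\rho_\Lb}{\Kb}$ in the paragraph immediately preceding the lemma and then applies it directly in a one-line proof; you additionally spell out the verification that this concrete description is the downset of $\{\rho_\Lb(K) \mid K \in \Kb\}$ in $(2^\Lb,\subseteq)$, which the paper leaves implicit, but this is the same argument with one more step made explicit.
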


\begin{proof}
  By definition, $\Hb \not\in \lprin{\Kb}$ if and only if for all $K \in \Kb$, there exists $H \in \Hb$ such that $K \cap H = \emptyset$. This is exactly the definition of \Kb being "separating" for \Hb.
\end{proof}

\subsection{Optimality}

Given a "\ratm" $\rho: 2^{A^*} \to R$ and a language $L$, the main intuition is that the ``best'' "covers" \Kb of $L$ are those with the least possible "$\rho$-\imprint" $"\prin{\rho}{\Kb}" \subseteq R$ (with respect to inclusion). This is validated by the "\ratms" $"\rho_\Lb"$ associated to a finite multiset of languages \Lb. Indeed, in view of Lemma~\ref{lem:bgen:profile}, given a "cover" \Kb, the smaller its $"\rho_\Lb"$-\imprint is, the better it is at being "separating" for subsets of \Lb. For example, observe that in Figure~\ref{fig:bgen:finer}, $\Kb''$ is better than $\Kb'$ which is itself better than $\Kb$. Of course, if we do not put any constraint on the "covers" that we may use, these notions are not very useful. There is always a trivial "cover" of $L$ that is better than any other. Indeed, partitioning the words $w \in L$ according to the value of $"\rho_\Lb"(w) \subseteq \Lb$ yields a "cover" whose $\rho_{\Lb}$-\imprint is \itriv{L,"\rho_\Lb"} and it is impossible to do better by Fact~\ref{fct:bgen:trivialsets}.

However, when we restrict ourselves to \emph{"\Cs-covers"} for a fixed "lattice" \Cs, this ``trivial best cover'' is not necessarily a "\Cs-cover", since it might contain languages that do not belong to \Cs. Hence, it now makes sense to define a notion of ``best'' "\Cs-cover".

\subsubsection*{Definition}

We may now define "optimality" formally. Consider an arbitrary "\ratm" $\rho: 2^{A^*} \to R$ and a "lattice" \Cs. Given a language $L$, \AP an ""optimal"" "\Cs-cover" of $L$ for $\rho$ is a "\Cs-cover" \Kb of $L$ which satisfies the following property:
\[
  "\prin{\rho}{\Kb}" \subseteq "\prin{\rho}{\Kb'}" \quad \text{for any "\Cs-cover" $\Kb'$ of $L$}.
\]
Moreover, as we already announced in Section~\ref{sec:covering}, an important special case is when $L$ is the universal language $A^*$. In this case, we shall speak of \emph{"optimal" "universal \Cs-cover" for $\rho$}.

\begin{example} \label{ex:bgen:alph}
  We use the alphabet $A = \{a,b,c\}$. Consider the class \AP""\at"" of alphabet testable languages (\emph{i.e.}, the Boolean combinations of languages of the form $A^*aA^*$ for some $a \in A$). Let $\Lb = \{(ab)^+,b(ab)^+,c(ac)^+\}$. Consider
  \[
    \Kb = \{A^*bA^*,\; A^* \setminus (A^*bA^*)\}.
  \]
  One may verify that \Kb is an "optimal" universal "\at"-"cover" for \Lb.  Its \Lb-\imprint is
  \[
    \bigl\{\{(ab)^+,b(ab)^+\}, \{(ab)^+\},\{b(ab)^+\},\{c(ac)^+\},\emptyset\bigr\}.
  \]
  Note that it contains $\{(ab)^+,b(ab)^+\}$. Therefore, by the necessary condition stated above, $(ab)^+$ and $b(ab)^+$ cannot be separated by alphabet testable languages. This can be verified directly, and used to prove that the "cover" $\Kb$ is indeed "optimal". Note also that there are other "optimal" universal \at-covers for \Lb. For instance
  \[
    \{A^*cA^*,\; A^* \setminus (A^*cA^*)\}
  \]
  is another universal "cover" with the same \Lb-\imprint as \Kb.
\end{example}

Note that in general, there can be infinitely many "optimal" "\Cs-covers" for a given "\ratm" $\rho$. We now prove that there always exists at least one. In order to prove this, we need our hypothesis that \Cs is a "lattice" (more precisely, we need \Cs to be closed under finite intersection).

\begin{lemma} \label{lem:bgen:opt}
  Let \Cs be a "lattice". Then, for any language $L$ and any "\ratm" $\rho: 2^{A^*} \to R$, there exists an "optimal" "\Cs-cover" of $L$ for $\rho$.
\end{lemma}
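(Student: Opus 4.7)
The plan is to combine two ingredients: finiteness of the rating set $R$ (which forces some imprint to be minimal) and closure of $\Cs$ under finite intersection (which lets us combine two $\Cs$-covers into one whose imprint sits below both).

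First I would observe that $\Cs$-covers of $L$ exist at all: since $\Cs$ is a "lattice", $A^*\in\Cs$, so $\{A^*\}$ is a $\Cs$-cover of $L$. Next, since every $\rho$-\imprint is a subset of the finite set $R$, the family $\{"\prin{\rho}{\Kb}" \mid \Kb \text{ is a } \Cs\text{-cover of } L\}$ is a nonempty family of subsets of a finite set, hence admits a minimal element for inclusion. Pick any $\Cs$-cover $\Kb_0$ of $L$ whose $\rho$-\imprint is minimal in this sense.

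The heart of the argument is then to upgrade this minimality into genuine "optimality", that is, to show $"\prin{\rho}{\Kb_0}" \subseteq "\prin{\rho}{\Kb}"$ for \emph{every} $\Cs$-cover $\Kb$ of $L$. For this, given such a $\Kb$, I would form the pointwise intersection
\[
  \Kb_0 \wedge \Kb \;=\; \{K_0 \cap K \mid K_0 \in \Kb_0,\ K \in \Kb\}.
\]
Because $\Cs$ is closed under finite intersection, every element of $\Kb_0 \wedge \Kb$ lies in $\Cs$; and because each $w\in L$ belongs to some $K_0\in\Kb_0$ and some $K\in\Kb$, it belongs to $K_0\cap K \in \Kb_0\wedge\Kb$. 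So $\Kb_0 \wedge \Kb$ is a $\Cs$-cover of $L$. Using Fact~\ref{fct:bgen:increasing}, from $K_0\cap K \subseteq K_0$ and $K_0\cap K \subseteq K$ we get $\rho(K_0\cap K) \leq \rho(K_0)$ and $\rho(K_0\cap K) \leq \rho(K)$, whence
\[
  "\prin{\rho}{\Kb_0 \wedge \Kb}" \;\subseteq\; "\prin{\rho}{\Kb_0}" \cap "\prin{\rho}{\Kb}".
\]
By minimality of $"\prin{\rho}{\Kb_0}"$ applied to the $\Cs$-cover $\Kb_0\wedge\Kb$, we also have $"\prin{\rho}{\Kb_0}" \subseteq "\prin{\rho}{\Kb_0 \wedge \Kb}"$. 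Chaining the two inclusions gives $"\prin{\rho}{\Kb_0}" \subseteq "\prin{\rho}{\Kb}"$, as required.

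There is no real obstacle here: the statement is essentially tailored to the hypotheses. The only point that deserves care is making sure one uses closure under \emph{intersection} (not union), since the lattice operation compatible with the order $\leq$ on $\rho$-images is intersection via Fact~\ref{fct:bgen:increasing}; taking unions would give the wrong inequality and fail to produce a smaller \imprint.
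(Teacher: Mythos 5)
Your proof is correct and takes essentially the same approach as the paper: both rely on $\{A^*\}$ to guarantee existence of $\Cs$-covers, on the finiteness of the rating set $R$ to extract a cover with minimal \imprint, and on pairwise intersections (using closure of $\Cs$ under intersection and Fact~\ref{fct:bgen:increasing}) to show this minimal \imprint lies below the \imprint of every $\Cs$-cover. The only difference is cosmetic: you first fix a minimal cover and then compare it with an arbitrary one, whereas the paper first proves the two-cover intersection step and then invokes finiteness.
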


\begin{proof}
  We already know that there exists a "\Cs-cover" of $L$ since $\{A^*\}$ is such a "cover". We prove that for any two "\Cs-covers" $\Kb'$ and $\Kb''$ of $L$, there exists a third "\Cs-cover" $\Kb$ of $L$ such that $"\prin{\rho}{\Kb}" \subseteq \prin{\rho}{\Kb'}$ and $"\prin{\rho}{\Kb}" \subseteq \prin{\rho}{\Kb''}$. Since there are only finitely possible "$\rho$-\imprints" (they are all subsets of the finite "rating set" $R$), the lemma will follow.

  We define $\Kb = \{K' \cap K'' \mid K' \in \Kb' \text{ and } K'' \in \Kb''\}$. Since $\Kb'$ and $\Kb''$ are $\Cs$-covers of $L$, the set \Kb is also a "cover" of $L$. Moreover, it is a "\Cs-cover" since \Cs is closed under intersection. Finally, it is immediate from Fact~\ref{fct:bgen:increasing} that $"\prin{\rho}{\Kb}" \subseteq \prin{\rho}{\Kb'}$ and $"\prin{\rho}{\Kb}" \subseteq \prin{\rho}{\Kb''}$.
\end{proof}

An important remark is that the proof of Lemma~\ref{lem:bgen:opt} is \textbf{non-constructive}. Given a "\ratm" $\rho: 2^{A^*} \to R$, computing an actual "optimal" "\Cs-cover" of some language $L$ for $\rho$ is a difficult problem in general. As seen in Theorem~\ref{thm:bgen:main} below, when we work with the "\ratm" $"\rho_\Lb"$ associated to some finite multiset of languages \Lb, this solves "\Cs-covering" for any pair $(L,\Hb)$ where \Hb is a subset of~\Lb: whenever a "\Cs-cover" of $L$ which is "separating" for $\Hb \subseteq \Lb$ exists, any "optimal" "\Cs-cover" of $L$ is one. Before we present this theorem, let us make a key observation about "optimal" "\Cs-covers".

\subsubsection*{"Optimal" \imprint}

By definition, given a "lattice" \Cs, a language $L$ and a "\ratm" $\rho: 2^{A^*} \to R$, all "optimal" "\Cs-covers" of $L$ for $\rho$ have the same "$\rho$-\imprint". Hence, this unique "$\rho$-\imprint" is a \emph{canonical} object for \Cs, $L$ and $\rho$. We say that it is the \emph{\Cs-"optimal" "$\rho$-\imprint" on $L$} and we denote it by \AP$""\opti{\Cs}{L,\rho}""$:
\[
  \opti{\Cs}{L,\rho} = "\prin{\rho}{\Kb}" \quad \text{for any "optimal" "\Cs-cover" \Kb of $L$  for $\rho$}.
\]
Let us complete this definition with a few useful results about "optimal" \imprints. We present two simple facts that one may use to compare "optimal" \imprints for different classes and languages.

\begin{fct} \label{fct:cinclus}
  Let $\rho: 2^{A^*} \to R$ be a "\ratm" and consider two "lattices" \Cs and \Ds such that $\Cs \subseteq \Ds$. Then, for any language $L$, we have $\opti{\Ds}{L,\rho} \subseteq "\opti{\Cs}{L,\rho}"$.
\end{fct}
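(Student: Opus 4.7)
The plan is to unfold the definitions and use the fact that every $\Cs$-cover is automatically a $\Ds$-cover when $\Cs \subseteq \Ds$. First, I would invoke Lemma~\ref{lem:bgen:opt} applied to the lattice $\Cs$: this yields an actual "optimal" "\Cs-cover" $\Kb$ of $L$ for $\rho$. By the definition of $"\opti{\Cs}{L,\rho}"$, we have $"\prin{\rho}{\Kb}" = "\opti{\Cs}{L,\rho}"$.

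Next, I would observe that since $\Cs \subseteq \Ds$, every language in $\Kb$ belongs to $\Ds$, so $\Kb$ is also a "\Cs-cover" in the broader sense of being a "\Ds-cover" of $L$ (it is still a "cover" of $L$, and all its members lie in $\Ds$). Now applying the defining property of an "optimal" "\Ds-cover" for $\rho$, namely that its $\rho$-\imprint is contained in the $\rho$-\imprint of \emph{any} "\Ds-cover" of $L$, and instantiating that property with the specific "\Ds-cover" $\Kb$, I obtain
\[
  "\opti{\Ds}{L,\rho}" \subseteq "\prin{\rho}{\Kb}" = "\opti{\Cs}{L,\rho}".
\]
This finishes the argument.

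There is essentially no obstacle: the proof is a two-line consequence of the monotonicity of the class of admissible covers with respect to $\Cs$. The only thing to be careful about is to apply Lemma~\ref{lem:bgen:opt} to $\Cs$ (not to $\Ds$), so that the witnessing cover automatically serves both roles, and then to invoke the definition of optimality in the right direction for $\Ds$. No further hypothesis on the "\ratm" $\rho$ (such as "\nice"-ness) is needed.
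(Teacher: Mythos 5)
Your proof is correct and is essentially identical to the paper's: both take an optimal $\Cs$-cover $\Kb$ of $L$ for $\rho$, note that it is also a $\Ds$-cover since $\Cs \subseteq \Ds$, and conclude $\opti{\Ds}{L,\rho} \subseteq \prin{\rho}{\Kb} = \opti{\Cs}{L,\rho}$ by the defining property of $\Ds$-optimality.
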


\begin{proof}
  Consider an "optimal" "\Cs-cover" \Kb of $L$ for $\rho$. By definition, we have $"\opti{\Cs}{L,\rho}" = "\prin{\rho}{\Kb}"$. Moreover, since $\Cs \subseteq \Ds$, the "\Cs-cover" \Kb is also a \Ds-cover of $L$ and we have $\opti{\Ds}{L,\rho} \subseteq "\prin{\rho}{\Kb}"$. Altogether, this yields $\opti{\Ds}{L,\rho} \subseteq "\opti{\Cs}{L,\rho}"$ as desired.
\end{proof}

\begin{fct} \label{fct:linclus}
  Let $\rho: 2^{A^*} \to R$ be a "\ratm" and consider two languages $H,L$ such that $H \subseteq L$. Then, for any "lattice" \Cs, we have $\opti{\Cs}{H,\rho} \subseteq "\opti{\Cs}{L,\rho}"$.
\end{fct}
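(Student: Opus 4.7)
The plan is to mirror the short argument used for Fact~\ref{fct:cinclus}, exploiting the monotonicity of ``being a cover'' in the language argument rather than in the class argument. The key observation is that if $H \subseteq L$, then any cover of $L$ is automatically a cover of $H$: for every word $w \in H \subseteq L$, there is some $K$ in the cover containing $w$. In particular, every $\Cs$-cover of $L$ is a $\Cs$-cover of $H$.

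Concretely, I would start by invoking Lemma~\ref{lem:bgen:opt} to fix an optimal $\Cs$-cover $\Kb$ of $L$ for $\rho$, so that by definition $\opti{\Cs}{L,\rho} = \prin{\rho}{\Kb}$. The previous observation shows that $\Kb$ is also a $\Cs$-cover of $H$, though not necessarily an optimal one. By the optimality of $\opti{\Cs}{H,\rho}$ (which is contained in the $\rho$-imprint of every $\Cs$-cover of $H$), this immediately yields $\opti{\Cs}{H,\rho} \subseteq \prin{\rho}{\Kb} = \opti{\Cs}{L,\rho}$, as desired.

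There is no real obstacle here: the statement is essentially a one-line consequence of the fact that the condition $L \subseteq \bigcup_{K \in \Kb} K$ is preserved when $L$ is replaced by any sublanguage, combined with the definition of optimality. The only mild point worth stating cleanly in the proof is that the containment $H \subseteq L$ does not require any compatibility with $\rho$ or with $\Cs$, so the argument works for every lattice and every rating map without additional hypotheses.
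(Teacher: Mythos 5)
Your proof is correct and is essentially the paper's own argument: take an optimal $\Cs$-cover $\Kb$ of $L$ for $\rho$, observe that $H \subseteq L$ makes $\Kb$ a $\Cs$-cover of $H$ as well, and conclude $\opti{\Cs}{H,\rho} \subseteq \prin{\rho}{\Kb} = \opti{\Cs}{L,\rho}$. The explicit appeal to Lemma~\ref{lem:bgen:opt} for existence is a harmless addition.
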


\begin{proof}
  Consider an "optimal" "\Cs-cover" \Kb of $L$ for $\rho$. By definition, we have $"\opti{\Cs}{L,\rho}" = "\prin{\rho}{\Kb}"$. Moreover, since $H \subseteq L$, we know that \Kb is also a "\Cs-cover" of $H$ and we have $\opti{\Cs}{H,\rho} \subseteq "\prin{\rho}{\Kb}"$. Altogether, this yields $\opti{\Cs}{H,\rho} \subseteq "\opti{\Cs}{L,\rho}"$ as desired.
\end{proof}

\subsection{Connection with the \kl{covering problem}}

We are now ready to connect "optimal" "\Cs-covers" to the "\Cs-covering problem": we express the notion of \emph{"optimal"} "\Cs-cover" with that of \emph{"separating"} "\Cs-cover". The connection is given in the following theorem, through the "\ratm" $"\rho_\Lb"$ canonically associated to a multiset of languages~\Lb.

\begin{theorem} \label{thm:bgen:main}
  Let \Cs be a "lattice". Consider a language $L$ and a finite multiset of languages \Lb. Given any subset $\Hb \subseteq \Lb$, the following properties are equivalent:
  \begin{enumerate}[ref=\arabic*)]
  \item\label{item:cov2sep1} $(L,\Hb)$ is \Cs-coverable.
  \item\label{item:cov2sep2} $\Hb \not\in \opti{\Cs}{L,\rho_\Lb}$.
  \item\label{item:cov2sep3} Any "optimal" "\Cs-cover" of $L$ for \Lb is "separating" for \Hb.
  \end{enumerate}
\end{theorem}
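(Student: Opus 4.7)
The plan is to establish the cycle $\ref{item:cov2sep1} \Rightarrow \ref{item:cov2sep2} \Rightarrow \ref{item:cov2sep3} \Rightarrow \ref{item:cov2sep1}$, relying only on Lemma~\ref{lem:bgen:profile} (which rephrases the separation condition in terms of $\rho_\Lb$-imprints), Lemma~\ref{lem:bgen:opt} (which guarantees that an optimal $\Cs$-cover of $L$ for $\rho_\Lb$ exists, using that $\Cs$ is closed under intersection), and the defining property of optimality: for every $\Cs$-cover $\Kb$ of $L$, we have $\opti{\Cs}{L,\rho_\Lb} \subseteq \prin{\rho_\Lb}{\Kb}$, with equality if and only if $\Kb$ is optimal.

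For $\ref{item:cov2sep1} \Rightarrow \ref{item:cov2sep2}$, I would start from a separating $\Cs$-cover $\Kb$ of $L$ for $\Hb$ given by \Cs-coverability. By Lemma~\ref{lem:bgen:profile} applied with the multiset $\Lb$ (of which $\Hb$ is a subset), being separating for $\Hb$ gives $\Hb \notin \prin{\rho_\Lb}{\Kb}$. Since $\opti{\Cs}{L,\rho_\Lb} \subseteq \prin{\rho_\Lb}{\Kb}$, the contrapositive yields $\Hb \notin \opti{\Cs}{L,\rho_\Lb}$.

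For $\ref{item:cov2sep2} \Rightarrow \ref{item:cov2sep3}$, let $\Kb^\star$ be any optimal $\Cs$-cover of $L$ for $\rho_\Lb$. By definition $\prin{\rho_\Lb}{\Kb^\star} = \opti{\Cs}{L,\rho_\Lb}$, so the hypothesis $\Hb \notin \opti{\Cs}{L,\rho_\Lb}$ rewrites as $\Hb \notin \prin{\rho_\Lb}{\Kb^\star}$, and Lemma~\ref{lem:bgen:profile} (again used for $\Hb \subseteq \Lb$) gives that $\Kb^\star$ is separating for $\Hb$. Finally, $\ref{item:cov2sep3} \Rightarrow \ref{item:cov2sep1}$ is immediate: Lemma~\ref{lem:bgen:opt} produces an optimal $\Cs$-cover of $L$ for $\rho_\Lb$, which by \ref{item:cov2sep3} is a $\Cs$-cover of $L$ separating for $\Hb$, witnessing that $(L,\Hb)$ is $\Cs$-coverable.

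There is no real obstacle here; the statement is essentially a packaging theorem. The only subtle point to flag in the write-up is the direction of inclusion in the optimality definition—the optimal imprint is the \emph{smallest}, contained in every other cover's imprint—so that one correctly chains $\Hb \notin \prin{\rho_\Lb}{\Kb}$ to $\Hb \notin \opti{\Cs}{L,\rho_\Lb}$ in the first implication. Once that is spelled out, each implication is a one-line deduction from Lemmas~\ref{lem:bgen:profile} and~\ref{lem:bgen:opt}.
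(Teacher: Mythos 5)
Your proof is correct and takes essentially the same route as the paper: the same three implications in the cycle $\ref{item:cov2sep1} \Rightarrow \ref{item:cov2sep2} \Rightarrow \ref{item:cov2sep3} \Rightarrow \ref{item:cov2sep1}$ (the paper merely lists them starting from $\ref{item:cov2sep3}$), each resting on Lemma~\ref{lem:bgen:profile}, Lemma~\ref{lem:bgen:opt}, and the defining inclusion $\opti{\Cs}{L,\rho_\Lb} \subseteq \prin{\rho_\Lb}{\Kb}$ for an arbitrary $\Cs$-cover $\Kb$.
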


\begin{proof}
  We prove that $\ref{item:cov2sep3} \Rightarrow \ref{item:cov2sep1} \Rightarrow  \ref{item:cov2sep2} \Rightarrow \ref{item:cov2sep3}$. Let us first assume  that $\ref{item:cov2sep3}$ holds, \emph{i.e.}, that any "optimal" "\Cs-cover" of $L$ for \Lb, is "separating" for \Hb. Since there exists at least one "optimal" "\Cs-cover" of $L$ for \Lb (see Lemma~\ref{lem:bgen:opt}), $(L,\Hb)$ is \Cs-coverable, \emph{i.e.}, $\ref{item:cov2sep1}$ holds.

  We now prove that $\ref{item:cov2sep1} \Rightarrow \ref{item:cov2sep2}$. Assume that $\ref{item:cov2sep1}$ holds, \emph{i.e.}, that $(L,\Hb)$ is \Cs-coverable. This means that there exists a "\Cs-cover" \Kb of $L$ which is "separating" for \Hb. It follows from Lemma~\ref{lem:bgen:profile} that $\Hb \not\in \lprin{\Kb}$. Finally, since \Kb is a "\Cs-cover" of $L$, we have $\opti{\Cs}{L,\rho_\Lb} \subseteq \lprin{\Kb}$ by definition and we conclude that $\Hb \not\in \opti{\Cs}{L,\rho_\Lb}$. Therefore, $\ref{item:cov2sep2}$ holds.

  It remains to prove that $\ref{item:cov2sep2} \Rightarrow \ref{item:cov2sep3}$. Assume that $\Hb \not\in \opti{\Cs}{L,\rho_\Lb}$ and let $\Kb$ be an "optimal" "\Cs-cover" of $L$ for \Lb, so that $\opti{\Cs}{L,\rho_\Lb}=\lprin{\Kb}$. Therefore, our hypothesis yields that $\Hb \not\in \lprin{\Kb}$, and it follows from Lemma~\ref{lem:bgen:profile} that \Kb is "separating" for \Hb.
\end{proof}

In view of Theorem~\ref{thm:bgen:main}, both objectives in the "\Cs-covering problem" can now be reformulated with our new terminology. In order to decide "\Cs-covering" for a particular input $(L,\Lb)$, it suffices to compute \opti{\Cs}{L,\rho_\Lb}, the \Cs-"optimal" $\rho_\Lb$-\imprint on $L$ (this is the second item in the theorem). Similarly, if a "\Cs-cover" of $L$ which is "separating" for \Lb exists, it suffices to compute an "optimal" "\Cs-cover" of $L$ for~\Lb to obtain one (this is the third item in the theorem). There are several motivations for using this new formulation:
\begin{enumerate}
\item The \Cs-optimal $"\rho_\Lb"$-"\imprint" on $L$, \opti{\Cs}{L,\rho_\Lb}, is a canonical object associated to \Cs, $L$ and \Lb which always exists, regardless of whether the answer to "\Cs-covering" is ``yes'' or ``no''.
\item This approach yields an abstraction of "\Cs-covering" which enjoys lighter terminology and more elegant presentations for solutions. The key idea is that we do not have to consider finite multisets of languages explicitly: we replace the goal of finding an algorithm deciding whether an input pair $(L,\Lb)$ is \Cs-coverable by the one of finding an algorithm computing "\opti{\Cs}{L,\rho}" from some input \ratm $\rho: 2^{A^*} \to R$. Of course, we shall restrict ourselves to a special class of "\ratms" which can be finitely represented. Otherwise, it would not make sense to speak of algorithms taking a "\ratm" as input.
\item We prove that in order to obtain covering algorithms, it suffices to consider a special class of "\ratms": the "\nice" "\mratms" (defined in Section~\ref{sec:bgen:tame}). This is where where we exploit the fact that the multiset \Lb in our input is made of \emph{regular} languages. These "\ratms" enjoy additional properties which are crucially exploited by our algorithms. In particular, a key point is they can be finitely represented. Consequently, it does make sense to speak of algorithms taking a "\nice" "\mratm" as input.
\end{enumerate}

\subsection{Extension}

Let us finish the section by presenting a natural relation between \ratms: \emph{extension}. It is designed with the following objective in mind. Given two \ratms $\rho$ and $\tau$, if \emph{$\tau$ extends $\rho$}, we want to have the following property for any lattice \Cs: "\opti{\Cs}{L,\rho}" is easily computed from \opti{\Cs}{L,\tau}

\begin{remark}
  We shall use extension to restrict the class of \ratms that one needs to consider for solving covering problems. As seen in Theorem~\ref{thm:bgen:main}, given some lattice \Cs, obtaining an algorithm for \Cs-covering reduces to getting a procedure which computes \opti{\Cs}{L,"\rho_\Lb"} from an input pair $(L,\Lb)$ (\emph{i.e.}, $L$ is a regular language and \Lb a finite multiset of regular languages). Hence, we need to handle all \ratms $"\rho_\Lb"$ associated to some finite multiset of regular languages \Lb. Extension is used to show that we may restrict ourselves to smaller classes of \ratms without loss of generality: we prove that for any finite multiset of regular languages \Lb, one may compute the finite representation of a \ratm extending $\rho_{\Lb}$ in the smaller class. We present the most important restriction that we shall consider in the next section: we always work with \emph{"\tame"} "\ratms".
\end{remark}

Consider two \ratms $\rho:2^{A^*}\to R$ and $\tau: 2^{A^*} \to Q$. By definition, the two rating sets $R$ and $Q$ are finite commutative and idempotent monoids. We say that $\tau$ \emph{extends} $\rho$ when there exists a monoid morphism $\delta: Q \to R$ such that,
\[
  \rho = \delta \circ \tau.
\] 
We call $\delta$ the \emph{extending morphism}. Let us point out that since rating sets must be finite, any extending morphism is clearly finitely representable. Another simple but important observation is that extension is a transitive relation.

\begin{fct} \label{lem:bgen:extrans}
  Let $\rho_1:2^{A^*}\to R_1$, $\rho_2:2^{A^*}\to R_2$ and $\rho_3:2^{A^*}\to R_3$ be \ratms. Assume that $\rho_2$ extends $\rho_1$ and that $\rho_3$ extends $\rho_2$ for the extending morphisms $\delta_1: R_2 \to R_1$ and $\delta_2: R_3 \to R_2$ respectively. Then, $\rho_3$ extends $\rho_1$ for the extending morphism $\delta_1 \circ \delta_2: R_3 \to R_1$.
\end{fct}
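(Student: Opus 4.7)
The plan is essentially a direct composition argument, since extension is defined by postcomposition with a morphism. From the hypothesis that $\rho_2$ extends $\rho_1$ via $\delta_1$, I have the identity $\rho_1 = \delta_1 \circ \rho_2$. Likewise, from the hypothesis that $\rho_3$ extends $\rho_2$ via $\delta_2$, I have $\rho_2 = \delta_2 \circ \rho_3$. Substituting the second identity into the first yields
\[
  \rho_1 \;=\; \delta_1 \circ \rho_2 \;=\; \delta_1 \circ (\delta_2 \circ \rho_3) \;=\; (\delta_1 \circ \delta_2) \circ \rho_3,
\]
which is precisely the equation required to say that $\rho_3$ extends $\rho_1$ with extending morphism $\delta_1 \circ \delta_2 : R_3 \to R_1$.

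The only remaining verification is that $\delta_1 \circ \delta_2$ is indeed a monoid morphism between the rating sets $R_3$ and $R_1$, as required by the definition of extension. This is routine: both $\delta_1$ and $\delta_2$ preserve the neutral element, so their composition maps $0_{R_3}$ to $\delta_1(\delta_2(0_{R_3})) = \delta_1(0_{R_2}) = 0_{R_1}$, and both are compatible with the commutative idempotent addition on rating sets, so their composition is as well. No nontrivial obstacle arises; the statement is a pure transitivity fact whose proof is a one-line composition together with this short morphism check.
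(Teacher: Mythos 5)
Your proof is correct and is exactly the routine composition argument one would expect; the paper itself states this as a Fact with no proof, implicitly relying on the same observation that $\rho_1 = \delta_1 \circ \rho_2 = (\delta_1 \circ \delta_2)\circ \rho_3$ together with the standard fact that a composition of monoid morphisms is a monoid morphism.
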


We may now present the main property of extension. When $\tau$ extends $\rho$, given any lattice \Cs and any language $L$, "\opti{\Cs}{L,\rho}" is easily computed from \opti{\Cs}{L,\tau} (provided that we have the corresponding extending morphism $\delta$ in hand). Moreover, any "optimal" "\Cs-cover" of $L$ for $\tau$ is also "optimal" $\rho$.

\begin{lemma} \label{lem:bgen:extension}
  Consider two \ratms $\rho:2^{A^*}\to R$ and $\tau: 2^{A^*} \to Q$, and assume that $\tau$ extends $\rho$ with the extending morphism $\delta: Q\to R$. For any lattice \Cs and any language $L$, the two following properties hold:
  \begin{itemize}
  \item $"\opti{\Cs}{L,\rho}" = \dclos \delta(\opti{\Cs}{L,\tau})$.
  \item Any \Cs-cover of $L$ which is optimal for $\tau$ is also optimal for $\rho$.
  \end{itemize}
\end{lemma}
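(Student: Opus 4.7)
The plan is to prove both items simultaneously by first establishing a pointwise relationship between $\tau$-\imprints and $\rho$-\imprints of the same finite set of languages, and then lifting this to "optimal" \imprints.

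The first step will be to show the following lemma: for any finite set of languages \Kb,
\[
  \prin{\rho}{\Kb} = \dclos \delta(\prin{\tau}{\Kb}).
\]
The forward inclusion is immediate: if $r\leq\rho(K)=\delta(\tau(K))$ for some $K\in\Kb$, then $\delta(\tau(K))\in\delta(\prin{\tau}{\Kb})$ and $r$ lies in the downset. For the reverse inclusion, take $r\leq\delta(q)$ with $q\in\prin{\tau}{\Kb}$; then $q\leq\tau(K)$ for some $K\in\Kb$, and since $\delta$ is a morphism of "rating sets", Fact~\ref{fct:bgen:compatible-with-morphisms} applies and gives that $\delta$ is increasing, so $\delta(q)\leq\delta(\tau(K))=\rho(K)$, yielding $r\leq\rho(K)$, hence $r\in\prin{\rho}{\Kb}$.

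Next I would pick an "optimal" "\Cs-cover" \Kb of $L$ for $\tau$, which exists by Lemma~\ref{lem:bgen:opt}, and show that it is also "optimal" for $\rho$. For any "\Cs-cover" $\Kb'$ of $L$, $\tau$-optimality of \Kb gives $\prin{\tau}{\Kb}\subseteq\prin{\tau}{\Kb'}$, hence $\delta(\prin{\tau}{\Kb})\subseteq\delta(\prin{\tau}{\Kb'})$, and applying the downset operation (which is monotone with respect to inclusion) yields $\dclos\delta(\prin{\tau}{\Kb})\subseteq\dclos\delta(\prin{\tau}{\Kb'})$. By the preliminary lemma, this is exactly $\prin{\rho}{\Kb}\subseteq\prin{\rho}{\Kb'}$, proving that \Kb is an "optimal" "\Cs-cover" of $L$ for $\rho$. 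This establishes the second item of the lemma.

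The first item then follows directly: by definition of $\opti{\Cs}{L,\rho}$ and $\opti{\Cs}{L,\tau}$, and by the choice of \Kb as an "optimal" "\Cs-cover" for $\tau$,
\[
  \opti{\Cs}{L,\rho} = \prin{\rho}{\Kb} = \dclos\delta(\prin{\tau}{\Kb}) = \dclos\delta(\opti{\Cs}{L,\tau}).
\]
There is no real obstacle here; the only subtle point is making sure one never uses a morphism property beyond monotonicity of $\delta$, so that the argument relies solely on Fact~\ref{fct:bgen:compatible-with-morphisms} and the definitions of \imprint and "optimality".
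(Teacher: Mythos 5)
Your proposal is correct; every step checks out, and it uses exactly the same ingredients as the paper (the identity $\rho=\delta\circ\tau$, the monotonicity of $\delta$ from Fact~\ref{fct:bgen:compatible-with-morphisms}, and the existence of optimal covers from Lemma~\ref{lem:bgen:opt}). The organization, however, is genuinely different. You first isolate the transfer identity $\prin{\rho}{\Kb} = \dclos \delta(\prin{\tau}{\Kb})$ for an \emph{arbitrary} finite set of languages \Kb, then deduce the second item (a $\tau$-optimal \Cs-cover is $\rho$-optimal) directly from monotonicity of the image and downset operations, and finally obtain the first item as a one-line corollary. The paper proceeds in the opposite order: it proves the first item from scratch by a double-inclusion argument that invokes \emph{two} optimal covers, one optimal for $\tau$ (forward inclusion) and one optimal for $\rho$ (reverse inclusion), and only then derives the second item, at which point it states your transfer identity merely as a fact that is ``simple to verify''. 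Your route buys a cleaner dependency structure: the transfer identity does all the work, you never need to consider a cover that is optimal for $\rho$, and both items fall out of a single instantiation of a $\tau$-optimal cover; the paper's route is more elementwise and keeps each item's proof self-contained at the cost of some repetition. Both are complete and rely on nothing beyond the definitions of \imprint and optimality.
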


\begin{proof}
  We start with the first item: $"\opti{\Cs}{L,\rho}" = \dclos \delta(\opti{\Cs}{L,\tau})$. We first consider $r \in "\opti{\Cs}{L,\rho}"$ and show that $r \in \dclos \delta(\opti{\Cs}{L,\tau})$. Let \Kb be a \Cs-cover of $L$ which is "optimal" for $\tau$. By definition $"\opti{\Cs}{L,\rho}" \subseteq "\prin{\rho}{\Kb}"$ and we get that $r\in "\prin{\rho}{\Kb}"$. Thus, there exists $K \in \Kb$ such that $r \leq \rho(K)$. Clearly $\tau(K) \in \prin{\tau}{\Kb} = \opti{\Cs}{L,\tau}$ since \Kb is "optimal" for $\tau$. Moreover, by definition of $\delta$, we have $\rho(K) = \delta(\tau(K))$. Thus, since $r \leq \rho(K)$, we have $r \in \dclos \delta(\opti{\Cs}{L,\tau})$.

  Conversely, assume that $r \in \dclos \delta(\opti{\Cs}{L,\tau})$. We show that $r \in "\opti{\Cs}{L,\rho}"$. Let \Kb be a "\Cs-cover" of $L$ which is "optimal" for $\rho$. By definition, $"\prin{\rho}{\Kb}" = "\opti{\Cs}{L,\rho}"$ and it therefore suffices to show that $r \in "\prin{\rho}{\Kb}"$. By definition of $r$, we have $q \in \opti{\Cs}{L,\tau}$ such that $r \leq \delta(q)$. Since $\opti{\Cs}{L,\tau} \subseteq\prin{\tau}{\Kb}$ by definition, this implies $q \in \prin{\tau}{\Kb}$. Hence, we get $K \in \Kb$ such that $q \leq \tau(K)$. Since $\delta$ is a morphism such that $\rho = \delta \circ \tau$, this implies that $\delta(q) \leq \delta(\tau(K)) = \rho(K)$ (see Fact~\ref{fct:bgen:compatible-with-morphisms}). Finally, since $r \leq \delta(q)$, we obtain $r \leq \rho(K)$ which implies that $r \in "\prin{\rho}{\Kb}"$, finishing the proof.

  It remains to prove the second item. Consider a "\Cs-cover" \Kb of $L$ which is "optimal" for $\tau$. We have to show that \Kb is also "optimal" for $\rho$ as well. By hypothesis $\prin{\tau}{\Kb}= \opti{\Cs}{L,\tau}$. Moreover, since $\rho = \delta \circ \tau$, it is simple to verify from the definitions that $"\prin{\rho}{\Kb}" = \dclos \delta(\prin{\tau}{\Kb})$. Thus, we get that $"\prin{\rho}{\Kb}" = \dclos \delta(\opti{\Cs}{L,\tau})$ which yields $"\prin{\rho}{\Kb}" = \copti{L,\rho}$ by the first item. Thus, \Kb is an "optimal" "\Cs-cover" of $L$ for $\rho$.
\end{proof}

\section{\Mratms}
\label{sec:bgen:tame}
We now define a special class of "\ratms": the "\mratms". This class is important for two complementary reasons. First, the "\ratms" which are simultaneously "\nice" and "\tame" are finitely representable. Hence, we are able to speak of computational problems in which the input is a "\nice" "\mratm". For example, given a "lattice" \Cs, the problem of computing \copti{L,\rho} from a regular language $L$ and a "\nice" "\mratm" $\rho$ makes sense. Second, we shall prove that for any finite multiset of regular languages \Lb, we are able to compute a "\nice" "\mratm" extending $"\rho_{\Lb}"$.

Consequently, by Theorem~\ref{thm:bgen:main} and Lemma~\ref{lem:bgen:extension}, we are able to reduce \Cs-covering for any lattice \Cs to the problem of computing \copti{L,\rho} from a regular language $L$ and a "\nice" "\mratm" $\rho$. In other words, considering "\nice" "\mratms" suffices to obtain covering algorithms. This is crucial for our techniques to apply.

\begin{remark}
  We shall actually consider two variants of this reduction. The first one applies to "universal \Cs-covering" only (this allows us to drop the regular language $L$ in the input). We state it precisely in Section~\ref{sec:genba} (see Proposition~\ref{prop:breduc}). The second one applies to the full covering problem and involves a little extra work to handle the regular language $L$. We state it precisely in Section~\ref{sec:genlatts} (see Proposition~\ref{prop:lreduc}).
\end{remark}

We first define "\mratms" and explain how to finitely represent the "\nice" ones. Then, we investigate their properties. Finally, we present a construction for building a "\mratm" extending $"\rho_{\Lb}"$ from an input finite multiset of regular languages \Lb.

\subsection{Definition}

We say that a \ratm $\rho: 2^{A^*} \to R$ is \emph{"\tame"} when its rating set $R$ has more structure: it needs to be an \emph{idempotent "semiring"}. Moreover, $\rho$ has to satisfy an additional property connecting this structure to language concatenation, namely, it has to be a morphism of "semirings". Let us first define "semirings".

\medskip
\noindent
{\bf "Semirings".} A \AP""semiring"" is a tuple $(R,+,\cdot)$ where $R$ is a set and ``$+$'' and ``$\cdot$''  are two binary operations called addition and multiplication, such that the following axioms are satisfied:
\begin{itemize}
\item $(R,+)$ is a commutative monoid whose neutral element is denoted by $0_R$.
\item $(R,\cdot)$ is a monoid whose neutral element is denoted by $1_R$.
\item Multiplication distributes over addition, \emph{i.e.}, for all $r,s,t \in R$ we have:
  \[
    \begin{array}{lll}
      r \cdot (s + t) & = & (r \cdot s) + (r \cdot t), \\
      (r + s) \cdot t & = & (r \cdot t) + (s \cdot t).
    \end{array}
  \]
\item The neutral element ``$0_R$'' of $(R,+)$ is a zero for $(R,\cdot)$, \emph{i.e.}, for any $r \in R$:
  \[
    0_R \cdot r = r \cdot 0_R = 0_R.
  \]
\end{itemize}

\begin{remark}
  Semirings generalize the more standard notion of rings. A ring $(R,+,\cdot)$ is a "semiring" for which $(R,+)$ is a group.
\end{remark}

As usual, for the sake of simplifying the notation, when multiplying elements we shall often write $rr'$ instead of $r\cdot r'$. We say that a "semiring" $R$ is \emph{idempotent} when $r + r = r$ for any $r \in R$, \emph{i.e.}, when the additive monoid $(R,+)$ is idempotent (on the other hand, note that there is no additional constraint on the multiplicative monoid $(R,\cdot)$).

\begin{example}\label{ex:bgen:semiring}
  A simple example of infinite idempotent "semiring" is the set $2^{A^*}$ of all languages over~$A$. Indeed, it suffices to choose union as the addition (with the empty language as neutral element) and concatenation as the multiplication (with the singleton $\{\varepsilon\}$ as neutral element). The ordering is then simply language inclusion. More generally, any class of languages which is closed under union, concatenation and contains the singleton $\{\varepsilon\}$ is a "semiring" as well.
\end{example}

Observe that any finite idempotent "semiring" $R$ is in particular a "rating set" by definition (\emph{i.e.}, $(R,+)$ is an idempotent and commutative monoid). We know from Fact~\ref{fct:bgen:compatible-with-addition} that in an idempotent "semiring", the canonical partial order ``$\leq$'', defined by $r\leq s$ when $r+s=s$, is compatible with addition. Actually, it is also compatible with multiplication.

\begin{fact}\label{fct:bgen:compatible-with-mult}
  Let $R$ be an idempotent "semiring" and let ``$\leq$'' be its induced ordering relation. For all $r_1,r_2,s_1,s_2 \in R$ such that $r_1 \leq r_2$ and $s_1 \leq s_2$, we have $r_1s_1 \leq r_2s_2$.
\end{fact}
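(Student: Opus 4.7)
The plan is to derive the inequality $r_1 s_1 \leq r_2 s_2$ directly from the definition of $\leq$, using distributivity together with idempotence of addition. By definition, $r_1 \leq r_2$ means $r_2 = r_1 + r_2$ and $s_1 \leq s_2$ means $s_2 = s_1 + s_2$. Our target $r_1 s_1 \leq r_2 s_2$ unfolds to the equation $r_1 s_1 + r_2 s_2 = r_2 s_2$.

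The key computation is to expand $r_2 s_2$ using the two identities above and distributivity:
\[
r_2 s_2 = (r_1 + r_2)(s_1 + s_2) = r_1 s_1 + r_1 s_2 + r_2 s_1 + r_2 s_2.
\]
In particular, the right-hand side contains $r_1 s_1$ as a summand. Adding $r_1 s_1$ to $r_2 s_2$ and applying idempotence of $+$ to collapse the duplicated $r_1 s_1$ term then yields
\[
r_1 s_1 + r_2 s_2 = r_1 s_1 + r_1 s_2 + r_2 s_1 + r_2 s_2 = r_2 s_2,
\]
which is precisely $r_1 s_1 \leq r_2 s_2$.

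There is no real obstacle here: the statement is an immediate consequence of the semiring axioms combined with idempotence. The only minor care needed is to invoke distributivity on \emph{both} sides (the "semiring" definition lists both left and right distributivity separately), and to remember that idempotence is what allows the duplicated $r_1 s_1$ term to be absorbed. One could alternatively present the proof as two applications of a one-sided monotonicity lemma (first showing $r_1 \leq r_2 \Rightarrow r_1 s \leq r_2 s$ and symmetrically $s_1 \leq s_2 \Rightarrow r s_1 \leq r s_2$, then chaining them via transitivity of $\leq$, which is part of Fact~\ref{fct:bgen:compatible-with-addition}), but the direct expansion above is shorter and self-contained.
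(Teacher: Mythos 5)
Your proof is correct and follows essentially the same route as the paper: expand $r_2 s_2 = (r_1+r_2)(s_1+s_2)$ by distributivity and absorb the $r_1s_1$ summand using idempotence of addition. The extra remark about the alternative two-step monotonicity argument is accurate but not needed.
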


\begin{proof}
  Assume that $r_1 \leq r_2$ and $s_1 \leq s_2$. Then by definition of ``$\leq$'', we have $r_2=r_1+r_2$ and $s_2=s_1+s_2$. Therefore, $r_2s_2=(r_1+r_2)(s_1+s_2)=r_1s_1+r_2s_1+r_1s_2+r_2s_2$, whence $r_1s_1+r_2s_2=r_2s_2$ since addition is idempotent. This shows that $r_1s_1\leq r_2s_2$.
\end{proof}

\medskip
\noindent
{\bf "\Mratms".} Now that we have idempotent "semirings", we may define "\mratms": as expected they are "semiring" morphisms. Let $\rho: 2^{A^*} \to R$ be a \ratm. By definition, this means that the rating set $(R,+)$ is an idempotent commutative monoid and that $\rho$ is a monoid morphism from $(2^{A^*},\cup)$ to $(R,+)$. In other words, we have,
\begin{enumerate}[label=(\arabic*)]
\item $\rho(\emptyset) = 0_R$.
\item For all $K_1,K_2 \subseteq A^*$, we have $\rho(K_1\cup K_2)=\rho(K_1)+\rho(K_2)$.
\end{enumerate}
We say that $\rho$ is \AP""\tame"" when the rating set $R$ is equipped with a second binary operation ``$\cdot$'' such that $(R,+,\cdot)$ is an idempotent "semiring" and $\rho$ is also a monoid morphism from $(2^{A^*},\cdot)$ to $(R,\cdot)$. In other words, the two following additional axioms have to be satisfied:
\begin{enumerate}[resume,label=(\arabic*)]
\item\label{itm:bgen:funit} $\rho(\varepsilon) = 1_R$.
\item\label{itm:bgen:fmult} For all $K_1,K_2 \subseteq A^*$, we have $\rho(K_1K_2) = \rho(K_1) \cdot \rho(K_2)$.
\end{enumerate}
Altogether, this exactly says that $\rho$ must be a "semiring" morphism from $(2^{A^*},\cup,\cdot)$ to $(R,+,\cdot)$.

\medskip
\noindent
{\bf A finite representation of "\nice" "\mratms".} It turns out that "\mratms" are finitely representable when they are "\nice". This is a crucial property:  this allows us to speak of algorithms taking a "\nice" "\mratm" as input.

Given any "\mratm" $\rho: 2^{A^*} \to R$ ("\nice" or not), one may associate a morphism $\rho_*: A^* \to R$ between the monoids $(A^*,\cdot)$ and $(R,\cdot)$.  The definition is natural: $\rho_*$ is simply the restriction of $\rho$ to $A^*$:
\[
  \begin{array}{llll}
    \rho_*: & A^* & \to     & R       \\
           & w   & \mapsto & \rho(w).
  \end{array}
\]
Clearly, $\rho_*$ is a monoid morphism by Items~\ref{itm:bgen:funit} and~\ref{itm:bgen:fmult} in the definition of "\mratms". 

The main point here is that when a "\mratm" $\rho: 2^{A^*} \to R$ is "\nice", it is fully determined by the rating set $R$ and by its associated word morphism $\rho_*$. Indeed, as we already noted in Remark~\ref{rem:finite-repr-ratm}, when $\rho$ is a "\nice" \ratm, it is fully defined by the images of singleton languages $\{w\}$ and the addition of the rating set $R$: for any $K \subseteq A^*$, $\rho(K)$ is the sum of all elements $\rho(w)$ for $w \in K$. Moreover, when $\rho$ is "\tame", the images of any singleton language $\{w\}$ is exactly what $\rho_*: A^* \to R$ computes. Both objects are clearly finitely representable: $\rho_*$ is determined by the images $\rho_*(a)$ for $a\in A$ and $R$ is a finite "semiring". Hence, we get a finite representation for "\nice" "\mratms".

\begin{remark}
  From now on, when we speak of a procedure taking a "\nice" "\mratm" $\rho$ as input, we shall implicitly assume that it is given in this form. That is, the procedure takes a finite "semiring" $R$ and a monoid morphism $\beta: A^* \to R$ as input. The "\nice" "\mratm" $\rho$ is the one whose rating set is $R$ and whose associated word morphism is $\rho_*=\beta$.
\end{remark}

Let us complete this definition with an important remark: computing information about an input "\nice" "\mratm" $\rho: 2^{A^*} \to R$ is simple. First, given any regular language $L$, one may compute $\itriv{L,\rho} = \{r \in R \mid \text{$r \leq \rho(w)$ for some $w \in L$}\}$. Indeed, testing whether $r \in \itriv{L,\rho}$ amounts to checking if the (regular) language $\{w \mid r \leq \rho(w)\} \cap L$ is nonempty. Hence, one may compute \itriv{L,\rho} in polynomial time with respect to a recognizer for $L$ and the size $|R|$ of $R$.

\begin{lemma}\label{lem:bgen:evtriv}
  Given as input a "\nice" "\mratm" $\rho: 2^{A^*} \to R$ and a regular language $L\subseteq A^*$, one  may compute the set $\itriv{L,\rho}$ in polynomial time with respect to $|R|$ and the size of some recognizer for $L$ (\emph{i.e.}, an \nfa or a monoid morphism).
\end{lemma}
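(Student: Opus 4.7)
The plan is to compute $\itriv{L,\rho}$ element by element, by testing membership of each $r\in R$ in $\itriv{L,\rho}$. Since $R$ is finite and given explicitly, there are only $|R|$ such tests to perform, so it suffices to show that each test can be done in polynomial time.

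First I would unfold the definition. By definition, $\itriv{L,\rho} = \dclos\{\rho(w) \mid w\in L\}$, so a given $r\in R$ lies in $\itriv{L,\rho}$ if and only if there exists $w\in L$ such that $r\leq \rho(w)$. Since $\rho$ is \nice, we have $\rho(w) = \rho_*(w)$ for every singleton word $w$, where $\rho_*: A^*\to R$ is the word morphism associated to $\rho$ (which is part of the representation of $\rho$ given in input). Hence the test amounts to deciding whether the regular language
\[
  M_r \;=\; \{w \in A^* \mid r \leq \rho_*(w)\} \;=\; \rho_*^{-1}(T_r), \quad \text{where } T_r = \{s \in R \mid r \leq s\},
\]
has nonempty intersection with $L$.

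Next I would explain that both $T_r$ and $M_r$ are easy to handle. The set $T_r$ is computed in time $O(|R|)$ by iterating through all $s\in R$ and checking the equality $r+s=s$, which is the very definition of $\leq$. Once $T_r$ is known, the language $M_r$ is recognized by $\rho_*: A^*\to R$ with accepting set $T_r$: this gives a deterministic ``monoid automaton'' of size $|R|$ (one can also convert it to a standard DFA with $|R|$ states and transitions $s\xrightarrow{a} s\cdot \rho_*(a)$). Testing emptiness of $M_r\cap L$ is then a classical product construction between this automaton and a recognizer of $L$ (either an \nfa or a monoid morphism recognizing $L$), followed by a reachability check; this runs in time polynomial in $|R|$ and the size of the recognizer of $L$.

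Putting these ingredients together: for each of the $|R|$ candidates $r\in R$, one performs the above polynomial-time test and retains $r$ exactly when $M_r\cap L\neq\emptyset$. The union of the retained elements is $\itriv{L,\rho}$, and the overall running time is polynomial in $|R|$ and the size of the recognizer of $L$. There is no real obstacle here: the argument is completely routine once one observes that niceness reduces the computation of $\rho$ on singletons to the finitely representable morphism $\rho_*$, which is precisely what makes the notion of a ``\nice \tame \ratm'' algorithmically tractable.
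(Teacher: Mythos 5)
Your proof is correct and takes essentially the same approach as the paper: the paper justifies this lemma (in the sentence preceding its statement) precisely by observing that membership of $r$ in $\itriv{L,\rho}$ reduces to testing nonemptiness of $\{w \mid r\leq \rho(w)\}\cap L$, iterated over the $|R|$ candidates. Your write-up simply spells out the details the paper leaves implicit (computing $T_r$ via the equation $r+s=s$, turning $\rho_*$ into a DFA, the product construction), and nothing is missing.
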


Furthermore, since $\rho: 2^{A^*} \to R$ is "\nice", we are able to evaluate $\rho(K)$ for any regular language $K$. This amounts to checking whether $K$ intersects regular languages recognized by the canonical word morphism associated to $\rho$. Indeed, since $\rho$ is "\nice", we have $\rho(K) = \sum_{w \in K} \rho(w)$. Therefore, $\rho(K)$ is the sum of all $r \in R$ such that $K \cap \{w \in A^* \mid \rho(w) = r\} \neq \emptyset$. We get the following lemma.

\begin{lemma}\label{lem:bgen:evamramt}
  Given as input a "\nice" "\mratm" $\rho: 2^{A^*} \to R$ and a regular language $K \subseteq A^*$, one may compute the element $\rho(K)$ of $R$ in polynomial time with respect to $|R|$ and the size of some recognizer for $K$ (\emph{i.e.}, an \nfa or a monoid morphism).
\end{lemma}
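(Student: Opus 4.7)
The plan is to reduce the computation of $\rho(K)$ to at most $|R|$ intersection-emptiness tests between regular languages, as the paragraph preceding the lemma already hints. Exploit "\nice"ness to decompose $\rho(K) = \sum_{w \in K} \rho(w)$ by grouping words according to their image under the canonical word morphism $\rho_*: A^* \to R$ associated to the input. Since $R$ is idempotent and $\rho(w) = \rho_*(w)$ for every $w \in A^*$, this yields
\[
  \rho(K) \;=\; \sum \bigl\{\, r \in R \;\bigm|\; K \cap \rho_*^{-1}(r) \neq \emptyset \,\bigr\}.
\]
Indeed, each $r$ in the set on the right-hand side is contributed at least once by some $w$, and idempotence collapses all further contributions of the same $r$; conversely, values $r$ for which $K \cap \rho_*^{-1}(r)$ is empty contribute nothing. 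The algorithm then iterates over the at most $|R|$ elements $r \in R$, tests whether $K \cap \rho_*^{-1}(r)$ is nonempty, and outputs the sum (computed in $R$) of those $r$ for which this test succeeds.

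It remains to argue that each emptiness test runs in polynomial time with respect to $|R|$ and the size of the input recognizer for $K$. This is standard: the language $\rho_*^{-1}(r)$ is accepted by the canonical deterministic automaton of $\rho_*$, which has $|R|$ states (the states are the elements of $R$, transitions are given by right multiplication, $1_R$ is initial, and $r$ is the unique accepting state). A standard product construction with the input recognizer for $K$ (an \nfa or a monoid morphism, in which latter case one first views it as an \nfa of the same size) produces an \nfa of size polynomial in $|R|$ and in the size of that recognizer, whose emptiness is decidable in polynomial time by reachability analysis. Performing this test for each of the $|R|$ values of $r$ and summing the surviving values in $R$ yields the claimed polynomial bound.

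No genuine obstacle arises. The whole argument rests on the defining identity of "\nice" "\ratms", the idempotence of $(R,+)$, the fact that $\rho_*$ is part of the finite representation of $\rho$, and the textbook complexity of emptiness for a product automaton.
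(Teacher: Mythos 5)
Your proposal is correct and follows essentially the same route as the paper: exploit niceness to rewrite $\rho(K)$ as the sum of those $r \in R$ with $K \cap \rho_*^{-1}(r) \neq \emptyset$, and decide each such nonemptiness via a product construction between the $|R|$-state deterministic automaton for $\rho_*$ and the input recognizer for $K$. The paper gives exactly this reduction in the paragraph preceding the lemma and leaves the complexity details implicit; your write-up simply makes them explicit.
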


\subsection{\kl{\Mratms} and \kl{optimal} \kl{\imprints}}

We now present a crucial property of (not necessarily "\nice") "\mratms". It turns out that when the investigated class \Cs is a \pvari of regular languages, given a "\mratm" $\rho: 2^{A^*} \to R$, the structure of the multiplicative semigroup $(R,\cdot)$ is transferred to \emph{\Cs-"optimal" $\rho$-\imprints}. This result is why our framework is meant to be used for classes that are \pvaris of regular languages: it does not hold for arbitrary "lattices".

\begin{lemma} \label{lem:bgen:optsemi}
  Let \Cs be a \pvari of regular languages and let $\rho: 2^{A^*} \to R$ be a "\mratm". Consider two languages $L_1,L_2$. Then, for any $r_1 \in \opti{\Cs}{L_1,\rho}$ and any $r_2 \in \opti{\Cs}{L_2,\rho}$, we have $r_1r_2 \in \opti{\Cs}{L_1L_2,\rho}$.
\end{lemma}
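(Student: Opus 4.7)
Proof Plan: It suffices to show that for every \Cs-cover $\Kb$ of $L_1 L_2$, some $K \in \Kb$ satisfies $r_1 r_2 \leq \rho(K)$. Fix such a \Cs-cover $\Kb$. The proof relies on two facts: since every $K \in \Kb$ is regular, it has only finitely many distinct word-quotients, so any quotient of $K$ by an arbitrary language $Y$ (e.g.\ $K Y^{-1} = \bigcup_{y \in Y} K y^{-1}$, or its left analogue) is a \emph{finite} union of elements of \Cs and therefore still lies in \Cs; and $\rho$ is multiplicative, i.e.\ $\rho(X \cdot Y) = \rho(X) \cdot \rho(Y)$ for all $X, Y \subseteq A^*$.

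The central move is a quotient trick. For each $v \in L_2$, the family $\{K v^{-1} : K \in \Kb\}$ is a \Cs-cover of $L_1$: if $u \in L_1$, then $uv \in L_1 L_2 \subseteq \bigcup \Kb$, hence $u \in K v^{-1}$ for some $K \in \Kb$. Invoking $r_1 \in \opti{\Cs}{L_1,\rho}$ yields some $K_v \in \Kb$ with $r_1 \leq \rho(K_v v^{-1})$. The inclusion $(K_v v^{-1}) \cdot \{v\} \subseteq K_v$, combined with multiplicativity and the monotonicity of Fact~\ref{fct:bgen:increasing}, gives
\[
r_1 \cdot \rho(v) \;\leq\; \rho(K_v v^{-1}) \cdot \rho(v) \;=\; \rho\bigl((K_v v^{-1}) \cdot \{v\}\bigr) \;\leq\; \rho(K_v).
\]
Thus every $v \in L_2$ produces some $K_v \in \Kb$ with $r_1 \rho(v) \leq \rho(K_v)$.

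To bring $r_2$ into play, I apply the analogous construction using an optimal \Cs-cover $\Kb_2$ of $L_2$ (which exists by Lemma~\ref{lem:bgen:opt}). Picking $K_2^* \in \Kb_2$ witnessing $r_2 \leq \rho(K_2^*)$, I consider $\{K (K_2^*)^{-1} : K \in \Kb\}$, which lies in \Cs and covers $L_1$: since $K_2^* \cap L_2 \neq \emptyset$ (otherwise removing $K_2^*$ would give an equally-good cover, contradicting optimality), any $u \in L_1$ and any $v \in K_2^* \cap L_2$ gives $uv \in K$ for some $K \in \Kb$, hence $u \in K v^{-1} \subseteq K (K_2^*)^{-1}$. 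Applying optimality of $r_1$ to this cover yields $K^* \in \Kb$ with $r_1 \leq \rho\bigl(K^* (K_2^*)^{-1}\bigr)$, and multiplicativity gives
\[
r_1 r_2 \;\leq\; \rho\bigl(K^* (K_2^*)^{-1}\bigr) \cdot \rho(K_2^*) \;=\; \rho\bigl(K^* (K_2^*)^{-1} \cdot K_2^*\bigr).
\]

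The main obstacle is the final step: $K^* (K_2^*)^{-1} \cdot K_2^*$ is not generally contained in $K^*$, because the set-theoretic right-quotient by a language is asymmetric with respect to concatenation. Overcoming this requires refining the argument so that the quotient used really does admit the inclusion: either by replacing $(K_2^*)^{-1}$ with the intersection variant $\bigcap_{v \in K_2^*} K v^{-1}$ (which satisfies the needed inclusion and lies in \Cs by finiteness of distinct quotients, but is not automatically a \Cs-cover of $L_1$), or by partitioning $K_2^*$ according to the finite joint equivalence $v \equiv v'$ iff $K v^{-1} = K v'^{-1}$ for every $K \in \Kb$, so that on each equivalence class the two notions of quotient coincide. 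The correct balance between these two kinds of quotient is the technical heart of the proof and is where the \pvari hypothesis on \Cs and the \mratm hypothesis on $\rho$ are used in tandem.
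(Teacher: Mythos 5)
You have correctly pinpointed the technical obstruction (the intersection quotient $\bigcap_{v\in H}Kv^{-1}$ is what makes the final inclusion $GH\subseteq K$ go through, but it is not automatically a cover of $L_1$), and your two suggested remedies both point in the right direction. However, the proposal stops exactly at the crucial step: it never actually constructs an $H$ for which the intersection-quotient family \emph{is} a cover, so the argument is not complete.

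The paper's resolution is to build $H$ not from an arbitrary optimal cover of $L_2$, but from a purpose-built refinement. For each $u\in L_1$, the set $\Qb_u=\{u^{-1}K : K\in\Kb\}$ is a \Cs-cover of $L_2$ (closure under left quotients), and by Myhill--Nerode only finitely many distinct $\Qb_u$ arise as $u$ ranges over $L_1$. Intersecting across these finitely many covers gives a \Cs-cover $\Qb$ of $L_2$ with the key property that for every $Q\in\Qb$ \emph{and every} $u\in L_1$ there is some $K\in\Kb$ with $Q\subseteq u^{-1}K$. One then chooses $H\in\Qb$ with $r_2\leq\rho(H)$ by optimality of $r_1$'s partner side. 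With this $H$ in hand, the family $\Gb=\{\bigcap_{v\in H}Kv^{-1} : K\in\Kb\}$ really does cover $L_1$: given $u\in L_1$, pick $K$ with $H\subseteq u^{-1}K$, so that $u\in Kv^{-1}$ for every $v\in H$. Optimality of $r_1$ then produces $G\in\Gb$ with $r_1\leq\rho(G)$, and multiplicativity plus $GH\subseteq K$ finishes the proof. Note that what is decisive is refining the cover of $L_2$ against the (finitely many) quotient patterns $\{u^{-1}K\}_K$ indexed by $u\in L_1$, not against the right-quotient patterns $\{Kv^{-1}\}_K$ indexed by $v$ as your second suggestion does; the latter refinement need not produce cells lying in \Cs, and you would still need to locate a cell carrying $r_2$. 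As it stands, your writeup names the obstruction but does not overcome it.
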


Before proving Lemma~\ref{lem:bgen:optsemi}, let us explain why it is important. Let \Cs be a \pvari and let $\rho: 2^{A^*} \to R$ be a "\mratm". Since $\rho$ is "\tame", it satisfies by definition the following property:
\begin{equation} \label{eq:bgen:optsemiblabla}
  \text{For any $K_1,K_2 \in \Cs$,} \quad \rho(K_1) \cdot \rho(K_2) = \rho(K_1K_2).
\end{equation}
A natural method for building an "optimal" "\Cs-cover" \Kb of some language $L$ for $\rho$ is to start from $\Kb = \emptyset$ and to add new languages $K$ in \Cs to \Kb, until \Kb covers $L$. By definition of $\rho$-\imprints, for \Kb to be "optimal", we need all languages $K$ that we add in \Kb to satisfy $\rho(K) \in \opti{\Cs}{L,\rho}$. It follows from Lemma~\ref{lem:bgen:optsemi} and~\eqref{eq:bgen:optsemiblabla} that when \Cs is a \pvari, we may use concatenation to build new languages $K$. Assume that we have two languages $L_1,L_2$ such that $L_1L_2 \subseteq L$. If we have already built $K_1$ and $K_2$ in \Cs such that $\rho(K_1)\in \opti{\Cs}{L_1,\rho}$ and $\rho(K_2) \in \opti{\Cs}{L_2,\rho}$, then we may add $K_1K_2$ to our "\Cs-cover" of $L$, provided this language belongs to~\Cs, since by Lemma~\ref{lem:bgen:optsemi}, Equation~\eqref{eq:bgen:optsemiblabla} and Fact~\ref{fct:linclus}, we have $\rho(K_1K_2) = \rho(K_1) \cdot \rho(K_2) \in \opti{\Cs}{L_1L_2,\rho} \subseteq \opti{\Cs}{L,\rho}$.

This is central for classes of languages defined through logic (such as "first-order logic"), where language concatenation often plays a central role for building new languages. This means that for such classes, if $K_1$ and $K_2$ belong to~\Cs, then so does $K_1K_2$. Let us now prove Lemma~\ref{lem:bgen:optsemi}.

\begin{proof}[Proof of Lemma~\ref{lem:bgen:optsemi}]
  Let $r_1 \in \opti{\Cs}{L_1,\rho}$ and $r_2 \in \opti{\Cs}{L_2,\rho}$, our objective is to prove that $r_1r_2 \in \opti{\Cs}{L_1L_2,\rho}$. By definition, it suffices to prove that for any "\Cs-cover" \Kb of $L_1L_2$, we have $r_1r_2 \in \prin{\rho}{\Kb}$. Let \Kb be a $\Cs$-cover of $L_1L_2$. Our objective is to find some language $K \in \Kb$ such that $r_1r_2 \leq \rho(K)$. The argument is based on the following claim, which relies on the Myhill-Nerode theorem.

  \begin{claim}
    There exists a language $H \in \Cs$ which satisfies the following two properties:
    \begin{enumerate}
    \item For any $u \in L_1$, there exists $K \in \Kb$ such that $H \subseteq u^{-1}K$.
    \item $r_2 \leq \rho(H)$
    \end{enumerate}
  \end{claim}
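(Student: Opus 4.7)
The plan is to exploit the regularity of the languages in $\Kb$ together with the closure of $\Cs$ under intersection and quotients. For each $u \in A^*$, let $\Kb_u = \{u^{-1}K \mid K \in \Kb\}$. Since $\Cs$ is \quotienting, every language in $\Kb_u$ belongs to $\Cs$. The first observation I would record is that for every $u \in L_1$, $\Kb_u$ is a $\Cs$-cover of $L_2$: indeed, for any $w \in L_2$ we have $uw \in L_1L_2 \subseteq \bigcup_{K \in \Kb} K$, hence $w \in u^{-1}K$ for some $K \in \Kb$.

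The second observation is finiteness. Since every $K \in \Kb$ is regular, each $K$ has only finitely many distinct left quotients (Myhill--Nerode), and $\Kb$ itself is finite, so the family $\{\Kb_u \mid u \in L_1\}$ takes only finitely many distinct values, say $\Kb_1, \ldots, \Kb_n$. The key move is then to combine these finitely many $\Cs$-covers of $L_2$ into a single $\Cs$-cover refining all of them. I would set
\[
  \Kb^* = \{H_1 \cap \cdots \cap H_n \mid H_i \in \Kb_i \text{ for } i = 1, \ldots, n\}.
\]
This is a finite set of languages, and every element lies in $\Cs$ since \Cs is closed under finite intersection. Moreover $\Kb^*$ is a cover of $L_2$: for any $w \in L_2$ and each $i$, some $H_i \in \Kb_i$ contains $w$ (as $\Kb_i$ covers $L_2$), so $w \in H_1 \cap \cdots \cap H_n \in \Kb^*$.

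The conclusion then follows from the optimality of $r_2$. Since $r_2 \in \opti{\Cs}{L_2,\rho}$ and $\Kb^*$ is a $\Cs$-cover of $L_2$, by definition of the $\Cs$-optimal $\rho$-\imprint we have $r_2 \in \prin{\rho}{\Kb^*}$. Hence there exist $H_i \in \Kb_i$ with $r_2 \leq \rho(H_1 \cap \cdots \cap H_n)$. Set $H = H_1 \cap \cdots \cap H_n$; then $H \in \Cs$ and property~(2) holds. For property~(1), given $u \in L_1$ there is an index $i$ with $\Kb_u = \Kb_i$, so $H_i \in \Kb_u$, meaning $H_i = u^{-1}K$ for some $K \in \Kb$, and therefore $H \subseteq H_i = u^{-1}K$.

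The one place that requires a genuine idea is the finiteness step: without the regularity of the $K \in \Kb$, the family of covers $\{\Kb_u\}_{u \in L_1}$ could be infinite and one could not take a single diagonal intersection $\Kb^*$ while staying inside $\Cs$. Once that observation is in place, the rest is a direct use of closure under intersection plus the universal property of $\opti{\Cs}{L_2,\rho}$.
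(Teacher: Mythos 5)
Your proof is correct and follows essentially the same route as the paper's: form the quotient covers $\{u^{-1}K \mid K \in \Kb\}$ for $u \in L_1$, invoke Myhill--Nerode together with the finiteness of $\Kb$ to conclude there are only finitely many such covers, intersect them into one common refining $\Cs$-cover of $L_2$, and then use $r_2 \in \opti{\Cs}{L_2,\rho}$ to extract the desired $H$. You merely spell out the "finitely many intersections" step more explicitly than the paper does; the underlying argument is identical.
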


  \begin{proof}
    For any $u \in L_1$, consider the set $\Qb_u = \{u^{-1}K \mid K \in \Kb\}$. Clearly, $\Qb_u$ is a "\Cs-cover" of $L_2$ since $\Kb$ is a "cover" of $L_1L_2$ and \Cs is closed under quotients. Moreover, we know by hypothesis on \Cs that all languages in  \Kb are regular. Therefore, it follows from the Myhill-Nerode theorem that they have finitely many left quotients. Thus, while there may be infinitely many $u \in L_1$, there are only finitely many distinct sets $\Qb_u$. It follows that we may use finitely many intersections to build a "\Cs-cover" \Qb of $L_2$ such that for any $Q \in \Qb$ and any $u \in L_1$, there exists $K \in \Kb$ satisfying $Q \subseteq u^{-1}K$. This means that all $Q \in \Qb$ satisfy the first item in the claim, we now pick one which satisfies the second one as~well.

    Since $r_2 \in \opti{\Cs}{L_2,\rho}$, and \Qb is a "\Cs-cover" of $L_2$, we have $r_2 \in \prin{\rho}{\Qb}$. Thus, we get $H \in \Qb$ such that $r_2 \leq \rho(H)$ by definition. This concludes the proof of the claim.
  \end{proof}

  We may now finish the proof of Lemma~\ref{lem:bgen:optsemi}. Let $H \in \Cs$ be defined as in the claim and consider the following set:
  \[
    \Gb = \left\{\bigcap_{v\in H}Kv^{-1} \mid K \in \Kb\right\}.
  \]
  Observe that all languages in \Gb belong to \Cs. Indeed, by hypothesis on \Cs, any $K \in \Kb$ is regular. Thus, it has finitely many right quotients by the Myhill-Nerode theorem and the language $\bigcap_{v\in H}Kv^{-1}$ is the intersection of finitely many quotients of languages in \Cs. By closure under intersection and quotients, it follows that $\bigcap_{v\in H}Kv^{-1} \in \Cs$. Moreover, \Gb is a "\Cs-cover" of $L_1$. Indeed, given $u \in L_1$, we have $K \in \Kb$ such that $H \subseteq u^{-1}K$ by the first item in the claim. Hence, given any $v \in H$, we have $u \in Kv^{-1}$ and we obtain that $u \in \bigcap_{v\in H}Kv^{-1}$, which is an element of \Gb.

  Therefore, since $r_1 \in \opti{\Cs}{L_1,\rho}$ by hypothesis, we have $r_1 \in \prin{\rho}{\Gb}$ and we obtain $G \in \Gb$ such that $r_1 \leq \rho(G)$. Hence, since $r_2 \leq \rho(H)$ by the second item in the claim, we have $r_1r_2 \leq \rho(G) \cdot \rho(H)$. Moreover, since $\rho$ is a "\mratm", it satisfies $ \rho(G) \cdot \rho(H) = \rho(GH)$ by definition, which yields,
  \[
    r_1r_2 \leq \rho(GH).
  \]
  Finally, by definition, $G = \bigcap_{v\in H}Kv^{-1}$ for some $K \in \Kb$. We show that $GH \subseteq K$. Since $\rho$ is increasing by Fact~\ref{fct:bgen:increasing}, this will yield $r_1r_2 \leq \rho(GH) \leq \rho(K)$ which concludes the proof of the lemma. Given $w \in GH$, we have $w = uv$ with $u \in G$ and $v \in H$. Moreover, since $v \in H$, we have $u \in Kv^{-1}$ by definition of~$H$. This exactly says that $w = uv \in K$.
\end{proof}

\subsection{Computing \kl{\nice} \kl{\mratms}}

We now explain why we may replace the finite multiset of regular languages \Lb used in the input pair $(L,\Lb)$ for the "covering problem" by a "\nice" "\mratm". As we already proved in Theorem~\ref{thm:bgen:main}, we may replace \Lb by the associated canonical \ratm $"\rho_\Lb"$: $(L,\Lb)$ is \Cs-coverable if and only if $\Lb \not\in \copti{L,\rho_\Lb}$. However, while $"\rho_\Lb"$ is "\nice", it need not be "\tame". However, we show that from recognizers of the languages in \Lb, we may compute a "\nice" "\mratm" $\rho: 2^{A^*} \to R$ which extends $"\rho_\Lb"$. By Lemma~\ref{lem:bgen:extension}, \copti{L,\rho_\Lb} may be computed from \copti{L,\rho}. Thus, we may replace $"\rho_\Lb"$ by $\rho$. We shall outline this reduction precisely later when we formulate our general approach to "covering". For now, let us explain how the extending "\nice" "\mratm" $\rho$ is built.

\begin{proposition}\label{prop:bgen:mratmeff}
  Given as input a finite multiset of regular languages \Lb, one may compute a "\nice" "\mratm" $\rho: 2^{A^*} \to R$ and a morphism $\delta: R \to 2^\Lb$ such that $\rho$ extends $"\rho_\Lb"$ for the extending morphism $\delta$. Moreover, one can also compute the set $\delta\inv(\Lb)$. Finally, the computation may be achieved in polynomial space with respect to the size of \nfas or monoid morphisms recognizing the languages in \Lb.
\end{proposition}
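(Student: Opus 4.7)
The plan is to realize $\rho$ as the rating map sending a language $K \subseteq A^*$ to its image under a product of recognizing morphisms for the $L_i \in \Lb$, taking as underlying semiring the power set of the product monoid.

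First, I convert the input into monoid morphisms: for each $L_i \in \Lb$, I take a morphism $\alpha_i\colon A^* \to M_i$ with accepting set $F_i \subseteq M_i$ such that $L_i = \alpha_i\inv(F_i)$ (using the standard transition-monoid construction if the input is given by \nfas), and form the product morphism $\alpha = (\alpha_1,\dots,\alpha_n)\colon A^* \to M$ with $M = M_1 \times \cdots \times M_n$. I take $R = 2^M$ as rating set and equip it with union as addition (zero element $\emptyset$) and with the pointwise product $X \cdot Y = \{xy \mid x \in X,\ y \in Y\}$ as multiplication (unit $\{1_M\}$), which makes $R$ an idempotent semiring. Finally, I define $\rho\colon 2^{A^*} \to R$ by $\rho(K) = \alpha(K)$.

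Verifying that $\rho$ is a nice multiplicative rating map is routine: niceness is immediate from $\rho(K) = \bigcup_{w \in K}\{\alpha(w)\} = \sum_{w \in K} \rho(w)$, and the semiring morphism axioms $\rho(\{\varepsilon\}) = 1_R$, $\rho(K_1 \cup K_2) = \rho(K_1) + \rho(K_2)$ and $\rho(K_1 K_2) = \rho(K_1) \cdot \rho(K_2)$ follow directly from $\alpha$ being a monoid morphism. For the extending morphism, I set $\delta(X) = \{L_i \in \Lb \mid X \cap \pi_i\inv(F_i) \neq \emptyset\}$, where $\pi_i\colon M \to M_i$ is the $i$-th projection. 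The identities $\delta(\emptyset) = \emptyset$ and $\delta(X \cup Y) = \delta(X) \cup \delta(Y)$ are immediate, so $\delta$ is a rating-set morphism, and unrolling the definitions gives
\[
  \delta(\rho(K)) = \{L_i \mid \exists w \in K,\ \alpha_i(w) \in F_i\} = \{L_i \mid K \cap L_i \neq \emptyset\} = \rho_\Lb(K),
\]
so $\rho$ extends $\rho_\Lb$ via $\delta$. The set $\delta\inv(\Lb) = \{X \subseteq M \mid X \cap \pi_i\inv(F_i) \neq \emptyset \text{ for all } i \leq n\}$ is then obtained by enumerating subsets of $M$ one at a time.

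The main obstacle is the complexity bound. The rating set $R = 2^M$ is large---$|R|$ is exponential in the total size of the input monoid morphisms, and even larger when the input is given by \nfas---so the algorithm cannot afford to materialize $R$ as a table. The point is that it does not need to: each element of $R$ is a subset of $M$, the semiring operations, the morphism $\delta$, and the test ``$X \in \delta\inv(\Lb)$?'' all reduce to local computations on such subsets, and enumerating the required outputs (the images $\rho_*(a)$ for $a \in A$, the images of $\delta$, and the elements of $\delta\inv(\Lb)$) one element at a time keeps the working space polynomial in the input, yielding the claimed polynomial-space bound.
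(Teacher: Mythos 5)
Your construction of $\rho$ is mathematically correct: $\rho(K)=\alpha(K)$ with $R=2^M$, $M=M_1\times\cdots\times M_n$, is indeed a \nice \mratm, and your $\delta$ is a valid extending morphism with $\delta\circ\rho=\rho_\Lb$. The soundness verifications you give are fine. However, the polynomial-space claim fails for this construction, and the complexity discussion you offer does not actually address the real obstruction. You take the power set of the product monoid, $2^{M_1\times\cdots\times M_n}$, whereas the paper takes the product of the power sets, $2^{M_1}\times\cdots\times 2^{M_n}$: it first builds a \nice \mratm $\rho_i:2^{A^*}\to 2^{M_i}$ for each language separately, and then combines them by a Cartesian product of rating sets. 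The difference is not cosmetic. A single element of the paper's rating set is a tuple $(S_1,\dots,S_n)$ with $S_i\subseteq M_i$, requiring $|M_1|+\cdots+|M_n|$ bits; a single element of your rating set is an arbitrary subset of the $\prod_i|M_i|$-element set $M$, requiring $\prod_i|M_i|$ bits, which is exponential in $n$ (take $n$ languages each with a two-element recognizing monoid: the input has size $O(n)$ but one element of your $R$ is a $2^n$-bit object). So ``enumerating elements one at a time'' does not save you: already writing down, let alone operating on, one rating-set element requires exponential working space. A second, independent problem is your preprocessing step for \nfas: converting an \nfa with state set $Q$ to its transition monoid can blow the representation up to $2^{|Q|^2}$ elements before you even form $R$. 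The paper sidesteps this entirely by giving a separate direct construction that uses $2^{Q^2}$ as the rating set, so that a single element is a set of pairs of states, of size $|Q|^2$. The fix for both issues is the same structural change: build a separate \nice \mratm per input language (using the $2^{M_i}$ or $2^{Q_i^2}$ construction) and then take the product of rating sets, rather than the power set of the product.
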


The remainder of this section is devoted to presenting the construction announced in  Proposition~\ref{prop:bgen:mratmeff}. We show that given as input some finite multiset of regular languages \Lb, one may compute a "\nice" "\mratm" $\rho$ extending \Lb, together with the corresponding extending morphism. We only describe the construction. That is may be achieved in polynomial space is easily verified. It involves two steps:
\begin{enumerate}
\item We first show that for any regular language $L$, one may compute a "\nice" "\mratm" extending $\rho_{\{L\}}$. Let us point out that we actually present \emph{two} constructions: the first one starts from a monoid morphism recognizing $L$ while the other starts from an \nfa.
\item By the first step, we are able to compute a "\nice" "\mratm" extending $\rho_{\{L\}}$ for each $L \in \Lb$. The second step combines all these "\nice" "\mratms" into a single one extending the \ratm $\rho_\Lb$.
\end{enumerate}

\medskip
\noindent
{\bf Canonical "\nice" "\mratm" associated to a morphism.} Consider a finite monoid $M$ and a morphism $\alpha: A^* \to M$. We associate a canonical "\nice" "\mratm" $\rho_\alpha$ to $\alpha$.

Observe that since $M$ is a monoid, the powerset $2^M$ is a "semiring": the addition is union, which makes $2^M$ an idempotent commutative monoid (in particular, the canonical order is inclusion). Moreover, the multiplication is obtained by lifting the one of $M$: given $S,T \in 2^M$, we define $S \cdot T = \{st \mid s \in S \text{ and } t \in T\}$. One may verify that this is indeed a monoid multiplication (its neutral element is $\{1_M\}$) which distributes over union and that $\emptyset$ (the neutral element for union) is a zero for this multiplication. We may now define the canonical "\nice" "\mratm" $\rho_\alpha$ associated to $\alpha$ as follows:
\[
  \begin{array}{llll}
    \rho_\alpha: & 2^{A^*} & \to     & 2^M                        \\
                 & K       & \mapsto & \{\alpha(w) \mid w \in K\}.
  \end{array}
\]
One may verify that $\rho_{\alpha}$ is indeed a "\nice" "\mratm". Moreover, for any language $L$ which is recognized by $\alpha$, the \ratm $\rho_{\alpha}$ extends $\rho_{\{L\}}$. We prove this in the following lemma.

\begin{lemma} \label{lem:bgen:canomorph}
  Let $\alpha: A^* \to M$ be a morphism into a finite monoid. Then, the map $\rho_{\alpha}: 2^{A^*} \to 2^M$ is a "\nice" "\mratm". Moreover, if $L \subseteq A^*$ is a language recognized by $\alpha$ for the accepting set $F \subseteq M$ (\emph{i.e.}, $L = \alpha\inv(F)$), then $\rho_\alpha$ extends $\rho_{\{L\}}$ for the following extending morphism $\delta$:
  \[
    \begin{array}{llll}
      \delta: & 2^M & \to     & 2^{\{L\}}                                             \\
              & S   & \mapsto & \left\{\begin{array}{c                          l}
                                         \{L\} & \text{if $S \cap F \neq \emptyset$} \\
                                         \emptyset & \text{if $S \cap F = \emptyset$}
                                       \end{array}\right.
    \end{array}
  \]
\end{lemma}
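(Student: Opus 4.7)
The proof splits naturally into three routine verifications; none is particularly deep, but the bookkeeping must be done in order.

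First, I would verify that $\rho_\alpha$ is a rating map. The axiom $\rho_\alpha(\emptyset)=\emptyset$ is immediate from the definition, and $\rho_\alpha(K_1\cup K_2)=\rho_\alpha(K_1)\cup\rho_\alpha(K_2)$ is just the fact that the image of a union is the union of the images. For niceness, since $\rho_\alpha(w)=\{\alpha(w)\}$ for a single word and addition in $2^M$ is union, we get
\[
\sum_{w\in K}\rho_\alpha(w)=\bigcup_{w\in K}\{\alpha(w)\}=\{\alpha(w)\mid w\in K\}=\rho_\alpha(K).
\]

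Second, I would check multiplicativity. We have $\rho_\alpha(\varepsilon)=\{\alpha(\varepsilon)\}=\{1_M\}=1_{2^M}$. For the product, given $K_1,K_2\subseteq A^*$, a direct computation using the fact that $\alpha$ is a monoid morphism gives
\[
\rho_\alpha(K_1K_2)=\{\alpha(w_1w_2)\mid w_i\in K_i\}=\{\alpha(w_1)\alpha(w_2)\mid w_i\in K_i\}=\rho_\alpha(K_1)\cdot\rho_\alpha(K_2),
\]
where the last equality uses the definition of the lifted multiplication on $2^M$. This establishes that $\rho_\alpha$ is a \nice{} "\mratm".

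Third, I would handle the extension claim. I need to check (a) that $\delta:2^M\to 2^{\{L\}}$ is a morphism of rating sets, and (b) that $\rho_{\{L\}}=\delta\circ\rho_\alpha$. For (a), clearly $\delta(\emptyset)=\emptyset$, and for any $S_1,S_2\subseteq M$, $(S_1\cup S_2)\cap F\neq\emptyset$ iff $S_i\cap F\neq\emptyset$ for some $i$, which gives $\delta(S_1\cup S_2)=\delta(S_1)\cup\delta(S_2)$. For (b), unraveling definitions, $\delta(\rho_\alpha(K))=\{L\}$ iff there exists $w\in K$ with $\alpha(w)\in F$, i.e.\ $w\in K\cap\alpha^{-1}(F)=K\cap L$, which is exactly the condition for $\rho_{\{L\}}(K)=\{L\}$. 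In the opposite case both sides equal $\emptyset$, so $\rho_{\{L\}}=\delta\circ\rho_\alpha$. There is no real obstacle here; the statement is essentially a packaging of the observation that $\alpha$-recognition of $L$ translates exactly into the set-membership test ``does the image meet $F$?''.
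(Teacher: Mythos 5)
Your proof is correct and follows essentially the same route as the paper: the routine verification that $\rho_\alpha$ is a \nice{} \mratm (which the paper leaves to the reader and you spell out), followed by the key observation that $K\cap L\neq\emptyset$ if and only if $\rho_\alpha(K)\cap F\neq\emptyset$, which gives $\rho_{\{L\}}=\delta\circ\rho_\alpha$. Your additional check that $\delta$ itself is a morphism of rating sets is a harmless extra detail the paper omits.
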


\begin{proof}
  That $\rho_{\alpha}$ satisfies the axioms of "\nice" "\mratms" can be verified from the definition. Let us prove that it extends $\rho_{\{L\}}$, \emph{i.e.}, that $\rho_{\{L\}} = \delta \circ \rho_{\alpha}$. Let  $K \subseteq A^*$, we show that $\rho_{\{L\}}(K)= \delta(\rho_{\alpha}(K))$. Observe that since we have $L = \alpha\inv(F)$ and $\rho_{\alpha}(K) = \{\alpha(w) \mid w \in K\}$, it is immediate that,
  \[
    K \cap L \neq \emptyset \quad \text{if and only if} \quad \rho_{\alpha}(K) \cap F \neq \emptyset
  \]
  By definition of $\rho_{\{L\}}$ and $\delta$, we get as desired that $\rho_{\{L\}}(K)= \delta(\rho_{\alpha}(K))$ which concludes the proof.
\end{proof}

\medskip
\noindent
{\bf Canonical "\nice" "\mratm" associated to an \nfa.} Consider an arbitrary \nfa $\As = (A,Q,I,F,\delta)$. We associate a canonical "\nice" "\mratm" $\rho_\As$ to \As.

Observe that the set $2^{Q^2}$ (which consists of sets of pairs of states) is a "semiring". As usual, the addition is union which makes $2^{Q^2}$ and idempotent commutative monoid (in particular, the canonical order is inclusion). Moreover, the multiplication is defined as follows. Given $S,T \in 2^{Q^2}$,
\[
  S \cdot T = \{(q,s) \in Q^2 \mid \text{there exists $r \in Q$ such that $(q,r) \in S$ and $(r,s) \in T$}\}.
\]
One may verify that this is indeed a monoid multiplication (its neutral element the set $\{(q,q) \mid q \in Q\}$) which distributes over union and that $\emptyset$ (the neutral element for union) is a zero for this multiplication. We may now define the canonical "\nice" "\mratm" $\rho_\As$ associated to \As as follows. Recall that given two states $q,r \in Q$ and $w \in A^*$, we write $q \xrightarrow{w} r$ to denote the fact that there exists a run labeled by $w$ from $q$ to $r$ in \As. We define $\rho_\As$ as follows:
\[
  \begin{array}{llll}
    \rho_\As: & 2^{A^*} & \to     & 2^{Q^2}                                                                              \\
              & K       & \mapsto & \{(q,r) \in Q^2 \mid \text{there exists $w \in K$ such that $q \xrightarrow{w} r$}\}
  \end{array}
\]
Note that the definition is independent from the sets $I$ and $F$ of initial and final states of \As. One may verify that $\rho_\As$ is indeed a "\nice" "\mratm". Moreover, the \ratm $\rho_{\{L(\As)\}}$ (where $L(\As)$ is the language recognized by \As) is extended by $\rho_{\As}$. We prove this in the following lemma.

\begin{lemma} \label{lem:bgen:canoauto}
  Let $\As = (A,Q,I,F,\delta)$ be an \nfa. Then the map $\rho_\As: 2^{A^*} \to 2^{Q^2}$ is a "\nice" "\mratm" which extends $\rho_{\{L(\As)\}}$ for the following extending morphism $\gamma$:
  \[
    \begin{array}{llll}
      \gamma: & 2^{Q^2} & \to     & 2^{\{L(\As)\}}                                             \\
              & S   & \mapsto & \left\{\begin{array}{c                          l}
                                         \{L(\As)\} & \text{if $S \cap (I \times F) \neq \emptyset$} \\
                                         \emptyset  & \text{if $S \cap (I \times F) = \emptyset$}
                                       \end{array}\right.
    \end{array}
  \]
\end{lemma}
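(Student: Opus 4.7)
The plan is to mirror the proof of Lemma~\ref{lem:bgen:canomorph} (the morphism version), since the structural parallel is transparent: sets of pairs $(q,r)$ such that $q \xrightarrow{w} r$ play the role in the automaton setting that sets of monoid images played in the morphism setting. First I would verify that $\rho_\As$ satisfies the axioms of a "\nice" "\mratm"; then I would establish the extension identity $\rho_{\{L(\As)\}} = \gamma \circ \rho_\As$.

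For the rating map axioms, I would simply unfold the definition of $\rho_\As(K)$ as ``the set of pairs $(q,r)$ such that $q \xrightarrow{w} r$ for some $w \in K$''. The two additive axioms ($\rho_\As(\emptyset) = \emptyset$ and $\rho_\As(K_1 \cup K_2) = \rho_\As(K_1) \cup \rho_\As(K_2)$) and the "\nice" axiom ($\rho_\As(K) = \bigcup_{w \in K} \rho_\As(w)$) follow immediately by distributing the existential quantifier over words across union and singletons. For the multiplicative unit, I would note that the only run labeled by $\varepsilon$ from a state $q$ is the empty run ending at $q$, so $\rho_\As(\{\varepsilon\}) = \{(q,q) \mid q \in Q\}$, which is exactly $1_{2^{Q^2}}$.

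The main step is the multiplicativity axiom $\rho_\As(K_1K_2) = \rho_\As(K_1) \cdot \rho_\As(K_2)$, and this is where the proof carries the real content. For the $\subseteq$ inclusion, given $(q,s) \in \rho_\As(K_1K_2)$ there exists $w = w_1 w_2$ with $w_i \in K_i$ and a run $q \xrightarrow{w} s$; any such run passes through some intermediate state $r$ with $q \xrightarrow{w_1} r$ and $r \xrightarrow{w_2} s$, so $(q,r) \in \rho_\As(K_1)$, $(r,s) \in \rho_\As(K_2)$, and thus $(q,s) \in \rho_\As(K_1) \cdot \rho_\As(K_2)$ by definition of the multiplication on $2^{Q^2}$. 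The $\supseteq$ inclusion is symmetric: one glues the two witnessing runs at their common state. This verification is the only place where the ``correct'' choice of multiplication on $2^{Q^2}$ is used, and it is the natural analogue of the computation in the monoid case.

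For the extension property, I would argue exactly as in the preceding lemma. Both $\rho_{\{L(\As)\}}(K)$ and $\gamma(\rho_\As(K))$ take only the two values $\emptyset$ and $\{L(\As)\}$, so it suffices to show the two conditions ``$K \cap L(\As) \neq \emptyset$'' and ``$\rho_\As(K) \cap (I \times F) \neq \emptyset$'' are equivalent. Unfolding both: the first says there is some $w \in K$ accepted by \As, the second says there is some $w \in K$ and some $(q,r) \in I \times F$ with $q \xrightarrow{w} r$ — these are literally the same statement by the definition of acceptance for \nfas. No genuine obstacle arises; the proof is a mechanical verification, with the multiplicativity step being the only calculation requiring care.
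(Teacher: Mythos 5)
Your proposal is correct and follows the same route as the paper: the paper's own proof dismisses the verification of the rating-map axioms with a one-line remark and concentrates on the extension identity, which it establishes exactly as you do, via the equivalence $K \cap L(\As) \neq \emptyset \Leftrightarrow \rho_\As(K) \cap (I \times F) \neq \emptyset$. The only difference is that you spell out the multiplicativity check $\rho_\As(K_1K_2) = \rho_\As(K_1)\cdot\rho_\As(K_2)$ (splitting and gluing runs at an intermediate state), which the paper leaves implicit.
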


\begin{proof}
  That $\rho_{\As}$ is a "\nice" "\mratm" can be verified from the definition. Let us prove that it extends $\rho_{\{L(\As)\}}$, \emph{i.e.}, $\rho_{\{L(\As)\}} = \gamma \circ \rho_{\As}$. Let $K\subseteq A^*$, we prove that $\rho_{\{L(\As)\}}(K) = \delta(\rho_{\As}(K))$. By definition, $L(\As)$ is the set of all words labeling a run between a state $q \in I$ and astate $r \in F$. Hence, by definition of $\rho_{\As}(K)$, it is immediate that:
  \[
    K \cap L(\As) \neq \emptyset \quad \text{if and only if} \quad \rho_{\As}(K) \cap (I \times F) \neq \emptyset.
  \]
  By definition of $\rho_{\{L(\As)\}}$ and $\gamma$, we get as desired that $\rho_{\{L(\As)\}}(K)= \delta(\rho_{\As}(K))$ which concludes the proof.
\end{proof}

\medskip
\noindent
{\bf Extending a finite multiset of regular languages.} We may now present the general construction. Let us consider an input multiset of regular languages $\Lb = \{L_1,\dots,L_n\}$. We want to compute a "\nice" "\mratm" $\rho$ extending $\rho_\Lb$. Using the above constructions, we know that for all $i \leq n$, we are able to build a "\nice" "\mratm" $\rho_i: 2^{A^*} \to R_i$ extending $\rho_{\{L_i\}}$ and some extending morphism $\delta_i: R_i \to 2^{\{L_i\}}$.

One may verify that the Cartesian product $R = R_1 \times \cdots \times R_n$ is also an idempotent "semiring" for the componentwise addition and multiplication. Consider the following map $\rho$:
\[
  \begin{array}{llll}
    \rho: & 2^{A^*} & \to     & R                           \\
          & K       & \mapsto & (\rho_1(K),\dots,\rho_n(K)).
  \end{array}
\]
It is straightforward to check that $\rho$ is a "\nice" "\mratm" as well. Moreover, it is clear that $\rho$ may be computed from $\rho_1,\dots,\rho_n$. Finally, $\rho$ extends $\rho_\Lb$ for the following extending~morphism:
\[
  \begin{array}{llll}
    \delta: & R               & \to     & 2^\Lb                                        \\
            & (r_1,\dots,r_n) & \mapsto & \delta_1(r_1) \cup \cdots \cup \delta_n(r_n).
  \end{array}
\]
Clearly, one may compute $\delta$ from $\delta_1,\dots,\delta_n$. This concludes the presentation of our construction. Observe that by construction, we end up with a "rating set" whose size is as described in \figurename~\ref{fig:bgen:consextend} (where we write $|\As|$ for the number of states in the \nfa \As).

\begin{figure}[!htb]
  \begin{center}
    \begin{tikzpicture}
      \matrix (M) [matrix of nodes, column  sep=5mm,row  sep=4mm,draw,very thick,rounded corners=3pt,nodes={align=center,text width = 4cm,anchor=center}]
      {
        {} & \Lb is given by $n$ \nfas $\As_1,\dots,\As_n$ & \Lb is given by $n$ monoids $M_1,\dots,M_n$ \\
        Size of the rating set & $2^{|\As_1|^2 + \cdots + |\As_n|^2}$ & $2^{|M_1| + \cdots + |M_n|}$ \\
      };

      \foreach \col in {2,3} {        \mvline[thick]{M}{\col}
      }

      \foreach \row in {2} {        \mhline[thick]{M}{\row}
      }
    \end{tikzpicture}
  \end{center}
  \caption{Size of the "rating set" for a "\nice" "\mratm" extending the canonical \ratm $"\rho_\Lb"$ associated to a multiset $\Lb = \{L_1,\dots,L_n\}$ of regular languages.}
  \label{fig:bgen:consextend}
\end{figure}

\section{General approach for universal covering}
\label{sec:genba}
We may now start outlining our general methodology for tackling "\Cs-covering". In this section, we start with the special case of \emph{universal} "\Cs-covering", where the language that needs to be covered is~$A^*$ (\emph{i.e.}, the input is of the form $(A^*,\Lb)$). Recall that by Proposition~\ref{prop:bgen:eqpoint}, this restricted problem is equivalent to full "\Cs-covering" when the investigated class \Cs is a "Boolean algebra".

Of course, this methodology will later be subsumed by the one that we shall present for the full "\Cs-covering problem" in Section~\ref{sec:genlatts}. However, it makes sense to introduce a specialized approach for universal "\Cs-covering": this is simpler and "Boolean algebras" account for most of the relevant classes.

\begin{remark}
  Since universal and full "\Cs-covering" are only equivalent when \Cs is a "Boolean algebra", the methodology that we outline here is only meant to be used for such classes. However,  it makes sense for any class \Cs that is a \pvari of regular languages.
\end{remark}

We first recall the notions introduced in the two previous sections and use them to formally reduce universal "\Cs-covering" to another decision problem whose input is a  \nice \mratm. Since the language that needs to be covered in universal covering is always $A^*$, we are able to slightly simplify our terminology. Then, we describe our methodology for solving this new decision problem.

\subsection{\kl{Optimal} universal \kl{\imprints}}

We start by simplifying our terminology on "\ratms" to accommodate "universal \Cs-covering". Specifically, since the language that we want to cover will always be $A^*$, we may omit this parameter when speaking of "\imprints".

Given a "\ratm" $\rho: 2^{A^*} \to R$ and a "lattice" \Cs, we shall say \emph{\Cs-"optimal" universal $\rho$-"\imprint"}, to mean the set \opti{\Cs}{A^*,\rho} (\emph{i.e.}, the \Cs-"optimal" $\rho$-\imprint on $A^*$). Moreover, we simply write \AP""\opti{\Cs}{\rho}"" for this set, omitting the parameter $A^*$  (\emph{i.e.}, $\opti{\Cs}{\rho} = \opti{\Cs}{A^*,\rho}$). Recall that by definition, "\opti{\Cs}{\rho}" is the $\rho$-\imprint of any "optimal" "universal \Cs-cover" for $\rho$.

We first present a few properties of this new object and then formally reduce universal "\Cs-covering" to the problem of computing \opti{\Cs}{\rho} from an input "\nice" "\mratm".

\medskip
\noindent
{\bf Properties.} It turns out that $"\opti{\Cs}{\rho}"= \opti{\Cs}{A^*,\rho}$ has stronger properties than \opti{\Cs}{L,\rho} when $L$ is arbitrary. We present them now. First, the set of trivial elements of "\opti{\Cs}{\rho}" is simpler to describe. Given any "\ratm" $\rho: 2^{A^*} \to R$, we define,
\[
  \itriv{\rho} = \itriv{A^*,\rho}  = \dclos \{\rho(w) \mid w \in A^*\} \subseteq R.
\]
Since "\opti{\Cs}{\rho}" is the $\rho$-\imprint of some universal \Cs-cover (an "optimal" one for $\rho$), the following result is immediate from  Fact~\ref{fct:bgen:trivialsets}.

\begin{fct}\label{fct:bgen:utrivialsets}
  Let $\rho: 2^{A^*} \to R$ be a "\ratm" and let \Cs be a "lattice". Then, $\itriv{\rho} \subseteq "\copti{\rho}"$.
\end{fct}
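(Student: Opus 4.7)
The plan is to instantiate the already-established general fact (Fact~\ref{fct:bgen:trivialsets}) to the particular language $L = A^*$; essentially all the work has been done in the preceding sections, and this statement is just the specialization of the general bound $\itriv{L,\rho} \subseteq \prin{\rho}{\Kb}$ (valid for any cover $\Kb$ of $L$) to the universal case.

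First, I would invoke Lemma~\ref{lem:bgen:opt} to obtain an "optimal" \Cs-cover $\Kb$ of $A^*$ for $\rho$; this is where the hypothesis that \Cs is a "lattice" is used, since Lemma~\ref{lem:bgen:opt} relies on closure under finite intersection to guarantee existence of such an "optimal" cover. By construction, $\Kb$ is in particular a "cover" of $A^*$.

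Next, applying Fact~\ref{fct:bgen:trivialsets} with $L = A^*$ to this cover $\Kb$ yields
\[
  \itriv{A^*,\rho} \;\subseteq\; \prin{\rho}{\Kb}.
\]
By the definition of "\copti{\rho}" given just above the statement, namely $\copti{\rho} = \opti{\Cs}{A^*,\rho}$, and by the definition of the \Cs-"optimal" $\rho$-\imprint on $A^*$, we have $\prin{\rho}{\Kb} = \opti{\Cs}{A^*,\rho} = \copti{\rho}$ since $\Kb$ is an "optimal" "\Cs-cover" of $A^*$ for $\rho$. Unfolding the notation $\itriv{\rho} = \itriv{A^*,\rho}$ on the left then gives the desired inclusion $\itriv{\rho} \subseteq \copti{\rho}$.

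There is no real obstacle here: the statement is a direct specialization of Fact~\ref{fct:bgen:trivialsets}, and the only substantive ingredient beyond unfolding definitions is the existence of an "optimal" cover, which is precisely what Lemma~\ref{lem:bgen:opt} provides. The proof fits in a handful of lines and requires no new ideas.
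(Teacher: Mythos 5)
Your proof is correct and follows exactly the paper's route: the paper states this as an immediate consequence of Fact~\ref{fct:bgen:trivialsets} applied to an optimal universal $\Cs$-cover (whose existence is Lemma~\ref{lem:bgen:opt}), which is precisely the argument you give.
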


More importantly, we have the following corollary of Lemma~\ref{lem:bgen:optsemi}. When \Cs is a \pvari of regular languages and $\rho: 2^{A^*} \to R$ is a "\mratm", "\copti{\rho}" is a submonoid of $R$ for multiplication. This property is crucial: all algorithms for universal "\Cs-covering" which are based on our methodology exploit it.

\begin{lemma}\label{lem:boptsemi}
  Let \Cs be a \pvari of regular languages and let $\rho: 2^{A^*} \to R$ be a "\mratm". Then "\copti{\rho}" is a submonoid of $R$ for multiplication:
  \begin{itemize}
  \item $1_R \in "\copti{\rho}"$.
  \item For any $s,t \in "\copti{\rho}"$, we have $st \in "\copti{\rho}"$.
  \end{itemize}
\end{lemma}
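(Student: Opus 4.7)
The plan is to derive both items as direct consequences of results already established in the excerpt.

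For the first bullet, I would argue that $1_R$ lies in the trivial $\rho$-imprint $\itriv{\rho}$, and then appeal to Fact~\ref{fct:bgen:utrivialsets}. Specifically, since $\rho$ is a \mratm, axiom~\ref{itm:bgen:funit} gives $\rho(\varepsilon) = 1_R$. Because $\varepsilon \in A^*$, the element $1_R = \rho(\varepsilon)$ belongs to $\dclos\{\rho(w) \mid w \in A^*\} = \itriv{\rho}$. Fact~\ref{fct:bgen:utrivialsets} then yields $1_R \in \itriv{\rho} \subseteq \opti{\Cs}{\rho}$.

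For the second bullet, the key observation is that $A^* = A^* \cdot A^*$, so Lemma~\ref{lem:bgen:optsemi} applied with $L_1 = L_2 = A^*$ directly yields the desired closure. More precisely, given $s,t \in \opti{\Cs}{\rho} = \opti{\Cs}{A^*,\rho}$, Lemma~\ref{lem:bgen:optsemi} gives $st \in \opti{\Cs}{A^*A^*,\rho} = \opti{\Cs}{A^*,\rho} = \opti{\Cs}{\rho}$.

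There is no real obstacle here since both items reduce to invoking previously established results. The only subtlety worth double-checking is that the hypotheses of Lemma~\ref{lem:bgen:optsemi} (namely that \Cs is a \pvari of regular languages and that $\rho$ is a \mratm) are exactly the hypotheses of the present lemma, so the application is immediate. The proof will therefore be just a few lines long.
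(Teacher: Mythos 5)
Your proof is correct and follows essentially the same route as the paper: the first bullet via $1_R = \rho(\varepsilon) \in \itriv{\rho}$ together with Fact~\ref{fct:bgen:utrivialsets}, and the second via Lemma~\ref{lem:bgen:optsemi} applied to $L_1 = L_2 = A^*$ using $A^*A^* = A^*$.
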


\begin{proof}
  That $1_R \in "\opti{\Cs}{\rho}"$ is immediate from Fact~\ref{fct:bgen:utrivialsets}. Indeed, the fact yields that $\itriv{\rho} \subseteq "\opti{\Cs}{\rho}"$. Moreover, since $\rho$ is a \mratm, we have $1_R = \rho(\varepsilon)$ and $\rho(\varepsilon) \in \itriv{\rho}$ by definition. Closure under multiplication is immediate from Lemma~\ref{lem:bgen:optsemi}. Indeed, assume that $s,t \in "\opti{\Cs}{\rho}"$. Since $"\opti{\Cs}{\rho}" = \opti{\Cs}{A^*,\rho}$ by definition, we get from Lemma~\ref{lem:bgen:optsemi} that $st \in \opti{\Cs}{A^*A^*,\rho}$. Finally, since $A^*A^* = A^*$, this yields $st\in "\opti{\Cs}{\rho}"$.
\end{proof}

\medskip
\noindent
{\bf Reduction.} We may now reduce universal "\Cs-covering" to another decision problem whose input is a \nice \mratm. We do so in the following proposition.

\begin{proposition}\label{prop:breduc}
  Let \Cs be a lattice. There is a polynomial space reduction from universal "\Cs-covering" (for input languages given by \nfas or monoid morphisms) to the following decision problem:
  \begin{center}
    \begin{tabular}{ll}
      {\bf Input:}    & A \nice \mratm $\rho: 2^{A^*} \to R$ and a subset $F \subseteq R$. \\
      {\bf Question:} & Do we have $F \cap "\copti{\rho}" = \emptyset$?
    \end{tabular}
  \end{center}
\end{proposition}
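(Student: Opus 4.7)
The plan is to combine Theorem~\ref{thm:bgen:main} (which translates covering into a non-membership question about $\copti{A^*,\rho_\Lb}$) with Lemma~\ref{lem:bgen:extension} (which converts between $\copti{A^*,\rho_\Lb}$ and $\copti{\rho}$ for any $\rho$ extending $\rho_\Lb$), while using Proposition~\ref{prop:bgen:mratmeff} to actually construct such a $\rho$ in polynomial space.

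First I would describe the reduction. Given an input multiset $\Lb = \{L_1,\dots,L_n\}$ of regular languages for \emph{universal "\Cs-covering"}, apply Proposition~\ref{prop:bgen:mratmeff} to compute (in polynomial space) a "\nice" "\mratm" $\rho : 2^{A^*} \to R$ together with a morphism $\delta : R \to 2^\Lb$ such that $\rho$ extends $"\rho_\Lb"$ for the extending morphism $\delta$, as well as the set $\delta^{-1}(\Lb) \subseteq R$. The reduction then outputs the instance $(\rho, F)$ with
\[
  F \;=\; \delta^{-1}(\Lb) \;=\; \{r \in R \mid \delta(r) = \Lb\}.
\]
Clearly this is computed in polynomial space from \nfas or monoid morphisms describing the languages of \Lb.

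Next I would verify correctness. By Theorem~\ref{thm:bgen:main} applied to $L=A^*$ and $\Hb=\Lb$, the pair $(A^*,\Lb)$ is \Cs-coverable iff $\Lb \notin \opti{\Cs}{A^*,\rho_\Lb} = \opti{\Cs}{\rho_\Lb}$. Lemma~\ref{lem:bgen:extension} rewrites this as
\[
  \opti{\Cs}{\rho_\Lb} \;=\; \dclos \delta(\opti{\Cs}{\rho}).
\]
The key observation is then that $\Lb$ is the maximum element of the "rating set" $2^\Lb$ under its canonical order (which is inclusion): for any $r \in R$ we automatically have $\delta(r) \subseteq \Lb$, so the condition $\Lb \leq \delta(r)$ collapses to $\delta(r) = \Lb$. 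Therefore
\[
  \Lb \in \dclos \delta(\opti{\Cs}{\rho}) \;\iff\; \exists r \in \opti{\Cs}{\rho} \text{ with } \delta(r) = \Lb \;\iff\; F \cap \opti{\Cs}{\rho} \neq \emptyset.
\]
Chaining these equivalences gives that $(A^*,\Lb)$ is \Cs-coverable iff $F \cap \opti{\Cs}{\rho} = \emptyset$, which is exactly the target problem.

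There is no real obstacle: all the heavy lifting is done by the three earlier results, and the only subtlety worth highlighting in the write-up is the collapse of the downset condition to equality, using that $\Lb$ is the top of $2^\Lb$. The polynomial-space bound on the reduction is inherited directly from Proposition~\ref{prop:bgen:mratmeff}, since everything else (forming $F$ from $\delta$ and the identified top element $\Lb$) is a straightforward enumeration over the computed "rating set" $R$.
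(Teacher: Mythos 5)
Your proposal is correct and follows essentially the same route as the paper: apply Proposition~\ref{prop:bgen:mratmeff} to get $\rho$, $\delta$ and $F=\delta\inv(\Lb)$ in polynomial space, then chain Theorem~\ref{thm:bgen:main} (with $L=A^*$, $\Hb=\Lb$) and Lemma~\ref{lem:bgen:extension}, using that $\Lb$ is the maximal element of $2^\Lb$ to turn $\Lb\notin\dclos\delta(\copti{\rho})$ into $F\cap\copti{\rho}=\emptyset$. Your explicit justification of why the downset condition collapses to equality is the only point where you elaborate beyond the paper's one-line remark, and it is accurate.
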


\begin{proof}
  The input of universal "\Cs-covering" is a finite multiset of regular languages \Lb: we want to know whether \Lb is \Cs-coverable (\emph{i.e.}, whether the pair $(A^*,\Lb)$ is \Cs-coverable). By Proposition~\ref{prop:bgen:mratmeff}, we may compute in polynomial space a \nice \mratm $\rho: 2^{A^*} \to R$ and an extending morphism $\delta: R \to 2^\Lb$ such that $\rho$ extends the canonical \ratm $"\rho_\Lb"$ associated to \Lb, for the extending morphism $\delta$. Moreover, one can compute, also in polynomial space, the set $F = \delta\inv(\Lb)$. We show that \Lb is \Cs-coverable if and only if $F \cap "\copti{\rho}" = \emptyset$, which will prove Proposition~\ref{prop:breduc}.

  By Theorem~\ref{thm:bgen:main}, \Lb is \Cs-coverable if and only if $\Lb \not\in \copti{\rho_{\Lb}}$. Moreover, by Lemma~\ref{lem:bgen:extension}, we know that $\copti{\rho_{\Lb}} = \dclos \delta("\copti{\rho}")$. Consequently, since $F = \delta\inv(\Lb)$ and \Lb is the maximal element of~$2^\Lb$, we have $\Lb \not\in \copti{\rho_{\Lb}}$ if and only if $F \cap "\copti{\rho}" = \emptyset$, which concludes the proof.
\end{proof}

\begin{remark}\label{rem:computeuniv}
  The reduction of Proposition~\ref{prop:breduc} applies to the first stage of universal "\Cs-covering": designing an algorithm that decides it. However, we also obtain a reduction for the second stage: computing "separating" "universal \Cs-covers", when they exist. Indeed, given some input multiset of regular languages \Lb, we know from Theorem~\ref{thm:bgen:main} that if\/ \Lb is \Cs-coverable, then any "optimal" "universal \Cs-cover" for $"\rho_\Lb"$ is "separating" for \Lb. We may compute a "\nice" "\mratm" $\rho$ extending $\rho_\Lb$ and Lemma~\ref{lem:bgen:extension} states that any "optimal" "universal \Cs-cover" for $\rho$ is also "optimal" for $"\rho_\Lb"$ (and thus, "separating" for \Lb).

  This of particular interest. Indeed, it turns out that in most cases, the correction proofs for algorithms solving the problem presented in Proposition~\ref{prop:breduc} involve describing a generic construction for building "optimal" "universal \Cs-covers". This is the case for all examples that we present.
\end{remark}

Our methodology is designed for handling the decision problem of Proposition~\ref{prop:breduc}. Given a \pvari of regular languages \Cs, we look for an algorithm  computing "\copti{\rho}" from an input "\nice" "\mratm" $\rho$. This clearly yields a procedure for problem of Proposition~\ref{prop:breduc}.

\begin{remark}
  While Proposition~\ref{prop:breduc} holds for any lattice, the methodology requires at least a \pvari of regular languages (this is necessary for applying Lemma~\ref{lem:boptsemi}).
\end{remark}

\subsection{Methodology}

Given a \pvari of regular languages \Cs, our main objective is to compute the \Cs-"optimal" universal $\rho$-"\imprint" "\opti{\Cs}{\rho}" from an input "\nice" "\mratm" $\rho: 2^{A^*}\to R$. A key design principle behind our framework is that our algorithms for computing "optimal" universal "\imprints" are formulated as \emph{elegant characterization theorems}. More precisely, we characterize the set $"\opti{\Cs}{\rho}" \subseteq R$ as the least subset of $R$ such that:
\begin{enumerate}
\item it includes the trivial elements from the set $\itriv{\rho}$, and
\item it is closed under a list of operations.
\end{enumerate}
We speak of a \emph{characterization of \Cs-"optimal" universal "\imprints"}. In practice, such a characterization yields a least fixpoint procedure for computing "\opti{\Cs}{\rho}" from $\rho$: one starts from the set of trivial elements and saturates it with the closure operations in the list (of course, this depends on our ability to implement these operations, but this is straightforward in practice).

Let us present a few examples. All of them are fragments of "\fo". The first fragment is "first-order logic" itself. The characterization that we present is directly adapted from the "separation" algorithm of~\cite{pzfo,pzfoj} and the proof uses essentially the same arguments (see Proposition~4.7 in~\cite{pzfoj}).

\begin{example}[Characterization of "\fo"-"optimal" \imprints]\label{ex:bgen:fo} Consider a {\bf "\nice"} "\mratm" $\rho: 2^{A^*} \to R$ (note the requirement of being  "\nice"). One may show that \foopti is the least subset of~$R$ containing \itriv{\rho} and satisfying the following properties:
  \begin{enumerate}
  \item {\bfseries Downset}: For any $r \in \foopti$ and any $r  \leq r$, we have $r' \in \foopti$.
  \item {\bfseries Multiplication}: For any $s,t \in \foopti$, we have $st \in \foopti$.
  \item {\bfseries "\fo"-closure}: For any $s \in \foopti$, we have $s^\omega + s^{\omega +1} \in \foopti$.
  \end{enumerate}
\end{example}

\begin{remark}
	Let us point out that~\cite{pzfoj} predates the current paper. Consequently, the formulation of Example~\ref{ex:bgen:fo} differs from the one of Proposition~4.7 in~\cite{pzfoj}. The latter does not use the abstract notion of \mratm. Instead, it considers a ``concrete'' object built from a monoid morphism recognizing the two input languages in the separation problem. With respect to the new terminology introduced here, this object corresponds to a \mratm whose rating set is the powerset of some finite monoid.
\end{remark}

\begin{remark}
  The characterization of Example~\ref{ex:bgen:fo} is restricted to \textbf{"\nice"} "\mratms". Of course, this suffices get an algorithm for (universal) \fow-"covering". In fact, algorithms only make sense for this special case as we are not able to finitely represent arbitrary "\ratms". However, it turns out that in most cases, the characterizations themselves are independent from the hypothesis of being "\nice" (the above one is among the few exceptions). This is the case for our other examples.
\end{remark}

Our second example is \bscu, level~1 in the quantifier alternation hierarchy of \fo. The characterization is loosely inspired from the "separation" algorithm of~\cite{pvzmfcs13}. We detail it in Section~\ref{sec:bsigma}.

\begin{example}[Characterization of \bscu-"optimal" \imprints]\label{ex:bscu}
  Recall that for any $B \subseteq A$, we denote by $\fullcont{B}$ the language of words whose alphabet is $B$. Consider a \mratm $\rho: 2^{A^*} \to R$. Then, \bsuopti is the least subset of $R$ containing \itriv{\rho} and satisfying the following properties:
  \begin{enumerate}
  \item {\bfseries Downset}: For any $r \in \bsuopti$ and any $r ' \leq r$, we have $r' \in \bsuopti$.
  \item {\bfseries Multiplication}: For any $s,t \in \bsuopti$, we have $st \in \bsuopti$.
  \item {\bfseries \bscu-operation}: For any $B \subseteq A$, we have $(\rho(\fullcont{B}))^\omega \in \bsuopti$.
  \end{enumerate}
\end{example}

We shall detail a third example in Section~\ref{sec:fo2}: the two-variable restriction of "first-order logic",~\fod.
It is apparent in the two above examples that the characterization of \Cs-"optimal" universal "\imprints" involves two kinds of properties: those generic to all \pvaris and those specific to the one under investigation. Let us conclude the section by listing the generic properties, those satisfied by all \Cs-optimal universal \imprints regardless of the \pvari of regular languages~\Cs.

\begin{lemma}\label{lem:bgen:genalgoba}
  Consider a \mratm $\rho: 2^{A^*} \to R$ and some \pvari of regular languages \Cs. Then, the \Cs-optimal universal $\rho$-\imprint $\opti{\Cs}{\rho} \subseteq R$ contains \itriv{\rho} and satisfies the following closure properties:
  \begin{enumerate}
  \item {\bfseries Downset}: For any $r \in \opti{\Cs}{\rho}$ and any $r ' \leq r$, we have $r' \in \opti{\Cs}{\rho}$.
  \item {\bfseries Multiplication}: For any $s,t \in \opti{\Cs}{\rho}$, we have $st\in \opti{\Cs}{\rho}$.
  \end{enumerate}
\end{lemma}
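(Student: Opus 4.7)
The plan is to verify each of the three claims by direct appeal to results already established earlier in the section, without reproducing their proofs.

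For the inclusion $\itriv{\rho} \subseteq \opti{\Cs}{\rho}$, I would simply invoke Fact~\ref{fct:bgen:utrivialsets}, which states exactly this containment for any "\ratm" $\rho$ and any "lattice" \Cs (and in particular for any \pvari of regular languages). No further argument is needed here.

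For the downset closure, I would fix an "optimal" "universal \Cs-cover" \Kb of $A^*$ for $\rho$, which exists by Lemma~\ref{lem:bgen:opt} since \Cs is in particular a "lattice". By definition of the \Cs-"optimal" universal $\rho$-\imprint, we have $\opti{\Cs}{\rho} = \prin{\rho}{\Kb}$. Fact~\ref{fct:bgen:downset} then states that every $\rho$-\imprint is closed under downset, which immediately yields the claim for $\opti{\Cs}{\rho}$.

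For the multiplication closure, the argument is slightly more substantive but has already been carried out: this is precisely the second bullet of Lemma~\ref{lem:boptsemi}. That lemma is where the hypothesis that \Cs is a \pvari of \emph{regular} languages is genuinely used, since it relies (via Lemma~\ref{lem:bgen:optsemi}) on quotient closure and on the Myhill--Nerode theorem applied to the regular languages in the "\Cs-cover". Since the present lemma assumes the same hypotheses on \Cs and $\rho$, we may quote Lemma~\ref{lem:boptsemi} directly. There is no real obstacle in the proof: the lemma is essentially a summary statement collecting the generic properties of $\opti{\Cs}{\rho}$ that hold uniformly across all \pvaris of regular languages, and the work has all been done in Facts~\ref{fct:bgen:downset} and~\ref{fct:bgen:utrivialsets} together with Lemma~\ref{lem:boptsemi}.
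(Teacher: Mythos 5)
Your proposal is correct and matches the paper's proof exactly: the paper also proves the lemma by quoting Fact~\ref{fct:bgen:utrivialsets} for the trivial elements, Fact~\ref{fct:bgen:downset} for downset closure, and Lemma~\ref{lem:boptsemi} for closure under multiplication, with the same remark that the latter is where the \pvari-of-regular-languages hypothesis is used.
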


\begin{proof}
  That \opti{\Cs}{\rho} contains \itriv{\rho} is immediate from Fact~\ref{fct:bgen:utrivialsets}. That \opti{\Cs}{\rho} is closed under downset follows from Fact~\ref{fct:bgen:downset}. Finally closure under multiplication is stated in Lemma~\ref{lem:boptsemi} (this is where we need the hypothesis that \Cs is a \pvari of regular languages).
\end{proof}

\section{\texorpdfstring{Example for universal covering: the logic \bscu}{Example for universal covering: the logic BΣ1}}
\label{sec:bsigma}
In this section, we illustrate our framework by presenting a detailed proof for Example~\ref{ex:bscu}: we present an algorithm for (universal) \bscu-covering using the methodology outlined in the previous section. We actually work with an alternate definition of the class corresponding to \bscu which will be simpler to manipulate: the class of \emph{piecewise testable languages}. We first recall the definition.

\subsection{Piecewise testable languages}

Given two words $u,v \in A^*$, we say that $u$ is a \emph{piece} (or scattered subword) of $v$ if there exist $n \in \nat$, $a_1,\dots,a_n \in A$ and $v_0,\dots,v_n \in A^*$ such that,
\[
  u = a_1 \cdots a_n \quad \text{and} \quad v = v_0a_1v_1 \cdots a_n v_n.
\]
We use pieces to define equivalence relations over $A^*$. Given $k \in \nat$ and $w,w' \in A^*$, we write $w\sim_k w'$ if and only if $w$ and $w'$ contain the same pieces of length at most $k$. Clearly, $\sim_k$ is an equivalence relation of finite index over $A^*$. Given a language $L$ and some integer $k \in \nat$, we say that $L$ is \emph{$k$-piecewise testable} and we write $L \in \kpt$ if $L$ is a union of $\sim_k$-classes. Finally, $L$ is \emph{piecewise testable} if $L \in \kpt$ for some $k$. The following theorem is folklore and simple to establish.

\begin{theorem} \label{thm:bscupt}
  Let $L \subseteq A^*$ be a language. Then, $L$ can be defined by a \bscu sentence if and only if $L$ is piecewise testable.
\end{theorem}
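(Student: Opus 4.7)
The plan is to prove the two implications separately, as is standard for such an equivalence.

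For the direction ``piecewise testable implies \bscu-definable'', I would start by observing that for each word $u = a_1\cdots a_n$, the property ``$u$ is a piece of the current word'' is captured by the \sicu sentence $\exists x_1 \cdots \exists x_n\bigl(\bigwedge_{i<n} x_i < x_{i+1} \wedge \bigwedge_{i\leq n} P_{a_i}(x_i)\bigr)$. Since a $\sim_k$-class is fully characterized by which pieces of length at most $k$ appear and which do not (and there are only finitely many such pieces), each such class is defined by a Boolean combination of these sentences. A $k$-piecewise testable language, being a finite union of $\sim_k$-classes, is therefore \bscu-definable. This direction is routine.

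For the converse, the heart of the argument is the claim that any \sicu sentence $\psi$ with at most $k$ quantifiers defines a union of $\sim_k$-classes. Writing $\psi = \exists x_1\cdots\exists x_k\, \varphi(x_1,\ldots,x_k)$ with $\varphi$ quantifier-free in the signature $\{<,P_a\}_{a\in A}$ (equality being definable from $<$), suppose $w\models\psi$ is witnessed by positions $i_1,\ldots,i_k$. Let $j_1<\cdots<j_m$ enumerate the distinct values among these witnesses; this yields a piece $u = w(j_1)\cdots w(j_m)$ of $w$ of length $m\leq k$. If $w\sim_k w'$, then $u$ is also a piece of $w'$, witnessed by some positions $j'_1<\cdots<j'_m$. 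Translating each original witness $i_s$ into the $j'_t$ that corresponds to the $j_t$ it equals preserves the linear order, the equalities, and the labels within the tuple, so the quantifier-free body $\varphi$ evaluates identically and $w'\models\psi$. Closure of both \bscu-definable languages and \kpt under Boolean combinations (for a common $k$) then yields the theorem.

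The main subtlety, and the reason a naive ``each existentially quantified variable matches one letter of a piece'' argument does not quite work, is that the quantified variables need not refer to distinct positions. This forces one to extract the underlying set of distinct witness positions, form the piece from those, and then re-associate each variable with the corresponding position of the same piece in $w'$. Once this bookkeeping is in place, correctness reduces to the observation that the atomic subformulas of $\varphi$ (comparisons via $<$ and unary label predicates $P_a$) depend only on data preserved by the piece correspondence.
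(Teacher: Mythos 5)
Your proof is correct. The paper itself gives no proof of this theorem, simply labeling it as ``folklore and simple to establish,'' so there is no argument in the paper to compare against; what you have supplied is precisely the standard folklore argument, correctly carried out. Both directions are handled properly: the forward direction by observing that each $\sim_k$-class is a Boolean combination of the \sicu sentences asserting the presence or absence of individual pieces of length at most $k$, and the converse by showing that a \sicu sentence $\exists x_1 \cdots \exists x_k\, \varphi$ with $\varphi$ quantifier-free is $\sim_k$-invariant. You also correctly flag and handle the one real subtlety: the existential witnesses $i_1,\dots,i_k$ need not be distinct, so one must first collapse them to their underlying set of distinct positions $j_1 < \cdots < j_m$ (with $m \le k$), extract the length-$m$ piece, find matching positions $j'_1 < \cdots < j'_m$ in $w'$ via $w \sim_k w'$, and then re-expand to a $k$-tuple of witnesses by sending $i_s$ to the $j'_t$ with $j_t = i_s$. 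This translation preserves the truth values of all atomic formulas ($<$-comparisons, hence also the implied equalities, and label predicates $P_a$), so $\varphi$ evaluates identically and $w' \models \psi$. Taking $k$ to be the maximum number of quantifiers over the \sicu sentences appearing in the given \bscu sentence, and using that $k$-piecewise testable languages are closed under Boolean combinations, completes the argument.
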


\subsection{Characterization of optimal \imprints} Let us now recall the characterization of \bscu-"optimal" universal "\imprints" from of Example~\ref{ex:bscu}: given a "\mratm" $\rho$, we describe \ptopti.

\begin{remark}
  The statement does not require $\rho$ to be \nice: it holds for any \mratm.
\end{remark}

Recall that for any $B \subseteq A$, we denote by $\fullcont{B} \subseteq B^*$ the set $\fullcont{B} = \{w \in A^* \mid \cont{w} = B\}$. Consider a "\mratm" $\rho: 2^{A^*} \to R$ and a subset $S \subseteq R$. Let $\omega$ denote the idempotent power for the multiplication of the semiring $R$. We say that $S$ is \emph{\bscu-saturated for $\rho$} if it contains \itriv{\rho} and is closed under the following operations:
\begin{enumerate}
\item \emph{\bfseries Downset:} for any $r \in S$ and any $r' \in R$ such that $r' \leq r$, we have $r' \in r$.
\item \emph{\bfseries Multiplication:} For any $s,t \in S$, we have $st \in S$.
\item\label{op:bs1:main} \emph{\bfseries \bscu-operation:} For all $B \subseteq A$, we have $(\rho(\fullcont{B}))^\omega \in S$.
\end{enumerate}

\noindent Note that the operation specific to \bscu is not a closure operator, contrary to the cases of \fo and~\fod.

\smallskip\noindent We now state the main theorem of the section: for any "\mratm" $\rho: 2^{A^*} \to R$, the \bscu-"optimal" universal $\rho$-"\imprint" \ptopti is the least \bscu-saturated subset of $R$ (for inclusion).

\begin{theorem}[Characterization of \bscu-optimal \imprints] \label{thm:bs1:main}
  Let $\rho: 2^{A^*} \to R$ be a \mratm. Then, $\ptopti \subseteq R$ is the least \bscu-saturated subset of $R$.
\end{theorem}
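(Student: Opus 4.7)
The theorem asserts an equality of two sets, so the proof splits into two directions: (a) $\ptopti$ is itself \bscu-saturated, and (b) every \bscu-saturated subset contains $\ptopti$. These together give the ``least'' claim.

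For direction (a), the inclusion $\itriv{\rho} \subseteq \ptopti$ is immediate from Fact~\ref{fct:bgen:utrivialsets}, downset closure follows from Fact~\ref{fct:bgen:downset}, and multiplication closure is Lemma~\ref{lem:boptsemi}: by Theorem~\ref{thm:bscupt}, the class \bscu coincides with the piecewise testable languages, which form a well-known \pvari of regular languages. The substantive property is the \bscu-operation: $(\rho(\fullcont{B}))^\omega \in \ptopti$ for every $B \subseteq A$. Using that $\ptopti$ equals the intersection of $\prin{\rho}{\Kb}$ over all universal \bscu-covers $\Kb$, I fix such a $\Kb$ and take $k$ large enough that every $K \in \Kb$ is $k$-piecewise testable. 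Setting $N = k\omega$ and picking any $w \in \fullcont{B}^N$, which necessarily lies in some $K \in \Kb$, the elementary fact that any two words in $B^*$ in which every letter of $B$ appears at least $k$ times are $\sim_k$-equivalent yields $\fullcont{B}^N \subseteq K$. By Fact~\ref{fct:bgen:increasing} and multiplicativity of $\rho$,
\[
  (\rho(\fullcont{B}))^\omega = (\rho(\fullcont{B}))^N = \rho(\fullcont{B}^N) \leq \rho(K),
\]
so $(\rho(\fullcont{B}))^\omega \in \prin{\rho}{\Kb}$ by downset closure, and the claim follows since $\Kb$ was arbitrary.

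For direction (b), let $S \subseteq R$ be any \bscu-saturated set; the plan is to construct a universal \bscu-cover $\Kb$ such that $\rho(K) \in S$ for every $K \in \Kb$. Since $S$ is downset-closed, this yields $\prin{\rho}{\Kb} \subseteq S$, and hence $\ptopti \subseteq \prin{\rho}{\Kb} \subseteq S$. Concretely, I will prove by induction on $|\cont{w}|$ that for every word $w \in A^*$ there exists a piecewise testable language $L_w$ with $w \in L_w$ and $\rho(L_w) \in S$; since $R$ is finite, only finitely many values $\rho(L_w)$ occur, and a finite cover can be extracted accordingly. The base case $\cont{w} = \emptyset$ uses $\rho(\varepsilon) = 1_R \in \itriv{\rho} \subseteq S$. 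For the inductive step, one decomposes $w$ by isolating the first and last occurrences in $w$ of each letter of $\cont{w}$, producing a factorization whose intermediate factors lie in $(B')^*$ for strictly smaller alphabets $B' \subsetneq \cont{w}$ and whose interior ``bulk'' factors belong to $\fullcont{\cont{w}}^*$; the induction hypothesis handles the former and the \bscu-operation applied to $B = \cont{w}$ handles the latter, while the multiplication closure of $S$ assembles the pieces into a single image in $S$.

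The principal difficulty is direction (b), specifically the combinatorial factorization required to drive the induction cleanly. One must produce a single \emph{piecewise testable} language $L_w$ containing $w$ that matches the factorization above, and show that its $\rho$-image is the corresponding product of values available in $S$. Controlling the interaction between the ``bulk'' $\fullcont{B}^\omega$ parts, whose $\rho$-value is provided by the \bscu-operation, and the inductively handled smaller-alphabet parts is the classical technical core of Simon-style arguments; care is needed to ensure simultaneously that the language $L_w$ is genuinely piecewise testable and that its $\rho$-image stays inside $S$.
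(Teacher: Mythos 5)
Your soundness direction is essentially the paper's argument: both fix a universal \bscu-cover $\Kb$, take $k$ so that every $K \in \Kb$ is $k$-piecewise testable, and show that $(\fullcont{B})^{k\omega}$ lies inside a single member of $\Kb$. One slip: the ``elementary fact'' you invoke is false as stated. With $B = \{a,b\}$ and $k = 2$, both $abab$ and $aabb$ contain every letter of $B$ at least $k$ times, yet $ba$ is a piece of the first and not of the second, so they are not $\sim_2$-equivalent. What you actually need (and what the paper proves) is that any two words of $(\fullcont{B})^{k\omega}$ are $\sim_k$-equivalent because each factors into at least $k$ blocks of content exactly $B$; since you only apply the fact to such words, this is a local, fixable error.

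The completeness direction, however, has a genuine gap, and it sits exactly where the real content of the theorem lies. First, the factorization you describe does not do what you claim: after isolating first and last occurrences, the interior factor need not have content $B$ (for $w = bacab$ the segment between the last first-occurrence and the first last-occurrence is just the letter $c$), and, more importantly, knowing that the bulk lies in $(\fullcont{B})^*$ is useless by itself. To exploit the \bscu-operation you must cover the bulk by a language contained in $(\fullcont{B})^{p}$ for some $p$ with $(\rho(\fullcont{B}))^{p} = (\rho(\fullcont{B}))^{\omega}$, i.e.\ the word must contain many disjoint factors of content exactly $B$. A word such as $a^{m}b$ has content $\{a,b\}$ but only one such factor, so neither your induction hypothesis (its content is not smaller) nor the \bscu-operation applies to it, and an induction on $|\cont{w}|$ alone cannot make progress on such words. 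The paper's Proposition~\ref{prop:bs1:cover} supplies precisely the missing mechanism: a secondary length-reduction argument giving the dichotomy that either some window of $(n+2)^{|B|-1}$ consecutive factors has strictly smaller content (recurse on the alphabet inside that window), or the word splits into $n+2$ factors of content exactly $B$ (use the unit $(b,B,c)$). Second, you defer rather than solve the piecewise-testability of the assembled language: concatenations of \bscu-definable languages are not \bscu-definable in general (e.g.\ $aA^{*}$ is not piecewise testable), which is why the paper introduces unambiguous templates and proves Lemma~\ref{lem:ispiecewise}. Third, your extraction of a finite cover is unjustified: finitely many values $\rho(L_{w})$ does not yield finitely many languages, and merging all $L_{w}$ with a common value into one language would require evaluating $\rho$ on an infinite union (not licensed, since $\rho$ is not assumed \nice in Theorem~\ref{thm:bs1:main}) and would not obviously remain piecewise testable; the paper instead obtains finiteness by bounding the template length. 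Until these points are supplied, the inclusion of $\ptopti$ in an arbitrary \bscu-saturated set is not proved.
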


Clearly, Theorem~\ref{thm:bs1:main} yields an algorithm for computing \ptopti from an input "\nice" "\mratm" $\rho: 2^{A^*} \to R$. Indeed, by definition, one may compute the least \bscu-saturated subset of $R$ with a least fixpoint procedure: one starts from the set $\itriv{\rho}$ and saturates it with the three operations in the definition (it is clear that one may implement all of them). Hence, Proposition~\ref{prop:breduc} yields that "universal \bscu-covering" is decidable. Since \bscu is a "Boolean algebra", this is also true for full \bscu-"covering" by Proposition~\ref{prop:bgen:eqpoint}. Altogether, we obtain the following corollary.

\begin{corollary} \label{cor:bs1:main}
  The \bscu-covering problem is decidable.
\end{corollary}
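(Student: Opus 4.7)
The plan is to chain the reductions established earlier in the excerpt together with the fixpoint characterization of Theorem~\ref{thm:bs1:main}. First I would observe that since \bscu is a "Boolean algebra" (being the Boolean closure of a class of sentences), Proposition~\ref{prop:bgen:eqpoint} reduces full \bscu-"covering" to "universal \bscu-covering": an arbitrary instance $(L,\Lb)$ is \bscu-coverable iff $\{L\}\cup\Lb$ is \bscu-coverable as a universal instance. This reduction is obviously effective since it merely consists of taking a union of finite multisets.

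Next I would invoke Proposition~\ref{prop:breduc}, which provides a polynomial-space reduction from "universal \bscu-covering" (with inputs given by \nfas or monoid morphisms) to the following decision problem: given a "\nice" "\mratm" $\rho: 2^{A^*} \to R$ and a subset $F \subseteq R$, decide whether $F \cap \ptopti = \emptyset$. It therefore suffices to exhibit an algorithm that, on input $\rho$, \emph{computes} the set $\ptopti \subseteq R$; once this set is in hand, testing disjointness with $F$ is trivial.

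Now I would appeal to Theorem~\ref{thm:bs1:main}, which characterizes $\ptopti$ as the least \bscu-saturated subset of $R$. This immediately suggests a least-fixpoint procedure: initialize $S := \itriv{\rho}$, then repeatedly enlarge $S$ by applying the three saturation operations (downset, multiplication, and the \bscu-operation), halting when $S$ is stable. Since $R$ is finite and $S$ only grows, the procedure terminates in at most $|R|$ iterations, and by Theorem~\ref{thm:bs1:main} its output is exactly $\ptopti$.

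The only remaining point to check is that each saturation operation is effectively implementable on a "\nice" "\mratm" given in the finite representation described after Proposition~\ref{prop:bgen:mratmeff} (a finite semiring $R$ together with the word morphism $\rho_* : A^* \to R$). The initial set $\itriv{\rho}$ is computable by Lemma~\ref{lem:bgen:evtriv} applied to $L = A^*$. Downset and multiplication closure only require finitely many tests and products inside the finite semiring $R$. The potentially delicate step—and the main obstacle worth pointing out—is the \bscu-operation: for each $B \subseteq A$ we must compute $\rho(\fullcont{B}) \in R$ and then raise it to the power $\omega$. However, $\fullcont{B}$ is a regular language with an obvious \nfa (indeed an \nfa of size $O(|A|)$), so Lemma~\ref{lem:bgen:evamramt} ensures that $\rho(\fullcont{B})$ is computable in polynomial time in $|R|$, and the idempotent power $\omega$ is then found by iterating multiplication inside $R$. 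This handles all $2^{|A|}$ subalphabets $B$, completing the effective computation of $\ptopti$ and establishing decidability.
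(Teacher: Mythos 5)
Your proof is correct and takes essentially the same route as the paper: reduce full \bscu-covering to universal covering via Proposition~\ref{prop:bgen:eqpoint}, then to the disjointness question on $\ptopti$ via Proposition~\ref{prop:breduc}, and compute $\ptopti$ by the least-fixpoint procedure justified by Theorem~\ref{thm:bs1:main}, with the implementability details (Lemmas~\ref{lem:bgen:evtriv} and~\ref{lem:bgen:evamramt}) that the paper only asserts in passing. The sole quibble is your claim that $\fullcont{B}$ has an \nfa with $O(|A|)$ states (recognizing that \emph{every} letter of $B$ occurs needs exponentially many states in $|B|$), but this has no bearing on decidability.
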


This concludes the presentation of our characterization. We now turn to its proof. We fix an arbitrary \mratm $\rho: 2^{A^*} \to R$ for the remainder of the section. Our objective is to show that \ptopti is the least \bscu-saturated subset of $R$. Our proof argument follows two steps that we shall use for all examples presented in the paper:
\begin{enumerate}
\item  First, we show that \ptopti is \bscu-saturated. This corresponds to soundness of the least fixpoint algorithm: all elements it computes belong to \ptopti.
\item Then, we show that \ptopti is smaller than all \bscu-saturated sets. This corresponds to completeness of the least fixpoint algorithm: it computes all elements of \ptopti.
\end{enumerate}

\subsection{Soundness}

We prove that \ptopti is \bscu-saturated.  Since \bscu is a \vari of regular languages, we already know from Lemma~\ref{lem:bgen:genalgoba} that \ptopti contains \itriv{\rho} and is closed under downset and multiplication. Thus, we may focus on \emph{\bscu-operation}. Given $B \subseteq A$, we have to show that $(\rho(\fullcont{B}))^\omega \in \ptopti$.

\medskip

By definition, this amounts to proving that for any universal \bscu-cover \Kb, we have $(\rho(\fullcont{B}))^\omega \in \prin{\rho}{\Kb}$. Since \Kb is a \bscu-cover, it is immediate from Theorem~\ref{thm:bs1:main} that there exists $k \in \nat$ such that all $K\in \Kb$ belong to \kpt. Consider the following language $H$:
\[
  H = (\fullcont{B})^{k\omega} \quad \text{where $\omega \geq 1$ is the idempotent power for the multiplication of $R$}
\]
Clearly, we have $\rho(H) = (\rho(\fullcont{B}))^\omega$. Moreover, since \Kb is a "universal cover", there exists $K \in \Kb$ such that $K \cap H \neq \emptyset$. We show that $H \subseteq K$. It will follow that $(\rho(\fullcont{B}))^\omega = \rho(H) \leq \rho(K)$ which implies that $(\rho(\fullcont{B}))^\omega \in \prin{\rho}{\Kb}$ by definition. By hypothesis, $K \in \kpt$ which means that $K$ is a union of $\sim_k$-classes. Hence, since $H \cap K \neq \emptyset$, it suffices to show that all words in $H$ are $\sim_k$ equivalent to obtain $H \subseteq K$. This is what we do.

Let $w,w' \in H$, we prove that $w \sim_k w'$, \emph{i.e.}, $w$ and $w'$ contain the same pieces of length at most $k$. By symmetry, we consider $v \in A^*$ of length $|v| \leq k$ which is a piece of $w$ and show that $v$ is a piece of $w'$. By definition $H \subseteq B^*$ which yields $w \in B^*$ and $v \in B^*$. Moreover, by definition of $H$, $w' \in H$ implies that $w'$ admits a decomposition $w' = w'_1 \cdots w'_{k}$ such that $\cont{w'_{i}} = B$ for all $i \leq k$. Since $|v| \leq k$ and $v \in B^*$, this implies that $v$ is a piece of $w'$, finishing the proof.

\subsection{Completeness}

We turn to the difficult direction. Recall that a \mratm $\rho: 2^{A^*} \to R$ is fixed. We fix an arbitrary \bscu-saturated set $S \subseteq R$, and show that $\ptopti \subseteq S$. The proof is a generic construction, which builds a universal \bscu-cover \Kb such that $\prin{\rho}{\Kb} \subseteq S$. Since $\ptopti \subseteq \prin{\rho}{\Kb}$ for any universal \bscu-cover \Kb, this proves $\ptopti \subseteq \prin{\rho}{\Kb} \subseteq S$ as desired.

\begin{remark}
  Since we showed that \ptopti is \bscu-saturated, one may apply our construction in the special case when $S = \ptopti$. This builds a universal \bscu-cover \Kb such that $\prin{\rho}{\Kb} = \ptopti$, \emph{i.e.}, one that is optimal for $\rho$.
\end{remark}

The languages contained in the universal \bscu-cover \Kb that we build are defined according to a new notion called \emph{template}, which we first define.

\newcommand{\unit}{unit\xspace}
\newcommand{\units}{units\xspace}

\medskip
\noindent
{\bf Templates.} We call \emph{\unit} an element $t$ which either a single letter $a \in A$ or a triple $(b,B,c)$ where $B \subseteq A$ is a sub-alphabet and $b,c \in B$. Given $\ell \in \nat$, a \emph{template of length $\ell$} is a tuple $T = (t_1,\dots,t_\ell)$ (empty when $\ell = 0$) where every $t_i$ is a \unit.

Moreover, we say that a template $T = (t_1,\dots,t_\ell)$ is \emph{unambiguous} if all pairs of consecutive \units $t_i,t_{i+1}$ in the template are either:
\begin{enumerate}
\item Two single letters or,
\item A single letter $a$ and a triple $(b,B,c)$ such that $a \not\in B$ or,
\item Two triples $(b_i,B_{i},c_i)$ and $(b_{i+1},B_{i+1},c_{i+1})$ such that $c_i \not\in B_{i+1}$ and $b_{i+1} \not\in B_i$.
\end{enumerate}

\begin{example}
  $T = (a,(b,\{b,c\},c),d,(a,\{a\},a))$ is an unambiguous template of length $4$. $T'=({\bf b},(c,\{{\bf b},c\},b),d,a)$ and $T'' = (a,(b,\{b,{\bf c}\},b),({\bf c},\{c\},c),a)$ are ambiguous templates of length $4$.
\end{example}

To any template $T$ and any natural number $n \geq 1$, we associate a language $K_{n,T}$ (definable in \bscu when $T$ is unambiguous). These are the languages that we shall use in our universal \bscu-cover. When $T$ is empty, we let $K_{n,T} = \{\varepsilon\}$ (the definition does not depend on $n$ in this case). Otherwise, we first consider \units. Let $t$ be a \unit. There are two cases:
\begin{itemize}
\item If $t = a \in A$, then $K_{n,t}  = \{a\}$ (again, the definition does not depend on $n$).
\item If $t = (b,B,c)$ with $B \subseteq A$ and $b,c \in B$, then $K_{n,t}  = B^* b (\fullcont{B})^n c B^*$.
\end{itemize}
Finally, if $T = (t_1,\dots,t_\ell)$ is a template of length $\ell \geq 1$ we define,
\[
  K_{n,T} = K_{n,t_1} \cdots K_{n,t_\ell}
\]
We now show that when $T$ is unambiguous, the language $K_{n,T}$ may be defined in \bscu. Hence, we may use it in our universal \bscu-cover \Kb.

\begin{lemma} \label{lem:ispiecewise}
  For any $n \geq 1$ and any unambiguous template $T$, $K_{n,T}$ is definable in \bscu.
\end{lemma}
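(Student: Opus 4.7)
The plan is to prove the lemma by induction on the length $\ell$ of the unambiguous template $T=(t_1,\dots,t_\ell)$, invoking Theorem~\ref{thm:bscupt} to identify $\bscu$-definability with piecewise testability.

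The base cases $\ell=0$ and $T=a$ (a single letter) are immediate, since $\{\varepsilon\}$ and $\{a\}$ are piecewise testable. The first substantive base case is $T=(b,B,c)$, for which I would establish the explicit characterization that, fixing any enumeration $b_1,\dots,b_k$ of $B$, a word $w$ lies in $K_{n,(b,B,c)}=B^*b(\fullcont{B})^n c B^*$ if and only if $\cont{w}\subseteq B$ and $w$ admits $b(b_1b_2\cdots b_k)^n c$ as a scattered subword. The forward direction follows by choosing an occurrence of each $b_j$ inside each of the $n$ full-alphabet blocks; the backward direction recovers a valid factorization from the witnessing positions, using consecutive copies of the enumeration as block boundaries. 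Since this is a Boolean combination of alphabet tests and a single piece condition of length $nk+2$, it lies in $\bscu$.

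For the inductive step, I would split $T=T'\cdot t_\ell$ with $T'$ still unambiguous of length $\ell-1$, so that both $K_{n,T'}$ and $K_{n,t_\ell}$ are in $\bscu$ by the induction hypothesis and the base cases. Since $\bscu$ is not closed under concatenation in general, one cannot conclude directly that $K_{n,T}=K_{n,T'}\cdot K_{n,t_\ell}$ is in $\bscu$; instead, the plan is to use the unambiguity at the junction $(t_{\ell-1},t_\ell)$ to locate a canonical boundary in every $w\in K_{n,T}$ between its $T'$-part and its $t_\ell$-part, pinned down by a bounded piece condition depending only on the letters and alphabets at the junction. Once such a landmark is definable in this way, membership in $K_{n,T}$ rewrites as a Boolean combination of piece and alphabet conditions inherited from the characterizations of $K_{n,T'}$ and $K_{n,t_\ell}$ applied to the two canonical halves, and closure of $\bscu$ under Boolean operations then yields $K_{n,T}\in\bscu$.

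The main obstacle is the triple-triple junction, where $t_{\ell-1}=(b_{\ell-1},B_{\ell-1},c_{\ell-1})$ and $t_\ell=(b_\ell,B_\ell,c_\ell)$. Without the exclusions $c_{\ell-1}\notin B_\ell$ and $b_\ell\notin B_{\ell-1}$ imposed by unambiguity, the alphabet windows $B_{\ell-1}^*$ and $B_\ell^*$ flanking the boundary could overlap arbitrarily, producing many incompatible factorizations and destroying the piecewise testability of the product. The exclusion conditions are precisely what forces a canonical boundary to exist and to be detectable by piece and alphabet tests, which is why they are built into the definition of an unambiguous template in the first place.
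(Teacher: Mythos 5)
Your overall strategy is the same as the paper's: induct on the length of the unambiguous template, reduce $\bscu$-definability to piecewise testability via Theorem~\ref{thm:bscupt}, treat the base case of a single unit with an explicit piece condition, and use the unambiguity condition to locate the junction between consecutive units. Your base case is correct: the characterization of $K_{n,(b,B,c)}$ as $\cont{w}\subseteq B$ together with the piece $b(b_1\cdots b_{|B|})^n c$ works in both directions (the backward direction uses the consecutive copies of the enumeration as block boundaries, as you say). The only stylistic difference is that you peel off the last unit whereas the paper peels off the first; this is symmetric.

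However, the inductive step as you describe it has a gap. You assert that once a canonical boundary is detectable, ``membership in $K_{n,T}$ rewrites as a Boolean combination of piece and alphabet conditions inherited from the characterizations of $K_{n,T'}$ and $K_{n,t_\ell}$ applied to the two canonical halves.'' But a piece condition \emph{on the prefix or suffix} of a word is not, by itself, a piece condition on the whole word, so closure of $\bscu$ under Boolean operations does not directly apply. Concretely: even with the junction marker fixed, asserting ``$u$ is a piece of the suffix after the marker'' cannot be tested naively by asking whether some longer word is a piece of $w$, because a scattered occurrence of that longer word need not respect the marker. There is a second, more subtle issue you do not address: when $t_{\ell-1}=(b_{\ell-1},B_{\ell-1},c_{\ell-1})$ and $t_\ell=(b_\ell,B_\ell,c_\ell)$ satisfy the unambiguity exclusions, the alphabets $B_{\ell-1}$ and $B_\ell$ may still \emph{overlap} on letters other than $c_{\ell-1}$ and $b_\ell$, so there is in general no single canonical boundary position but a whole overlap zone; the factorization witnessing $w\in K_{n,T}$ is not unique. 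The paper sidesteps both issues by proving $\sim_k$-invariance directly for $k=n|A|+2\ell$: given $w\sim_k w'$ with $w\in K_{n,T}$, it exhibits a bounded piece that must transfer to $w'$, shows (by a contradiction argument relying on unambiguity) that the first-block portion of $w'$ is alphabet-constrained, extracts the maximal $B_1^*$-prefix as the canonical cut in \emph{both} $w$ and $w'$, and verifies that the resulting suffixes are $\sim_{k-2}$-equivalent so that the induction hypothesis applies. To make your plan rigorous, you would essentially need to reproduce this transfer-and-alignment argument rather than appealing to closure of $\bscu$ under Boolean operations, which is the step that is currently missing.
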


\begin{proof}
  When $T$ is empty, $K_{n,T} = \{\varepsilon\}$ is clearly definable in \bscu. Otherwise, let $\ell \geq 1$ be the length of $T$. Using induction on $\ell$, we show that the following implication holds for $k = n|A|+2\ell$:
  \begin{equation} \label{eq:piecewise}
    \text{For any $w,w' \in A^*$,} \quad w \sim_k w' \Rightarrow \text{$w \in K_{n,T}$ if and only if $w' \in K_{n,T}$}.
  \end{equation}
  This exactly says that $K_{n,T}$ is a union of $\sim_k$-classes. Consequently, we get $K_{n,T} \in \kpt$ which yields that $K_{n,T}$ is definable in \bscu by Theorem~\ref{thm:bscupt}. It remains to prove~\eqref{eq:piecewise} by induction on~$\ell$.

  \smallskip

  In the base case, $\ell = 1$. Hence, $T$ is a single \unit $t$ and $k = n|A|+2$. Consider $w,w' \in A^*$ such that $w \sim_k w'$. Assuming that $w \in K_{n,t}$, we show that $w' \in K_{n,t}$ (the converse implication is symmetrical). There are two subcases depending on $t$. If $t = a \in A$, then $K_{n,t} = \{a\}$ and $w=a$. Since $k \geq 2$, $w \sim_k w'$ implies $w' = a \in K_{n,T}$. Otherwise, $t = (b,B,c)$ with $B \subseteq A$ and $b,c \in B$. In that case, $K_{n,t}  = B^* b (\fullcont{B})^n c B^*$. Since $w \in K_{n,t}$, it is simple to verify that $w \in B^*$ and there exists $v \in (\fullcont{B})^n$ of length $n|B|$ such that $bvc$ is a piece of $w$. Hence, since $k \geq n|B|+2$ and $w \sim_k w'$, we get that $w' \in B^*$ and $bvc$ is a piece of $w'$ as well. This implies $w' \in K_{n,T}$, as desired.

  \smallskip

  We now assume that $\ell \geq 1$ and let $T = (t_1,\dots,t_\ell)$. Recall that $k = n|A|+2\ell$. Consider $w,w' \in A^*$ such that $w \sim_k w'$. Assuming that $w \in K_{n,T}$, we show that $w' \in K_{n,T}$ (again, the converse implication is symmetrical). We distinguish two subcases depending on the first \unit $t_1$ of $T$. Since both subcases are similar, we treat the one when $t_{1}$ is a triple $(b,B,c)$ with $B \subseteq A$ and $b,c \in B$ (the case when $t_1 = a \in A$ is simpler and left to the reader). We need to prove that
  \[
    w' \in K_{n,T} = K_{n,t_1} \cdots K_{n,t_\ell}.
  \]
  We first show that $w'$ has a prefix in $K_{n,t_1} = B^* b (\fullcont{B})^n c B^*$ using the hypothesis that $T$ is unambiguous.

  For all $i \leq \ell$, we define $u_i \in A^*$ as the following word. If $t_i = a_i \in A$, then $u_i = a_i$ and if $t_i = (b_i,B_i,c_i)$, then $u_i \in b_ic_i$. Recall that $w \in K_{n,T} = K_{n,t_1} \cdots K_{n,t_\ell}$. It is simple to verify from this hypothesis that there exists $v \in (\fullcont{B})^n$ of length $|v| = n|B|$ such that $bvcu_2 \cdots u_\ell$ is a piece of $w$. Since $bvcu_2 \cdots u_\ell$ has length at most $n|B|+2 + 2(\ell-1) \leq k$ by definition of $k$, and since $w \sim_k w'$, it follows that $bvcu_2 \cdots u_\ell$ is a piece of $w'$ as well. Consequently $w'$ admits a decomposition $w' = x'cy'$ such that $bv$ is a piece of $x'$ and $u_2 \cdots u_\ell$ is a piece of $y'$. We show that $x'c \in B^*$ which proves as desired that the prefix $x'c$ of $w'$ belongs to $B^* b (\fullcont{B})^n c B^*$. By contradiction assume that $x'c$ contains some letter $d \not\in B$. Since $c \in B$, it follows that $dcu_2 \cdots u_\ell$ is a piece of $w'$. Hence, using again the hypothesis that $w \sim_k w'$, this implies that $dcu_2 \cdots u_\ell$ is a piece of $w$ as well. Since $T$ is unambiguous, one may verify that this contradicts the hypothesis that $w \in K_{n,T}$: we have shown that $w'$ has a prefix in $K_{n,t_1}$.

  We now finish the proof that $w' \in K_{n,T}$. Let $x'$ be the largest prefix of $w'$ belonging to~$B^*$. Since we just showed that $w'$ has a prefix in $B^* b (\fullcont{B})^n c B^* \subseteq B^*$, it is immediate that $x' \in B^* b (\fullcont{B})^n c B^* = K_{n,t_1}$. Moreover, $w' = x' d' y'$ with $d' \in A \setminus B$ and $y' \in A^*$. It remains to show that $d'y' \in K_{n,t_2} \cdots K_{n,t_\ell}$. We use induction. Since $w \sim_k w'$, $w$ also contains letters outside $B$ and it follows that $w = x d y$ with $x \in B^*$, $d \in A \setminus B$ and $y \in A^*$. By hypothesis, $w \in K_{n,T} =  K_{n,t_1} \cdots K_{n,t_\ell}$. Since, $T$ is unambiguous $x \in B^*$ and $d \in A \setminus B$, one may verify that this implies $dy \in K_{n,t_2} \cdots K_{n,t_\ell}$. Moreover, since $w \sim_k w'$, $w = xdy$ and $w' = x'd'y'$ with $x,x' \in B^*$ and $d,d' \not\in B$, it is simple to verify that $dy \sim_{k-1} d'y'$. In particular, $dy \sim_{k-2} d'y'$, and $k-2=n|A|+2(\ell-1)$. As the template $(t_2,\dots,t_\ell)$ is clearly unambiguous, the induction yields that $d'y' \in K_{n,t_2} \cdots K_{n,t_\ell}$. Altogether, this shows that $w' \in K_{n,T}$.
\end{proof}

\smallskip
\noindent
{\bf Main construction.} We now are ready to build a universal \bscu-cover \Kb satisfying $\prin{\rho}{\Kb} \subseteq S$ (where $S$ is any fixed \bscu-saturated set) to finish the proof of Theorem~\ref{thm:bs1:main}. As announced, \Kb consists only of languages chosen with the two following propositions. Recall that $R$ is the "rating set" of~$\rho$.

\begin{proposition} \label{prop:bs1:imprint}
  Let $n \geq |R|$ and $T$ a template. Then, $\rho(K_{n,T}) \in S$.
\end{proposition}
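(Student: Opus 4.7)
The plan is to proceed by induction on the length $\ell$ of $T$, using the closure properties of $S$ to reduce everything to the single \unit case. For the base case $\ell = 0$, we have $K_{n,T} = \{\varepsilon\}$ and hence $\rho(K_{n,T}) = \rho(\varepsilon) = 1_R$, which belongs to $\itriv{\rho} \subseteq S$. For the inductive step, write $T = (t_1,\ldots,t_\ell)$ with $\ell \geq 1$ and $T' = (t_2,\ldots,t_\ell)$, so that $K_{n,T} = K_{n,t_1}\cdot K_{n,T'}$. Since $\rho$ is a "\mratm", multiplicativity gives $\rho(K_{n,T}) = \rho(K_{n,t_1}) \cdot \rho(K_{n,T'})$. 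The induction hypothesis yields $\rho(K_{n,T'}) \in S$, so by multiplicative closure of $S$ it suffices to prove $\rho(K_{n,t_1}) \in S$ for any single \unit $t_1$.

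We therefore handle the case of a single \unit $t$. If $t = a \in A$, then $K_{n,t} = \{a\}$ and $\rho(K_{n,t}) = \rho(a) \in \itriv{\rho} \subseteq S$, so the interesting subcase is $t = (b,B,c)$ with $b,c \in B$ (hence $B \neq \emptyset$); here $K_{n,t} = B^*b(\fullcont{B})^n c B^*$. The key observation is the inclusion
\[
  K_{n,t} \subseteq (\fullcont{B})^n.
\]
Indeed, any word of $K_{n,t}$ decomposes as $x b y_1 \cdots y_n c z$ with $x,z \in B^*$ and $y_i \in \fullcont{B}$, and can be regrouped as $(xby_1)(y_2)\cdots(y_{n-1})(y_n cz)$ (or as the single factor $xby_1cz$ when $n=1$); each group has alphabet contained in $B$ and contains some $y_i$ whose alphabet is exactly $B$, so each factor lies in $\fullcont{B}$. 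Monotonicity of $\rho$ (Fact~\ref{fct:bgen:increasing}) combined with multiplicativity then yields $\rho(K_{n,t}) \leq \rho(\fullcont{B})^n$.

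It remains to connect $\rho(\fullcont{B})^n$ to the idempotent $\rho(\fullcont{B})^\omega$, which is placed in $S$ by the \bscu-operation. Setting $r = \rho(\fullcont{B})$, the powers $r,r^2,\ldots$ live in the subsemigroup $\langle r \rangle \subseteq R$, which has at most $|R|$ elements; a pigeonhole argument on the pre-period and period of $r$ then yields an exponent $\omega(r) \leq |R|$ for which $r^{\omega(r)}$ is idempotent, and this is the unique idempotent in $\langle r \rangle$, hence equals $\rho(\fullcont{B})^\omega$. Since $\omega(r) \leq |R| \leq n$, the same regrouping argument gives $(\fullcont{B})^n \subseteq (\fullcont{B})^{\omega(r)}$, so $\rho(\fullcont{B})^n \leq \rho(\fullcont{B})^{\omega(r)} = \rho(\fullcont{B})^\omega \in S$. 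Combining this with the earlier inequality and invoking downset closure of $S$ gives $\rho(K_{n,t}) \in S$, as required. The main subtlety lies precisely in this final step: one must verify that the hypothesis $n \geq |R|$ is enough to push the exponent past the pre-period of $\langle r \rangle$, which is what makes the \bscu-operation on $r$ applicable even though the global idempotent power of $R$ may be much larger than $|R|$.
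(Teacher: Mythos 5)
Your proof is correct and follows essentially the same route as the paper: reduce to single \units via multiplicativity of $\rho$ and closure of $S$ under multiplication, handle letter \units through $\itriv{\rho}$, and handle triple \units $(b,B,c)$ by including $K_{n,t}$ in a power of $\fullcont{B}$ whose image under $\rho$ equals $(\rho(\fullcont{B}))^\omega$ (using that the idempotent power of $\rho(\fullcont{B})$ is reached by an exponent at most $|R|\leq n$), then concluding by downset closure. Your only deviations are cosmetic: an explicit induction on the length of $T$, and passing through $(\fullcont{B})^n$ before regrouping to $(\fullcont{B})^{\omega(r)}$, where the paper regroups directly to $(\fullcont{B})^p$ with $p\leq n$ the idempotent-reaching exponent.
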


\begin{proposition} \label{prop:bs1:cover}
  Let $n \geq 1$. Then, for any word $w \in A^*$ there exists an unambiguous template $T$ of length at most $(n+2)^{|A|}-1$ such that $w \in K_{n,T}$.
\end{proposition}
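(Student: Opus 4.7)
The plan is to argue by induction on $|\cont{w}|$; since a template using only letters of $\cont{w}$ is automatically a template over $A$ and $(n+2)^{|\cont{w}|}-1 \leq (n+2)^{|A|}-1$, this suffices. The base case $\cont{w}=\emptyset$ is handled by the empty template. For the inductive step let $B = \cont{w}$, $k = |B| \geq 1$, and split on whether some letter of $B$ is ``rare'' in $w$.

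\emph{Rare-letter case.} If some $a \in B$ satisfies $|w|_a \leq n+1$, decompose $w = w_0\,a\,w_1\,a\cdots a\,w_m$ with $m = |w|_a \leq n+1$ and $\cont{w_i} \subseteq B\setminus\{a\}$. By induction each $w_i$ has an unambiguous template $T_i$ of length at most $(n+2)^{k-1}-1$; concatenating into $T = T_0\,(a)\,T_1\,(a)\cdots(a)\,T_m$ gives length at most $(m+1)((n+2)^{k-1}-1) + m \leq (n+2)^k-1$, using $m+1 \leq n+2$. Unambiguity at every join is automatic because $a \notin \cont{w_i}$, so no unit of $T_i$ has $a$ in its alphabet.

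\emph{Frequent-letter case.} If every $a \in B$ has $|w|_a \geq n+2$, perform the greedy $\fullcont{B}$-factorization $w = s_1\cdots s_m\,t$, where $s_i$ is the shortest prefix of the remainder with alphabet $B$ and $\cont{t}\subsetneq B$. When $m \geq n+2$, let $b$ be the first letter of $s_1$ and $c$ the first letter of $s_{n+2}$; writing $s_1 = b\,s_1''$ and observing that $s_1''\,s_2 \in \fullcont{B}$ whenever $s_1''\neq\varepsilon$, one checks that $w$ factors as $b\cdot v\cdot c\cdot u'$ with $v\in(\fullcont{B})^n$, so the singleton triple template $((b,B,c))$ works. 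When $m \leq n+1$, each $s_i$ factors as $s_i'\,a_i$, where $a_i$ is the ``new'' letter completing the alphabet and $\cont{s_i'} = B\setminus\{a_i\}$; this gives the decomposition $w = s_1'\,a_1\,s_2'\,a_2\cdots s_m'\,a_m\,t$ into $m+1 \leq n+2$ blocks over strictly smaller alphabets, and combining inductive templates with the $(a_i)$-markers bounds the template length by $(n+2)^k-1$ exactly as in the rare-letter case.

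\emph{Main obstacle.} The delicate point is unambiguity at the joins $(a_i)\cdot T_{s_{i+1}'}$ in the last subcase: since $\cont{s_{i+1}'} = B\setminus\{a_{i+1}\}$ may contain $a_i$ whenever $a_i \neq a_{i+1}$, the first unit of $T_{s_{i+1}'}$ (if it happens to be a triple) could have $a_i$ in its alphabet and violate unambiguity. I would address this by strengthening the inductive hypothesis to produce, for any prescribed ``forbidden'' letter $\alpha$ coming from a neighboring marker, a template whose first (resp.\ last) unit is either a single letter or a triple whose alphabet avoids $\alpha$. This strengthening propagates through the rare-letter case directly (the outer pieces $w_0$ and $w_m$ carry the outer forbidden letters while the inner pieces see only the marker $a$, which already lies outside their alphabet), whereas in the frequent-letter case the strengthening may force falling back from a single triple to a slightly longer template that first peels off a boundary letter as a single-letter unit before invoking the strengthened induction, so as to respect the compatibility constraint.
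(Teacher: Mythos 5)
Your construction is genuinely different from the paper's. The paper does not try to build an unambiguous template directly: it introduces the smaller languages $P_{n,t}\subseteq K_{n,t}$ (with $P_{n,(b,B,c)}=(\fullcont{B})^{n+2}$, independent of the endpoints $b,c$), proves by a shrinking argument---starting from the letter-by-letter template and repeatedly compressing any run of $(n+2)^{|B|-1}$ units whose combined alphabet is a proper sub-alphabet---that every $w$ lies in $P_{n,T}$ for some possibly ambiguous template $T$ of the required length, and only at the very end restores unambiguity via Fact~\ref{fct:ridofunambig}, which merges ambiguous adjacent units without increasing length. Decoupling the length bound from the unambiguity constraint is what makes that argument close.

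Your bottom-up induction, by contrast, must maintain unambiguity at every concatenation, and this is exactly where the proof does not close. The obstacle you flag is real: with $n=1$ and $w=abc\cdot acacacb\cdot ab$, all three letters are frequent and $m=2$, giving $w = s_1'\,a_1\,s_2'\,a_2\,t = ab\cdot c\cdot acacac\cdot b\cdot ab$; the recursion on $s_2'=acacac$ falls into the high-$m$ frequent case over $\{a,c\}$ and returns the single triple $(a,\{a,c\},a)$, so the assembled template is $(a,b,c,(a,\{a,c\},a),b,a,b)$, whose join between $(c)$ and $(a,\{a,c\},a)$ is ambiguous since $c\in\{a,c\}$. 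Your proposed strengthening---forbid a boundary letter $\alpha$ and, in the high-$m$ frequent case with $\alpha\in B$, ``peel off a boundary letter as a single-letter unit''---does not terminate with the required bound: after peeling a letter $d$ off the front, the residue may still have full alphabet $B$ with every letter still frequent, landing back in the high-$m$ frequent case with forbidden letter $d\in B$. On words like $(ac)^N$ with forbidden letter $c$, one would be forced to peel $\Theta(N)$ times before any letter becomes rare, so the template length would grow with $|w|$ rather than being bounded by $(n+2)^{|A|}-1$. A genuinely different device is needed to restore unambiguity; the paper's choice is to work with the endpoint-free languages $P_{n,t}$ and defer the unambiguity fix to a single post-processing merge.
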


Before proving these propositions, let us use them to build \Kb and finish the completeness proof. We fix $n = |R|$ and $\ell = (n+2)^{|A|}-1$. We then define \Kb as the set of all languages $K_{n,T}$ where $T$ is an \emph{unambiguous} template of length at most $\ell$. By definition, \Kb is finite since there are only finitely many unambiguous templates of length at most $\ell$. Moreover, any $K \in \Kb$ is definable in \bscu by Lemma~\ref{lem:ispiecewise}. It remains to show that \Kb is a universal cover and that $\prin{\rho}{\Kb} \subseteq S$.

We first show that \Kb is a universal cover: given $w \in A^*$, we exhibit a language in \Kb containing $w$. By Proposition~\ref{prop:bs1:cover}, $w \in K_{n,T}$ where $T$ is an unambiguous template of length at most $\ell = (n+2)^{|A|}-1$. It is then immediate that $K_{n,T} \in \Kb$ by definition. It only remains to show that $\prin{\rho}{\Kb} \subseteq S$. Let $r \in \prin{\rho}{\Kb}$. By definition of \Kb, we have $r \leq \rho(K_{n,T})$ where $T$ is some template. Since $n = |R|$, Proposition~\ref{prop:bs1:imprint} yields that $\rho(K_{n,T}) \in S$. Since $S$ is \bscu-saturated, it is closed under downset, which yields $r \in S$. This concludes the main argument. It remains to prove the two propositions.

\subsection*{Proof of Proposition~\ref{prop:bs1:imprint}}

Let $n = |R|$. Consider a template $T$. We show that $\rho(K_{n,T}) \in S$. If $T$ is empty, then $K_{n,T} = \{\varepsilon\}$ by definition. Consequently, $\rho(K_{n,T}) = \rho(\varepsilon) \in \itriv{\rho}$ and we get $\rho(K_{n,T}) \in S$ since $S$ is \bscu-saturated. Otherwise, $T = (t_1,\dots,t_\ell)$ for $\ell \geq 1$ and,
\[
  K_{n,T} = K_{n,t_1} \cdots K_{n,t_\ell}
\]
Hence, $\rho(K_{n,T}) = \rho(K_{n,t_1}) \cdots \rho(K_{n,t_\ell})$. Since $S$ is closed under multiplication by definition of \bscu-saturated subsets, it suffices to prove the following lemma.

\begin{lemma} \label{lem:bins}
  For any \unit $t$, we have $\rho(K_{n,t}) \in S$.
\end{lemma}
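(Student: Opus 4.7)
The proof splits on the form of the unit $t$. The single-letter case $t = a \in A$ is immediate: $K_{n,t} = \{a\}$, so $\rho(K_{n,t}) = \rho(a)$ lies in $\itriv{\rho}$, hence in $S$ by \bscu-saturation.

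For the triple case $t = (b,B,c)$ with $b,c \in B$, my plan is to find an upper bound for $\rho(K_{n,t})$ that already lies in $S$ and invoke the downset closure of $S$. The natural candidate is $e_B = (\rho(\fullcont{B}))^\omega$, which is in $S$ by the \bscu-operation. By multiplicativity and monotonicity of $\rho$, it is therefore enough to establish a language-theoretic inclusion of the form $K_{n,t} \subseteq (\fullcont{B})^{\omega'}$ for some exponent $\omega'$ such that $(\rho(\fullcont{B}))^{\omega'} = e_B$.

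The key step will be a grouping argument showing that $K_{n,t} \subseteq (\fullcont{B})^m$ for every $1 \leq m \leq n$. Indeed, any $w \in K_{n,t}$ decomposes as $w = u b v_1 \cdots v_n c u'$ with $u, u' \in B^*$ and $v_i \in \fullcont{B}$; I regroup the $n$ factors $v_i$ (together with the surrounding $ub$ and $cu'$) into $m$ consecutive blocks. Each block contains at least one full $v_i$, and all its remaining letters (coming from $u$, $u'$, $b$ or $c$) lie in $B$, so each block has alphabet exactly $B$ and hence lies in $\fullcont{B}$.

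To conclude, I take $\omega'$ to be the index of the unique idempotent of the cyclic subsemigroup of $R$ generated by $\rho(\fullcont{B})$; since this subsemigroup has size at most $|R|$, we get $\omega' \leq |R| = n$, so the grouping argument applies and yields $K_{n,t} \subseteq (\fullcont{B})^{\omega'}$. Applying $\rho$ gives $\rho(K_{n,t}) \leq (\rho(\fullcont{B}))^{\omega'}$, and since both $(\rho(\fullcont{B}))^{\omega'}$ and $e_B = (\rho(\fullcont{B}))^\omega$ are the unique idempotent of that cyclic subsemigroup, they coincide. Thus $\rho(K_{n,t}) \leq e_B \in S$, and downset closure finishes. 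The only delicate point I anticipate is reconciling the exponent $\omega'$ produced by the inclusion with the global $\omega$ appearing in the definition of \bscu-saturation, but this amounts to the routine observation that both select the same element of the cyclic subsemigroup.
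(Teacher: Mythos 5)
Your proof is correct and follows essentially the same route as the paper: in the triple case, both arguments pick an exponent $p\leq n=|R|$ with $(\rho(\fullcont{B}))^p=(\rho(\fullcont{B}))^\omega$, use the inclusion $K_{n,t}\subseteq(\fullcont{B})^p$ together with multiplicativity and monotonicity of $\rho$ to get $\rho(K_{n,t})\leq(\rho(\fullcont{B}))^\omega$, and conclude by the \bscu-operation and downset closure. The only difference is that you spell out the grouping argument and the finite-semigroup fact that the paper leaves as ``clearly'' and ``standard''.
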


\begin{proof}
  If $t = a \in A$, then $\rho(K_{n,t}) = \rho(a) \in \itriv{\rho} \subseteq S$ by definition of \bscu-saturated subsets. Otherwise $t = (b,B,c)$ with $B \subseteq A$ and $b,c \in B$. In that case, $K_{n,t} = B^* b (\fullcont{B})^n c B^*$. Since $n = |R|$ by definition, a standard fact on finite monoids yields a number $p \leq n$ (which depends on $B$) such that $(\rho(\fullcont{B}))^p = (\rho(\fullcont{B}))^\omega$. Clearly, $K_{n,T} \subseteq (\fullcont{B})^p$.  Since $\rho$ is a \mratm, we get
  \[
    \rho(K_{n,t}) \subseteq \rho((\fullcont{B})^p)= (\rho(\fullcont{B}))^p = (\rho(\fullcont{B}))^\omega.
  \]
  Since $S$ is \bscu-saturated, we have $(\rho(\fullcont{B}))^\omega \in S$. Finally, since \bscu-saturated subsets are closed under downset, the above inclusions yield $\rho(K_{n,t}) \in S$, as desired.
\end{proof}

\subsection*{Proof of Proposition~\ref{prop:bs1:cover}}

Let $n \geq 1$. We have to show that for any word $w \in A^*$ there exists an unambiguous template  $T$ of length at most $(n+2)^{|A|}-1$ such that $w \in K_{n,T}$. This requires a few additional definitions. For any \unit $t$, we define $P_{n,t} \subseteq K_{n,t}$ as the following language:
\begin{itemize}
\item If $t = a \in A$, then $P_{n,t}  = K_{n,T} =  \{a\}$.
\item If $t = (b,B,c)$ with $B \subseteq A$ and $b,c \in B$, then $P_{n,t}  = (\fullcont{B})^{n+2} \subseteq K_{n,t} = B^* b (\fullcont{B})^{n} cB^*$.
\end{itemize}
Finally, given a template $T$, we define $P_{n,T} = \{\varepsilon\}$ when $T$ is empty and if $T = (t_1,\dots,t_\ell)$ has length $\ell \geq 1$ we define,
\[
  P_{n,T} = P_{n,t_1} \cdots P_{n,t_\ell}
\]
Clearly, we have $P_{n,T} \subseteq K_{n,T}$ for any template $T$. Therefore, it suffices to show the following lemma.

\begin{lemma} \label{lem:bs1last}
  For any word $w \in A^*$ there exists an unambiguous template  $T$ of length at most $(n+2)^{|A|}-1$ such that $w \in P_{n,T}$.
\end{lemma}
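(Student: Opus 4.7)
The plan is to proceed by strong induction on $k = |\cont{w}|$. The base case $k = 0$ is immediate: for $w = \varepsilon$, the empty template $T = ()$ satisfies $w \in P_{n,T} = \{\varepsilon\}$ with $|T| = 0 \leq (n+2)^0 - 1$.

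For the inductive step, I set $B = \cont{w}$ with $|B| = k \geq 1$ and distinguish two cases. If $w \in (\fullcont{B})^{n+2}$, the single-unit template $T = ((b,B,c))$ with $b,c \in B$ (say, the first and last letters of $w$) has length $1 \leq (n+2)^k - 1$ and $w \in P_{n,T} = (\fullcont{B})^{n+2}$. Otherwise, I would perform a greedy ``alphabet-completing'' factorization: iteratively peel off the shortest prefix with alphabet $B$, obtaining $w = u_1 u_2 \cdots u_p \cdot z$ where each $u_i = x_i a_i$, $a_i$ is the letter completing the alphabet (so $\cont{x_i} = B \setminus \{a_i\}$ has size $k-1$), and $\cont{z} \subsetneq B$ (or $z = \varepsilon$). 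A short counting argument forces $p \leq n+1$: otherwise, setting $v_i = u_i$ for $i \leq n+1$ and $v_{n+2} = u_{n+2} \cdots u_p z$ would witness $w \in (\fullcont{B})^{n+2}$, contradicting the case hypothesis. Applying the inductive hypothesis to each $x_i$ and to $z$ yields unambiguous templates $T_{x_i}, T_z$ of length at most $(n+2)^{k-1} - 1$, which I would assemble as $T = T_{x_1} \cdot (a_1) \cdot T_{x_2} \cdots (a_p) \cdot T_z$. The length bound $|T| \leq (p+1)((n+2)^{k-1} - 1) + p \leq (n+2)^k - 1$ follows from $p \leq n+1$, and the membership $w \in P_{n,T}$ is immediate from the construction.

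The main obstacle will be showing that $T$ is unambiguous. Internal adjacencies inside each sub-template are inherited from the inductive hypothesis, and the adjacency between the last unit of $T_{x_i}$ and the separator $(a_i)$ is harmless: every unit of $T_{x_i}$ has alphabet contained in $\cont{x_i} = B \setminus \{a_i\}$, which does not contain $a_i$. The delicate case is the adjacency between $(a_i)$ and the first unit of $T_{x_{i+1}}$ (or of $T_z$): if this first unit is a triple with alphabet $B' \ni a_i$, the template becomes ambiguous. Since $\cont{x_{i+1}} = B \setminus \{a_{i+1}\}$, this obstruction arises precisely when $a_i \neq a_{i+1}$. To resolve it, I plan either to strengthen the inductive statement so that the produced sub-template's first and last units are boundary-compatible with any designated surrounding separator letter, or to refine the factorization itself: if $T_{x_{i+1}}$ starts with a triple of alphabet $B' \ni a_i$, one can absorb $a_i$ into the initial dense block, since $a_i v_1 \in \fullcont{B'}$ whenever $v_1 \in \fullcont{B'}$, so $a_i x_{i+1}$ still begins with a factor in $(\fullcont{B'})^{n+2}$ and the same template $T_{x_{i+1}}$ applies to $a_i x_{i+1}$ without the separator unit; this frees a slot in the length budget to absorb any remaining fix at the resulting triple-to-triple boundary by a judicious choice of the $b$ and $c$ parameters of the abutting triples. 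Carrying out this bookkeeping uniformly in all subcases (and treating the final adjacency with $T_z$ analogously) is what I expect to be the main technical effort.
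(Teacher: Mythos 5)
Your greedy ``alphabet-completing'' factorization, the bound $p \leq n+1$, and the resulting length estimate are all correct, and they give a genuinely different (arguably more constructive) route to a \emph{possibly ambiguous} template than the paper takes. The paper starts from the trivial template consisting of the letters of $w$ and shortens it iteratively: whenever a template of length $\ell \geq (n+2)^{|B|}$ is in hand, either some window of $\ell' = (n+2)^{|B|-1}$ consecutive factors has alphabet $C \subsetneq B$ (apply the induction hypothesis to that window and splice the shorter sub-template back in), or every such window has full alphabet $B$, in which case $w \in (\fullcont{B})^{n+2}$ and a single triple suffices.

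The gap in your proposal is the unambiguity handling, which you yourself flag as the ``main technical effort'' and leave incomplete. The paper sidesteps your boundary bookkeeping entirely by means of Fact~\ref{fct:ridofunambig}, proved once and for all \emph{before} the inductive construction: any (possibly ambiguous) template $T$ can be converted to an unambiguous $T'$ of no greater length with $P_{n,T} \subseteq P_{n,T'}$, by repeatedly merging the two units of an ambiguous adjacency into a single one. With that fact in hand, the recursion never needs to worry about unambiguity. Your in-line fix --- absorbing $a_i$ into the adjacent dense block and then repairing the resulting triple-to-triple boundary --- is heading in the same direction, but as you observe, a repair can create a fresh ambiguous adjacency, and tracking the cascade of parameter adjustments versus length-reducing merges while respecting the budget is exactly what that fact packages into a reusable black box. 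State and prove that conversion lemma separately rather than trying to keep unambiguity as an invariant of the recursion; that closes the gap without touching your factorization or your length bound.
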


We now focus on proving Lemma~\ref{lem:bs1last}. We first get rid of the unambiguity condition.

\begin{fct} \label{fct:ridofunambig}
  Let $T$ be a (possibly ambiguous) template and let $\ell$ be its length. Then, there exists an unambiguous template $T'$ of length $\ell' \leq \ell$ and such that $P_{n,T} \subseteq P_{n,T'}$.
\end{fct}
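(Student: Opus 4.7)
\medskip

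\noindent\textbf{Proof plan.} The plan is to proceed by strong induction on the length $\ell$ of $T$, handling each way in which the unambiguity condition can fail. The base case $\ell \leq 1$ is trivial: any template of length at most one is vacuously unambiguous, so $T'=T$ works. For $\ell \geq 2$, if $T$ is already unambiguous set $T'=T$; otherwise I would pick any ambiguous adjacent pair $(t_i,t_{i+1})$ and split according to which of the three clauses in the definition of ``unambiguous'' is violated.

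\smallskip

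The first case is when one unit is a letter and the other is a triple sharing that letter, say $t_i=a$ and $t_{i+1}=(b,B,c)$ with $a\in B$ (the symmetric situation is identical). Here I would delete $t_i$ to obtain a template $T''$ of length $\ell-1$. The required inclusion $P_{n,T}\subseteq P_{n,T''}$ reduces to checking $\{a\}\cdot(\fullcont{B})^{n+2}\subseteq(\fullcont{B})^{n+2}$; this holds because $B\neq\emptyset$ forces every factor $w_1\in\fullcont{B}$ to be nonempty, so $aw_1$ has alphabet $\{a\}\cup B=B$ and lies in $\fullcont{B}$, allowing the regrouping $(aw_1)w_2\cdots w_{n+2}$. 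The second case is when $t_i=(b_i,B_i,c_i)$ and $t_{i+1}=(b_{i+1},B_{i+1},c_{i+1})$ are both triples with \emph{comparable} alphabets, say $B_i\subseteq B_{i+1}$. I would replace both units with a single triple $(b',B_{i+1},c')$ for any $b',c'\in B_{i+1}$, obtaining $T''$ of length $\ell-1$. The key calculation is
\[
(\fullcont{B_i})^{n+2}(\fullcont{B_{i+1}})^{n+2}\subseteq(\fullcont{B_{i+1}})^{n+2},
\]
proved by regrouping $u_1\cdots u_{n+2}v_1\cdots v_{n+2}$ as $(u_1\cdots u_{n+2}v_1)v_2\cdots v_{n+2}$: the merged first block has alphabet $B_i\cup B_{i+1}=B_{i+1}$ and is nonempty (since $v_1$ is), and the remaining $n+1$ blocks $v_2,\dots,v_{n+2}$ already have alphabet $B_{i+1}$, yielding exactly $n+2$ blocks of alphabet $B_{i+1}$. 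In both of these first two cases $T''$ is shorter, so the induction hypothesis produces the required unambiguous $T'$.

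\smallskip

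The remaining case is when $t_i,t_{i+1}$ are both triples with \emph{incomparable} alphabets $B_i,B_{i+1}$. Here shortening is not available, but one can repair the ambiguity by changing letters: pick $c'_i\in B_i\setminus B_{i+1}$ and $b'_{i+1}\in B_{i+1}\setminus B_i$ (both nonempty by incomparability) and replace $t_i,t_{i+1}$ by $(b_i,B_i,c'_i)$ and $(b'_{i+1},B_{i+1},c_{i+1})$. Since $P_{n,t}$ depends only on the alphabet of a triple, $P_{n,T}$ is unchanged; the pair at position $i$ is now unambiguous; and crucially the adjacent pairs $(t_{i-1},t_i)$ and $(t_{i+1},t_{i+2})$ are \emph{unaffected}, because the unambiguity of $(t_{i-1},t_i)$ depends only on $b_i$ (unchanged) while that of $(t_{i+1},t_{i+2})$ depends only on $c_{i+1}$ (unchanged). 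So the total number of ambiguous pairs strictly decreases.

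\smallskip

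To combine these, I would run the induction on the pair $(\ell,\#\text{ambiguous pairs})$ in lexicographic order: the incomparable-alphabet case decreases the second component while preserving the first, and the other two cases decrease the first. Equivalently, one first preprocesses $T$ by exhaustively applying the incomparable-alphabet repair (terminating since the ambiguity count strictly drops and length and $P_{n,T}$ are preserved); after preprocessing, every remaining ambiguous pair falls into the first or second case and a single shortening step followed by the length-induction finishes. The main obstacle is the inclusion in the second case; the rest is careful bookkeeping to check that the local letter changes in the third case never create fresh ambiguities at neighboring positions.
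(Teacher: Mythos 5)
Your proof is correct, and it is in fact more careful than the paper's own argument, which is a two-line sketch: the paper claims that any two consecutive \units witnessing ambiguity can be merged into a single \unit, yielding a strictly shorter template $T''$ with $P_{n,T} \subseteq P_{n,T''}$, and concludes by iterating until length $0$ or $1$. Your first two cases (a letter absorbed into an adjacent triple containing it, and two triples with comparable alphabets merged into the one with the larger alphabet) are exactly this merging idea, with the needed inclusions verified. Your third case is a genuine addition: when the ambiguous pair consists of two triples with incomparable alphabets, no merge into a single \unit satisfies the required inclusion. For instance, for $T=((a,\{a,b\},b),(b,\{b,c\},c))$ every word of $P_{n,T}$ has alphabet exactly $\{a,b,c\}$, yet the word $(ab)^{n+2}(bc)^{n+2}$ cannot be cut into $n+2\geq 3$ consecutive blocks each containing both an $a$ and a $c$ (all $a$'s precede all $c$'s), so $P_{n,T}\not\subseteq(\fullcont{\{a,b,c\}})^{n+2}$ and in fact no template of length $1$ works; strict length decrease is impossible there. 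Your relabeling repair, which exploits the fact that $P_{n,(b,B,c)}$ is independent of the endpoint letters $b,c$, handles exactly this configuration while leaving $P_{n,T}$ unchanged and not disturbing neighboring pairs, and your lexicographic measure (length, number of ambiguous pairs) gives termination; this also explains why the statement only promises $\ell'\leq\ell$ rather than $\ell'<\ell$. So the two arguments share the merging steps, but your extra case and induction measure close a case that the paper's strict-shortening sketch glosses over.
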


\begin{proof}
  If $T$ is already unambiguous, then it suffices to choose $T = T'$. Otherwise, one may easily construct another (possibly ambiguous) template $T''$ of length $\ell'' < \ell$ such that $P_{n,T} \subseteq P_{n,T''}$. The key idea is that two consecutive \units witnessing ambiguity of $T$ are merged into a single one in $T''$. Since any template of length $0$ or $1$ is unambiguous, repeating this construction eventually yields the desired template $T'$.
\end{proof}

By Fact~\ref{fct:ridofunambig}, it suffices to prove that for any word $w \in A^*$, there exists a template $T$ of length at most $(n+2)^{|A|}-1$ such that $w \in P_{n,T}$ (the fact then yields the desired unambiguous template).

\medskip

The proof is by induction on the alphabet of $w$. Let $B = \cont{w}$. We prove by induction on $|B|$ that there exists a template $T$ of length at most $(n + 2)^{|B|}-1$ such that $w \in P_{n,T}$. If $|B| = 0$, then $w = \varepsilon$ and we choose $T$ as the empty template.

Assume now that $|B| \geq 1$. We use induction to prove that if there exists a template $U$ of length $\ell \geq (n + 2)^{|B|}$ such that $w \in P_{n,U}$, then there exists another template $T$ with strictly smaller length and such that $w \in P_{n,T}$. Since $w \in P_{n,U}$ where $U$ is the template of length $|w|$ whose elements are the letters of $w$, it suffices to apply this argument repeatedly to get the desired template $T$ of length at most than $(n + 2)^{|B|} - 1$.

Assume therefore that we have $\ell \geq (n + 2)^{|B|}$ and a template $U = (t_1,\dots,t_\ell)$ such that $w \in P_{n,U}$. This means that $w$ admits a decomposition $w = w_1 \cdots w_{\ell}$ such that for all $i$, one of the two following properties is satisfied:
\begin{enumerate}
\item $t_i = a_i \in A$ and $w_i = a_i$ or,
\item $t_i = (b_i,B_i,c_i)$ and $w_i  \in (\fullcont{B_i})^{n+2}$.
\end{enumerate}
Let $\ell' = (n+2)^{|B|-1}$. We consider two distinct cases. First, we assume that there exists $i \leq \ell - \ell'$ such that $\cont{w_{i+1}w_{i+2}\cdots w_{i+\ell'}} = C \subsetneq B$. By induction hypothesis, $w_{i+1}w_{i+2}\cdots w_{i+\ell'} \in P_{n,V}$ where $V$ is a template of length at most $(n+2)^{|C|}-1 < \ell'$. We may now replace the \units $t_{i+1},t_{i+2}\cdots t_{i+\ell'}$ in $U$ with those of $V$ which yields the desired template $T$ of length strictly smaller than $\ell$ (we replaced $\ell'$ \units with $(n+2)^{|C|}-1 < \ell'$ \units).

Otherwise, we know that for all $i$, $\cont{w_{i+1}w_{i+2}\dots w_{i+\ell'}} = B$. Since $\ell = (n+2)\ell'$, it follows that $w$ can be redecomposed into $n+2$ factors $w = w'_1 \cdots w'_{n+2}$ such that for all $i$, $\cont{w'_i} = B$. We conclude that $w \in (\fullcont{B})^{n+2}$. In other words, $w \in P_{n,T}$ where $T$ is a template of length $1$ whose unique \unit is a triple $(b,B,c)$ where $b,c$ are arbitrarily chosen letters in $B$. This concludes the proof of Lemma~\ref{lem:bs1last} and of the completeness part of Theorem~\ref{thm:bs1:main}.\qed

\section{Example for universal covering: two-variable first-order logic}
\label{sec:fo2}
In this section, we present a second example illustrating our framework. We detail and prove a characterization of \fod-optimal universal \imprints which yields an algorithm for \fod-covering. Here, \fod denotes two-variable "first-order logic". We first briefly recall the definition of \fod and then present the characterization.

\subsection{Definition} The two-variable fragment of "first-order logic" (denoted \fod) is obtained by restricting the number of  allowed variables within a single sentence to at most \emph{two}. That is, we define \fod sentences as the "first-order" sentences containing at most two distinct variable names. Let us point out that while one is restricted to two variables only, it is possible to \emph{reuse them}. For example, consider the following sentence:
\[
  \exists x\exists y\quad a(x) \quad\wedge\quad a(y) \quad\wedge\quad (x < y) \quad\wedge\quad \exists \underline{\red{x}}\ (a(\underline{\red{x}}) \wedge y < \underline{\red{x}})
\]
This sentence defines the language $A^*aA^*aA^*aA^*$. Observe that in order to quantify the third position whose label is ``$a$'',  we reused the variable $x$: this is allowed in \fod. It is folklore that the class of languages defined by \fod is a \vari.

We shall need the following classification of languages in \fod when proving our characterization. It is standard to classify "first-order" sentences (and therefore those in \fod)  using the notion of quantifier rank. Consider a sentence $\varphi$. We define the \emph{rank} of $\varphi$ as the longest sequence of nested quantifiers within its parse tree. It is well-known and simple to verify that for any $k \in \nat$, there are only finitely many non-equivalent \fod sentences of rank $k$.

We now consider the canonical equivalences over $A^*$ associated to each possible quantifier rank and state standard properties. Given $k \in \nat$, the canonical equivalence \keqfod is defined as follows. For any two words $w,w' \in A^*$,
\[
  \text{$w \keqfod w'$\qquad if\qquad for any \fod sentence $\varphi$ of rank at most $k$,} \quad w \models \varphi \Longleftrightarrow w' \models \varphi.
\]
It is classical and simple to verify that for any $k \in \nat$, the equivalence relation \keqfod has finite index and that all equivalence classes are definable by \fod sentences of rank $k$. We now present two classical properties of the equivalences \keqfod, which we shall need when proving our characterization. Since they are folklore, the proof is left to the reader (they are proved using \efgame arguments). The first property states that \keqfod is a congruence for word concatenation.

\begin{lemma} \label{lem:fo2:fo2cong}
  For all $k \in \nat$, \keqfod is a congruence for concatenation. That is, given $w^{}_1,w^{}_2,w'_1,w'_2 \in A^*$ such that $w^{}_1 \keqfod w'_1$ and $w^{}_2 \keqfod w'_2$, we have $w_1w_2 \keqfod w'_1w'_2$.
\end{lemma}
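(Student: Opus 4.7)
The plan is to use the standard Ehrenfeucht-Fraïssé characterization of $\keqfod$: two words $u,v \in A^*$ satisfy $u \keqfod v$ if and only if Duplicator has a winning strategy in the $k$-round two-pebble EF game $G^k_2(u,v)$, where at each round Spoiler chooses one of the two pebbles $j\in\{1,2\}$ and places it on a position of one of the two structures (possibly moving a pebble that was already placed), Duplicator responds on the other structure, and after each round the currently placed pebbles must form a label- and order-preserving partial isomorphism. I would first recall this characterization explicitly, so that it suffices to construct a winning Duplicator strategy for $G^k_2(w_1w_2,w'_1w'_2)$ from winning strategies $\sigma_1$ for $G^k_2(w_1,w'_1)$ and $\sigma_2$ for $G^k_2(w_2,w'_2)$.

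The composite strategy $\sigma$ is obtained by running the two subgames in parallel and dispatching each of Spoiler's moves to the appropriate subgame. At every point of the play, each pebble currently on the board is tagged as \emph{left} or \emph{right} according to which half of the concatenation it occupies (I would check that by the invariant maintained below, this tag is consistent across the two sides). When Spoiler places or moves pebble $j$ to a position $p$ that falls in, say, the $w_1$-part of $w_1w_2$, $\sigma$ consults $\sigma_1$ using the current positions of the pebbles tagged \emph{left} as the past state of the first subgame (pretending pebbles currently tagged \emph{right} were never placed there) and obtains a response position $p'$ in $w'_1$; Duplicator plays~$p'$, retags pebble $j$ as \emph{left}, and symmetrically for the $w_2$-part. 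The case where Spoiler moves pebble $j$ across halves is handled by the same rule: $\sigma$ treats it as a fresh placement in the \emph{new} subgame and simply forgets it in the old one, which is legitimate because neither subgame strategy is constrained to remember pebbles that are no longer on the board.

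To verify that $\sigma$ is winning, after each round I would check the partial-isomorphism condition. The label-matching follows directly from $\sigma_1$ and $\sigma_2$ being winning on their subgames. For the order, if both pebbles currently lie in the same half on the left side, they lie in the corresponding half on the right side by the tag invariant, and the winning property of $\sigma_1$ or $\sigma_2$ gives the correct relative order; if the two pebbles lie in different halves on the left side, the tag invariant ensures they do on the right side as well, and the left-tagged pebble precedes the right-tagged one in both $w_1w_2$ and $w'_1w'_2$ by construction of the concatenation. An easy induction on the number of rounds played confirms that this invariant is preserved throughout the $k$ rounds. Hence $\sigma$ is a winning Duplicator strategy in $G^k_2(w_1w_2,w'_1w'_2)$, proving $w_1w_2 \keqfod w'_1w'_2$.

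The main (and really only) delicate point is the legitimacy of ``forgetting'' a pebble in a subgame when Spoiler moves it across the boundary, together with the need to verify that Duplicator's tagging stays consistent on both sides. Both issues are minor but must be stated carefully; everything else reduces to bookkeeping on top of the EF-game characterization.
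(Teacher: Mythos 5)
The paper gives no proof of this lemma to compare against: it is explicitly left to the reader as folklore, with a pointer to \efgame arguments. Your game-composition proof is exactly the kind of argument the authors have in mind, and its overall structure is sound: dispatch each Spoiler move to the subgame of the half in which he played, maintain the left/right tags, and verify the partial isomorphism by using the subgame configurations for same-half pairs and the fact that every position of the first factor precedes every position of the second factor for cross-half pairs.

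The one step I would press you on is the justification of ``forgetting''. As written (``neither subgame strategy is constrained to remember pebbles that are no longer on the board'') it is not literally correct: a strategy for the two-pebble game is a function of the whole history of that subgame, and the history does record the placement you want to erase, so consulting $\sigma_1$ on a doctored past is not automatically legitimate. Two standard repairs are available. Either note that in the $r$-round two-pebble game, whether Duplicator can still win depends only on the current configuration and the number of remaining rounds (so winning strategies may be taken positional), and add the easy observation that any sub-configuration of a configuration from which Duplicator wins $r$ rounds is again winning for $r$ rounds --- this is precisely what makes deleting the migrated pebble from the old subgame harmless. Or avoid forgetting altogether: leave the stale pebble placed in the virtual play of the old subgame, and when Spoiler later brings that pebble back into that half, simulate it as a ``move this pebble'' move there (always legal in the pebble game); the composite partial-isomorphism check then only uses the restriction of the virtual configuration to the currently tagged pebbles, and a restriction of a partial isomorphism is a partial isomorphism. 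With either repair, plus the standard equivalence between $\keqfod$ and Duplicator winning the $k$-round two-pebble game, your proof goes through; the remaining checks (round counting, Spoiler playing in either structure, two pebbles on one position, empty factors) are routine.
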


The second lemma states the characteristic property of two-variable "first-order logic". In fact, our characterization of \fod-"optimal" "\imprints" is based on this property.

\begin{lemma} \label{lem:fo2:efprop}
  Let $B \subseteq A$ and let $u,v \in A^*$ be words such that $\cont{u} = \cont{v} = B$. Let $k,\ell \in \nat$ be integers such that $\ell \geq k$. Then for any $w,w' \in B^*$, we have,
  \[
    u^{\ell} w v^{\ell} \keqfod u^{\ell} w' v^{\ell}.
  \]
\end{lemma}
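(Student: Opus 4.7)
My plan is to prove the lemma via Ehrenfeucht-Fra\"iss\'e games for $\fod$. Recall that for every $k\in\nat$, the relation $\keqfod$ coincides with the equivalence ``Duplicator wins the $k$-round 2-pebble EF game'', where at each round Spoiler places or moves one of two pebbles (corresponding to the variables $x$ and $y$) on some position in one of the two words and Duplicator responds by placing or moving the corresponding pebble in the other word; the requirement is that after every round the map between the at most two currently pebbled positions remains a partial isomorphism for label and linear order.

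I will exhibit a winning strategy for Duplicator in the $k$-round game played on $W := u^\ell w v^\ell$ and $W' := u^\ell w' v^\ell$. The strategy rests on three structural facts: first, both $W$ and $W'$ belong to $B^*$, since $\cont{u} = \cont{v} = B$ and $w, w' \in B^*$; second, every letter of $B$ occurs in every individual copy of $u$ and every individual copy of $v$ inside both words; third, the outer blocks $u^\ell$ and $v^\ell$ are identical prefixes and suffixes of $W$ and $W'$, and $\ell \geq k$ provides a large reservoir of such copies. Duplicator maintains the following invariant after each round with $r$ rounds remaining: the pebbled positions share labels and relative order, and each of the at most three segments determined by the pebbles decomposes in $W$ as $u^{a} \cdot z \cdot v^{b}$ and in $W'$ as $u^{a} \cdot z' \cdot v^{b}$, with $z, z' \in B^*$ and the reservoir counts $a, b$ each at least~$r$. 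Duplicator responds to Spoiler by case analysis: if Spoiler plays inside a shared $u^a$- or $v^b$-block on one side, Duplicator mirrors at the same offset; if Spoiler plays inside the ``middle'' $z$ or $z'$ on some letter $b \in B$, Duplicator chooses an occurrence of $b$ inside a fresh, not-yet-pebbled copy of $u$ (respectively $v$) in her reservoir, selecting left versus right so that the resulting position is correctly ordered with respect to the other pebble, if any; such an occurrence exists because $b \in B$ equals the alphabet of every copy of $u$ and of $v$. Each round costs at most one reservoir unit on each side, so $\ell \geq k$ suffices for the invariant to survive all $k$ rounds.

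The step I expect to be the main obstacle is handling the case where Spoiler \emph{moves} an already-placed pebble rather than places a new one, which is legal in the 2-pebble game and changes the partition of the words into segments. I would handle this by first retracting the pebble: this operation merely merges two adjacent segments into one, concatenating their $z, z'$ parts and summing their reservoirs, and therefore only strengthens the invariant; then the placement case above applies to the new position. A careful accounting confirms that across $k$ rounds at most $k$ reservoir units are consumed on each side, so $\ell \geq k$ is exactly what is required, showing that Duplicator never loses and establishing $u^\ell w v^\ell \keqfod u^\ell w' v^\ell$.
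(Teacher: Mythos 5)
Your overall plan is the intended one: the paper leaves this lemma as folklore and points precisely to \efgame arguments, and the two-pebble, $k$-round game does characterize $\keqfod$. However, the execution has a genuine gap. First, the invariant you maintain is not maintainable as stated: already after one round, the outer segments (and any segment lying inside the common prefix $u^{\ell}$) need not admit a factorization $u^{a}zv^{b}$ with $a,b\geq r$, and the exponents on the two sides need not match once a pebble has been matched at different depths; moreover the factorization of a segment as $u^{a}zv^{b}$ is not canonical, so your case distinction ``shared block versus middle'' is not well defined. Second, the rule ``if Spoiler plays inside a shared $u$- or $v$-block, mirror at the same offset'' is unsound in exactly the situation the lemma is about. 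Take $u=v=a$, $B=\{a\}$, $w=\varepsilon$, $w'=a^{10}$ and $\ell=k=5$, so the game is played on $a^{10}$ and $a^{20}$. Spoiler opens in the middle of $a^{20}$; your rule sends Duplicator to some copy of $u$ or $v$ in $a^{10}$, say position $j$. Spoiler then places the second pebble at the mirror of position $j$ in $a^{20}$, which lies in the common prefix or suffix; literal mirroring puts Duplicator on the already-pebbled position $j$, turning a strict inequality on one side into an equality on the other, and Duplicator has lost after two rounds. Deciding \emph{when} to mirror exactly and when to answer only approximately is the whole content of the proof, and your write-up leaves it unspecified; the reservoir accounting (``at most one unit per round'') is asserted rather than derived. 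Finally, the step you single out as the main obstacle (Spoiler reusing a pebble) is the harmless part: moving a pebble merely discards a constraint, since only the pebble that is \emph{not} moved constrains Duplicator's answer.

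A correct argument keeps an invariant per pebble rather than per segment. Say a position has ``$u$-depth $d$'' if $d$ complete copies of $u$ lie entirely to its left, and symmetrically for $v$-depth on the right. With $r$ rounds remaining, the anchored pair $(p,p')$ must carry the same letter and satisfy: either $p$ and $p'$ are the same position of the common prefix $u^{\ell}$, or both have $u$-depth at least $r$; and symmetrically, either the same position of the common suffix $v^{\ell}$, or both have $v$-depth at least $r$. Duplicator then answers a move of Spoiler as follows: if the played position lies within the outermost $r-1$ copies of $u$ (or of $v$), she copies it identically; otherwise she answers with an occurrence of the required letter inside the $r$-th copy of $u$ from the left (or of $v$ from the right), which exists because $\cont{u}=\cont{v}=B$ and $w,w'\in B^{*}$, and which lies on the correct side of the anchored pebble by the depth clause of the invariant. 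One checks that this restores the invariant with parameter $r-1$ and preserves the partial isomorphism, and that $\ell\geq k$ is exactly what is needed for the threshold, which shrinks by one per round, to stay available. Your reservoir intuition is the right one, but this dichotomy ``identical in the outer region versus sufficiently deep on both sides,'' with its shrinking threshold, is the missing ingredient that makes both the order-consistency argument and the accounting go through.
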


\subsection{Alphabet compatible \ratms} Our characterization of \fod-"optimal" universal "\imprints" applies to a class of "\mratms" fulfilling additional properties. Let us define this class and then explain why we may further restrict ourselves to it without loss of generality.

Observe that the set $2^{2^A}$ of all sets of sub-alphabets of $A$ is an idempotent commutative monoid when equipped with union. Consider a \ratm $\rho: 2^{A^*} \to R$. We say that $\rho$ is \emph{alphabet compatible} when there exists a monoid morphism $r \mapsto \cont{r}$ from $(R,+)$ to $(2^{2^A},\cup)$ such that:
\begin{equation} \label{eq:fo2:compat}
  \text{For every $K \subseteq A^*$, we have $\cont{\rho(K)} = \{\cont{w} \mid w \in K\}$}.
\end{equation}
The key idea behind this definition is that for every language $K$, the image $\rho(K)$ of $K$ in $R$ records all sub-alphabets $B \subseteq A$ such that $K \cap B^\circledast \neq \emptyset$ (recall that for any sub-alphabet $B \subseteq A$, we write $\fullcont{B} = \{w \in A^* \mid \cont{w} = B\}$ for the set of words whose alphabet is exactly $B$).

\medskip
When dealing with an alphabet compatible "\ratm", we shall implicitly assume that the morphism $r \mapsto \cont{r}$ is fixed and that we have it in hand. We now use extension to show that we may restrict ourselves to "\nice" alphabet compatible "\mratms" without loss~of~generality.

\begin{lemma} \label{lem:compalph}
  Given as input a "\nice" "\mratm" $\rho: 2^{A^*} \to R$, one may compute a "\nice" alphabet compatible \mratm $\tau: 2^{A^*} \to Q$ extends $\rho$ together with the associated extending morphism $\delta: Q \to R$.
\end{lemma}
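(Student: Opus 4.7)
The plan is to build $\tau$ as a product of $\rho$ with an auxiliary \nice \mratm $\sigma$ designed so as to track alphabets, and to let $\delta$ be the projection onto the $\rho$-component.

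First I would equip $2^{2^A}$ with a "semiring" structure. Addition is union (with neutral element~$\emptyset$), while multiplication is obtained by lifting the monoid operation $\cup$ on $2^A$, namely
\[
  S \cdot T = \{B_1 \cup B_2 \mid B_1 \in S,\ B_2 \in T\},
\]
with neutral element $\{\emptyset\}$. A routine check shows that this is an idempotent "semiring" (distributivity and the zero axiom for $\emptyset$ are immediate from the definitions). I would then define $\sigma: 2^{A^*} \to 2^{2^A}$ by $\sigma(K) = \{\cont{w} \mid w \in K\}$ and verify that $\sigma$ is a "\nice" "\mratm": the axioms $\sigma(\emptyset)=\emptyset$, $\sigma(K_1\cup K_2)=\sigma(K_1)\cup\sigma(K_2)$ and niceness are immediate from the definition, $\sigma(\varepsilon)=\{\emptyset\}$ equals the multiplicative unit of $2^{2^A}$, and $\sigma(K_1K_2)=\sigma(K_1)\cdot\sigma(K_2)$ follows from $\cont{uv}=\cont{u}\cup\cont{v}$.

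Next I would set $Q = R \times 2^{2^A}$, endowed with the product semiring structure, and define
\[
  \tau: 2^{A^*} \to Q,\qquad K \mapsto (\rho(K),\sigma(K)).
\]
Since the "semiring" operations in $Q$ are componentwise and both $\rho$ and $\sigma$ are "\nice" "\mratms", $\tau$ inherits the same properties componentwise; this is straightforward to check. For alphabet compatibility, I would take the map $r \mapsto \cont{r}$ to be the second projection $\pi_2: Q \to 2^{2^A}$. This is obviously a monoid morphism from $(Q,+)$ to $(2^{2^A},\cup)$, and it satisfies Equation~\eqref{eq:fo2:compat}: for every $K\subseteq A^*$,
\[
  \cont{\tau(K)} = \pi_2(\rho(K),\sigma(K)) = \sigma(K) = \{\cont{w} \mid w \in K\}.
\]

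Finally, I would let $\delta: Q \to R$ be the first projection $\pi_1$. It is a monoid morphism for both addition and multiplication (since $Q$ carries the product structure), hence in particular a monoid morphism between the additive "rating sets", and $\delta \circ \tau = \pi_1 \circ (\rho,\sigma) = \rho$, so $\tau$ extends $\rho$ with extending morphism $\delta$.

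It remains to observe that this is all effective. Since $\rho$ is given as a "\nice" "\mratm", we have in hand the finite "semiring" $R$ and the morphism $\rho_*: A^* \to R$; $2^{2^A}$ is explicitly constructible from $A$; hence $Q = R\times 2^{2^A}$ and $\delta=\pi_1$ are computable. The morphism $\tau_*: A^* \to Q$ is determined by its values on letters, which are given by $\tau_*(a) = (\rho_*(a),\{\{a\}\})$ for each $a \in A$, and can thus be computed directly from the input. The only step that might merit some care is checking that one really obtains a "semiring" structure on $2^{2^A}$ and on the product~$Q$, but this is a standard verification and poses no real obstacle.
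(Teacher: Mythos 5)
Your proposal is correct and follows essentially the same route as the paper: equip $2^{2^A}$ with the union/lifted-union semiring structure, take $Q = R \times 2^{2^A}$ with componentwise operations, define $\tau(K) = (\rho(K), \{\cont{w} \mid w \in K\})$, and let $\delta$ and $\cont{\cdot}$ be the two projections. The only difference is presentational — you factor out the second coordinate as an auxiliary "\nice" "\mratm" $\sigma$ and argue componentwise, while the paper defines $\tau$ directly — but the construction and verifications are identical.
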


\begin{proof}
  The set $2^{2^{A}}$ is clearly an idempotent semiring for the following addition and multiplication:
  \begin{itemize}
  \item The addition is union and its neutral element is $\emptyset$.
  \item The multiplication is as follows. Given $S,T \in 2^{2^{A}}$ (\emph{i.e.}, $S$ and $T$ are sets of sub-alphabets):
    \[
      S \cdot T = \{B \cup C \mid B \in S \text{ and } C \in T\}.
    \]
    The neutral element of the multiplication is $\{\emptyset\}$.
  \end{itemize}
  Consequently, the Cartesian product $R \times 2^{2^{A}}$ is also an idempotent semiring for the componentwise addition and multiplication. We define $\tau: 2^{A^*} \to R \times 2^{2^{A}}$ as follows:
  \[
    \begin{array}{llll}
      \tau: & 2^{A^*} & \to     & R \times 2^{2^{A}} \\
            & K       & \mapsto & (\rho(K),\{\cont{w} \mid w \in K\}).
    \end{array}
  \]
  Clearly, $\tau$ is a "\nice" "\mratm" extending $\rho$: the extending morphism $\delta: R \times 2^{2^{A}} \to R$ is simply the projection on the first component. Moreover, it is simple to verify that $\tau$ is alphabet compatible: for every $(r,S) \in R \times 2^{2^{A}}$, it suffices to define $\cont{(r,S)} = S$. This concludes the proof.
\end{proof}

\begin{remark}\label{rem:fo2:alphcompblowup}
  For a \mratm $\rho: 2^{A^*} \to R$ to be alphabet compatible, its rating set $R$ needs to have at least $2^{2^{|A|}}$ elements, \emph{i.e.}, at least as many elements as there are sets of sub-alphabets. This can be observed in the proof of Lemma~\ref{lem:compalph}: when building an alphabet compatible \ratm out of an arbitrary one, the new rating set is the Cartesian product of the old one with $2^{2^A}$.

  However, when computing \fodopti (which is our objective in this section) from a possibly non alphabet compatible \mratm $\rho$, the blowup is only singly exponential with respect to $|A|$. Indeed, one may verify that in the least fixpoint algorithm given in the next subsection, the second component of the
  will always be a singleton.
\end{remark}

\subsection{Characterization of optimal \imprints} We are now ready to present our characterization of \fod-"optimal" universal \imprints. That is, given an alphabet compatible \mratm $\rho$, we describe the set \fodopti.

\begin{remark}
  In this case as well, our characterization does not require $\rho$ to be "\nice": it holds for any alphabet compatible "\mratm".
\end{remark}

Consider an alphabet compatible \mratm $\rho: 2^{A^*} \to R$ and a subset $S \subseteq R$. We say that $S$ is \emph{\fod-saturated} (for $\rho$) if it contains \itriv{\rho} and is closed under the following operations:
\begin{enumerate}
\item \emph{\bfseries Downset:} For any $r \in S$ and any $r' \leq r$, we have $r' \in S$.
\item \emph{\bfseries Multiplication:} For any $s,t \in S$, we have $st \in S$.
\item\label{op:fo2:main} \emph{\bfseries \fod-closure:} For any $B \subseteq A$ and any (multiplicative) idempotents $e,f \in S$ such that $\cont{e} = \cont{f} = \{B\}$, we have:
  \[
    e \cdot \rho(B^*) \cdot f \in S.
  \]
\end{enumerate}

We now state the main theorem of this section: for any alphabet compatible "\mratm" $\rho: 2^{A^*} \to R$, the set \fodopti is the least \fod-saturated subset of $R$ (for inclusion).

\begin{theorem}[Characterization of \fod-"optimal" \imprints] \label{thm:fo2:main}
  Let $\rho: 2^{A^*} \to R$ be an alphabet compatible \mratm. Then, \fodopti is the least \fod-saturated subset of $R$.
\end{theorem}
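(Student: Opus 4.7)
The proof follows the same two-step template as Theorem~\ref{thm:bs1:main}: we first show $\opti{\fod}{\rho}$ is \fod-saturated (soundness), and then that every \fod-saturated subset $S\subseteq R$ contains $\opti{\fod}{\rho}$ (completeness).

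\textbf{Soundness.} Since \fod is a \pvari of regular languages, Lemma~\ref{lem:bgen:genalgoba} immediately handles the containment of $\itriv{\rho}$ and the closure under downset and multiplication. The substantive content is the \fod-closure rule: for idempotents $e,f\in\opti{\fod}{\rho}$ with $\cont{e}=\cont{f}=\{B\}$, we must show $e\cdot\rho(B^*)\cdot f\in\opti{\fod}{\rho}$. Fix any \fod-cover $\Kb$ of $A^*$, let $k$ be a uniform bound on the quantifier rank of its defining sentences, and refine so that each $K\in\Kb$ is a union of \keqfod-classes. First, apply the hypothesis $e\in\opti{\fod}{A^*,\rho}$ to the \fod-cover $\{\fullcont{B},A^*\setminus\fullcont{B}\}$: by monotonicity of the morphism $r\mapsto\cont{r}$ (which follows from compatibility of $\cont{\cdot}$ with addition), the alternative $e\leq\rho(A^*\setminus\fullcont{B})$ is ruled out by $\cont{e}=\{B\}$, so $e\leq\rho(\fullcont{B})$; iterating this refinement argument yields $e,f\in\opti{\fod}{\fullcont{B},\rho}$, and by idempotency and Lemma~\ref{lem:bgen:optsemi}, $e=e^\omega$ and $f=f^\omega$ lie in $\opti{\fod}{\fullcont{B}^\omega,\rho}$. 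Second, a further refinement by \keqfod-classes inside $\fullcont{B}^\omega$ produces witness words $u,v\in\fullcont{B}$ realizing $e$ and $f$ at a power $\ell\geq\max(k,\omega)$. Applying Lemma~\ref{lem:fo2:efprop}, the language $u^\ell B^* v^\ell$ is a single \keqfod-class, and hence contained in a single $K\in\Kb$. Using that $\rho$ is \tame,
\[
  \rho(K)\;\geq\;\rho(u^\ell B^* v^\ell)\;=\;\rho(u)^\ell\cdot\rho(B^*)\cdot\rho(v)^\ell\;\geq\;e\cdot\rho(B^*)\cdot f,
\]
as required.

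\textbf{Completeness.} Fix an arbitrary \fod-saturated set $S\subseteq R$. The plan is to build a universal \fod-cover $\Kb$ with $\prin{\rho}{\Kb}\subseteq S$; then $\opti{\fod}{\rho}\subseteq\prin{\rho}{\Kb}\subseteq S$ is immediate from the definition of optimality. The construction adapts the \emph{template} mechanism of the \bscu case, but with atomic \units that are either single letters $a$ (whose $\rho$-image $\rho(a)\in\itriv{\rho}\subseteq S$) or blocks of the form $u^\ell B^* v^\ell$ where $u,v\in\fullcont{B}$ and $\rho(u)^\omega,\rho(v)^\omega$ are prescribed idempotents of $S$ of content $\{B\}$. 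By Lemma~\ref{lem:fo2:efprop} each such block is a finite union of \keqfod-classes and hence \fod-definable; its $\rho$-image lies in $S$ by the \fod-closure and multiplication rules of \fod-saturation. An inductive combinatorial argument on the working alphabet, mirroring Lemma~\ref{lem:bs1last}, then shows that every word of $A^*$ is captured by a template of length bounded by a function of $|A|$ and $|R|$.

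\textbf{Main obstacle.} The hardest step is in the soundness direction, namely passing from the abstract values $e,f\in\opti{\fod}{\rho}$ to concrete witness words $u,v\in\fullcont{B}$ to which Lemma~\ref{lem:fo2:efprop} can be applied. Because $\rho$ is not assumed \nice, $\rho(K)$ cannot be expressed as $\sum_{w\in K}\rho(w)$, so no single word automatically realizes $e$. The extraction must therefore orchestrate alphabet compatibility (to localize $e,f$ inside $\opti{\fod}{\fullcont{B},\rho}$), the idempotency of $e$ and $f$ together with Lemma~\ref{lem:bgen:optsemi} (to further localize to $\opti{\fod}{\fullcont{B}^\omega,\rho}$), and a judicious sequence of refined \fod-covers built from \keqfod-classes, so that the multiplicative structure of the \mratm $\rho$ interacts correctly with the quantifier-rank bound $k$ of the cover.
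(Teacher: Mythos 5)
Your soundness argument has a genuine gap exactly where you flag it, and the flag does not repair it. From $e\leq\rho(H)$ for a language $H$ (a \keqfod-class inside $(\fullcont{B})^\omega$, say) you cannot extract words $u,v\in\fullcont{B}$ with $e\leq\rho(u)^\ell$ and $f\leq\rho(v)^\ell$: since $\rho$ is not assumed \nice, $\rho(H)$ is not controlled by the values $\rho(w)$ for $w\in H$ (for instance, the paper's non-\nice \ratm with rating set $\{0,1,2\}$, endowed with the multiplication where $0$ is absorbing and the product of nonzero elements is their maximum, is a \mratm in which every product of word images equals $1$ while infinite languages have image $2$). Hence the final inequality $\rho(u)^\ell\cdot\rho(B^*)\cdot\rho(v)^\ell\geq e\cdot\rho(B^*)\cdot f$ is unsupported; the preliminary localization of $e,f$ into $\opti{\fod}{\fullcont{B},\rho}$ is correct but does not help. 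The paper avoids word extraction altogether: it takes the partition \Hb of $A^*$ into \keqfod-classes, obtains $H_e,H_f\in\Hb$ with $e\leq\rho(H_e)$ and $f\leq\rho(H_f)$, uses alphabet compatibility only to check that every word of $H_e$ and $H_f$ has alphabet exactly $B$, and bounds $e\cdot\rho(B^*)\cdot f=e^k\rho(B^*)f^k\leq\rho\bigl((H_e)^kB^*(H_f)^k\bigr)$ by multiplicativity of $\rho$ at the level of \emph{languages}; representative words $u\in H_e$, $v\in H_f$ enter only through Lemmas~\ref{lem:fo2:fo2cong} and~\ref{lem:fo2:efprop}, to show that $(H_e)^kB^*(H_f)^k$ lies in a single \keqfod-class and hence inside a single element of the given cover.

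The completeness sketch also does not go through. First, Lemma~\ref{lem:fo2:efprop} only shows that $u^\ell B^*v^\ell$ is \emph{contained in} one \keqfod-class, not that it is a union of such classes, so these blocks are in general not \fod-definable. Second, and more seriously, blocks of this shape cannot yield a cover: membership in $u^\ell B^*v^\ell$ forces a word to begin with the literal prefix $u^\ell$ and end with $v^\ell$, and an arbitrary word of alphabet $B$ admits no factorization into boundedly many letters and such blocks for $u,v$ drawn from any finite prescribed set; so no analogue of Lemma~\ref{lem:bs1last} is available and the template mechanism of the \bscu case does not transfer. This is precisely why the paper's completeness argument has a different shape: it proves Proposition~\ref{prop:fo2:construction}, constructing an \fod-cover of $B^*$ relative to border elements $t_\ell,t_r\in S$, by induction on the size of $B$ and on the right and left $B$-indices of $t_\ell$ and $t_r$. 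The base case (both elements saturated) manufactures the idempotents $e,f$ of content $\{B\}$ and invokes \fod-closure on $\{B^*\}$ itself, while the inductive case splits every word at the leftmost occurrence of a letter $b$ witnessing failure of saturation, and \fod-definability of the resulting languages $HbK$ is obtained by relativizing quantifiers to that leftmost occurrence. Your plan would have to be replaced by, or reduced to, an argument of this kind; as written, both directions have substantive holes.
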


Clearly, Theorem~\ref{thm:fo2:main} yields an algorithm for computing \fodopti from any input \nice \mratm $\rho: 2^{A^*} \to R$. Indeed, by Lemma~\ref{lem:compalph}, we may compute a \nice \emph{alphabet compatible} \mratm $\tau: 2^{A^*} \to Q$ which extends $\rho$ together with the associated extending morphism $\delta: Q \to R$. It is then simple to compute the least \fod-saturated subset of $Q$ (for $\tau$) with a least fixpoint algorithm, as it is immediate that \fod-closure may be implemented. By Theorem~\ref{thm:fo2:main}, this set is exactly \opti{\fod}{\tau}. Finally, we use Lemma~\ref{lem:bgen:extension} to obtain $\fodopti = \dclos \delta(\opti{\fod}{\tau})$.

Consequently, Proposition~\ref{prop:breduc} yields an algorithm for universal "\fod-covering". Moreover, we obtain an algorithm for full "\fod-covering" as well by Proposition~\ref{prop:bgen:eqpoint} since \fod is a Boolean algebra. Thus, we get the following corollary.

\begin{corollary} \label{cor:fo2:main}
  The "\fod-covering problem" is decidable.
\end{corollary}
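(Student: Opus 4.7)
The plan is to chain together the infrastructure developed in Sections~\ref{sec:opti}--\ref{sec:bgen:tame} with the new Theorem~\ref{thm:fo2:main}. Given an input pair $(L,\Lb)$ for \fod-covering, I would first observe that \fod is a "Boolean algebra", so by Proposition~\ref{prop:bgen:eqpoint} it suffices to decide the "universal \fod-covering problem" on $\{L\}\cup\Lb$. Proposition~\ref{prop:breduc} then reduces this to the problem of testing, given a "\nice" "\mratm" $\rho:2^{A^*}\to R$ and a subset $F\subseteq R$, whether $F\cap\opti{\fod}{\rho}=\emptyset$. Hence the whole question boils down to \emph{effectively} computing $\opti{\fod}{\rho}$ from $\rho$.

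To exploit Theorem~\ref{thm:fo2:main}, which assumes an alphabet compatible "\mratm", I would first apply Lemma~\ref{lem:compalph} to compute a "\nice" alphabet compatible "\mratm" $\tau:2^{A^*}\to Q$ that extends $\rho$, together with its extending morphism $\delta:Q\to R$. Theorem~\ref{thm:fo2:main} then characterizes $\opti{\fod}{\tau}$ as the least \fod-saturated subset of $Q$. I would compute this set by the natural least fixpoint procedure: start from $\itriv{\tau}$ (which is computable by Lemma~\ref{lem:bgen:evtriv}) and iteratively close under downset, multiplication, and the \fod-closure operation $(e,B,f)\mapsto e\cdot\tau(B^*)\cdot f$. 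Each closure step only inspects elements already in the current set; the values $\tau(B^*)$ are computed once and for all via Lemma~\ref{lem:bgen:evamramt}, using that $\tau$ is "\nice" and that each $B^*$ is regular. Since $Q$ is finite, the saturation terminates after at most $|Q|$ rounds. Finally, Lemma~\ref{lem:bgen:extension} gives $\opti{\fod}{\rho}=\dclos\delta(\opti{\fod}{\tau})$, which is immediately computable from $\opti{\fod}{\tau}$ and $\delta$.

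Putting the pieces together yields an algorithm deciding \fod-covering, which establishes the corollary. The only step that genuinely requires work beyond bookkeeping is verifying that each of the three closure operations from the definition of \fod-saturation is effectively implementable on the finite semiring $Q$; the delicate one is \fod-closure, and the main obstacle is to correctly identify multiplicative idempotents $e,f\in Q$ with $\cont{e}=\cont{f}=\{B\}$ and then evaluate $\tau(B^*)$, but both tasks reduce to inspecting a finite semiring and computing a "\ratm" on a regular language, as handled by Lemmas~\ref{lem:bgen:evtriv} and~\ref{lem:bgen:evamramt}. All the substantive mathematics has already been absorbed into Theorem~\ref{thm:fo2:main}.
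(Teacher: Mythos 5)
Your proposal is correct and follows essentially the same route as the paper: apply Proposition~\ref{prop:bgen:eqpoint} (since \fod is a "Boolean algebra") and Proposition~\ref{prop:breduc} to reduce to computing $\opti{\fod}{\rho}$ from a "\nice" "\mratm" $\rho$, then use Lemma~\ref{lem:compalph} to pass to an alphabet compatible $\tau$, compute the least \fod-saturated set by the fixpoint procedure justified by Theorem~\ref{thm:fo2:main}, and transfer back via Lemma~\ref{lem:bgen:extension}. The extra remarks you add about effectively implementing \fod-closure via Lemma~\ref{lem:bgen:evamramt} and the computable map $r\mapsto\cont{r}$ fill in details the paper treats as immediate, but the argument is the same.
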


We now turn to the proof of Theorem~\ref{thm:fo2:main}. Consider an alphabet compatible \mratm $\rho: 2^{A^*} \to R$. Our objective is to show that \fodopti is the least \fod-saturated subset of $R$. As before, we follow two independent steps corresponding respectively to soundness and completeness of the least fixpoint procedure.

\subsection{Soundness}

We show that \fodopti is \fod-saturated. Since \fod is a \vari of regular languages, we already know from Lemma~\ref{lem:bgen:genalgoba} that \fodopti contains \itriv{\rho} and is closed under downset and multiplication. Thus, we may focus on \emph{\fod-closure}. Given $B \subseteq A$ and any idempotents $e,f \in \fodopti$ such that $\cont{e} = \cont{f} = \{B\}$, we have to show that:
\[
  e \cdot \rho(B^*) \cdot f \in \fodopti.
\]
By definition, we have to prove that for any universal \fod-cover \Kb, we have $e \cdot \rho(B^*) \cdot f \in \prin{\rho}{\Kb}$.

We use the equivalences \keqfod associated to \fod. Since \Kb is a finite set, there exists $k \geq 1$ such that any $K \in \Kb$ is defined by an \fod sentence of rank at most $k$. We let \Hb be the partition of $A^*$ into \keqfod-classes. By definition, \Hb is a universal \fod-cover. Consequently, since $e,f \in \fodopti$, we have $e,f \in \prin{\rho}{\Hb}$, which yields two \keqfod-classes $H_e,H_f \in \Hb$ such that $e \leq \rho(H_e)$ and $f \leq \rho(H_f)$. Let us define $L$ as the following language:
\[
  L = (H_e)^{k}B^*(H_f)^{k}.
\]
Observe that since $e,f$ are idempotents, we have $e \cdot \rho(B^*) \cdot f \leq \rho(L)$. Moreover, since \Kb is a universal cover and $L$ is nonempty (as $H_e,H_f$ cannot be empty since they are \keqfod-classes), there exists a language $K \in \Kb$ such that $L \cap K \neq \emptyset$. Using our choice of $k \in \nat$, we prove that $L \subseteq K$. This will imply that $e \cdot \rho(B^*) \cdot f \leq \rho(L) \leq \rho(K)$, which will yield $e \cdot \rho(B^*) \cdot f \in \prin{\rho}{\Kb}$, finishing the soundness proof.

\medskip

We now concentrate on proving that $L \subseteq K$. Recall that by hypothesis, $K \in \Kb$ is defined by an \fod sentence of rank at most $k$. Moreover, we have $L \cap K \neq \emptyset$. Hence, by definition of \keqfod, it suffices to show that all words in $L$ are \keqfod-equivalent to conclude as desired that $L \subseteq K$. We use the following fact, which is based on the definition of alphabet compatible \ratms.

\begin{fct} \label{fct:fo2:soundfact}
  For any $u \in H_e$ and $v \in H_f$, we have $\cont{u} = \cont{v} = B$.
\end{fct}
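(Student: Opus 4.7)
The plan is to combine two observations: first, that the map $r \mapsto \cont{r}$ transports the inequality $e \leq \rho(H_e)$ into a statement about the alphabets of words in $H_e$; second, that membership in a $\keqfod$-class (for $k \geq 1$) forces all words to share the same alphabet.

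For the first observation, I would recall that since $\rho$ is alphabet compatible, the map $r \mapsto \cont{r}$ is a monoid morphism from $(R,+)$ to $(2^{2^A},\cup)$, hence increasing by Fact~\ref{fct:bgen:compatible-with-morphisms}. Applying it to $e \leq \rho(H_e)$ together with the defining property~\eqref{eq:fo2:compat} yields
\[
  \{B\} = \cont{e} \subseteq \cont{\rho(H_e)} = \{\cont{w} \mid w \in H_e\},
\]
so there exists at least one word $u_0 \in H_e$ with $\cont{u_0} = B$. The symmetric argument produces $v_0 \in H_f$ with $\cont{v_0} = B$.

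For the second observation, note that for each letter $a \in A$ the property ``$a$ occurs in the word'' is expressed by the rank~$1$ \fod sentence $\exists x\ P_a(x)$. Since we fixed $k \geq 1$, any two words in the same $\keqfod$-class agree on all such sentences, so they have the same alphabet. Combined with the existence of $u_0 \in H_e$ and $v_0 \in H_f$ with alphabet $B$, this forces $\cont{u} = B$ for every $u \in H_e$ and $\cont{v} = B$ for every $v \in H_f$, which is exactly the statement of the fact.

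I do not anticipate any real obstacle here; the proof is essentially a bookkeeping exercise that exploits the definition of alphabet compatibility to relate the algebraic order on $R$ to alphabets of words, together with the trivial observation that the alphabet of a word is already expressible at quantifier rank $1$.
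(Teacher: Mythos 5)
Your proof is correct and uses the same two ingredients as the paper's own argument, namely that ``the alphabet of a word'' is a rank-$1$ \fod property so $\keqfod$-classes with $k\geq 1$ have constant alphabet, and that the morphism $r \mapsto \cont{r}$ is increasing so $e \leq \rho(H_e)$ yields $\{B\} = \cont{e} \subseteq \cont{\rho(H_e)} = \{\cont{w} \mid w \in H_e\}$. The only difference is a harmless reordering: you first extract one witness $u_0 \in H_e$ with $\cont{u_0} = B$ and then spread this to all of $H_e$, while the paper first notes $\cont{\rho(H_e)}$ is a singleton and then compares it with $\{B\}$.
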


\begin{proof}
  We present a proof for $u \in H_e$. The argument is symmetrical for $v \in H_f$. It simple to verify that $\fullcont{(\cont{u})}$ may be defined by an \fod sentence of rank $1$. Hence, since $k \geq 1$ and $H_e$ is a \keqfod-class containing $u \in \fullcont{(\cont{u})}$, it follows that $H_e \subseteq \fullcont{(\cont{u})}$ by definition of \keqfod. By definition of alphabet compatible \ratms, this implies that $\cont{\rho(H_e)} = \{\cont{u}\}$. Finally, since $e \leq \rho(H_e)$ and $r\mapsto \cont{r}$ is a monoid morphism from $R$ to $2^{2^A}$, we obtain $\{B\}=\cont{e}\subseteq\cont{\rho(H_e)}=\{\cont{u}\}$. Therefore, $\cont{u}=B$, as desired.
\end{proof}

We now consider $u \in H_e$ and $v \in H_f$. By Fact~\ref{fct:fo2:soundfact}, $\cont{u} = \cont{v} = B$. We show that all words in $L$ are $\keqfod$-equivalent to the word $u^kv^k$, which concludes the proof.

Recall that by definition, $H_e,H_f$ are \keqfod-classes. Thus, all words in $H_e$ (resp. $H_f$) are equivalent to $u \in H_e$ (resp $v \in H_f$).  Moreover, we have $L = (H_e)^{k}B^*(H_f)^{k}$ by definition. Since \keqfod is compatible with concatenation (this is Lemma~\ref{lem:fo2:fo2cong}), we get that for any word $w \in L$, there exists $x \in B^*$ such that,
\[
  w \keqfod u^k x v^k
\]
Moreover, since $\cont{u} = \cont{v} = B$ and $x \in B^*$, we obtain from Lemma~\ref{lem:fo2:efprop} that $u^k x v^k \keqfod u^kv^k$. Therefore, transitivity yields that all $w \in L$ are \keqfod-equivalent to $u^kv^k$.

\subsection{Completeness}

We turn to the difficult direction in Theorem~\ref{thm:fo2:main}. Recall that an alphabet compatible \mratm $\rho: 2^{A^*} \to R$ is fixed. Given an \fod-saturated subset $S \subseteq R$, we show $\fodopti \subseteq S$. Our proof is a generic construction which builds an \fod-cover \Kb of $A^*$ such that $\prin{\rho}{\Kb} \subseteq S$. Since $\fodopti \subseteq \prin{\rho}{\Kb}$ for any \fod-cover \Kb of $A^*$, this proves the desired result.

\begin{remark}
  As before, since we showed that \fodopti is \fod-saturated, one may apply our construction in the special case when $S = \fodopti$. This builds a universal \fod-cover \Kb such that $\prin{\rho}{\Kb} = \fodopti$, \emph{i.e.}, one that is optimal for $\rho$.
\end{remark}

The construction of \Kb is achieved by induction on three parameters. The most important one is some sub-alphabet $B \subseteq A^*$: we reduce the construction of \Kb to that of "\fod-covers" of $B^*$ for some $B \subsetneq A$. The induction is stated in Proposition~\ref{prop:fo2:construction}.

\begin{proposition} \label{prop:fo2:construction}
  Consider an \fod-saturated set $S \subseteq R$. For any $B \subseteq A$ and any $t_\ell,t_r \in	S$, there exists an \fod-cover \Kb of $B^*$ such that for all $K \in \Kb$:
  \begin{equation} \label{eq:goalfod}
    K \subseteq B^* \quad \text{and} \quad t_\ell \cdot \rho(K) \cdot t_r \in S.
  \end{equation}
\end{proposition}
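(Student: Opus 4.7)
The proof is by induction on the cardinality of $B$.

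For the base case $B=\emptyset$, take $\Kb=\{\{\varepsilon\}\}$. Then $\{\varepsilon\}\in\fod$, $\{\varepsilon\}\subseteq B^*=\{\varepsilon\}$, and since $\rho$ is \tame we have $t_\ell\cdot\rho(\{\varepsilon\})\cdot t_r=t_\ell\cdot 1_R\cdot t_r=t_\ell t_r\in S$ by closure of $S$ under multiplication.

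For the inductive step with $|B|\geq 1$, I split $B^*=\fullcont{B}\sqcup\bigsqcup_{B'\subsetneq B}\fullcont{B'}$ and build the cover piecewise. Each set $\fullcont{B'}$ for $B'\subsetneq B$ is \fod-definable (of rank $1$), and an application of the inductive hypothesis to $B'$ with the same parameters $t_\ell,t_r$ produces a cover $\Kb_{B'}$ of $B'^*$; intersecting each $K\in\Kb_{B'}$ with $\fullcont{B'}$ keeps membership in \fod (Boolean algebra), keeps containment in $B^*$, and preserves the condition $t_\ell\cdot\rho(\cdot)\cdot t_r\in S$ because $\rho(K\cap\fullcont{B'})\leq\rho(K)$ and $S$ is closed under downset.

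The heart of the argument is covering $\fullcont{B}$. I fix a rank $k\geq 1$ that is at least the idempotent power $\omega$ of the semigroup $(R,\cdot)$ and large enough to invoke Lemma~\ref{lem:fo2:efprop} with margin, and then use the partition of $\fullcont{B}$ into its intersections with \keqfod-classes. \keqfod-classes consisting only of short words are split into singletons $\{w\}$ (each \fod-definable), for which $t_\ell\cdot\rho(w)\cdot t_r\in S$ since $\rho(w)\in\itriv{\rho}\subseteq S$ and $S$ is closed under multiplication. For a \keqfod-class $H\subseteq\fullcont{B}$ containing some sufficiently long $w_0$, I extract by a pigeonhole argument on prefixes of length bounded by a function of $|R|$ a factorization $w_0=u^{k}\cdot x\cdot v^{k}$ with $e:=\rho(u^k)$ and $f:=\rho(v^k)$ multiplicative idempotents in $R$ lying in $S$; alphabet compatibility of $\rho$ (together with $w_0\in\fullcont{B}$) allows me to choose these prefixes so that $\cont{e}=\cont{f}=\{B\}$. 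The \fod-closure rule then yields $e\cdot\rho(B^*)\cdot f\in S$, and Lemma~\ref{lem:fo2:efprop} shows that every word in $H$ agrees up to \keqfod with $u^{k}x'v^{k}$ for some $x'\in B^*$, hence is contained in a language whose $\rho$-image is bounded above by $e\cdot\rho(B^*)\cdot f$. Multiplication and downset closure of $S$ then yield $t_\ell\cdot\rho(H)\cdot t_r\in S$.

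\textbf{Main obstacle.} The delicate coordination is between the rank $k$ (which must make \keqfod-classes fine enough to pin down the factorization shape uniformly across all words of the class), the idempotent power $\omega$ of $R$ (which allows the prefix/suffix to give genuine idempotents $e,f\in S$), and the alphabet condition $\cont{e}=\cont{f}=\{B\}$ (which is required by the \fod-closure rule). Alphabet compatibility of $\rho$ and Lemma~\ref{lem:fo2:efprop} are the key tools; the combinatorial content lies in showing that for every long word of $\fullcont{B}$ one can realize the factorization with both idempotents of content exactly $\{B\}$, and in upgrading the factorization of a single representative $w_0$ to a uniform description valid for the entire class $H$.
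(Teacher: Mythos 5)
There is a genuine gap at the heart of your construction, in the treatment of $\fullcont{B}$. You cover $\fullcont{B}$ by (intersections with) \keqfod-classes $H$ and claim that, having exhibited one representative $w_0 = u^k x v^k$ with $e = \rho(u^k)$ and $f = \rho(v^k)$ idempotents of $S$ of content $\{B\}$, the closure rules give $t_\ell\cdot\rho(H)\cdot t_r \in S$. But nothing bounds $\rho(H)$ from above. The \mratm $\rho$ is arbitrary (alphabet compatible, but otherwise unrelated to \fod-definability), so knowing that every $w \in H$ is \keqfod-equivalent to $u^k x v^k$ gives no information whatsoever about $\rho(w)$, let alone about $\rho(H)$: the words of $H$ need not lie in $u^k B^* v^k$ nor in any language with a controllable image. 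What Lemma~\ref{lem:fo2:efprop} actually provides is the opposite inequality: it shows that $H$ contains essentially all of $u^k B^* v^k$, i.e.\ a \emph{lower} bound of the form $e\cdot\rho(x')\cdot f \leq \rho(H)$, whereas to invoke downset closure you need an \emph{upper} bound $t_\ell\cdot\rho(H)\cdot t_r \leq$ (product of elements of $S$). In effect you are importing the soundness argument---where \keqfod-classes and Lemma~\ref{lem:fo2:efprop} are the right tools because the cover is given and one only needs a language of known shape contained in some cover element---into the completeness direction, where the cover must be built from languages whose $\rho$-images can be bounded above by elements already known to be in $S$. A second, related mismatch: to exploit the \fod-closure rule for the target element $t_\ell\cdot\rho(B^*)\cdot t_r$, the idempotents must be absorbed by $t_\ell$ and $t_r$ (one needs $t_\ell\,\rho(B^*)\,t_r = t_\ell\, e\,\rho(B^*)\, f\, t_r$), and your $e=\rho(u^k)$, $f=\rho(v^k)$, extracted from a single word of the class, bear no absorption relation to $t_\ell$ and $t_r$.

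The paper's proof is organized differently precisely to obtain such upper bounds: it inducts not only on $|B|$ but on two further parameters, the right $B$-index of $t_\ell$ and the left $B$-index of $t_r$, defined via reachability in the submonoid $S_B$ of elements of $S$ that are images of nonempty languages in $B^*$. In the base case (both parameters saturated), one proves the absorption identities $t_\ell = t_\ell e$ and $t_r = f t_r$ for idempotents $e,f \in S$ of content $\{B\}$, so that $\{B^*\}$ itself is an admissible cover; in the inductive case, one picks a letter $b$ witnessing non-saturation, covers $C^*$ with $C = B\setminus\{b\}$ by induction on the alphabet, and covers $B^*$ by the languages $HbK$ (factorization at the leftmost occurrence of $b$), whose images factor as $\rho(H)\rho(b)\rho(K)$ and are handled by induction on the index of $t_H = t_\ell\,\rho(H)\,\rho(b)$. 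Your reduction of $B^*$ to $\fullcont{B}$ plus strictly smaller alphabets, and your base case $B=\emptyset$, are fine; but without a mechanism of this kind (or some other way to cover $\fullcont{B}$ by \fod-definable languages with upper-bounded $\rho$-images), the step from the \fod-closure rule to $t_\ell\cdot\rho(H)\cdot t_r\in S$ does not go through.
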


Before proving Proposition~\ref{prop:fo2:construction}, let us use it to finish the proof of Theorem~\ref{thm:fo2:main}. Consider an arbitrary \fod-saturated set $S \subseteq R$. We apply Proposition~\ref{prop:fo2:construction} in the case when $B = A$ and $t_\ell,t_r = 1_R = \rho(\varepsilon)$ (which belongs to \itriv{\rho} and therefore to $S$ since $S$ is \fod-saturated). We obtain a universal \fod-cover \Kb such that $\rho(K) \in S$ for any $K \in \Kb$. Since $S$ is closed under downset, this implies $\prin{\rho}{\Kb} \subseteq S$ by definition, which concludes the completeness proof.

\medskip

It remains to prove Proposition~\ref{prop:fo2:construction}. We let $B \subseteq A$ and $t_\ell,t_r \in S$ be as in the statement of the proposition. Our objective is to construct an \fod-cover \Kb of $B^*$ such that for all $K \in \Kb$, we have $K \subseteq B^*$ and $t_\ell \cdot \rho(K) \cdot t_r \subseteq S$. The proof is by induction on three parameters. To present them, we need an additional definition. We define $S_B \subseteq S$ as the following subset of~$S$:
\[
  S_B = \{s \in S \mid \text{$s = \rho(K)$ for some nonempty language $K \subseteq B^*$}\}.
\]
The following fact is a direct consequence of $S$ being \fod-saturated.

\begin{fct} \label{fct:sbmono}
  $S_B$ is a submonoid of $R$ for multiplication.
\end{fct}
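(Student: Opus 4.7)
The plan is to verify the two defining properties of a submonoid directly from the definition of $S_B$, using the \fod-saturation of $S$ and the fact that $\rho$ is a \mratm.

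First, I would check that $1_R \in S_B$. Since $\rho$ is a \mratm, we have $\rho(\varepsilon) = 1_R$, and the singleton language $\{\varepsilon\}$ is a nonempty subset of $B^*$. Moreover, $1_R = \rho(\varepsilon) \in \itriv{\rho}$, and since $S$ is \fod-saturated, it contains $\itriv{\rho}$, hence $1_R \in S$. This gives $1_R \in S_B$.

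Next, for closure under multiplication, I would take $s, t \in S_B$ and exhibit a witnessing language for $st$. By definition of $S_B$, there exist nonempty languages $K, K' \subseteq B^*$ with $s = \rho(K)$ and $t = \rho(K')$. Consider the language $KK'$. It is clearly nonempty (as both $K$ and $K'$ are) and included in $B^*$. Since $\rho$ is a \mratm, the fourth axiom of \mratms gives $\rho(KK') = \rho(K) \cdot \rho(K') = st$. It remains to observe that $st \in S$: this follows directly from the closure of $S$ under multiplication, which is part of the definition of \fod-saturated sets. Thus $st = \rho(KK')$ with $KK'$ a nonempty subset of $B^*$ and $st \in S$, so $st \in S_B$.

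There is no real obstacle here: the statement is a straightforward consequence of (i) the multiplicativity of $\rho$ as a \mratm, (ii) closure of $B^*$ under concatenation, and (iii) the two relevant clauses in the definition of \fod-saturated subsets (containment of $\itriv{\rho}$ and closure under multiplication). The fact will be used later to drive the induction in the proof of Proposition~\ref{prop:fo2:construction}, but the fact itself requires only a one-line verification of each axiom.
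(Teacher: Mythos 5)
Your proof is correct and follows essentially the same route as the paper: the identity is witnessed by $\{\varepsilon\}\subseteq B^*$ together with $1_R=\rho(\varepsilon)\in\itriv{\rho}\subseteq S$, and closure under multiplication is witnessed by the concatenation $KK'\subseteq B^*$ using multiplicativity of $\rho$ and closure of $S$ under multiplication. Nothing is missing.
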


\begin{proof}
  Clearly, $1_R \in S_B$ since $1_R = \rho(\varepsilon) \in \itriv{\rho} \subseteq S$ and $\{\varepsilon\} \subseteq B^*$. For closure under multiplication, consider $q,r \in S_B$. By definition, $q,r \in S$ and we have nonempty $K_q,K_r \subseteq B^*$ such that $q = \rho(K_q)$ and $r = \rho(K_r)$. Since $S$ is closed under multiplication, we get $qr \in S$ and clearly, $qr = \rho(K_qK_r)$ with $K_qK_r \subseteq B^*$. Thus, $qr \in S_B$.
\end{proof}

We may now define our three induction parameters. The first and most important one is simply the size of the sub-alphabet $B$. The other two depend on $t_\ell$ and $t_r$ respectively. Given $s_1,s_2 \in S$, we say that,
\begin{itemize}
\item $s_2$ is \emph{{\bf left} $B$-reachable} from $s_1$ when there exists $x \in S_B$ such that $s_2 = x \cdot s_1$.
\item $s_2$ is \emph{{\bf right} $B$-reachable} from $s_1$ when there exists $x \in S_B$ such that $s_2 = s_1 \cdot x$.
\end{itemize}
Since $S_B$ is a monoid for multiplication by Fact~\ref{fct:sbmono}, it is immediate that left and right $B$-reachability are both preorder relations on the set $S$. For each $s \in S$, we define the \emph{left $B$-index of $s$} (resp.\ the \emph{right $B$-index of $s$}) as the number of elements that are left $B$-reachable (resp.\ right $B$-reachable) from $s$. We proceed by induction on the following parameters listed by order of importance:
\begin{enumerate}
\item The size of $B$.
\item The right $B$-index of $t_\ell$.
\item The left $B$-index of $t_r$.
\end{enumerate}

\medskip

We consider three cases depending on properties of $B$, $t_\ell$ and $t_r$ that we define below. First observe that since $S$ contains \itriv{\rho} by definition of \fod-saturated sets, it is immediate that for all $b \in B$, we have $\rho(b) \in S_B$. We may now present the three cases of our construction. They depend on the two following properties of $t_\ell$ and $t_r$ respectively.
\begin{itemize}
\item We say that $t_\ell$ is \emph{right saturated} (by $B$) if for all $b \in B$, there exists $x \in S_B$ and $t_\ell$ is right $B$-reachable from $t_\ell \cdot x \cdot \rho(b)$.
\item We say that $t_r$ is \emph{left saturated} (by $B$) if for all $b \in B$, there exists $x \in S_B$ and $t_r$ is left $B$-reachable from $\rho(b) \cdot x \cdot t_r$.
\end{itemize}

The base case happens when $t_\ell$ and $t_r$ are respectively right and left saturated by $B$. When $t_\ell$ is not right saturated, we use induction on $|B|$ and the right $B$-index of $t_\ell$. Finally, when $t_r$ is not left saturated, we use induction on $|B|$ and the left $B$-index of $t_r$. Let us start with the base case.

\subsubsection*{Base case: $t_\ell$ is right saturated by $B$ and $t_r$ is left
  saturated by $B$}

In that case, we simply choose $\Kb = \{B^*\}$, which is clearly an \fod-cover of $B^*$. We have to use our hypothesis to show that this choice satisfies the conditions of Proposition~\ref{prop:fo2:construction}. Clearly, $K\subseteq B^*$ for all $K\in\Kb$. It remains to prove that~\eqref{eq:goalfod} holds:
\[
  t_\ell \cdot \rho(B^*) \cdot t_r \in S.
\]
This is what we do now using the following lemma.

\begin{lemma} \label{lem:fo2:basecase}
  There exist idempotents $e,f \in S$ such that $\cont{e} = \cont{f} = \{B\}$, $t_\ell = t_\ell e$ and $t_r = f t_r$.
\end{lemma}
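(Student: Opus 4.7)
The plan is to construct $e$ from the right saturation hypothesis on $t_\ell$ by a standard ``summing over the alphabet'' trick, and to construct $f$ symmetrically from the left saturation of $t_r$. I will only describe the construction of $e$; the one for $f$ is dual.

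First I would introduce the set
\[
T = \{s \in S_B \mid t_\ell \cdot s = t_\ell\}.
\]
A one-line check using Fact~\ref{fct:sbmono} shows that $T$ is a submonoid of $S_B$, so in particular $T$ is closed under multiplication. The right-saturation hypothesis provides, for every $b \in B$, elements $x_b, y_b \in S_B$ with $t_\ell = t_\ell \cdot x_b \cdot \rho(b) \cdot y_b$; hence $z_b := x_b \cdot \rho(b) \cdot y_b \in T$. I would then form the product $z = \prod_{b\in B} z_b \in T$ (any order will do).

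The key step is to verify that $\cont{z} = \{B\}$. By definition of $S_B$, I may write $x_b = \rho(X_b)$ and $y_b = \rho(Y_b)$ for nonempty $X_b, Y_b \subseteq B^*$, so that $z_b = \rho(X_b \{b\} Y_b)$ and $z = \rho\bigl(\prod_{b\in B}(X_b \{b\} Y_b)\bigr)$. Every word in this concatenation lies in $B^*$ and contains each letter $b \in B$ (contributed by the middle factor of the $b$-th block), so its alphabet equals $B$. By~\eqref{eq:fo2:compat}, this gives $\cont{z} = \{B\}$. Setting $e = z^\omega$, closure of $S$ under multiplication gives $e \in S$; idempotency is the definition of the $\omega$-power; $t_\ell \cdot e = t_\ell$ follows from $t_\ell z = t_\ell$ iterated; and $\cont{e} = \{B\}$ follows from $e = \rho\bigl((\prod_b X_b\{b\}Y_b)^\omega\bigr)$ together with $\omega \geq 1$ and the fact that any nonempty concatenation of words all having alphabet $B$ again has alphabet $B$.

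The only mildly delicate point—and what I expect to be the main obstacle to write cleanly—is that the alphabet-compatibility condition~\eqref{eq:fo2:compat} is stated for the monoid morphism $(R,+) \to (2^{2^A}, \cup)$, so one cannot invoke a multiplicative property of $\cont{-}$ directly. Instead I must reason at the language level by exhibiting a concrete witness set $K \subseteq B^*$ with $z = \rho(K)$ and $\{\cont{w} : w \in K\} = \{B\}$, and then apply~\eqref{eq:fo2:compat} once to this particular $K$. For $f$, the symmetric argument starts from the left-saturation of $t_r$, builds elements $z'_b = x'_b \cdot \rho(b) \cdot y'_b \in T' := \{s \in S_B \mid s \cdot t_r = t_r\}$, takes $z' = \prod_b z'_b$, and sets $f = (z')^\omega$.
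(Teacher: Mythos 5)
Your proof is correct and takes essentially the same approach as the paper: the paper also constructs $e$ as the $\omega$-power of the product $\prod_{b\in B} x_b\,\rho(b)\,y_b$ furnished by the right-saturation hypothesis, and likewise verifies $\cont{e}=\{B\}$ by exhibiting the witness language $\bigl(\prod_b H_b\{b\}K_b\bigr)^\omega\subseteq\fullcont{B}$ and applying alphabet-compatibility at the language level. Your explicit introduction of the stabilizer submonoid $T$ is a minor organizational convenience rather than a different argument.
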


Before proving the lemma, let us use it to conclude this case. Let $e,f$ be as defined in the lemma. This yields,
\[
  t_\ell \cdot \rho(B^*) \cdot t_r = t_\ell \cdot e \cdot \rho(B^*) \cdot f \cdot t_r.
\]
Moreover, since $e,f \in S$ are idempotents such that $\cont{e} = \cont{f} = B$, we know from \fod-closure in the definition of \fod-saturated sets that,
\[
  e \cdot \rho(B^*) \cdot f \in S.
\]
Since $t_\ell,t_r \in S$, one may now use closure under multiplication in the definition of \fod-saturated sets to obtain that $t_\ell \cdot \rho(B^*) \cdot t_r \in S$ as desired. This concludes the proof of the base case. It remains to show Lemma~\ref{lem:fo2:basecase} in order to finish this case.

\begin{proof}[Proof of Lemma~\ref{lem:fo2:basecase}]
  We only prove the existence of $e$ using the fact that $t_\ell$ is right saturated (the existence of $f$ is proved symmetrically using the fact that $t_r$ is left saturated). By definition of right saturation, we know that for each $b \in B$, there exists $x_b,y_b \in S_B$ such that,
  \[
    t_\ell = t_\ell x_b \cdot \rho(b) \cdot y_b.
  \]
  Let $B = \{b_1,\dots,b_n\}$. We define $e$ as the following element (where $\omega\in\mathbb{N}$ denotes the idempotent power for the multiplication of $R$):
  \[
    e =  (x_{b_1} \cdot \rho(b_1) \cdot y_{b_1} x_{b_2} \cdot \rho(b_2) \cdot y_{b_2}\cdots x_{b_n} \cdot \rho(b_n) \cdot y_{b_n})^\omega.
  \]
  By definition, $e$ is idempotent and we have $t_\ell = t_\ell e$. Moreover, we have $e \in S$ since it is a multiplication of elements in $S$ and \fod-saturated sets are closed under multiplication. It remains to prove that $\cont{e}$ is equal to $\{B\}$.

  By hypothesis, for every $b \in B$, we have $x_b,y_b \in S_B$ which yields nonempty languages $H_b,K_b \subseteq B^*$ such that $x_b = \rho(H_b)$ and $y_b = \rho(K_b)$. Consequently,
  \[
    e = \rho((H_{b_1}b_1K_{b_1}H_{b_2}b_2K_{b_2} \cdots H_{b_n}b_nK_{b_n})^\omega).
  \]
  Clearly, the language $(H_{b_1}b_1K_{b_1}H_{b_2}b_2K_{b_2} \cdots H_{b_n}b_nK_{b_n})^\omega$ is nonempty and included in $\fullcont{B}$ since $b_1,\dots,b_n$ account for all letters in $B$. By definition of alphabet compatible \ratms, this yields $\cont{e} = \{B\}$.
\end{proof}

It remains to treat the inductive case, when either $t_\ell$ is not right saturated or $t_r$ is not left saturated. Since these two cases are symmetrical, we only consider the one when $t_\ell$ is not right saturated.

\subsubsection*{Inductive case: $t_\ell$ is not right saturated}

the fact that $t_\ell$ is not right saturated means that there exists some $b \in B$ such that for all $q \in S_B$, the element $t_\ell$ is {\bf not} right $B$-reachable from $t_\ell q \cdot \rho(b)$. Let us fix such a letter $b$. We use it to construct our \fod-cover \Kb of $B^*$.

Let $C = B \setminus \{b\}$. Using induction on the size of the alphabet we may apply Proposition~\ref{prop:fo2:construction} in the case $t_\ell = t_r = 1_R$ (recall that the alphabet is the most important induction parameter), to get an \fod-cover \Hb of $C^*$ such that for all $H \in \Hb$:
\begin{equation} \label{eq:fo2:hprop1}
  H \subseteq C^* \quad \text{and} \quad \rho(H) \in S.
\end{equation}
For all $H \in \Hb$, we define $t_H = t_\ell \cdot \rho(H) \cdot \rho(b)$. By definition, we have the following fact.

\begin{fct} \label{fct:thesubcover}
  For all $H \in \Hb$, we have $t_H \in S$.
\end{fct}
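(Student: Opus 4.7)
The proof is a direct three-line computation using the closure properties of \fod-saturated sets, so I would keep it extremely short. The plan is to unwind the definition $t_H = t_\ell \cdot \rho(H) \cdot \rho(b)$ and check that each of the three factors already lies in $S$, then invoke closure under multiplication.

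For the factor $t_\ell$, membership in $S$ is an outright hypothesis of Proposition~\ref{prop:fo2:construction}. For $\rho(H)$, I would cite~\eqref{eq:fo2:hprop1}, which states that $\rho(H) \in S$ for every $H \in \Hb$ produced by the inductive application of Proposition~\ref{prop:fo2:construction} to the smaller alphabet $C = B \setminus \{b\}$. For $\rho(b)$, the observation is that $\rho(b) \in \itriv{\rho}$ by definition of $\itriv{\rho}$, and $\itriv{\rho} \subseteq S$ because $S$ is \fod-saturated.

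Finally, since $S$ is \fod-saturated, it is closed under multiplication, and two applications of this closure give $t_\ell \cdot \rho(H) \cdot \rho(b) \in S$, which is exactly $t_H \in S$. There is no real obstacle here: every required property is immediate from the definition of \emph{\fod-saturated} stated just before Theorem~\ref{thm:fo2:main}. The fact is purely bookkeeping, intended to set up the next step of the inductive construction where $t_H$ will play the role of a new $t_\ell$ in a recursive call with strictly smaller right $B$-index.
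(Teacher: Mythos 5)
Your proof is correct and matches the paper's argument exactly: the paper also observes that $t_\ell$, $\rho(H)$ (via~\eqref{eq:fo2:hprop1}) and $\rho(b)$ (via $\itriv{\rho} \subseteq S$) all lie in $S$ and concludes by closure under multiplication. Nothing further is needed.
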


\begin{proof}
  Immediate by closure under multiplication since $t_\ell, \rho(H)$ and $\rho(b)$ all belong to $S$.
\end{proof}
Moreover, for any $H \in \Hb$, we have $H \subseteq C^* \subseteq B^*$ and $\rho(H) \in S$, which yields $\rho(H) \in S_B$. By choice of the letter $b$, we know that $t_\ell$ is not right $B$-reachable from $t_H$. In particular, it follows that the right $B$-index of $t_H$ is strictly smaller than the right $B$-index of $t_\ell$. Hence, for any $H \in \Hb$, we may use induction in Proposition~\ref{prop:fo2:construction} to construct an \fod-cover $\Kb_H$ of $B^*$ such that for any $K \in \Kb_H$,
\begin{equation} \label{eq:fo2:hprop2}
  K \subseteq B^* \quad \text{and} \quad t_H \cdot \rho(K) \cdot t_r \in S.
\end{equation}
We are now ready to construct our \fod-cover \Kb of $B^*$ which satisfies the properties described in Proposition~\ref{prop:fo2:construction}.  We define,
\[
  \Kb = \Hb \cup \bigcup_{H \in \Hb} \{HbK \mid K \in \Kb_H\}.
\]
It remains to prove that \Kb is an \fod-cover of $B^*$ and that~\eqref{eq:goalfod} holds. We begin by proving that all languages $K \in \Kb$ are \fod-definable. This is immediate if $K \in \Hb$ (by construction of \Hb). Otherwise, $K = HbK'$ with $H \in \Hb$ and $K' \in \Kb_H$. Observe that by definition, $H \subseteq C^*$, $b \in B \setminus C$. Thus, a word $w$ is in $K = HbK'$ if and only if it satisfies the following three properties:
\begin{enumerate}
\item $w$ contains the letter ``$b$''.
\item The prefix of $w$ obtained by keeping all positions that come strictly before the leftmost ``$b$'' belongs to $H$.
\item The suffix of $w$ obtained by keeping all positions that come strictly after the leftmost ``$b$'' belongs to $K'$.
\end{enumerate}
Hence, it suffices to show that these three properties can be expressed in \fod. The first one is defined by ``$\exists x\ P_b(x)$''. For the second and third properties, observe that one can select the position carrying the leftmost ``$b$'' with the following \fod formula:
\[
  \varphi(x)\ :=\  P_b(x) \wedge \neg \exists y\ (P_b(y) \wedge y < x).
\]
Therefore a sentence for the second property above can be obtained from a sentence $\Psi_H$ that defines $H$ (which exists by hypothesis on $H$) by restricting all quantifications to positions $x$ that satisfy the formula $\exists y\ x < y \wedge \varphi(y)$. Similarly, a sentence for the third property can be obtained from a sentence $\Psi_{K'}$ that defines $K'$ (which exists by hypothesis on $K'$) by restricting all quantifications to positions $x$ that satisfy the formula $\exists y\ y < x \wedge \varphi(y)$.

\medskip

We now prove that $\Kb$ is a "cover" of $B^*$. Let $w \in B^*$, we exhibit $K \in \Kb$ such that $w \in K$. If $w \in C^*$, it belongs to some $H \in \Hb \subseteq \Kb$ since $\Hb$ is a "cover" of $C^*$. Otherwise, $b \in \cont{w}$ and $w$ can be decomposed as $w = ubv$ with $u \in C^*$ and $v \in B^*$ (the highlighted ``$b$'' is the leftmost one in $w$). Since $\Hb$ is a "cover" of $C^*$, there exists $H \in \Hb$ such that $u \in H$. Moreover, since $\Kb_H$ is a "cover" of $B^*$, we get $K \in \Kb_H$ such that $v \in K$. It follows that $w \in HbK$, and the language $HbK$ belongs to \Kb by definition.

\medskip

It remains to prove that~\eqref{eq:goalfod} holds: for any $K \in \Kb$, $K \subseteq B^*$ and $t_\ell \cdot \rho(K) \cdot t_r \in S$. Let $K \in \Kb$. That $K \subseteq B^*$ is immediate: it is a concatenation of languages included in $B^*$ by definition. We show that $t_\ell \cdot \rho(K) \cdot t_r \in S$. If $K \in \Hb$, then $\rho(K) \in S$ by~\eqref{eq:fo2:hprop1}. Since $t_\ell,t_r \in S$ it then follows from closure under multiplication that $t_\ell \cdot \rho(K) \cdot t_r \in S$. Assume now that $K =HbK'$ for $H \in \Hb$ and $K' \in \Kb_H$. Since $\rho$ is a \mratm, we have,
\[
  \rho(K) = \rho(H) \cdot \rho(b) \cdot \rho(K').
\]
Recall that $t_H = t_\ell \cdot \rho(H) \cdot \rho(b)$. Therefore, $t_\ell \cdot \rho(K) \cdot t_r = t_H \cdot \rho(K') \cdot t_r$. Finally, since $K' \in \Kb_H$, we have $t_H \cdot \rho(K') \cdot t_r \in S$ by~\eqref{eq:fo2:hprop2}. This concludes the proof of Proposition~\ref{prop:fo2:construction}.

\section{General approach for full covering}
\label{sec:genlatts}
We now turn to the full "\Cs-covering problem". We generalize the methodology presented in Section~\ref{sec:genba}.

\begin{remark}
  This generalized methodology applies to any \pvari of regular languages~\Cs. However, it is more involved than the one which we already presented for universal "\Cs-covering" in Section~\ref{sec:genba}. Hence, it is only meant to be used for lattices that are \emph{not} closed under complement. When dealing with a \vari of regular languages, universal and full covering are equivalent by Proposition~\ref{prop:bgen:eqpoint}. Hence, one may just use the simple methodology of Section~\ref{sec:genba}.
\end{remark}

We begin by presenting a formal reduction from full "\Cs-covering" to a problem whose input is a "\nice" "\mratm". For technical reasons, this requires to slightly generalize our terminology on "\ratms".

\subsection{Optimal pointed \kl{\imprints}}
\label{sec:reform-cover-probl}

As expected, the key idea behind our approach is to reduce "\Cs-covering" to the problem of computing \copti{L,\rho} from a regular language $L$ and a "\nice" "\mratm" $\rho$ (this is exactly what we did for the special case $L = A^*$ in Section~\ref{sec:genba}). However, the situation is slightly more complicated here. While getting such a reduction is fairly simple (given the results that we already have), we shall need to consider an object that is more involved than \copti{L,\rho}.

More precisely, our approach does not work directly with \copti{L,\rho} as this set does not have strong enough properties. Recall that in the case $L = A^*$, what made $"\copti{\rho}" = \copti{A^*,\rho}$ the ``right object'' is Lemma~\ref{lem:boptsemi}: \copti{\rho} is a submonoid of $R$. This is not the case for \copti{L,\rho} in general. We cope with this problem by considering a monoid morphism $\alpha: A^* \to M$ recognizing the language $L$. Our approach considers all sets \opti{\Cs}{\alpha\inv(s),\rho} for $s \in M$ \emph{simultaneously}. As we show in the following proposition, this is enough information to recover \copti{L,\rho}.

\begin{proposition} \label{prop:isreg}
  Let \Cs be a lattice of regular languages and let $L$ be a language recognized by a morphism $\alpha: A^* \to M$ for the accepting set $F$ (\emph{i.e.}, $L = \alpha\inv(F)$). Let $\rho: 2^{A^*} \to R$ be a "\ratm". Then, the following properties hold:
  \begin{itemize}
  \item We have $\opti{\Cs}{L,\rho} = \bigcup_{s \in F} \opti{\Cs}{\alpha\inv(s),\rho}$.
  \item If for any $s \in M$, we have an optimal "\Cs-cover" $\Kb_s$ of $\alpha\inv(s)$ for $\rho$, then $\Kb = \bigcup_{s \in F} \Kb_s$ is an optimal "\Cs-cover" of $L$ for $\rho$.
  \end{itemize}
\end{proposition}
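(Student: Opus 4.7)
The plan is to prove the two items in turn, the second following almost immediately from the first together with a basic distributivity property of $\rho$-\imprints over union. Throughout the argument, the key fact I will use repeatedly is that for any two finite sets of languages $\Kb_1,\Kb_2$, we have $\prin{\rho}{\Kb_1 \cup \Kb_2} = \prin{\rho}{\Kb_1} \cup \prin{\rho}{\Kb_2}$, which is immediate from the definition of $\prin{\rho}{\Kb}$ as the downward closure of $\{\rho(K) \mid K \in \Kb\}$.

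For the first item, I would prove the two inclusions separately. The inclusion $\bigcup_{s \in F} \opti{\Cs}{\alpha\inv(s),\rho} \subseteq \opti{\Cs}{L,\rho}$ is immediate from Fact~\ref{fct:linclus}, since $\alpha\inv(s) \subseteq L$ for each $s \in F$. For the converse, I would argue by contraposition: take $r \in R$ and suppose $r \notin \opti{\Cs}{\alpha\inv(s),\rho}$ for every $s \in F$. By definition of the \Cs-optimal imprint (together with Lemma~\ref{lem:bgen:opt}, which ensures optimal covers exist), this means that for each $s \in F$ we can choose a "\Cs-cover" $\Kb_s$ of $\alpha\inv(s)$ such that $r \notin \prin{\rho}{\Kb_s}$. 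Since $L = \alpha\inv(F) = \bigcup_{s \in F} \alpha\inv(s)$ and $F$ is finite, the set $\Kb = \bigcup_{s \in F} \Kb_s$ is a finite "\Cs-cover" of $L$. The distributivity fact above then gives $\prin{\rho}{\Kb} = \bigcup_{s \in F} \prin{\rho}{\Kb_s}$, so $r \notin \prin{\rho}{\Kb}$, whence $r \notin \opti{\Cs}{L,\rho}$.

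For the second item, assume each $\Kb_s$ is an "optimal" "\Cs-cover" of $\alpha\inv(s)$ for $\rho$. The union $\Kb = \bigcup_{s \in F} \Kb_s$ is a finite "\Cs-cover" of $L$ (finite because $F$ is finite and each $\Kb_s$ is finite, a "cover" of $L$ because $L = \bigcup_{s \in F} \alpha\inv(s)$, and made of languages in \Cs by hypothesis). Applying distributivity and then optimality of each $\Kb_s$ yields
\[
  \prin{\rho}{\Kb} \;=\; \bigcup_{s \in F} \prin{\rho}{\Kb_s} \;=\; \bigcup_{s \in F} \opti{\Cs}{\alpha\inv(s),\rho},
\]
which equals $\opti{\Cs}{L,\rho}$ by the first item. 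Hence \Kb is "optimal" for $\rho$.

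There is no real obstacle in this proof: it is essentially a set-theoretic manipulation built on two ingredients, namely the distributivity of $\rho$-\imprints over union of cover sets and the existence of optimal covers from Lemma~\ref{lem:bgen:opt}. The only small subtlety worth flagging explicitly is the equivalence ``$r \notin \opti{\Cs}{H,\rho}$ iff some \Cs-cover of $H$ has $r$ outside its $\rho$-\imprint'', which follows from the fact that the optimal imprint is, by definition, contained in every other imprint; this is what enables the contraposition argument in the nontrivial direction of item one.
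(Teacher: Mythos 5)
Your proof is correct. It inverts the order of the paper's argument: the paper establishes the \emph{second} item first, by directly verifying that $\Kb = \bigcup_{s \in F} \Kb_s$ has the smallest $\rho$-\imprint among all \Cs-covers of $L$ (for $r \in \prin{\rho}{\Kb}$, one traces $r$ back to some $\Kb_s$, uses optimality there, then notes that any other \Cs-cover of $L$ is also a \Cs-cover of $\alpha\inv(s)$), and then reads off the first item as a consequence. You instead prove the first item directly, using Fact~\ref{fct:linclus} for the easy inclusion and a contraposition argument with the assembled cover for the hard one, and then derive the second item. Both routes rest on the same two ingredients — that the $\rho$-\imprint distributes over a union of cover sets, and that $\alpha\inv(s) \subseteq L$ — so the difference is essentially one of presentation. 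The paper's order has the advantage of never needing contraposition: optimality of $\Kb$ is checked against an arbitrary competing cover, and the first item is a one-line corollary. Your order has the advantage that the first item (arguably the more quotable statement) is proved self-contained from the definition of $\opti{\Cs}{\cdot,\rho}$, and the second item becomes a one-line verification. Either is a clean write-up.
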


\begin{proof}
  We start with the second item. For any $s \in M$, we let $\Kb_s$ be an optimal "\Cs-cover" of $\alpha\inv(s)$ for~$\rho$. Since $L = \alpha\inv(F)$ and $\Kb_s$ is a "\Cs-cover" of $\alpha\inv(s)$ for any $s \in M$, we know that $\Kb = \bigcup_{s \in F} \Kb_s$ is a "\Cs-cover" of $L$. We need to show that it is optimal. By definition, this amounts to proving that for any arbitrary "\Cs-cover" $\Kb'$ of $L$, we have $\prin{\rho}{\Kb} \subseteq \prin{\rho}{\Kb'}$. Let $r \in \prin{\rho}{\Kb}$. By definition, $r \leq \rho(K)$ for some $K \in \Kb$. Moreover, $K \in \Kb_s$ for some $s \in F$. Since $\Kb_s$ is an optimal "\Cs-cover" of $\alpha\inv(s)$ for $\rho$, it follows that $r \in \opti{\Cs}{\alpha\inv(s),\rho}$. Finally, since $\Kb'$ is a "\Cs-cover" of $L$, it is also a "\Cs-cover" of $\alpha\inv(s) \subseteq L$.   Hence, $\opti{\Cs}{\alpha\inv(s),\rho} \subseteq \prin{\rho}{\Kb'}$ which yields $r \in \prin{\rho}{\Kb'}$, as desired.

  We finish with the first item. We just proved that $\Kb = \bigcup_{s \in F} \Kb_s$ is an optimal "\Cs-cover" of~$L$. Therefore, $\opti{\Cs}{L,\rho} = \prin{\rho}{\Kb}$. Moreover, it is immediate from the definition of "\imprints" that $\prin{\rho}{\Kb}  = \bigcup_{s \in F} \prin{\rho}{\Kb_s}$. Since $\Kb_s$ is an optimal "\Cs-cover" of $\alpha\inv(s)$ for $\rho$, which means that $\prin{\rho}{\Kb_s} = \opti{\Cs}{\alpha\inv(s),\rho}$, we obtain $\opti{\Cs}{L,\rho} = \bigcup_{s \in F} \opti{\Cs}{\alpha\inv(s),\rho}$.
\end{proof}

Altogether, this means that we shall be looking for algorithms which take a morphism $\alpha: A^* \to M$ and a "\nice" "\mratm" $\rho: 2^{A^*} \to R$ as input and compute all \Cs-optimal $\rho$-"\imprints" $\opti{\Cs}{\alpha\inv(s),\rho}$ for $s \in M$. It will be convenient to have a single notation recording all these objects. This leads to the notion of optimal \emph{pointed} "\imprints".

\medskip

Given a lattice \Cs, a morphism $\alpha: A^* \to M$ and a "\ratm" $\rho: 2^{A^*} \to R$, we define the following subset of $M \times R$:
\[
  \popti{\Cs}{\alpha,\rho} = \{(s,r) \in M \times R \mid r \in \opti{\Cs}{\alpha\inv(s),\rho}\}
\]
We call \popti{\Cs}{\alpha,\rho} the \emph{\Cs-optimal $\alpha$-pointed $\rho$-\imprint}. Clearly, the single set \popti{\Cs}{\alpha,\rho} encodes all the sets \copti{\alpha\inv(s),\rho} for $s \in M$. We first present a few properties of this new object and then formally reduce full "\Cs-covering" to the problem of computing \popti{\Cs}{\alpha,\rho}.

\medskip
\noindent
{\bf Properties.} As expected, a first property is that \popti{\Cs}{\alpha,\rho} always contains trivial elements. Recall that given any language $L$ and any "\ratm" $\rho: 2^{A^*} \to R$, we defined $\itriv{L,\rho} \subseteq R$ as the set,
\[
  \itriv{L,\rho} = \{r \in R \mid \text{$\exists\,w \in L$ such that $r \leq \rho(w)$}\}.
\]
We may extend this notation to make it match our new terminology, ``pointed \imprints''. Given a morphism $\alpha: A^* \to M$ and a "\ratm" $\rho: 2^{A^*} \to R$, we define,
\[
  \begin{array}{lll}
    \ptriv{\alpha,\rho} & = & \{(s,r) \in M \times R \mid r \in \itriv{\alpha\inv(s),\rho}\}\\
                        & = & \{(s,r) \in M \times R \mid \text{$\exists\,w \in \alpha\inv(s)$ such that $r \leq \rho(w)$}\}.
  \end{array}
\]
Clearly, when $\rho$ is a "\nice" "\mratm", one may compute \ptriv{\alpha,\rho} from $\alpha$ and $\rho$ by Lemma~\ref{lem:bgen:evtriv}. Moreover, the next fact is immediate from the definition of \popti{\Cs}{\alpha,\rho} and Fact~\ref{fct:bgen:trivialsets}.

\begin{fct} \label{fct:ptriv}
  Let \Cs be a lattice. Consider a  morphism $\alpha: A^* \to M$ and a "\ratm" $\rho: 2^{A^*} \to R$. Then, we have $\ptriv{\alpha,\rho} \subseteq \popti{\Cs}{\alpha,\rho}$.
\end{fct}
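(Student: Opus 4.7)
The plan is to verify the inclusion pointwise in the first coordinate, reducing it immediately to the non-pointed analogue already established as Fact~\ref{fct:bgen:trivialsets}. Unfolding the definitions, an element $(s,r) \in M \times R$ lies in $\ptriv{\alpha,\rho}$ precisely when $r \in \itriv{\alpha\inv(s),\rho}$, and it lies in $\popti{\Cs}{\alpha,\rho}$ precisely when $r \in \opti{\Cs}{\alpha\inv(s),\rho}$. So the claim reduces to showing, for every $s \in M$, the inclusion
\[
\itriv{\alpha\inv(s),\rho} \;\subseteq\; \opti{\Cs}{\alpha\inv(s),\rho}.
\]

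To get this inclusion, I would fix $s \in M$ and set $L = \alpha\inv(s)$. Since \Cs is a lattice, Lemma~\ref{lem:bgen:opt} guarantees the existence of some "optimal" "\Cs-cover" \Kb of $L$ for $\rho$, and by definition of "optimality" we have the equality $\opti{\Cs}{L,\rho} = \prin{\rho}{\Kb}$. Applying Fact~\ref{fct:bgen:trivialsets} to this \Kb (which, being a "\Cs-cover", is in particular a "cover" of $L$) yields $\itriv{L,\rho} \subseteq \prin{\rho}{\Kb} = \opti{\Cs}{L,\rho}$, which is exactly what is needed.

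There is no real obstacle here: the statement is a pointwise repackaging of Fact~\ref{fct:bgen:trivialsets} through the morphism $\alpha$, and the only ingredient besides that fact is the existence of at least one "optimal" "\Cs-cover", which is already in hand via Lemma~\ref{lem:bgen:opt}. Combining the pointwise inclusions over all $s \in M$ then gives $\ptriv{\alpha,\rho} \subseteq \popti{\Cs}{\alpha,\rho}$, as announced.
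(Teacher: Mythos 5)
Your proof is correct and takes essentially the same approach as the paper: the paper states the fact is immediate from the definition of $\popti{\Cs}{\alpha,\rho}$ together with Fact~\ref{fct:bgen:trivialsets}, which is precisely the pointwise reduction (over $s \in M$, with $L = \alpha\inv(s)$) that you carry out in detail. The appeal to Lemma~\ref{lem:bgen:opt} for existence of an optimal $\Cs$-cover is a reasonable bit of bookkeeping that the paper leaves implicit.
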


More importantly, we recover the key property of optimal universal "\imprints": when \Cs is a \pvari of regular languages and $\rho: 2^{A^*} \to R$ is a "\mratm", \pcopti{\alpha,\rho} is a submonoid of $M \times R$ for the componentwise multiplication. This is a simple corollary of Lemma~\ref{lem:bgen:optsemi}.

\begin{lemma} \label{lem:pointedmono}
  Let \Cs be a \pvari of regular languages. Consider a morphism $\alpha: A^* \to M$ and a "\mratm" $\rho: 2^{A^*} \to R$. Then \popti{\Cs}{\alpha,\rho} is a submonoid of $M \times R$:
  \begin{itemize}
  \item We have $(1_M,1_R) \in \popti{\Cs}{\alpha,\rho}$.
  \item For any $(s,q),(t,r) \in \popti{\Cs}{\alpha,\rho}$, we have $(st,qr) \in \popti{\Cs}{\alpha,\rho}$.
  \end{itemize}
\end{lemma}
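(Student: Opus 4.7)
The plan is to unfold the definition of $\popti{\Cs}{\alpha,\rho}$ and reduce both assertions to results already established earlier in the paper, namely Fact~\ref{fct:ptriv} (or equivalently Fact~\ref{fct:bgen:trivialsets}), Lemma~\ref{lem:bgen:optsemi}, and Fact~\ref{fct:linclus}. Recall that $(s,r) \in \popti{\Cs}{\alpha,\rho}$ means exactly that $r \in \opti{\Cs}{\alpha\inv(s),\rho}$, so the two items boil down to assertions about ordinary \Cs-"optimal" $\rho$-\imprints of the regular languages $\alpha\inv(s)$.

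For the first item, I would observe that since $\alpha$ is a monoid morphism, $\alpha(\varepsilon) = 1_M$, hence $\varepsilon \in \alpha\inv(1_M)$. Moreover, since $\rho$ is a "\mratm", $\rho(\varepsilon) = 1_R$ by axiom~\ref{itm:bgen:funit}. Therefore $1_R \in \itriv{\alpha\inv(1_M),\rho}$, and Fact~\ref{fct:bgen:trivialsets} (which states $\itriv{L,\rho} \subseteq \opti{\Cs}{L,\rho}$ for any language $L$, since \Cs-covers of $L$ always contain these trivial elements) yields $1_R \in \opti{\Cs}{\alpha\inv(1_M),\rho}$. This is exactly $(1_M,1_R) \in \popti{\Cs}{\alpha,\rho}$.

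For the second item, given $(s,q),(t,r) \in \popti{\Cs}{\alpha,\rho}$, I have $q \in \opti{\Cs}{\alpha\inv(s),\rho}$ and $r \in \opti{\Cs}{\alpha\inv(t),\rho}$. The crux is then to apply Lemma~\ref{lem:bgen:optsemi} (which is precisely the lemma requiring \Cs to be a \pvari of regular languages and $\rho$ to be a "\mratm"), obtaining
\[
 qr \in \opti{\Cs}{\alpha\inv(s) \cdot \alpha\inv(t),\rho}.
\]
Since $\alpha$ is a morphism, the inclusion $\alpha\inv(s)\cdot \alpha\inv(t) \subseteq \alpha\inv(st)$ is immediate (any product $uv$ with $\alpha(u)=s$ and $\alpha(v)=t$ satisfies $\alpha(uv)=st$). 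Applying Fact~\ref{fct:linclus} then gives $\opti{\Cs}{\alpha\inv(s)\cdot \alpha\inv(t),\rho} \subseteq \opti{\Cs}{\alpha\inv(st),\rho}$, so $qr \in \opti{\Cs}{\alpha\inv(st),\rho}$, i.e.\ $(st,qr) \in \popti{\Cs}{\alpha,\rho}$.

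There is no real obstacle here: the proof is just a direct combination of previously established facts, and the only subtlety is to notice that the hypotheses on \Cs and on $\rho$ are used entirely through the invocation of Lemma~\ref{lem:bgen:optsemi}. All the rest is a routine manipulation of the definitions of pointed \imprints and of monoid morphisms.
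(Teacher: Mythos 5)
Your proof is correct and follows essentially the same route as the paper: the unit is handled via the trivial elements ($(1_M,1_R)=(\alpha(\varepsilon),\rho(\varepsilon))$, Fact~\ref{fct:ptriv}/Fact~\ref{fct:bgen:trivialsets}), and closure under multiplication via Lemma~\ref{lem:bgen:optsemi} together with the inclusion $\alpha\inv(s)\alpha\inv(t)\subseteq\alpha\inv(st)$ and Fact~\ref{fct:linclus}. Nothing is missing.
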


\begin{proof}
  That $(1_M,1_R) \in \popti{\Cs}{\alpha,\rho}$ is immediate from Fact~\ref{fct:ptriv}: $\ptriv{\alpha,\rho} \subseteq \popti{\Cs}{\alpha,\rho}$. Indeed, by definition of "\mratms", we have $(1_M,1_R) = (\alpha(\varepsilon),\rho(\varepsilon))$. Moreover, it is immediate by definition that $(\alpha(\varepsilon),\rho(\varepsilon))  \in \ptriv{\alpha,\rho}$.

  Closure under multiplication follows from Lemma~\ref{lem:bgen:optsemi}. Assume that $(s,q),(t,r) \in \popti{\Cs}{\alpha,\rho}$. By definition, this means that $q \in \opti{\Cs}{\alpha\inv(s),\rho}$ and $r \in \opti{\Cs}{\alpha\inv(t),\rho}$. Thus, we get from Lemma~\ref{lem:bgen:optsemi} that $qr \in \opti{\Cs}{\alpha\inv(s)\alpha\inv(t),\rho}$. Moreover, since $\alpha$ is a morphism, we have $\alpha\inv(s)\alpha\inv(t) \subseteq \alpha\inv(st)$. Thus, Fact~\ref{fct:linclus} yields $qr \in \opti{\Cs}{\alpha\inv(st),\rho}$ which exactly says that $(st,qr) \in \popti{\Cs}{\alpha,\rho}$.
\end{proof}

\medskip
\noindent
{\bf Reduction.} Let us finally connect optimal pointed \imprints with the full covering problem. We do so with the following proposition, which states a reduction generalizing the one of Proposition~\ref{prop:breduc}.

\begin{proposition}\label{prop:lreduc}
  Consider a lattice \Cs. There exists a polynomial space reduction from "\Cs-covering" (for input languages given by \nfas or monoid morphisms) to the following decision problem:
  \begin{center}
    \begin{tabular}{ll}
      {\bf Input:}    & A morphism $\alpha: A^* \to M$ and a subset $F_M \subseteq M$. \\ & A "\nice" "\mratm" $\rho: 2^{A^*} \to R$ and a subset $F_R \subseteq R$. \\
      {\bf Question:} & Do we have $(F_M \times F_R) \cap \pcopti{\alpha,\rho} = \emptyset$?
    \end{tabular}
  \end{center}
\end{proposition}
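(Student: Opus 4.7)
The plan is to generalize the argument behind Proposition~\ref{prop:breduc}, replacing the universal language $A^*$ by the arbitrary input language $L$. This forces us to recognise $L$ by a monoid morphism and to work with pointed versions of "optimal" "\imprints". Given an instance $(L,\Lb)$ of "\Cs-covering", I would first produce a monoid morphism $\alpha : A^* \to M$ together with an accepting set $F_M \subseteq M$ such that $L = \alpha\inv(F_M)$. If $L$ is given by a monoid morphism this is direct; if $L$ is given by an \nfa, I take the associated transition monoid morphism, which has at most exponential size but can be manipulated in polynomial space. I would then apply Proposition~\ref{prop:bgen:mratmeff} to compute a "\nice" "\mratm" $\rho : 2^{A^*} \to R$ together with a morphism $\delta : R \to 2^\Lb$ such that $\rho$ extends $\rho_\Lb$ for the extending morphism $\delta$, as well as the set $F_R = \delta\inv(\Lb) \subseteq R$. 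All of these computations fit in polynomial space.

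It then remains to prove the key equivalence: $(L,\Lb)$ is \Cs-coverable if and only if $(F_M \times F_R) \cap \pcopti{\alpha,\rho} = \emptyset$. My plan is to chain three facts. By Theorem~\ref{thm:bgen:main}, $(L,\Lb)$ is \Cs-coverable if and only if $\Lb \notin \opti{\Cs}{L,\rho_\Lb}$. By Proposition~\ref{prop:isreg} applied to $\alpha$ and $F_M$, we have $\opti{\Cs}{L,\rho_\Lb} = \bigcup_{s \in F_M} \opti{\Cs}{\alpha\inv(s),\rho_\Lb}$. And by Lemma~\ref{lem:bgen:extension}, for every $s \in M$ one has $\opti{\Cs}{\alpha\inv(s),\rho_\Lb} = \dclos \delta(\opti{\Cs}{\alpha\inv(s),\rho})$.

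Since $\Lb$ is the maximum element of $2^\Lb$, membership of $\Lb$ in a downward-closed set of the form $\dclos \delta(X)$ is equivalent to the existence of some $r \in X$ with $\delta(r) = \Lb$, that is, to $X \cap F_R \neq \emptyset$. Putting these pieces together, $\Lb \in \opti{\Cs}{L,\rho_\Lb}$ is equivalent to the existence of $(s,r) \in F_M \times F_R$ with $r \in \opti{\Cs}{\alpha\inv(s),\rho}$, which is exactly $(F_M \times F_R) \cap \pcopti{\alpha,\rho} \neq \emptyset$, as desired.

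There is no serious obstacle here: the statement is essentially a bookkeeping exercise combining Theorem~\ref{thm:bgen:main}, Proposition~\ref{prop:isreg}, Lemma~\ref{lem:bgen:extension} and Proposition~\ref{prop:bgen:mratmeff}. The only minor point requiring care is the polynomial space bound when $L$ is given by an \nfa, since the transition monoid $M$ can have exponential size in the number of states of the automaton; however, Proposition~\ref{prop:bgen:mratmeff} is already stated in polynomial space, and all subsequent manipulations (computing $F_R$ by evaluating $\delta$, and referring only locally to elements of $M$) can be performed while storing objects of polynomial size at any given time.
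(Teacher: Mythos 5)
Your proposal is correct and follows essentially the same route as the paper's proof: compute $\alpha$, $F_M$, then invoke Proposition~\ref{prop:bgen:mratmeff} for $\rho$, $\delta$, $F_R$, and chain Theorem~\ref{thm:bgen:main}, Lemma~\ref{lem:bgen:extension} and Proposition~\ref{prop:isreg} to get the equivalence. The only (harmless) difference is that you apply Proposition~\ref{prop:isreg} to $\rho_\Lb$ first and then the extension lemma to each $\alpha\inv(s)$, whereas the paper applies the extension lemma to $L$ first and then splits over $F_M$ with respect to $\rho$; both orderings yield the same conclusion.
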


\begin{proof}
  The input of "\Cs-covering" is a pair $(L,\Lb)$ where $L$ is a regular language and \Lb is a finite multiset of regular languages: we want to know whether $(L,\Lb)$ is \Cs-coverable. One may compute in polynomial space a morphism $\alpha: A^* \to M$ recognizing $L$ (this is trivial is $L$ if already given by a morphism and standard if $L$ is given by an \nfa). We let $F_M \subseteq M$ as the corresponding accepting set: $L = \alpha\inv(F_M)$. Moreover, by Proposition~\ref{prop:bgen:mratmeff}, we are able to compute in polynomial space a "\nice" "\mratm" $\rho: 2^{A^*} \to R$, a morphism $\delta: R \to 2^\Lb$ and a subset $F_R$ of $R$ such that:
  \begin{itemize}
  \item $\rho$ extends $\rho_\Lb$ (the canonical "\ratm" associated to \Lb) for the extending morphism $\delta$.
  \item $F_R = \delta\inv(\Lb)$.
  \end{itemize}
  The reduction outputs $\alpha,F_M,\rho$ and $F_R$. It remains to show that this is indeed a reduction from "\Cs-covering" to the problem described in the proposition. That is, $(L,\Lb)$ is \Cs-coverable if and only if $(F_M \times F_R) \cap \pcopti{\alpha,\rho} = \emptyset$.

  By Theorem~\ref{thm:bgen:main} $(L,\Lb)$ being \Cs-coverable is equivalent to $\Lb \not\in \copti{L,\rho_{\Lb}}$. Moreover, by Lemma~\ref{lem:bgen:extension}, we know that $\copti{L,\rho_{\Lb}} = \dclos \delta(\copti{L,\rho})$. Consequently, since $F_R = \delta\inv(\Lb)$ and \Lb is the maximal element of $2^\Lb$, $(L,\Lb)$ is \Cs-coverable if and only if $F_R \cap \copti{L,\rho} = \emptyset$. Finally, by Proposition~\ref{prop:isreg}, we have $\opti{\Cs}{L,\rho} = \bigcup_{s \in F_M} \opti{\Cs}{\alpha\inv(s),\rho}$.  Hence, $(L,\Lb)$ is \Cs-coverable if and only if for any $s \in F_M$, we have $F_R \cap \opti{\Cs}{\alpha\inv(s),\rho} = \emptyset$. By definition of \pcopti{\alpha,\rho}, this last condition is equivalent to $(F_M \times F_R) \cap \pcopti{\alpha,\rho} = \emptyset$, finishing the proof.
\end{proof}

\begin{remark}
  Similarly to what happened for universal covering in Section~\ref{sec:genba}, we also get a reduction for the second stage in "\Cs-covering": computing separating "\Cs-covers" when they exist. Indeed, given some input pair $(L,\Lb)$, we know from Theorem~\ref{thm:bgen:main} that if $(L,\Lb)$ is \Cs-coverable, then any optimal "\Cs-cover" of $L$ for $\rho_\Lb$ is separating for \Lb. We may compute a "\nice" "\mratm" $\rho$ extending $\rho_\Lb$. By Lemma~\ref{lem:bgen:extension}, any optimal "\Cs-cover" of $L$ for $\rho$ is also optimal for $\rho_{\Lb}$. Moreover, we may compute a morphism $\alpha: A^* \to M$ recognizing $L$ (\emph{i.e.}, $L = \alpha\inv(F)$ for some $F \subseteq M$). By Proposition~\ref{prop:isreg}, we know that $\Kb = \bigcup_{s \in F} \Kb_s$ is an optimal "\Cs-cover" of $L$ for $\rho$ when the $\Kb_s$ are optimal "\Cs-covers" of $\alpha\inv(s)$ for $\rho$.
\end{remark}

In view of Proposition~\ref{prop:lreduc}, we may now focus on the problem of computing "optimal" pointed \imprints. We explain how to tackle this new problem now. 

\begin{remark}
  While Proposition~\ref{prop:lreduc} holds for any "lattice", using the methodology that we present now requires at least a \pvari of regular languages. This is mandatory for applying Lemma~\ref{lem:pointedmono}, which is crucial in our approach.
\end{remark}

\subsection{Methodology}

In this generalized setting as well, a key design principle behind the framework is that our algorithms for computing "optimal" pointed \imprints are formulated as \emph{elegant characterization theorems}. Given a \pvari \Cs, a morphism $\alpha: A^* \to M$ and a "\nice" "\mratm" $\rho:2^{A^*} \to R$, we characterize the \Cs-"optimal" $\alpha$-pointed $\rho$-imprint \popti{\Cs}{\alpha,\rho} as the least subset of $M \times R$ which:
\begin{enumerate}
\item includes the trivial elements from \ptriv{\alpha,\rho}, and
\item is closed under a list of operations.
\end{enumerate}
We speak of a \emph{characterization of \Cs-"optimal" pointed \imprints}. As before, the key idea is that such a result yields a least fixpoint procedure for computing \popti{\Cs}{\alpha,\rho}, thus solving the \Cs-"covering problem" by Proposition~\ref{prop:lreduc}. Indeed, one starts from the set of trivial elements and saturates it with the operations in the list.

\begin{remark}
  As before, while we have to restrict ourselves to "\nice" "\mratms" for the computation, the characterizations themselves often hold for all "\mratms". This is the case for our two examples.
\end{remark}

Let us present two examples. Both of them come from the quantifier alternation hierarchy of "first-order logic" over words: the levels \sicu and \sicd. We start with a characterization for \sicu, which we shall detail and prove in the next section.

\begin{example}[Characterization of \sicu-"optimal" pointed \imprints]
  Consider a morphism $\alpha: A^* \to M$ and a "\mratm" $\rho: 2^{A^*} \to R$. Then, \suopti is the least subset of $M \times R$ containing \ptriv{\alpha,\rho} and satisfying the following properties:
  \begin{enumerate}
  \item {\bfseries Downset}: For any $(s,r) \in \suopti$ and any $r ' \leq r$, we have $(s,r') \in \suopti$.
  \item {\bfseries Multiplication}: For any $(s,q),(t,r) \in \suopti$, we have $(st,qr) \in \suopti$.
  \item {\bfseries \sicu-operation}: We have $(1_M,\rho(A^*)) \in \suopti$.
  \end{enumerate}
\end{example}

Our second example is the level \sicd. Note that we shall not prove this result in the paper. It is essentially adapted from the original "separation" algorithm of~\cite{pzqalt}. A (difficult) proof for the formulation that we use below is available in~\cite[Theorem~6.5]{pseps3j}, which solves "covering" for a family of classes that includes \sicd.

\begin{example}[Characterization of \sicd-"optimal" pointed \imprints]   Let $\alpha: A^* \to M$ be a morphism and let $\rho: 2^{A^*} \to R$ be an alphabet compatible "\mratm". Then, \sdopti is the least subset of $M \times R$ containing \ptriv{\alpha,\rho} and closed under the following operations:
  \begin{enumerate}
  \item {\bfseries Downset}: For any $(s,r) \in \sdopti$ and any $r ' \leq r$, we have $(s,r') \in \sdopti$.
  \item {\bfseries Multiplication}: For any $(s,q),(t,r) \in \sdopti$, we have $(st,qr) \in \sdopti$.
  \item {\bfseries \sicd-closure}: For any idempotent $(e,f) \in \sdopti$ (\emph{i.e.}, $e$ is an idempotent of $M$ and $f$ is a multiplicative idempotent of $R$) and for each $B\in\cont{f}$, we have $(e,f \cdot \rho(B^*) \cdot f) \in \sdopti$.
  \end{enumerate}
\end{example}

As expected, most of the properties used in the above example of characterizations are generic: they are satisfied by all \Cs-"optimal" pointed \imprints \popti{\Cs}{\alpha,\rho} (provided that \Cs is a \pvari of regular languages). Let us present these generic properties properly in the following lemma.

\begin{lemma} \label{lem:genalgolatt}
  Let \Cs be a \pvari of regular languages \Cs. Consider a morphism $\alpha: A^* \to M$ and a "\mratm" $\rho: 2^{A^*} \to R$. Then, the \Cs-"optimal" $\alpha$-pointed $\rho$-\imprint $\popti{\Cs}{\alpha,\rho} \subseteq M \times R$ contains \ptriv{\alpha,\rho} and satisfies the following closure properties:
  \begin{enumerate}
  \item {\bfseries Downset}: For any $(s,r) \in \popti{\Cs}{\alpha,\rho}$ and any $r ' \leq r$, we have $(s,r') \in \popti{\Cs}{\alpha,\rho}$.
  \item {\bfseries Multiplication}: For any $(s,q),(t,r) \in \popti{\Cs}{\alpha,\rho}$, we have $(st,qr) \in \popti{\Cs}{\alpha,\rho}$.
  \end{enumerate}
\end{lemma}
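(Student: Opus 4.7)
The proof will be an immediate consolidation of three results already established in the paper, so I will simply indicate how each of the three claims reduces to a prior statement. The overall plan is to unpack the definition of $\popti{\Cs}{\alpha,\rho}$ as the set of pairs $(s,r)$ such that $r \in \opti{\Cs}{\alpha\inv(s),\rho}$, and then transport the corresponding properties of ordinary $\Cs$-optimal imprints on languages of the form $\alpha\inv(s)$ to pointed imprints on $M\times R$.

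For the inclusion $\ptriv{\alpha,\rho}\subseteq\popti{\Cs}{\alpha,\rho}$, I will simply cite Fact~\ref{fct:ptriv}, which is the pointed version of Fact~\ref{fct:bgen:trivialsets} applied to each language $\alpha\inv(s)$ for $s\in M$. For closure under downset, I will take $(s,r)\in\popti{\Cs}{\alpha,\rho}$ and $r'\leq r$. By definition, $r\in\opti{\Cs}{\alpha\inv(s),\rho}$; but $\opti{\Cs}{\alpha\inv(s),\rho}$ equals $\prin{\rho}{\Kb}$ for any optimal $\Cs$-cover $\Kb$ of $\alpha\inv(s)$ for $\rho$, and by Fact~\ref{fct:bgen:downset} every $\rho$-imprint is closed under downset. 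Hence $r'\in\opti{\Cs}{\alpha\inv(s),\rho}$, which by definition of $\popti{\Cs}{\alpha,\rho}$ gives $(s,r')\in\popti{\Cs}{\alpha,\rho}$.

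For closure under multiplication, I will directly invoke Lemma~\ref{lem:pointedmono}, which states exactly that $\popti{\Cs}{\alpha,\rho}$ is a submonoid of $M\times R$ whenever $\Cs$ is a \pvari of regular languages and $\rho$ is a "\mratm". That lemma is itself the pointed reformulation of Lemma~\ref{lem:bgen:optsemi} (the semigroup-type property of $\Cs$-optimal imprints under concatenation), combined with the fact that $\alpha$ is a morphism so $\alpha\inv(s)\alpha\inv(t)\subseteq\alpha\inv(st)$ and Fact~\ref{fct:linclus} to transfer optimal imprints through inclusion.

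Since all three pieces are already fully proved earlier in the paper, there is no real obstacle: the lemma is a pure bookkeeping statement packaging Fact~\ref{fct:ptriv}, Fact~\ref{fct:bgen:downset}, and Lemma~\ref{lem:pointedmono} into the generic closure properties that every characterization of $\popti{\Cs}{\alpha,\rho}$ will share. The only thing worth being careful about is to make explicit the step from ``$r\in\opti{\Cs}{\alpha\inv(s),\rho}$ and $\opti{\Cs}{\alpha\inv(s),\rho}$ is downset-closed'' to ``$(s,r')\in\popti{\Cs}{\alpha,\rho}$'' — which is immediate from the definition of pointed imprints but should be spelled out once to keep the argument self-contained.
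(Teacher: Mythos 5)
Your proposal is correct and follows essentially the same route as the paper, whose proof simply cites Fact~\ref{fct:bgen:trivialsets} (via its pointed form, Fact~\ref{fct:ptriv}), Fact~\ref{fct:bgen:downset}, and Lemma~\ref{lem:bgen:optsemi} (via Lemma~\ref{lem:pointedmono}); your extra unpacking of the downset step through an optimal cover's imprint is exactly the intended justification.
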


\begin{proof}
  That \popti{\Cs}{\Lb,\rho} contains \ptriv{\Lb,\rho} is immediate from Fact~\ref{fct:bgen:trivialsets}. Closure under downset follows from Fact~\ref{fct:bgen:downset}.  Finally, closure under multiplication is a consequence of Lemma~\ref{lem:bgen:optsemi}. 
\end{proof}

\section{\texorpdfstring{Example for full covering: the logic \sicu}{Example for full covering: the logic Σ1}}
\label{sec:sigma1}
In this section, we illustrate the general approach outlined in Section~\ref{sec:genlatts} with a simple example taken from logic. Specifically, we use it to obtain a covering algorithm for the level \sicu within the quantifier alternation hierarchy of "first-order logic". As expected, this algorithm is based on a characterization of \sicu-"optimal" pointed \imprints.

\begin{remark}
  Since this section is about illustrating our framework, we need a simple example and \sicu is an ideal candidate for this. However, it is known that "covering" is decidable for the levels \sicu,\bscu,\sicd,\bscd and \sict in the alternation hierarchy of\/ "\fo" and all the algorithms may be formulated within our framework. The cases of \sicu and \bscu are detailed in the paper. We refer the reader to \cite{pseps3j} for \sicd and \sict and to \cite{pzbpol} for \bscd.
\end{remark}

We shall actually work with an alternate definition of the class corresponding to \sicu. It is folklore and simple to verify that we have the following theorem.

\begin{theorem} \label{thm:sicu}
  Consider a language $L \subseteq A^*$. Then $L$ can be defined by a \sicu sentence if and only if it is a finite union of languages of the form $A^*a_1A^*a_2A^* \cdots A^*a_nA^*$ with $a_1,\dots,a_n \in A$.
\end{theorem}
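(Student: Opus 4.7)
The plan is to prove each direction separately, with the right-to-left direction being essentially immediate and the left-to-right direction requiring a normal form argument.

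For the easy direction, every language $A^*a_1A^*a_2A^* \cdots A^*a_nA^*$ is defined by the sentence
\[
  \exists x_1 \cdots \exists x_n\; \Bigl(\bigwedge_{1 \leq i < n} x_i < x_{i+1}\Bigr) \wedge \bigwedge_{1 \leq i \leq n} P_{a_i}(x_i),
\]
which is purely existential (hence \sicu) and directly expresses that one can find $n$ positions in the word, in the specified order and with the specified labels. Finite disjunctions of \sicu sentences are again \sicu, so any finite union of such languages is \sicu-definable. The case $n = 0$ (empty product) gives $A^*$, defined by the \sicu sentence $\exists x \; (x = x) \vee \neg \exists x \; (x = x)$, or equivalently by the empty prenex prefix.

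For the harder direction, I would take an arbitrary \sicu sentence, put it in prenex normal form $\exists x_1 \cdots \exists x_n\; \psi(x_1,\ldots,x_n)$ with $\psi$ quantifier-free, and rewrite $\psi$ in disjunctive normal form. Distributing the existentials over the disjunction expresses the defined language as a finite union of languages, each defined by a sentence of the form $\exists x_1 \cdots \exists x_n\; \bigwedge_j \ell_j$, where the $\ell_j$ are literals over the atoms $P_a(x_i)$, $x_i < x_j$, and $x_i = x_j$.

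The key step is to replace each such conjunctive existential sentence by an equivalent finite disjunction of \emph{complete} ones, where ``complete'' means that the conjunction specifies, for every pair $(x_i,x_j)$, exactly one of $x_i < x_j$, $x_i = x_j$, $x_i > x_j$, and for every variable $x_i$, exactly one positive label $P_a(x_i)$. Since the alphabet and the number of variables are finite, there are finitely many completions, and the original conjunction is equivalent to the disjunction of all its consistent completions (inconsistent ones define the empty language). Each consistent complete conjunction, after identifying variables forced to be equal, describes a linear chain $y_1 < y_2 < \cdots < y_k$ with specified labels $b_1,\ldots,b_k$, and therefore defines exactly $A^*b_1A^*b_2A^* \cdots A^*b_kA^*$.

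The main bookkeeping obstacle is the completion step: one must verify that enriching a conjunction with, for each missing order atom or label atom, either it or its negation, really preserves the defined language (up to disjunction) and that negative literals become redundant after completion. Since consistent completions only retain positive order and label atoms on the equivalence classes of variables, the resulting chain sentence is precisely of the target form, which concludes the argument.
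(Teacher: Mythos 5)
Your proof is correct and is essentially the standard normal-form argument one would expect for this result, which the paper cites as folklore without proof; the decomposition prenex $\to$ DNF $\to$ completion into full atomic order-and-label types, with each consistent complete type collapsing (after merging variables forced equal) to a chain $y_1<\cdots<y_k$ and hence to $A^*b_1A^*\cdots A^*b_kA^*$, is sound. The only small glitch is that $\exists x\,(x=x)\vee\neg\exists x\,(x=x)$ is not itself a \sicu sentence (its prenex normal form $\exists x\,\forall y\,(\cdots)$ has one quantifier alternation), but your fallback, the quantifier-free tautology with empty prenex prefix, is \sicu and defines $A^*$, so the $n=0$ case is still covered.
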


\subsection{\texorpdfstring{Characterization of \sicu-optimal \imprints}{Characterization of Σ1-optimal \imprints}}

We start by describing the property characterizing \sicu-"optimal" pointed \imprints. Consider a morphism $\alpha: A^* \to M$ into a finite monoid $M$ and a "\mratm" $\rho: 2^{A^*} \to R$.

\begin{remark}
  There is no additional constraint on $\alpha$ and $\rho$. In particular, there is no need for $\rho$ to be "\nice" for the characterization to hold.
\end{remark}

We say that a subset $S \subseteq M \times R$ is \emph{\sicu-saturated} for $\rho$ if it contains \ptriv{\alpha,\rho} and is closed under the following operations: 
\begin{enumerate}
\item \emph{\bfseries Downset:} For any $(s,r) \in S$ and any $r' \leq r$, we have $(s,r') \in S$.
\item \emph{\bfseries Multiplication:} For any $(s,q),(t,r) \in S$, we have $(st,qr) \in S$.
\item\label{op:sig1} \emph{\bfseries \sicu-operation:} We have $(1_M,\rho(A^*)) \in S$.
\end{enumerate}

\smallskip\noindent We are now ready to state the main theorem of this section.

\begin{theorem}[Characterization of \sicu-"optimal" \imprints] \label{thm:sig1}
  Let $\alpha: A^* \to M$ be a morphism into a finite monoid $M$ and let $\rho: 2^{A^*} \to R$ a "\mratm". Then, \suopti is the least \sicu-saturated subset of $M \times R$.
\end{theorem}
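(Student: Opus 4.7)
The plan is to follow the two-step pattern used by the paper in the proof of Theorem~\ref{thm:bs1:main}: first establish soundness (that $\popti{\sicu}{\alpha,\rho}$ is itself \sicu-saturated), then establish completeness (that it is included in every \sicu-saturated subset of $M \times R$) via an explicit construction of an "optimal" \sicu-cover. Throughout, the characterization of \sicu-languages from Theorem~\ref{thm:sicu} (finite unions of $A^*a_1A^* \cdots A^*a_nA^*$) is the key syntactic handle.

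For \textbf{soundness}, the downset, multiplication, and ``contains \ptriv{\alpha,\rho}'' parts are generic and come for free from Lemma~\ref{lem:genalgolatt}, since \sicu is a \pvari of regular languages. The only thing left is the \sicu-operation: I must check $(1_M,\rho(A^*))\in\popti{\sicu}{\alpha,\rho}$, i.e., that $\rho(A^*)\in\opti{\sicu}{\alpha^{-1}(1_M),\rho}$. Let $\Kb$ be an arbitrary \sicu-cover of $\alpha^{-1}(1_M)$. Since $\alpha(\varepsilon)=1_M$, we have $\varepsilon\in\alpha^{-1}(1_M)$, so some $K\in\Kb$ contains $\varepsilon$. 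By Theorem~\ref{thm:sicu}, $K$ is a finite union of languages $A^*a_1A^*\cdots A^*a_nA^*$, and the only such term that can contain $\varepsilon$ is the one with $n=0$, namely $A^*$ itself. Hence $A^*\subseteq K$, so $\rho(A^*)\leq\rho(K)$ by Fact~\ref{fct:bgen:increasing}, and $\rho(A^*)\in\prin{\rho}{\Kb}$. As $\Kb$ was arbitrary, $\rho(A^*)\in\opti{\sicu}{\alpha^{-1}(1_M),\rho}$.

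For \textbf{completeness}, fix any \sicu-saturated set $S\subseteq M\times R$; I need to show $\popti{\sicu}{\alpha,\rho}\subseteq S$. As in the paper's other completeness proofs, the strategy is to construct, for every $s\in M$, an explicit \sicu-cover $\Kb_s$ of $\alpha^{-1}(s)$ whose $\rho$-\imprint lies inside $S_s:=\{r\mid(s,r)\in S\}$; optimality then yields $\opti{\sicu}{\alpha^{-1}(s),\rho}\subseteq S_s$, which is exactly the claim. For a word $w=a_1\cdots a_n\in A^*$ define the principal upward-closed language $L_w:=A^*a_1A^*a_2\cdots A^*a_nA^*$, which lies in \sicu by Theorem~\ref{thm:sicu} and contains $w$. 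Using that $\rho$ is a \mratm and iterating the multiplication and \sicu-operation closure rules of $S$ on the trivial elements $(\alpha(a_i),\rho(a_i))\in\ptriv{\alpha,\rho}\subseteq S$ and on $(1_M,\rho(A^*))\in S$, one gets
\[
  (\alpha(w),\rho(L_w))=(1_M,\rho(A^*))\cdot(\alpha(a_1),\rho(a_1))\cdot(1_M,\rho(A^*))\cdots(\alpha(a_n),\rho(a_n))\cdot(1_M,\rho(A^*))\in S.
\]
It remains to exhibit a \emph{finite} \sicu-cover of $\alpha^{-1}(s)$ built from such principal languages. Let $W_s$ be the set of minimal elements of $\alpha^{-1}(s)$ for the piece (scattered subword) order on $A^*$. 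By Higman's lemma, the piece order is a well-quasi-order, so $W_s$ is finite. Setting $\Kb_s:=\{L_w\mid w\in W_s\}$ gives a finite subset of \sicu, and every $u\in\alpha^{-1}(s)$ extends some $w\in W_s$ as a piece, so $u\in L_w$ and $\Kb_s$ covers $\alpha^{-1}(s)$. By the computation above, $\rho(L_w)\in S_s$ for each $w\in W_s$, and closure of $S_s$ under downset (inherited from $S$) yields $\prin{\rho}{\Kb_s}=\dclos\{\rho(L_w)\mid w\in W_s\}\subseteq S_s$. Hence $\opti{\sicu}{\alpha^{-1}(s),\rho}\subseteq\prin{\rho}{\Kb_s}\subseteq S_s$, concluding the proof.

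The main obstacle I anticipate is the finiteness of the cover $\Kb_s$: naively taking $\{L_w\mid w\in\alpha^{-1}(s)\}$ need not be finite, and one must genuinely invoke well-quasi-orderedness of the piece order (Higman) to trim down to a finite generating antichain $W_s$. Everything else is a direct algebraic bookkeeping argument using that $\rho$ is multiplicative and that $S$ is closed under the three \sicu-saturation rules; the only subtle syntactic fact is that a \sicu-language containing $\varepsilon$ must equal $A^*$, which is the crux of soundness of the \sicu-operation.
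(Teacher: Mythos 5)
Your proof is correct, and it matches the paper's argument almost everywhere: the soundness part (generic closure properties via Lemma~\ref{lem:genalgolatt}, plus the observation from Theorem~\ref{thm:sicu} that the only \sicu-language containing $\varepsilon$ is $A^*$) is the paper's proof verbatim, and the algebraic computation showing $(\alpha(w),\rho(L_w))\in S$ by multiplying trivial elements with $(1_M,\rho(A^*))$ is also exactly the paper's. The one genuine divergence is how you make the cover $\Kb_s$ of $\alpha\inv(s)$ finite. The paper takes $\Kb_s=\{K_w \mid w\in\alpha\inv(s),\ |w|\leq |M|\}$ and proves coverage by an elementary pumping argument (Lemma~\ref{lem:pump}): if $|w|>|M|$, the pigeonhole principle on prefix images under $\alpha$ lets one delete a factor without changing the image, and $K_{w_1w_2}$ is closed under reinserting factors. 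You instead take the antichain $W_s$ of piece-minimal words of $\alpha\inv(s)$ and invoke Higman's lemma to get finiteness, with coverage following from well-foundedness of the piece order. Both are sound. Your route is shorter and arguably cleaner, but it imports a nontrivial external tool (well-quasi-ordering) and is a priori less explicit: the paper's choice gives an immediate length bound $|M|$ on the words indexing the cover, which matters for the effectiveness the paper emphasizes (actually constructing optimal \sicu-covers and bounding the fixpoint computation), whereas computing your $W_s$ from $\alpha$ would require an extra argument (it is computable for regular languages, but you do not address this, and it is not needed for the theorem statement itself).
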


Clearly, Theorem~\ref{thm:sig1} yields a least fixpoint procedure for computing \suopti from any morphism $\alpha$ and any \emph{"\nice"} "\mratm" $\rho$ given as input. Indeed, we are able to compute \ptriv{\alpha,\rho} and all operations in the definition of \sicu-saturated subsets are clearly implementable (for \sicu-operation, this is immediate since $\rho(A^*)$ may be computed by Lemma~\ref{lem:bgen:evamramt}). Altogether, we get the desired corollary from Proposition~\ref{prop:lreduc}: "\sicu-covering" is decidable.

\begin{corollary} \label{cor:sig1}
  The "\sicu-covering problem" is decidable.
\end{corollary}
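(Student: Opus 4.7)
The plan is to derive Corollary~\ref{cor:sig1} directly by combining Proposition~\ref{prop:lreduc} (which reduces $\sicu$-covering to a decision problem on optimal pointed imprints) with Theorem~\ref{thm:sig1} (which characterizes $\popti{\sicu}{\alpha,\rho}$ as a least fixpoint), and then verifying that this fixpoint is effectively computable on the finite carrier $M \times R$.

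First, I would invoke Proposition~\ref{prop:lreduc} to reduce the $\sicu$-covering problem, in polynomial space, to the following decision problem: given as input a morphism $\alpha : A^* \to M$, a subset $F_M \subseteq M$, a \nice \mratm $\rho : 2^{A^*} \to R$, and a subset $F_R \subseteq R$, decide whether $(F_M \times F_R) \cap \popti{\sicu}{\alpha,\rho} = \emptyset$. Since $M \times R$ is finite and the subsets $F_M, F_R$ are part of the input, it suffices to show that $\popti{\sicu}{\alpha,\rho} \subseteq M \times R$ is effectively computable from $\alpha$ and $\rho$.

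To compute $\popti{\sicu}{\alpha,\rho}$, I would use Theorem~\ref{thm:sig1}, which states that it is the least $\sicu$-saturated subset of $M \times R$. This naturally yields a least-fixpoint algorithm: initialize $S$ as the empty subset of $M \times R$, add all elements of $\ptriv{\alpha,\rho}$, add the pair $(1_M, \rho(A^*))$, and then repeatedly saturate $S$ under the downset and multiplication operations. Because $M \times R$ is finite, this loop terminates after at most $|M| \cdot |R|$ iterations, and the result is exactly the least $\sicu$-saturated set, hence $\popti{\sicu}{\alpha,\rho}$.

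It remains to check that each step of the fixpoint is effectively implementable, which is the only real verification to perform. The set $\ptriv{\alpha,\rho}$ is computable from $\alpha$ and $\rho$ by Lemma~\ref{lem:bgen:evtriv} applied to each regular language $\alpha^{-1}(s)$ for $s \in M$. The element $\rho(A^*) \in R$ is computable by Lemma~\ref{lem:bgen:evamramt}, since $A^*$ is trivially regular and $\rho$ is \nice. The downset closure is implemented by enumerating the finitely many $r' \in R$ below each known $r$ (the order $\leq$ on $R$ is computed from addition). The multiplication closure is implemented by iterating over all pairs in the current set and using the semiring operations of $M$ and $R$. Since all basic operations are effective and the procedure terminates, we obtain an algorithm computing $\popti{\sicu}{\alpha,\rho}$, which by the reduction solves $\sicu$-covering. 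There is no genuine obstacle here beyond this verification, since the mathematical work is entirely encapsulated in Theorem~\ref{thm:sig1} and Proposition~\ref{prop:lreduc}.
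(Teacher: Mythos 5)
Your proposal is correct and follows exactly the paper's own argument: reduce $\sicu$-covering to a question about $\popti{\sicu}{\alpha,\rho}$ via Proposition~\ref{prop:lreduc}, then compute that set by the least-fixpoint procedure given by Theorem~\ref{thm:sig1}, using Lemma~\ref{lem:bgen:evtriv} for $\ptriv{\alpha,\rho}$ and Lemma~\ref{lem:bgen:evamramt} for $\rho(A^*)$. The extra care you take in spelling out termination and the implementability of downset and multiplication on the finite set $M \times R$ is exactly what the paper leaves implicit, so no differences worth noting.
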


We now turn to the proof of Theorem~\ref{thm:sig1}. The argument is organized around the approach outlined in Section~\ref{sec:genlatts}. Let $\alpha: A^* \to M$ be a morphism into a finite monoid $M$ and let $\rho: 2^{A^*} \to R$ be a "\mratm". We start by proving that the set \suopti is \sicu-saturated (from an algorithmic point of view, this corresponds to soundness of the least fixpoint procedure: it only computes elements of \suopti).

\subsection{Soundness}

We show that $\suopti \subseteq M \times R$ is \sicu-saturated. Since \sicu is a \pvari, we already know from Lemma~\ref{lem:genalgolatt} that \suopti contains \ptriv{\alpha,\rho} and is closed under downset and multiplication. Therefore, we may concentrate on the \sicu-operation.

We need to show that $(1_M,\rho(A^*)) \in \suopti$. By definition, this amounts to proving that $\rho(A^*) \in \opti{\sicu}{\alpha\inv(1_M),\rho}$. Let \Kb be a $\rho$-"optimal" \sicu-cover \Kb of $\alpha\inv(1_M)$. It suffices to prove  that $\rho(A^*) \in \prin{\rho}{\Kb}$ (by definition of \opti{\sicu}{\alpha\inv(1_M),\rho}). This follows from the next fact.

\begin{fact} \label{fct:sound}
  The only language $K \in \sicu$ containing $\varepsilon$ is $A^*$.
\end{fact}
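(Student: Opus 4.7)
The plan is to reduce directly to the normal form for \sicu given by Theorem~\ref{thm:sicu}. That theorem tells us that every $K \in \sicu$ can be written as a finite union
\[
  K = \bigcup_{i=1}^{m} A^* a_{i,1} A^* a_{i,2} A^* \cdots A^* a_{i,n_i} A^*,
\]
where each $a_{i,j}$ is a letter of $A$ and each $n_i \geq 0$ (with the convention that the $n_i = 0$ case yields the single factor $A^*$). So the task reduces to understanding when $\varepsilon$ can belong to one of these ``piece languages''.

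Next I would observe that $\varepsilon \in A^* a_{i,1} A^* \cdots A^* a_{i,n_i} A^*$ forces $\varepsilon$ to admit a factorization $u_0 a_{i,1} u_1 \cdots a_{i,n_i} u_{n_i}$ with every $u_j \in A^*$, which is only possible when $n_i = 0$; that atomic language is then exactly $A^*$. Since by hypothesis $\varepsilon \in K$, at least one summand in the union must contain $\varepsilon$, and by the previous sentence that summand equals $A^*$. Hence $A^* \subseteq K$, and combined with the trivial inclusion $K \subseteq A^*$ we conclude $K = A^*$.

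There is no real obstacle here: once Theorem~\ref{thm:sicu} is granted, the whole argument is a one-line inspection of the atomic normal form. The only point requiring a bit of care is making explicit that the $n_i = 0$ case of the atomic formula is $A^*$ itself, so that the union necessarily absorbs to $A^*$ as soon as any of its summands does.
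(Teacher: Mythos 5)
Your proof is correct and follows essentially the same route as the paper's: both invoke Theorem~\ref{thm:sicu} to write $K$ as a finite union of languages $A^*a_1A^*\cdots A^*a_nA^*$ and then observe that $\varepsilon$ can only belong to the $n=0$ atom, which is $A^*$ itself, forcing $K = A^*$. You merely spell out more explicitly the absorption step $A^* \subseteq K \subseteq A^*$, which the paper leaves implicit.
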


\begin{proof}
  By Theorem~\ref{thm:sicu}, every language in \sicu is a finite union of languages which are of the form $A^*a_1A^*a_2 \cdots A^*a_nA^*$ with $n \in \nat$ and $a_1,\dots,a_n \in A$. The only such language containing $\varepsilon$ is $A^*$.
\end{proof}

We may now finish the proof. Since $\varepsilon \in \alpha\inv(1_M)$ and \Kb is a "cover" of $L$, there exists $K \in \Kb$ containing $\varepsilon$. Moreover, since $K \in \sicu$, it follows from Fact~\ref{fct:sound} that $K = A^*$. Thus, $A^* \in \Kb$, which means that $\rho(A^*) \in \prin{\rho}{\Kb}$, as desired.\qed

\subsection{Completeness}

We now know that the set \suopti is \sicu-saturated. It remains to prove that it is the least such set. Consider an arbitrary \sicu-saturated subset $S \subseteq M \times R$. We show that $\suopti \subseteq S$. Note that from an algorithmic point of view, this direction of the proof corresponds to completeness of the least fixpoint procedure: we show that it computes \emph{all} elements of \suopti.

We proceed as follows. For each $s \in M$, we build a \sicu-cover $\Kb_s$ of $\alpha\inv(s)$ such that for any $r \in \prin{\rho}{\Kb_s}$, we have $(s,r) \in S$. By definition of \suopti, it will then follow that,
\[
  \suopti \subseteq \{(s,r) \in M \times R \mid r \in \prin{\rho}{\Kb_s}\} \subseteq S.
\]

\begin{remark}
  Since we already showed that \suopti itself is \sicu-saturated, the special case when $S = \suopti$ yields, for any $s \in M$, a \sicu-cover $\Kb_s$ of $\alpha\inv(s)$ satisfying:
  \[
    \prin{\rho}{\Kb_s} = \opti{\sicu}{\alpha\inv(s),\rho}.
  \]
  In other words, we are able to build "optimal" "\sicu-covers".
\end{remark}

We may now start the construction. We first associate to any word $w \in A^*$ a language $K_w \in \sicu$ and then use these languages to construct our "covers" $\Kb_s$. For a word $w = a_1 \cdots a_n \in A^*$, define:
\[
  K_w = A^*a_1A^*a_2A^* \cdots A^*a_nA^*.
\]
Clearly, all languages $K_w$ belong to \sicu by Theorem~\ref{thm:sicu}. We may now define our "\sicu-covers" $\Kb_s$. For $s \in M$, define:
\[
  \Kb_s = \{K_w \mid \text{$w \in \alpha\inv(s)$ and $|w| \leq |M|$}\}.
\]
Clearly, all sets $\Kb_s$ are finite. It now remains to prove that they satisfy the desired properties: for all $s \in M$, $\Kb_s$ is a \sicu-cover of $\alpha\inv(s)$ such that for any $r \in \prin{\rho}{\Kb_s}$, we have $(s,r) \in S$.

\medskip

We begin by proving that $\Kb_s$ is a \sicu-cover of $\alpha\inv(s)$ for all $s \in M$. Since we already know that all languages $K_w$ belong to \sicu, it suffices to prove that $\Kb_s$ is a "cover" of $\alpha\inv(s)$. Let $w \in \alpha\inv(s)$. We need to find $K \in \Kb_s$ such that $w \in K$, which is what we prove in the following lemma.

\begin{lemma} \label{lem:pump}
  For any $w \in \alpha\inv(s)$, there exists $K \in \Kb_s$ such that $w \in K$.
\end{lemma}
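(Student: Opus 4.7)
The plan is to show that from any $w\in\alpha^{-1}(s)$ one can extract a scattered subword $u$ of $w$ with $\alpha(u)=s$ and $|u|\le|M|$; then by construction $K_u\in\Kb_s$ and the definition of $K_u$ (as $A^*a_1A^*\cdots A^*a_{|u|}A^*$) makes $w\in K_u$ immediate, since any super-word of $u$ containing its letters in order lies in $K_u$.

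First I would handle the trivial case: if $|w|\le|M|$, take $u=w$ itself. Then $K_w\in\Kb_s$ by definition of $\Kb_s$, and $w\in K_w$ because $w$ is a piece of itself. Otherwise I proceed by induction on $|w|$, reducing to a strictly shorter word of the same image in $M$. Concretely, write $w=a_1\cdots a_n$ with $n>|M|$ and consider the $n+1$ prefixes $\varepsilon,a_1,a_1a_2,\ldots,a_1\cdots a_n$. Since there are more than $|M|$ of them, two distinct prefixes share the same image under $\alpha$; that yields a factorization $w=xyz$ with $y\neq\varepsilon$ and $\alpha(x)=\alpha(xy)$. Setting $w'=xz$, I get $\alpha(w')=\alpha(x)\alpha(z)=\alpha(xy)\alpha(z)=\alpha(w)=s$, and $|w'|<|w|$. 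Applying the induction hypothesis to $w'$ gives $u$ with $|u|\le|M|$, $\alpha(u)=s$, and $u$ a scattered subword of $w'$; since being a scattered subword is transitive and $w'$ is itself a scattered subword of $w$, $u$ is a scattered subword of $w$, as required.

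There is no real obstacle here—this is a standard pumping argument in a finite monoid—so the only thing to be careful about is bookkeeping: one must verify that the chosen $u$ simultaneously belongs to $\alpha^{-1}(s)$ (so that $K_u$ actually lies in $\Kb_s$) and embeds as a scattered subword into $w$ (so that $w\in K_u$). Both facts are preserved along the induction because $\alpha$ is a monoid morphism and the scattered-subword relation is transitive. This is really the only place in the completeness argument where we use the bound $|M|$ in the definition of $\Kb_s$; it ensures that $\Kb_s$ is finite while still covering every word of $\alpha^{-1}(s)$.
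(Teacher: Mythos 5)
Your proof is correct and follows essentially the same route as the paper: a pigeonhole argument on prefixes in the finite monoid $M$ to delete a nonempty infix while preserving the image $s$, combined with induction on $|w|$ and the fact that membership in $K_u = A^*a_1A^*\cdots A^*a_nA^*$ is upward closed under the scattered-subword order. Your reformulation of the induction in terms of extracting a short piece $u$ of $w$ with $\alpha(u)=s$ is just a cosmetic repackaging of the paper's step from $w_1w_2\in K$ to $w=w_1vw_2\in K$.
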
 

\begin{proof}
  We proceed by induction on $|w|$. When $|w| \leq |M|$, we have $K_w \in \Kb_s$ and it is immediate by definition that $w \in K_w$. We now assume that $|w| > |M|$. We prove that there exists $w_1,w_2 \in A^*$ and $v \in A^+$ such that $w = w_1vw_2$ and $w_1w_2 \in \alpha\inv(s)$. Since $|w_1w_2| < w$, this yields $K \in \Kb_s$ such that $w_1w_2 \in K$. Moreover, since $K$ is of the form $A^*a_1A^*a_2A^* \cdots A^*a_nA^*$ by definition, it will follow that $w = w_1vw_2$ belongs to $K$ as well, finishing the proof.

  Let $w = b_1 \cdots b_{|w|}$. Since $w \in \alpha\inv(s)$, we have $\alpha(b_1) \cdots \alpha(b_{|w|}) = \alpha(w) = s$. Since $|w| > |M|$, it follows from the pigeonhole principle that there exist $i,j$ with $i < j$ such that $\alpha(b_1) \cdots \alpha(b_{i}) = \alpha(b_1) \cdots \alpha(b_{j})$. We let $w_1 = b_1 \cdots b_i$, $v = b_{i+1} \cdots b_j$ and $w_2 = b_{j+1} \cdots b_{|w|}$. By definition, we have $w = w_1vw_2$ and $v$ is nonempty since $i < j$. Moreover,
  \[
    \begin{array}{lll}
      \alpha(w_1) & = & \alpha(b_1) \cdots \alpha(b_{i}) = \alpha(b_1) \cdots \alpha(b_{j}),\\
      \alpha(w_2) & = & \alpha(b_{j+1}) \cdots \alpha(b_{|w|})
    \end{array}
  \]
  Therefore, $\alpha(w_1w_2) = \alpha(b_1) \cdots \alpha(b_{|w|}) = \alpha(w) = s$, which concludes the proof.
\end{proof}

It remains to prove that for any $s \in M$ and $r \in \prin{\rho}{\Kb_s}$, we have $(s,r) \in S$. Recall that $S$ is an arbitrary $\sicu$-saturated set. Since $r \in \prin{\rho}{\Kb_s}$,  we have $K \in \Kb_s$ such that $r \leq \rho(K)$. Since $S$ is \sicu-saturated, it is closed under downset, so that it suffices to show that $(s,\rho(K)) \in S$.

By definition of $\Kb_s$, there exists $w \in \alpha\inv(s)$ such that $K = K_w$. Therefore, $K = A^*a_1A^* \cdots A^*a_nA^*$ with $w = a_1 \cdots a_n$. By definition, we have $(\alpha(a_i),\rho(a_i)) \in \ptriv{\alpha,\rho}$ for all $i \leq n$. Consequently, we obtain $(\alpha(a_i),\rho(a_i)) \in S$ for all $i \leq n$ since $S$ is \sicu-saturated. Moreover, we know from the \sicu-specific operation that $(1_M,\rho(A^*)) \in S$. It then follows from closure under multiplication that,
\[
  \left(1_M \cdot \prod_{1 \leq i \leq n} (\alpha(a_i) \cdot 1_M),\quad\rho(A^*) \cdot \prod_{1 \leq i \leq n}(\rho(a_i) \cdot \rho(A^*))\right) \in S
\]
Since $\alpha$ is a morphism and $\rho$ is a "\mratm" (and therefore a morphism for multiplication), this exactly says that $(\alpha(w),\rho(K)) \in S$. Finally, $\alpha(w)=s$ by definition, and we obtain as desired that $(s,\rho(K)) \in S$, finishing the proof.

\section{Conclusion}
\label{sec:conc}
In this paper, we introduced the "covering problem", which we designed as a replacement of the two problems that are commonly used to investigate classes of languages: "membership" and "separation". With "covering", we get the best of both worlds: like for separation, the problem is flexible enough, so that we are able to obtain covering algorithms for classes that seem to be beyond reach when dealing with "membership" alone. Like for "membership", we have a solid methodology and we recover what was missing in the case of "separation": constructiveness. Moreover, the framework is adapted not only to "Boolean algebras", but also to "lattices".

As an application, we have presented covering algorithms for five fragments of first-order logic: \fo itself, its two-variable fragment \fod, as well as levels $\frac12$, 1 and $\frac 32$ in the Straubing-Thérien hierarchy (denoted \sicu, \bscu and \sicd, respectively), and we have proved these algorithms for \sicu, \bscu and \fod.

There are many natural questions in this line of research. One is to push further the investigation of this problem to other classes of languages, in particular for those of the quantifier alternation hierarchy. It would also be interesting to get precise complexity bounds for this problem, starting from automata as inputs, or from monoids. Finally, whether such problems are meaningful for other structures than words remains to be explored.

\bibliographystyle{abbrv}

\end{document}